\newtheorem{theorem}{Theorem}[section]
\newtheorem{prop}[theorem]{Proposition}
\newtheorem{lemma}[theorem]{Lemma}
\newtheorem{conj}[theorem]{Conjecture}
\newtheorem{coro}[theorem]{Corollary}
\newtheorem{definition}[theorem]{Definition}
\tikzset{->-/.style={decoration={
      markings,
      mark=at position #1 with {\arrow{>}}}, postaction={decorate}}}
\tikzset{->>-/.style={decoration={
      markings,
      mark=at position #1 with {\arrow{>>}}}, postaction={decorate}}}
\numberwithin{equation}{section}
\newcommand{\I}{\mathrm{i}}
\newcommand{\E}{\mathrm{e}}
\DeclareMathOperator{\sh}{sh}
\DeclareMathOperator{\ch}{ch}
\DeclareMathOperator{\diag}{diag}
\DeclareMathDelimiter{\Norm}{\mathord}{largesymbols}{"3E}{largesymbols}{"3E}
\DeclareMathOperator{\KBS}{KBS}
\DeclareMathOperator{\Mod}{Mod}
\DeclareMathOperator{\Const}{Const}
\DeclareMathOperator{\SH}{SH}
\begin{document}

\baselineskip 16pt
\parskip 8pt
\sloppy

%%%%%%%%%%%%%%%%%% TITLE %%%%%%%%%%%%%%

\title[DAHA and Skein Algebra]{DAHA and Skein Algebra of Surface:
  Double-Torus Knots}

%%%%%%%%%%%%%%%%%%%%%%% AUTHOR(S) %%%%%%%%%%%%%%%%%%%

\author[K. Hikami]{Kazuhiro \textsc{Hikami}}

\address{Faculty of Mathematics,
  Kyushu University,
  Fukuoka 819-0395, Japan.}

\email{
  \texttt{khikami@gmail.com}
}
%%%%%%%%%%%%%%%%%%%%%%% AUTHOR(S) %%%%%%%%%%%%%%%%%%%

%%%%%%%%%%%%%%%%%%%%%% DATE %%%%%%%%%%%%%%%%%%%%%%%%%%%
% \date{\today}
\date{December 27, 2018}
% (Received: \hspace{40mm})

%\vspace{18pt}

% \dedicatory{
%   \textbf{DO NOT DISTRIBUTE!!!!}
% }
%%%%%%%%%%%%%%%%%%%%%% ABSTRACT %%%%%%%%%%%%%%%%%%%%%%
\begin{abstract}
  We study a topological aspect of rank-$1$ double affine Hecke
  algebra (DAHA).
  Clarified is
  a relationship between the DAHA of $A_1$-type
  (resp. $C^\vee C_1$-type) and the skein algebra on a
  once-punctured torus
  (resp. a $4$-punctured sphere), and the $SL(2;\mathbb{Z})$ actions
  of DAHAs are
  identified with
  the Dehn twists on the surfaces.
  Combining these two types of DAHA, we construct the DAHA
  representation for the skein algebra on a genus-two surface,
  and we propose a DAHA polynomial for a double-torus knot, which
  is a simple closed curve on a  genus two Heegaard surface in $S^3$.
  Discussed is 
  a relationship between the DAHA polynomial and the colored Jones polynomial.
\end{abstract}

%%%%%%%%%%%%%%%%%%%%%%% Key Words %%%%%%%%%%%%%%%%%%%%%%%%%%

\keywords{knot,
  colored Jones polynomial,
  double affine Hecke algebra,
  skein algebra,
  Macdonald polynomial,
  Askey--Wilson polynomial}

\subjclass[2000]{
}

%%%%%%%%%%%%%%%%%%%%%%%%%%%%%%%%%%%%%%%%%%%%%%%%%%%%%%%%%
%\newpage

%%\renewcommand{\thefootnote}{\arabic{footnote}}
\maketitle
%%%%%%%%%%%%%%%%%%%%%%%%%%%%%%%%%%%%

%%%

%%%%%%%%%%%%

\section{Introduction}

The double affine Hecke algebra (DAHA)
was introduced by Cherednik, and it is a powerful tool
in  studies of the
Macdonald polynomials associated with root systems
(see, \emph{e.g.},~\cite{Chered05Book,Macdonald03Book}).
The Macdonald polynomial is ubiquitous in mathematics and physics,
and  an interpretation as a $q$-deformation of a wave-function
of  the quantum Hall effect suggest an
importance of  a topological structure of the
DAHA in studies of topological orders.
The DAHA was recently
applied to quantum topology.
Proposed was a DAHA polynomial
invariant~\cite{IChered13a,IChered16a},
and discussed  was  a relationship with the refined Chern--Simons
invariant and the Khovanov homology~\cite{AganaShaki12a}.
The  construction of the DAHA polynomial~\cite{IChered13a}  is
purely algebraic,
but
the DAHA polynomial  is   limited only to
torus knots and their descendants,
\emph{i.e.},
all are non-hyperbolic.
An attempt~\cite{ArthaShaki17a} was made towards
DAHA for a genus-two surface generalizing DAHA of $A_1$-type,
but  a relationship with the known
quantum polynomial invariants is unclear.

A purpose of this article is to combine two rank-1  DAHAs of $A_1$-type
and $C^\vee C_1$-type to construct the DAHA representation for
double-torus knots.
The double-torus knot~\cite{PHill99a,HillMura00a}
is a simple closed curve on a genus two Heegaard
surface in $S^3$, and a large family of knots such as twist knots
belong to this type.
Originally the DAHAs of $A_1$-type and $C^\vee C_1$-type are for the
Rogers polynomial (or the $q$-ultraspherical polynomial)
and the Askey--Wilson polynomial respectively.
The Askey--Wilson polynomial~\cite{AsWi85} is  on top of the Askey scheme of
classification of orthogonal polynomials of hypergeometric-type, and 
its algebraic structure  receives recent active interests
(see~\cite{Koorn07a,Koorn08a,Zheda91a,Terwil13a}).
Here  we pay attentions to a relationship between the DAHA and
the Kauffman bracket skein
algebra on surfaces.
It is known
that the DAHA of $C^\vee C_1$-type 
represents a quantization of the affine cubic surface which is the  character
variety of a $4$-punctured sphere~\cite{Oblom04a},
while the DAHA of $A_1$-type  is related to the character variety of a
once-punctured torus.
Based on the fact~\cite{Bullo97a,PrzytSikor00a}
that  the coordinate ring of the character varieties is a
specialization of the Kauffman bracket skein algebra,
discussed
also is a relationship with the skein
algebra on the $4$-punctured sphere and the once-punctured
torus~\cite{BerestSamuel14a,BerestSamuel16a}.

For each simple closed curve on the genus-two surface,
we assign a DAHA operator  which represents the skein algebra on the surface.
% Using these properties of the  rank-1 DAHAs, we construct a DAHA polynomial
% for a simple
% closed curve on the genus-two surface.
A benefit of our method   combining  two types of rank-1 DAHAs is that we can make
use of their well-known automorphisms.
Due to the relationship between the DAHA and the skein algebra,
the DAHA automorphisms are regarded as the mapping class group~\cite{Birman74,FarbMarg11Book}, the group of isotopy
classes of orientation-preserving diffeomorphisms of surface.
As the mapping class group
is generated by
the Dehn twists,
we  can clarify the $SL(2;\mathbb{Z})$ actions
of the DAHA of $A_1$-type and
$C^\vee C_1$-type
as the Dehn twists about
curves on each surface.
The $q$-difference DAHA operator is indeed constructed by use of the
automorphisms of DAHA as in the case of torus knots by
Cherednik~\cite{IChered13a}.
Using the DAHA operator assigned to a simple closed curve $\mathbb{c}$
on the
surface, we propose a DAHA polynomial for $\mathbb{c}$.
We  compute explicitly  the DAHA polynomial for double-twist knots,
and discuss a relationship with the colored Jones polynomial.

This paper is organized as follows.
In Section~\ref{sec:torus},  we study the once-punctured torus $\Sigma_{1,1}$.
We recall properties of
the DAHA of $A_1$-type, and establish a relationship with the skein
algebra on $\Sigma_{1,1}$.
The DAHA polynomial proposed by Cherednik is also reviewed.
Section~\ref{sec:sphere} is for the $4$-punctured sphere $\Sigma_{0,4}$.
We recall both 
the DAHA of $C^\vee C_1$-type and the skein algebra on $\Sigma_{0,4}$.
Based on the correspondence,
we associate  a DAHA operator for a simple closed curve with a
rational slope.
In these sections, the $SL(2;\mathbb{Z})$-actions on the rank-1 DAHAs
are interpreted as the Dehn twists about certain curves on the
surfaces $\Sigma_{1,1}$ and $\Sigma_{0,4}$.
In Section~\ref{sec:twice_torus},
as a prototype toward the genus-two surface,
we study the skein algebra on a twice-punctured torus $\Sigma_{1,2}$.
We ``glue'' two types of the rank-1 DAHAs, $A_1$-type and $C^\vee C_1$-type,
using a quantum dilogarithm function, and give the DAHA representation
of the skein algebra on $\Sigma_{1,2}$.
%We shall clarify the roles of the DAHAs in subsurfaces of $\Sigma_{1,2}$.
Section~\ref{sec:genus-two} is for the genus-two surface.
We propose the DAHA polynomial for a simple closed curve on the
surface, and study a relationship with the colored Jones polynomial.
In the rest of this section, we collect our notations such as special functions.

\subsection{Preliminaries}
The Kauffman bracket  skein module $\KBS_A (M)$ of a
3-manifold $M$ is defined by
\begin{gather}
  \raisebox{-5mm}{
    % \raisebox{-\ht\strutbox}{
    \begin{tikzpicture}
      \draw [line width=1.2pt](0,1) --( 1,0) ;
      \draw[line width=10pt,white](0,0)--(1,1);
      \draw[line width=1.2pt](0,0)--(1,1);
    \end{tikzpicture}
  }
  =
  A \,
%    \raisebox{-4mm}{
  \raisebox{-\ht\strutbox}{
    \begin{tikzpicture}
      \draw [line width=1.2pt] (1,1) to[out=-135,in=135] (1,0);
      \draw [line width=1.2pt] (0,1) to[out=-45,in=45] (0,0);
    \end{tikzpicture}
  }
  +A^{-1} \,
%    \raisebox{-4mm}{
  \raisebox{-\ht\strutbox}{
    \begin{tikzpicture}
      \draw [line width=1.2pt] (1,1) to[out=-135,in=-45] (0,1);
      \draw [line width=1.2pt] (1,0) to[out=135,in=45] (0,0);
    \end{tikzpicture}
  }
  ,
  \\[2mm]
%    \raisebox{-4mm}{
  \raisebox{-\ht\strutbox}{
    \begin{tikzpicture}
      \draw [line width=1.2pt] (0,0) circle (0.5) ;
    \end{tikzpicture}
  }
  = -A^2 - A^{-2} .
  \notag
\end{gather}
When $M=\Sigma \times [0,1]$ with an oriented surface $\Sigma$,
we write $\KBS_A(\Sigma)$.
Here,
a multiplication
$\mathbb{x} \, \mathbb{y}$
of curves $\mathbb{x}$ and $\mathbb{y}$
means
that $\mathbb{x}$ is vertically above $\mathbb{y}$,
\begin{gather}
  \mathbb{x} \, \mathbb{y}
  =
%  \text{$\mathbb{x}$ is above $\mathbb{y}$}
  \newcolumntype{C}{>{$}c<{$}}
  \begin{tabular}{|C|}
    \hline
    \hphantom{aa}\mathbb{x}\hphantom{aa}
    \\
    \hline
    \mathbb{y}
    \\
    \hline
  \end{tabular}
\end{gather}

Throughout this paper,
we  use  for simplicity
a variant of hyperbolic functions
\begin{align}
  \label{hyperbolic}
  &\sh(x) = x- x^{-1},
  &
  &\ch(x)=x+x^{-1} .
\end{align}
We recall the standard notations of $q$-calculus.
We use the $q$-Pochhammer symbol
defined by
\begin{gather}
  (x;q)_n =
  \prod_{j=1}^n
  \left( 1 - x \, q^{j-1} \right)
  ,
  \\
  \left( x_1, x_2, \cdots ; q \right)_n
  =
  \left( x_1 ; q \right)_n
  \left( x_2; q \right)_n \cdots
  .
  \notag
\end{gather}
Here we mean $(x;q)_0=1$,  and for negative integers 
\begin{equation}
  (x;q)_{-n} =
  \frac{(x;q)_\infty}{(x\, q^{-n};q)_\infty}
  =
  \frac{1}{\left( x \, q^{-n};q \right)_n} .
\end{equation}
We also use the $q$-hypergeometric series
\begin{equation}
  {}_r\phi_s
  \left[
    \begin{matrix}
      a_1,  \cdots,
      a_r
      \\
      b_1, \cdots,
      b_s
    \end{matrix}
    ;
    q,
    z
  \right]
  =
  \sum_{n=0}^\infty
  \frac{
    (a_1, \cdots, a_r ; q)_n
  }{
    (q, b_1, \cdots, b_s ;q)_n 
  }
  \left(
    (-1)^n q^{\frac{1}{2} n (n-1)}
  \right)^{1+s-r}
  z^n .
\end{equation}
See, \emph{e.g.},~\cite{GaspRahm04} for properties of the hypergeometric
functions.

%%%%%%%%%
\section*{Acknowledgments}
The author would like to thank H.~Fuji, A.N.~Kirillov, and H.~Murakami
for communications and comments on a draft of the manuscript.
A part of this work was presented at the workshop ``Volume Conjecture
in Tokyo'' on August 2018,
celebrating the 60th birthday of Jun~Murakami and Hitoshi~Murakami.
Thanks to the organizers and the participants.
This work is supported in part by KAKENHI
JP16H03927,
JP17K05239,
JP17K18781.
%%%%%%%%%%%%%%%%%%%%%%%%%%%%%%%

%%%%%%%%%%%%
\section{Once-Punctured Torus}
\label{sec:torus}
\subsection{Skein Algebra}
We study the skein module on a once-punctured torus $\Sigma_{1,1}$.
We set simple closed curves
$\mathbb{x}$,
$\mathbb{y}$,
$\mathbb{z}$,
and
$\mathbb{b}$ as in Fig.~\ref{fig:xyz_torus}.
See that $\mathbb{b}$ denotes the boundary circle of the puncture.

\begin{figure}[htbp]
  \centering
    \includegraphics[scale=1.0]{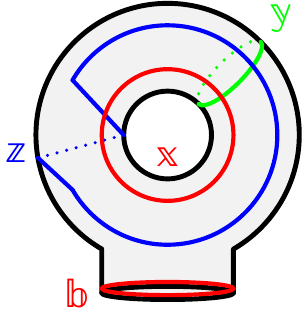}
  \caption{Depicted are simple closed curves on the  once-puncture torus $\Sigma_{1,1}$.}
  \label{fig:xyz_torus}
\end{figure}

\begin{prop}[\cite{BulloPrzyt99a}]
  \label{prop:Sigma11}
  The $\KBS_A(\Sigma_{1,1})$ is generated by
  $\mathbb{x}$,
  $\mathbb{y}$, and
  $\mathbb{z}$,
  satisfying
  \begin{align}
    A \, \mathbb{x} \, \mathbb{y}
    - A^{-1} \,  \mathbb{y} \, \mathbb{x}
    &=
      \left( A^2-A^{-2} \right)  \, \mathbb{z}
      ,
      \notag
    \\
    A  \,  \mathbb{y} \,  \mathbb{z}
    - A^{-1} \,  \mathbb{z} \,  \mathbb{y}
    & =
      \left( A^2-A^{-2} \right) \, \mathbb{x}
      ,
      \notag
    \\
    A  \, \mathbb{z} \, \mathbb{x}
    - A^{-1}\,  \mathbb{x}  \,  \mathbb{z}
    & =
      \left( A^2-A^{-2} \right) \, \mathbb{y}
      .
      \label{torus_xy_z}
  \end{align}
\end{prop}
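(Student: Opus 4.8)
The plan is to argue skein-theoretically, using only the classification of simple closed curves on $\Sigma_{1,1}$ and the Kauffman relation. First I would recall the standard fact that $\KBS_A(\Sigma_{1,1})$ is spanned over the coefficient ring by isotopy classes of multicurves. On $\Sigma_{1,1}$ every essential non-peripheral simple closed curve carries a slope $p/q\in\mathbb{Q}\cup\{\infty\}$, is primitive in $H_1$, and meets every curve of a different slope; hence a multicurve is just $n\ge 0$ parallel copies of a single slope curve $\gamma_{p/q}$ together with $m\ge 0$ parallel copies of the boundary $\mathbb{b}$, and — parallel copies being mutually disjoint and $\mathbb{b}$ disjoint from everything — such a multicurve equals $\gamma_{p/q}^{\,n}\,\mathbb{b}^{m}$ in the algebra. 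It therefore suffices to check the relations \eqref{torus_xy_z} and to show that every $\gamma_{p/q}$ and the boundary $\mathbb{b}$ lie in the subalgebra $\mathcal{A}$ generated by $\mathbb{x},\mathbb{y},\mathbb{z}$.

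For the relations, take $\mathbb{x}=\gamma_{0/1}$ and $\mathbb{y}=\gamma_{1/0}$, which meet transversally in one point, so the stacked product $\mathbb{x}\,\mathbb{y}$ has a single crossing. Resolving it by the Kauffman relation writes $\mathbb{x}\,\mathbb{y}$ as an $A$-linear combination of the two band sums $\gamma,\gamma'$ of $\mathbb{x}$ and $\mathbb{y}$ (of slopes $1/1$ and $-1/1$), while $\mathbb{y}\,\mathbb{x}$ has the crossing of the opposite sign and so is the same combination with $\gamma$ and $\gamma'$ interchanged; forming $A\,\mathbb{x}\,\mathbb{y}-A^{-1}\mathbb{y}\,\mathbb{x}$ makes one band sum drop out and leaves
\begin{equation*}
  A\,\mathbb{x}\,\mathbb{y}-A^{-1}\,\mathbb{y}\,\mathbb{x}=\left(A^{2}-A^{-2}\right)\mathbb{z},
\end{equation*}
where $\mathbb{z}$ is the surviving band sum, namely the curve so labelled in Fig.~\ref{fig:xyz_torus}. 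The other two relations follow without recomputation: the orientation-preserving mapping class group of $\Sigma_{1,1}$ is $SL(2;\mathbb{Z})$ and acts on $\KBS_A(\Sigma_{1,1})$ by algebra automorphisms; it contains the order-three rotation of the Farey triangle formed by $\mathbb{x},\mathbb{y},\mathbb{z}$, inducing an algebra automorphism cyclically permuting $\mathbb{x},\mathbb{y},\mathbb{z}$, and applying it and its square to the displayed relation gives the remaining two relations of \eqref{torus_xy_z}.

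For generation, the same one-crossing resolution applied to any Farey-adjacent pair of slopes $u=p/q$, $v=r/s$ (i.e.\ $|ps-qr|=1$, so $\gamma_u,\gamma_v$ meet once) gives $\gamma_u\gamma_v=A^{\varepsilon}\mu+A^{-\varepsilon}\nu$ and $\gamma_v\gamma_u=A^{\varepsilon}\nu+A^{-\varepsilon}\mu$ with $\varepsilon\in\{1,-1\}$, where $\mu,\nu$ are the two curves completing $\{u,v\}$ to a Farey triangle; solving, $\mu=(A^{2}-A^{-2})^{-1}\bigl(A^{\varepsilon}\gamma_u\gamma_v-A^{-\varepsilon}\gamma_v\gamma_u\bigr)$ and likewise $\nu$, so $\mu,\nu\in\mathcal{A}$ as soon as $\gamma_u,\gamma_v\in\mathcal{A}$. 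Starting from $\gamma_{0/1}=\mathbb{x}$, $\gamma_{1/0}=\mathbb{y}$ and inducting on distance from the base triangle $\{0/1,1/0,1/1\}$ in the Farey tessellation — the Farey graph being connected and each further vertex lying on an edge both of whose endpoints are strictly nearer — gives $\gamma_{p/q}\in\mathcal{A}$ for all slopes. Finally, resolving all three crossings of the triple product $\mathbb{x}\,\mathbb{y}\,\mathbb{z}$ and using that a contractible loop contributes $-A^{2}-A^{-2}$, one collects the eight resulting terms into the skein form of the Fricke identity, which exhibits $\mathbb{b}$ as an explicit polynomial in $\mathbb{x},\mathbb{y},\mathbb{z}$ — of the shape $\mathbb{b}=-\bigl(A^{2}\mathbb{x}^{2}+A^{-2}\mathbb{y}^{2}+A^{2}\mathbb{z}^{2}-A\,\mathbb{x}\,\mathbb{y}\,\mathbb{z}-A^{2}-A^{-2}\bigr)$ up to conventions. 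Together with the first paragraph this proves the proposition.

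The conceptual content, the reduction to the Farey tessellation, is immediate; the actual difficulty is the bookkeeping. One must fix once and for all the orientation of crossings and the labelling of band sums in order to pin down the powers of $A$ and the overall signs, both in the mediant resolution and, more delicately, in the eight-term resolution behind the formula for $\mathbb{b}$. In a polished account this last computation may simply be quoted, together with the full presentation of $\KBS_A(\Sigma_{1,1})$, from \cite{BulloPrzyt99a}.
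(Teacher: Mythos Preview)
The paper does not prove this proposition at all: it is stated with a citation to \cite{BulloPrzyt99a} and no argument is given. Your sketch is correct and follows exactly the standard line of that reference --- multicurve basis, single-crossing resolution to verify the relations, Farey induction for generation, and the triple-product resolution to capture $\mathbb{b}$ --- so there is nothing to compare against beyond noting that your formula for $\mathbb{b}$ agrees with the paper's \eqref{boundary_b}.
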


It is noted
that the boundary circle $\mathbb{b}$ is generated by
\begin{equation}
  \label{boundary_b}
  \mathbb{b}
  =
  A \, \mathbb{x} \, \mathbb{y} \, \mathbb{z} -
  A^2 \, \mathbb{x}^2 - A^{-2} \, \mathbb{y}^2 - A^2 \, \mathbb{z}^2
  +A^2+A^{-2}
  .
\end{equation}

%%%%%%%%%%%%%5

%%%%%%%%%%%%

%%%%%%%%%%%%
\subsection{DAHA of $A_1$-type}
We collect several
properties of DAHA of $A_1$-type.
Essential references are~\cite{Chered05Book,Macdonald03Book}.

\begin{definition}%[Cherednik~\cite{Chered05Book}]
  The DAHA of $A_1$-type
  $\mathcal{H}_{q,t}$ is $\mathbb{C}(q,t)$-algebra generated by
  $\mathsf{Y}^{\pm 1}$, $\mathsf{X}^{\pm 1}$, $\mathsf{T}^{\pm 1}$
satisfying
\begin{gather}
  \left( \mathsf{T} + t \right)
  \, \left( \mathsf{T} - t^{-1} \right) = 0
  ,
  \notag
  \\
  \mathsf{T} \, \mathsf{X} \, \mathsf{T}= \mathsf{X}^{-1}
  ,
  \notag
  \\
  \mathsf{T}^{-1} \, \mathsf{Y} \, \mathsf{T}^{-1}
  = \mathsf{Y}^{-1}
  ,
  \notag
  \\
  \mathsf{X} \, \mathsf{Y}
  = q^{-1} \, \mathsf{Y} \, \mathsf{X} \, \mathsf{T}^2
  .
\end{gather}
\end{definition}
%%%%%%%
We use an idempotent
\begin{equation}
  \mathsf{e}
  = \frac{1}{t+t^{-1}} 
  \left( 
    t +  \mathsf{T}
  \right)
  ,
  % \\
  % & =
  % \frac{1}{t+t^{-1}} \left(
  %   t^{-1} + \mathsf{T}^{-1}
  % \right)
\end{equation}
which satisfies
\begin{gather}
  \mathsf{e}^2 = \mathsf{e},
  \notag
  \\
  \mathsf{e} \, \mathsf{T} =
  \mathsf{T} \, \mathsf{e}
  =
  t^{-1} \mathsf{e}
  .
\end{gather}
% It is noted that
% % with $\mathsf{X}+\mathsf{X}^{-1}$
% % and $\mathsf{Y}+\mathsf{Y}^{-1}$;
% \begin{gather*}
%   \mathsf{e}  \left( \mathsf{X}+ \mathsf{X}^{-1}\right)
%   =
%   \left( \mathsf{X}+ \mathsf{X}^{-1}\right)  \mathsf{e},
%   \\
%   \mathsf{e}  \left( \mathsf{Y}+ \mathsf{Y}^{-1}\right)
%   =
%   \left( \mathsf{Y}+ \mathsf{Y}^{-1}\right)  \mathsf{e},
%   \\
%   \mathsf{e}  \left( q^{\frac{1}{2}} \mathsf{X} \mathsf{Y}
%     +
%     q^{-\frac{1}{2}} \mathsf{Y}^{-1} \mathsf{X}^{-1}
%   \right)
%   =
%   \left( q^{\frac{1}{2}} \mathsf{X} \mathsf{Y}
%     +
%     q^{-\frac{1}{2}} \mathsf{Y}^{-1} \mathsf{X}^{-1}
%   \right) \mathsf{e}
%   .
% \end{gather*}

\begin{definition}
  The spherical DAHA $\SH_{q,t}$ is
  \begin{equation*}
    \SH_{q,t}
    =
    \mathsf{e} \,  \mathcal{H}_{q,t} \,  \mathsf{e}
    .
  \end{equation*}
\end{definition}

%\subsubsection{}
The automorphisms of $A_1$-DAHA are listed in the following~\cite{Chered05Book}.
\begin{lemma}
  \begin{itemize}
  \item 
    An automorphism $\epsilon: \mathcal{H}_{q,t}\to
    \mathcal{H}_{q^{-1},t^{-1}}$;
    \begin{equation}
      \label{A1_epsilon}
      \epsilon:
      \left(
        \begin{matrix}
          \mathsf{X} 
          \\
          \mathsf{Y} \\
          \mathsf{T}  
        \end{matrix}
      \right)
      \mapsto
      \left(
        \begin{matrix}
          \mathsf{Y} \vphantom{X^{-1}}\\
          \mathsf{X} \vphantom{Y^{-1}}\\
          \mathsf{T}^{-1}\\
        \end{matrix}
      \right)
      % \qquad \qquad
      % \begin{pmatrix}
      %   q \\
      %   t
      % \end{pmatrix}
      % \mapsto
      % \begin{pmatrix}
      %   q^{-1}\\
      %   t^{-1} 
      % \end{pmatrix}
      .
    \end{equation}

  \item
    An anti-automorphism $\epsilon^\prime:
    \mathcal{H}_{q,t}\to\mathcal{H}_{q,t}$;
    \begin{equation}
      \epsilon^\prime:
      \begin{pmatrix}
        \mathsf{X} \\
        \mathsf{Y} \\
        \mathsf{T}
      \end{pmatrix}
      \mapsto
      \begin{pmatrix}
        \mathsf{Y}^{-1} \\
        \mathsf{X}^{-1} \\
        \mathsf{T}
      \end{pmatrix}
      .
    \end{equation}
  \end{itemize}
\end{lemma}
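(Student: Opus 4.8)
The plan is to treat each of $\epsilon$ and $\epsilon^\prime$ as defined only on the generators $\mathsf{X}^{\pm1},\mathsf{Y}^{\pm1},\mathsf{T}^{\pm1}$ and to appeal to the universal property of the presentation of $\mathcal{H}_{q,t}$: the assignment extends to an algebra homomorphism (for $\epsilon$) resp. anti-homomorphism (for $\epsilon^\prime$) precisely when it maps each of the four defining relations to a valid identity of the target algebra, reversing the order of every product first in the case of $\epsilon^\prime$. Here $\mathcal{H}_{q^{-1},t^{-1}}$ carries the same presentation with $q,t$ replaced by $q^{-1},t^{-1}$, and $\epsilon$ is taken $\mathbb{C}(q,t)$-linear. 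Invertibility will then be automatic: the same verification with the roles of $q,t$ and $q^{-1},t^{-1}$ swapped gives a map $\mathcal{H}_{q^{-1},t^{-1}}\to\mathcal{H}_{q,t}$ whose composites with $\epsilon$ are the identity on generators, so $\epsilon$ is an isomorphism; and $\epsilon^\prime$ is visibly an anti-involution, since $(\epsilon^\prime)^2(\mathsf{X})=\epsilon^\prime(\mathsf{Y}^{-1})=(\mathsf{X}^{-1})^{-1}=\mathsf{X}$ and $(\epsilon^\prime)^2(\mathsf{T})=\mathsf{T}$, hence bijective. So the whole task reduces to the four relation checks for each map.

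First I would handle $\epsilon$. Rewriting the Hecke relation $(\mathsf{T}+t)(\mathsf{T}-t^{-1})=0$ in the equivalent form $\mathsf{T}^{-1}=\mathsf{T}+t-t^{-1}$, one sees that $\epsilon$ carries it to $\mathsf{T}=\mathsf{T}^{-1}+t-t^{-1}$, i.e. $\mathsf{T}^{-1}=\mathsf{T}+t^{-1}-t$, which is exactly the Hecke relation of $\mathcal{H}_{q^{-1},t^{-1}}$. The relations $\mathsf{T}\mathsf{X}\mathsf{T}=\mathsf{X}^{-1}$ and $\mathsf{T}^{-1}\mathsf{Y}\mathsf{T}^{-1}=\mathsf{Y}^{-1}$ are simply interchanged by $\epsilon$, so each lands on the other. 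Finally $\epsilon$ sends $\mathsf{X}\mathsf{Y}=q^{-1}\mathsf{Y}\mathsf{X}\mathsf{T}^2$ to $\mathsf{Y}\mathsf{X}=q^{-1}\mathsf{X}\mathsf{Y}\mathsf{T}^{-2}$, which is precisely the last defining relation of $\mathcal{H}_{q^{-1},t^{-1}}$, namely $\mathsf{X}\mathsf{Y}=q\,\mathsf{Y}\mathsf{X}\mathsf{T}^2$, rearranged (multiply on the right by $\mathsf{T}^{-2}$, then by $q^{-1}$). That finishes $\epsilon$.

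For $\epsilon^\prime$ the Hecke relation involves only $\mathsf{T}$, which is fixed, and reversing products in a single-variable polynomial identity changes nothing, so it is preserved. The order-reversed images of $\mathsf{T}\mathsf{X}\mathsf{T}=\mathsf{X}^{-1}$ and $\mathsf{T}^{-1}\mathsf{Y}\mathsf{T}^{-1}=\mathsf{Y}^{-1}$ are $\mathsf{T}\,\mathsf{Y}^{-1}\mathsf{T}=\mathsf{Y}$ and $\mathsf{T}^{-1}\mathsf{X}^{-1}\mathsf{T}^{-1}=\mathsf{X}$, which are literally the inverses of the relations $\mathsf{T}^{-1}\mathsf{Y}\mathsf{T}^{-1}=\mathsf{Y}^{-1}$ and $\mathsf{T}\mathsf{X}\mathsf{T}=\mathsf{X}^{-1}$. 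The one relation that needs a genuine computation — and the step I expect to be the main obstacle — is the last: its order-reversed image is $\mathsf{X}^{-1}\mathsf{Y}^{-1}=q^{-1}\mathsf{T}^2\,\mathsf{Y}^{-1}\mathsf{X}^{-1}$. I would obtain this by conjugating the original relation by $\mathsf{T}$. From $\mathsf{T}\mathsf{X}\mathsf{T}=\mathsf{X}^{-1}$ and $\mathsf{T}^{-1}\mathsf{Y}\mathsf{T}^{-1}=\mathsf{Y}^{-1}$ one first extracts $\mathsf{T}\mathsf{X}=\mathsf{X}^{-1}\mathsf{T}^{-1}$, $\mathsf{Y}\mathsf{T}^{-1}=\mathsf{T}\,\mathsf{Y}^{-1}$, $\mathsf{T}^{-1}\mathsf{Y}=\mathsf{Y}^{-1}\mathsf{T}$, $\mathsf{X}\mathsf{T}=\mathsf{T}^{-1}\mathsf{X}^{-1}$, whence $\mathsf{T}(\mathsf{X}\mathsf{Y})\mathsf{T}^{-1}=\mathsf{X}^{-1}\mathsf{Y}^{-1}$ and $\mathsf{T}^{-1}(\mathsf{Y}\mathsf{X})\mathsf{T}=\mathsf{Y}^{-1}\mathsf{X}^{-1}$. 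Conjugating $\mathsf{X}\mathsf{Y}=q^{-1}\mathsf{Y}\mathsf{X}\mathsf{T}^2$ by $\mathsf{T}$ then gives $\mathsf{X}^{-1}\mathsf{Y}^{-1}=q^{-1}\mathsf{T}(\mathsf{Y}\mathsf{X})\mathsf{T}=q^{-1}\mathsf{T}^2\bigl(\mathsf{T}^{-1}(\mathsf{Y}\mathsf{X})\mathsf{T}\bigr)=q^{-1}\mathsf{T}^2\mathsf{Y}^{-1}\mathsf{X}^{-1}$, as required. The only delicate part is tracking which $\mathsf{T}$'s cancel in that chain, so I would write that line out termwise; everything else is immediate from the relations already recorded.
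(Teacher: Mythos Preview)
Your proof is correct. The paper does not actually prove this lemma; it merely states it with a citation to Cherednik's book~\cite{Chered05Book}, so there is nothing to compare against beyond noting that your direct relation-by-relation verification via the universal property of the presentation is the standard argument and works exactly as you describe.
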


\begin{lemma}[\cite{Chered05Book}]
  The $SL(2;\mathbb{Z})$ action on $\mathcal{H}_{q,t}$ is generated by
  \begin{align}
    \label{tau_for_A1}
    &
      \begin{pmatrix}
        1 & 1 \\
        0 & 1
      \end{pmatrix}
            \mapsto \tau_R ,
          &
          &
            \begin{pmatrix}
              1 & 0 \\
              1 & 1
            \end{pmatrix}
                  \mapsto \tau_L ,
  \end{align}
  where
  $\tau_{\bullet}: \mathcal{H}_{q,t} \to \mathcal{H}_{q,t}$ is
  \begin{align}
    \tau_R & :
      \left(
        \begin{matrix}
          \mathsf{T} \\
          \mathsf{Y} \\
          \mathsf{X} 
        \end{matrix}
      \right)
      \mapsto
      \left(
        \begin{matrix}
          \mathsf{T} \\
          q^{\frac{1}{2}} \, \mathsf{X} \, \mathsf{Y}
          \\
          \mathsf{X}
        \end{matrix}
      \right)
    ,
        \notag
      \\[2ex]
      \tau_L &:
      \left(
        \begin{matrix}
          \mathsf{T} \\
          \mathsf{Y} \\
          \mathsf{X} 
        \end{matrix}
      \right)
      \mapsto
      \left(
        \begin{matrix}
          \mathsf{T}
          \\
          \mathsf{Y}
          \\
          q^{-\frac{1}{2}} \, \mathsf{Y} \, \mathsf{X}
        \end{matrix}
      \right)
    .
    \label{auto_A1}
  \end{align}
\end{lemma}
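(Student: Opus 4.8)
The plan is to argue in two stages. First, the assignments in \eqref{auto_A1} are shown to extend to algebra automorphisms $\tau_R,\tau_L\colon\mathcal{H}_{q,t}\to\mathcal{H}_{q,t}$; second, one checks that $R:=\bigl(\begin{smallmatrix}1&1\\0&1\end{smallmatrix}\bigr)\mapsto\tau_R$ and $L:=\bigl(\begin{smallmatrix}1&0\\1&1\end{smallmatrix}\bigr)\mapsto\tau_L$ are compatible with a presentation of $SL(2;\mathbb{Z})$.

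For the first stage it suffices, by the universal property of the presentation of $\mathcal{H}_{q,t}$, to verify that the proposed images of $\mathsf{X},\mathsf{Y},\mathsf{T}$ again satisfy the four defining relations and to exhibit a two-sided inverse. For $\tau_R$ the images of $\mathsf{X}$ and $\mathsf{T}$ are unchanged, so the Hecke relation and $\mathsf{T}\,\mathsf{X}\,\mathsf{T}=\mathsf{X}^{-1}$ are automatic, and only $\mathsf{T}^{-1}\mathsf{Y}'\mathsf{T}^{-1}=(\mathsf{Y}')^{-1}$ and $\mathsf{X}\,\mathsf{Y}'=q^{-1}\,\mathsf{Y}'\,\mathsf{X}\,\mathsf{T}^2$ need checking, with $\mathsf{Y}'=q^{1/2}\mathsf{X}\,\mathsf{Y}$. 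Each reduces, after moving $\mathsf{T}^{\pm1}$ past $\mathsf{X}$ via $\mathsf{X}\,\mathsf{T}=\mathsf{T}^{-1}\mathsf{X}^{-1}$ (which follows from $\mathsf{T}\,\mathsf{X}\,\mathsf{T}=\mathsf{X}^{-1}$), to the cross-relation $\mathsf{X}\,\mathsf{Y}=q^{-1}\,\mathsf{Y}\,\mathsf{X}\,\mathsf{T}^2$ and, for the first identity, to $\mathsf{T}^{-1}\mathsf{Y}\,\mathsf{T}^{-1}=\mathsf{Y}^{-1}$, the half-integer powers of $q$ cancelling. The inverse of $\tau_R$ sends $\mathsf{Y}\mapsto q^{-1/2}\mathsf{X}^{-1}\mathsf{Y}$ and fixes $\mathsf{X},\mathsf{T}$, and is checked the same way. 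For $\tau_L$ one may repeat this verification symmetrically, or deduce it from $\tau_R$ by transport along the automorphism $\epsilon$ of \eqref{A1_epsilon}: since $\epsilon$ interchanges $\mathsf{X}\leftrightarrow\mathsf{Y}$ (and $q\leftrightarrow q^{-1}$, $t\leftrightarrow t^{-1}$) it also interchanges $R$ and $L$, carrying $\tau_R$ to $\tau_L$.

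For the second stage, recall that $R$ and $L$ generate $SL(2;\mathbb{Z})$ with the presentation $\langle R,L\mid R\,L^{-1}R=L^{-1}R\,L^{-1},\ (R\,L^{-1}R)^4=1\rangle$ (equivalently $B_3$ modulo the square of its centre). One must therefore verify, as automorphisms of $\mathcal{H}_{q,t}$, the braid relation $\tau_R\,\tau_L^{-1}\,\tau_R=\tau_L^{-1}\,\tau_R\,\tau_L^{-1}$ and the order relation for $\sigma:=\tau_R\,\tau_L^{-1}\,\tau_R$. A direct computation on the generators, using only $\mathsf{T}\,\mathsf{X}\,\mathsf{T}=\mathsf{X}^{-1}$, $\mathsf{T}^{-1}\mathsf{Y}\,\mathsf{T}^{-1}=\mathsf{Y}^{-1}$, and $\mathsf{X}\,\mathsf{Y}=q^{-1}\,\mathsf{Y}\,\mathsf{X}\,\mathsf{T}^2$, gives $\sigma\colon\mathsf{T}\mapsto\mathsf{T}$, $\mathsf{X}\mapsto\mathsf{Y}^{-1}$, $\mathsf{Y}\mapsto\mathsf{X}\,\mathsf{T}^2$, and shows that both sides of the braid relation have the same effect on $\mathsf{T},\mathsf{X},\mathsf{Y}$, hence agree. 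Iterating, $\sigma^2$ fixes $\mathsf{T}$ and sends $\mathsf{X}\mapsto\mathsf{T}^{-2}\mathsf{X}^{-1}$, $\mathsf{Y}\mapsto\mathsf{Y}^{-1}\mathsf{T}^2$, which is exactly the map $g\mapsto\mathsf{T}^{-1}g\,\mathsf{T}$; thus $\sigma^2$, and a fortiori $\sigma^4$, is the inner automorphism given by conjugation by a power of $\mathsf{T}$. Since $\mathsf{e}\,\mathsf{T}=\mathsf{T}\,\mathsf{e}=t^{-1}\mathsf{e}$, such an inner automorphism is trivial on the spherical subalgebra $\SH_{q,t}=\mathsf{e}\,\mathcal{H}_{q,t}\,\mathsf{e}$; there $\sigma^2=\mathrm{id}$, so both relations of the presentation hold and one obtains the asserted $SL(2;\mathbb{Z})$-action (which in fact descends to $PSL(2;\mathbb{Z})$ on $\SH_{q,t}$). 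On $\mathcal{H}_{q,t}$ itself $(\tau_R\,\tau_L^{-1}\,\tau_R)^4$ equals conjugation by $\mathsf{T}^{-2}$ rather than the identity, so strictly speaking one has there an action of the braid group $B_3$.

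The step I expect to be the real obstacle is not any single identity — each is a short manipulation of the three basic relations above — but keeping the conventions consistent throughout: the $q^{\pm1/2}$ normalisations built into $\tau_R,\tau_L$, the direction of the conjugations picked up whenever $\mathsf{T}^2$ is pushed past $\mathsf{X}$ or $\mathsf{Y}$, and the fact that $(\tau_R\,\tau_L^{-1}\,\tau_R)^4$ is only inner, so that ``the $SL(2;\mathbb{Z})$ action'' must be understood in the projective (or spherical) sense. With those conventions fixed the verification is mechanical; see~\cite{Chered05Book} for the original treatment.
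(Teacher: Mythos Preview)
The paper does not supply its own proof of this lemma; it simply records the statement with a citation to Cherednik's monograph~\cite{Chered05Book}. Your verification is the standard one and is correct: the images under $\tau_R,\tau_L$ preserve the four defining relations of $\mathcal{H}_{q,t}$ (your computations check out once one uses $\mathsf{T}\mathsf{Y}^{-1}=\mathsf{Y}\mathsf{T}^{-1}$ and $\mathsf{X}\mathsf{T}=\mathsf{T}^{-1}\mathsf{X}^{-1}$), and your identification $\sigma=\tau_R\tau_L^{-1}\tau_R\colon(\mathsf{T},\mathsf{X},\mathsf{Y})\mapsto(\mathsf{T},\mathsf{Y}^{-1},\mathsf{X}\mathsf{T}^2)$ with $\sigma^2$ equal to conjugation by $\mathsf{T}^{-1}$ is accurate.

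Your observation that on $\mathcal{H}_{q,t}$ one only obtains a $B_3$-action (equivalently a projective $SL(2;\mathbb{Z})$-action), with the honest $SL(2;\mathbb{Z})$ --- in fact $PSL(2;\mathbb{Z})$ --- action living on the spherical subalgebra $\SH_{q,t}$, is a genuine refinement of the paper's phrasing; this is exactly how Cherednik states it, and the paper later uses the action only on $\SH_{q,t}$ anyway (see the conjugation realisation in~\eqref{A1_conjugation}). One small caution on your shortcut for $\tau_L$: the relation $\tau_L=\epsilon\,\tau_R\,\epsilon$ (which the paper also records in~\eqref{A1_L_epsilon}) involves $\epsilon\colon\mathcal{H}_{q,t}\to\mathcal{H}_{q^{-1},t^{-1}}$, so the middle $\tau_R$ is taken with the parameter $q^{-1}$; this is what makes the $q^{\pm 1/2}$ sign come out right.
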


% Cherednik defined
% automorphism on $\{\mathsf{T}, x, \mathsf{Y}\}$;
% \begin{align}
%   &
%     \tau_+
%     =
%     \begin{pmatrix}
%       1 & 1 \\
%       0 & 1
%     \end{pmatrix}
%           :
%           \begin{array}{l}
%             \mathsf{T} \longmapsto \mathsf{T}
%             \\
%             \mathsf{Y} \longmapsto q^{\frac{1}{2}} \, x \, \mathsf{Y}
%             \\
%             x \longmapsto x            
%           \end{array}
%   \\[2mm]
%   &
%     \tau_- 
%     =
%     \begin{pmatrix}
%       1 & 0 \\
%       1 & 1
%     \end{pmatrix}
%           :
%           \begin{array}{l}
%             \mathsf{T} \longmapsto\mathsf{T}
%             \\
%             \mathsf{Y} \longmapsto\mathsf{Y}
%             \\
%             x \longmapsto q^{-\frac{1}{2}} \mathsf{Y} \, x
%           \end{array}
% \end{align}

We note that,
with~\eqref{A1_epsilon},
we have
\begin{equation}
  \label{A1_L_epsilon}
  \tau_L = \epsilon \, \tau_R \, \epsilon
  .
\end{equation}

%%%%%%%% 5
\subsection{Polynomial Representation and Macdonald Polynomial}
We recall a representation on ring of the Laurent polynomials
$\mathbb{C}[x^{\pm 1}]$~\cite{Chered05Book}.
%$\mathbb{C}(q^{\frac{1}{2}},t)[x^{\pm 1}]$.
We use
an involution $\mathsf{s}$
and
a $q$-difference operator
$\eth$  respectively defined by
\begin{align}
  & \mathsf{s} \, f(x) = f(x^{-1}),
  &
  &
  \eth \, f(x) = f(q\, x) ,
  % \mathsf{s} \, x = \frac{1}{x} \, \mathsf{s} ,
  % \\
  % {\eth} \, x = q \, x \, \eth .
\end{align}
where
$f \in \mathbb{C}[x^{\pm 1}]$.
%$f\in \mathbb{C}(q^{\frac{1}{2}}, t)[x^{\pm 1}]$.
% We note that
% \begin{equation*}
%   \mathsf{s}^2 = 1 ,
%   \qquad
%   \qquad
%   \mathsf{s} \, {\eth} = {\eth}^{-1} \mathsf{s}   .
% \end{equation*}

\begin{prop}
  A polynomial representation in
  $\mathbb{C}[x^{\pm 1}]$
%  $\mathbb{C}(q^{\frac{1}{2}},t)[x^{\pm 1}]$
  is given by
  \begin{align}
    \label{A1_represent}
    \mathsf{T}
    & \mapsto
      t^{-1} \mathsf{s} +
      \left( t^{-1} - t \right)
      \frac{1}{x^2-1} \,  \left(\mathsf{s}-1 \right)
      ,
    \\
    \mathsf{X}
    & \mapsto
      x
      ,
      \notag
    \\
    \mathsf{Y}
    & \mapsto
      {\eth} \,  \mathsf{s} \, \mathsf{T}
      .
      \notag
  \end{align}
\end{prop}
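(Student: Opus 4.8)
The plan is to verify directly that the proposed assignments in \eqref{A1_represent} satisfy the four defining relations of $\mathcal{H}_{q,t}$, since $\mathbb{C}[x^{\pm1}]$ is a free module on which we are prescribing operators and the only thing to check is that the relations hold as operator identities. First I would record the building blocks: $\mathsf{s}$ is an involution with $\mathsf{s}\,x\,\mathsf{s}=x^{-1}$ (acting on polynomials), $\eth$ satisfies $\eth\,x=q\,x\,\eth$, and $\mathsf{s}\,\eth\,\mathsf{s}=\eth^{-1}$. It is convenient to abbreviate the coefficient $c(x)=\dfrac{t^{-1}-t}{x^2-1}$, so that $\mathsf{T}\mapsto t^{-1}\mathsf{s}+c(x)(\mathsf{s}-1)$; one should check at the outset that $\mathsf{T}$ indeed preserves $\mathbb{C}[x^{\pm1}]$, i.e.\ that $(\mathsf{s}-1)f$ is always divisible by $x^2-1$, which holds because $(\mathsf{s}-1)f$ vanishes at $x=\pm1$.

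Next I would dispatch the relations one at a time. \emph{Hecke relation $(\mathsf{T}+t)(\mathsf{T}-t^{-1})=0$:} expand using $\mathsf{s}^2=1$ and $\mathsf{s}\,c(x)=c(x^{-1})\,\mathsf{s}$; one finds $c(x^{-1})=\dfrac{t^{-1}-t}{x^{-2}-1}=-x^2\,c(x)\cdot\frac1{x^2}$, more usefully $c(x)+c(x^{-1})=-(t^{-1}-t)$, and the cross terms telescope to give $\mathsf{T}^2=(t^{-1}-t)\mathsf{T}+1$, which is exactly the quadratic relation. \emph{Relation $\mathsf{T}\,\mathsf{X}\,\mathsf{T}=\mathsf{X}^{-1}$:} with $\mathsf{X}=x$ this becomes $\mathsf{T}\,x\,\mathsf{T}=x^{-1}$; using $\mathsf{T}^2=(t^{-1}-t)\mathsf{T}+1$ it suffices to show $\mathsf{T}\,x+x\,\mathsf{T}=(t^{-1}-t)x^{-1}\cdot(\text{something})$ — cleaner is to compute $\mathsf{T}x-x^{-1}\mathsf{T}$ directly and show it equals $(t-t^{-1})x^{-1}\cdot$(the part that makes $\mathsf{T}xT=x^{-1}$ after multiplying by $\mathsf{T}$ on the right and using the Hecke relation); the key identity here is $(x-x^{-1})c(x)=-(t^{-1}-t)\dfrac{1}{x+x^{-1}}$ rearranged appropriately, so I would just grind the $2\times2$ worth of terms. \emph{Relation $\mathsf{T}^{-1}\mathsf{Y}\mathsf{T}^{-1}=\mathsf{Y}^{-1}$:} substitute $\mathsf{Y}=\eth\,\mathsf{s}\,\mathsf{T}$; then $\mathsf{T}^{-1}\mathsf{Y}\mathsf{T}^{-1}=\mathsf{T}^{-1}\eth\,\mathsf{s}$, and this equals $\mathsf{Y}^{-1}=\mathsf{T}^{-1}\mathsf{s}\,\eth^{-1}=\mathsf{T}^{-1}\eth\,\mathsf{s}$ after using $\mathsf{s}\,\eth^{-1}=\eth\,\mathsf{s}$, so this one is essentially immediate once the previous relations are in hand. \emph{Relation $\mathsf{X}\mathsf{Y}=q^{-1}\mathsf{Y}\mathsf{X}\mathsf{T}^2$:} compute $\mathsf{X}\mathsf{Y}=x\,\eth\,\mathsf{s}\,\mathsf{T}$ and $\mathsf{Y}\mathsf{X}=\eth\,\mathsf{s}\,\mathsf{T}\,x=\eth\,\mathsf{s}\,x^{-1}\mathsf{T}^{-1}=\eth\,x\,\mathsf{s}\,\mathsf{T}^{-1}=q^{-1}x\,\eth\,\mathsf{s}\,\mathsf{T}^{-1}$ using $\mathsf{T}x\mathsf{T}=x^{-1}$ then $\eth x=q^{-1}x\eth$ — wait, $\eth x = qx\eth$, so this gives $\mathsf{Y}\mathsf{X}=q\,x\,\eth\,\mathsf{s}\,\mathsf{T}^{-1}$; hence $q^{-1}\mathsf{Y}\mathsf{X}\mathsf{T}^2=x\,\eth\,\mathsf{s}\,\mathsf{T}^{-1}\mathsf{T}^2=x\,\eth\,\mathsf{s}\,\mathsf{T}=\mathsf{X}\mathsf{Y}$, as required.

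The main obstacle I anticipate is the relation $\mathsf{T}\,\mathsf{X}\,\mathsf{T}=\mathsf{X}^{-1}$ (equivalently, checking that $\mathsf{T}$ as a rational-coefficient operator conjugates $x$ to $x^{-1}$ in the precise Hecke-deformed sense): it is the one step that genuinely uses the explicit form of the divided-difference coefficient $c(x)$ rather than just the abstract relations among $\mathsf{s}$, $\eth$, and $\mathsf{T}$, so one has to manipulate the rational function $\dfrac{t^{-1}-t}{x^2-1}$ carefully and confirm all denominators cancel so that the result is the honest Laurent-polynomial operator $x^{-1}$. Everything else is either formal manipulation with $\mathsf{s}$ and $\eth$ or follows from the Hecke relation once established. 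I would therefore organize the proof as: (i) well-definedness of $\mathsf{T}$ on $\mathbb{C}[x^{\pm1}]$; (ii) the Hecke relation; (iii) $\mathsf{T}\mathsf{X}\mathsf{T}=\mathsf{X}^{-1}$, the crux; (iv) the remaining two relations, which then drop out quickly. A standard reference for this Demaz\-ure–Lusztig / Dunkl-type computation is \cite{Chered05Book}, so in the write-up I would present the nontrivial identities for $c(x)$ and leave the bookkeeping to the reader.
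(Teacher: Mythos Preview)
Your plan is correct and is the standard direct verification; the paper itself does not supply a proof of this proposition but simply states it as a known result (with implicit reference to \cite{Chered05Book,Macdonald03Book}), so there is nothing to compare against beyond that.

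A couple of small remarks on the details. Your aside $(x-x^{-1})c(x)=-(t^{-1}-t)\frac{1}{x+x^{-1}}$ is not right (one gets $(t^{-1}-t)/x$), but you abandon that route anyway. The clean identities that make both the Hecke relation and $\mathsf{T}\mathsf{X}\mathsf{T}=\mathsf{X}^{-1}$ fall out are
\[
c(x)+c(x^{-1})=-(t^{-1}-t),\qquad x\,c(x)+x^{-1}c(x^{-1})=0,
\]
the second of which kills both the $\mathsf{s}$-coefficient and the residual constant term in $\mathsf{T}x\mathsf{T}$ simultaneously; once you isolate these two identities the ``crux'' step (iii) is no harder than (ii). Your handling of the last two relations is fine (and yes, $\eth x=qx\eth$ is the correct sign).
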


% We also have
  % \begin{equation*}
  %   \begin{aligned}[t]
  %     \mathsf{e}
  %     &    \mapsto
  %     \frac{1}{t+t^{-1}}
  %     \left(
  %       \frac{t \, x - t^{-1} x^{-1}}{x-x^{-1}}
  %       +
  %       \frac{t^{-1} x - t\, x^{-1}}{x-x^{-1}} \,
  %       \mathsf{s}
  %     \right)
  %     \\
  %     & =
  %     \left( 1+\mathsf{s} \right)
  %     \frac{1}{t+t^{-1}} \frac{t \, x - t^{-1} x^{-1}}{x-x^{-1}}
  %   \end{aligned}
  % \end{equation*}
  % and 
  % \begin{equation*}
  %   \left. \mathsf{e} \right|_{\text{sym}}=1
  % \end{equation*}

Here $\mathsf{Y}$ is called the Dunkl--Cherednik operator.
We see that
$\mathsf{T} f=t^{-1} f$ for the symmetric Laurent polynomials
$f\in \mathbb{C}[x+x^{-1}]$, and that
$\mathsf{e} \, \mathbb{C}[x]
= \mathbb{C}[x+x^{-1}]$.
Thus $\SH_{q,t}$ preserves a symmetric space~$\mathbb{C}[x+x^{-1}]$.
As a $q$-difference operator of $\SH_{q,t}$,
we have the following expression,
\begin{gather}
  \label{A1_Macdonald}
  \left.
    \mathsf{Y}
    +  \mathsf{Y}^{-1}
  \right|_{\text{sym}}
  \mapsto
  \frac{
    t \, x - t^{-1} x^{-1}}{
    x - x^{-1}
  } \, \eth
  + \frac{
    t^{-1} x - t \, x^{-1}}{
    x-x^{-1}
  } \,
  \eth^{-1}
  ,
\end{gather}
where $\left. h \right|_{\text{sym}}$ means that
$h \in \mathcal{H}_{q,t}$ acts on the symmetric
Laurent polynomial space $\mathbb{C}[x+x^{-1}]$.
This operator is known as
the Macdonald operator (see, \emph{e.g.},~\cite{Macdo95}).
One also finds that 
\begin{equation}
  \label{A1_rep_xyz}
  % \left.
  %   q^{-\frac{1}{2}} \mathsf{Y} \, \mathsf{X}
  %   +
  %   q^{\frac{1}{2}} \mathsf{X}^{-1} \mathsf{Y}^{-1}
  % \right|_{\text{sym}}
    % & \mapsto
    % q^{\frac{1}{2}}
    % \left(
    %   x \,  \frac{t \, x - t^{-1} x^{-1}}{   x-x^{-1} } \,  \eth
    %   +x^{-1} \frac{t^{-1} x -t \, x^{-1}}{x-x^{-1}} \,    \eth^{-1}
    % \right)
    % ,
    % \\
  \left.
    q^{\frac{1}{2}} \mathsf{X} \, \mathsf{Y}
    + q^{-\frac{1}{2}} \mathsf{Y}^{-1} \, \mathsf{X}^{-1}
  \right|_{\text{sym}}
  \mapsto
  q^{\frac{1}{2}} x \,
  \frac{t \, x - t^{-1} x^{-1}}{
    x - x^{-1}
  } \, \eth
  +
  q^{\frac{1}{2}} x^{-1} \,
  \frac{t^{-1} x - t \, x^{-1}}{
    x - x^{-1}
  } \,
  \eth^{-1}
  .
\end{equation}

%\subsubsection{Macdonald Polynomial}

We  have the  non-symmetric Macdonald polynomials $E_m(x;q, t)$
as eigenfunctions of~$\mathsf{Y}$,
% We have
% $E_0(x; q, t)=1$
% and
\begin{align}
  \mathsf{Y} \, E_{-m}(x; q, t)
  & = t^{-1} q^{-m} E_{-m}(x; q, t)
    ,
    \notag
  \\
  \mathsf{Y} \, E_{m}(x; q,t)
  & = t \, q^m \, E_m(x; q,t)
    .
    \label{A1_nonsymmetric}
\end{align}
Here $m>0$, and
the Laurent polynomials $E_m(x;q, t)$ have forms of
\begin{align}
  E_{-m}(x; q, t) & = x^{-m}
                    + \frac{\left(t-t^{-1} \right) q^m}{
                    t \, q^m - t^{-1} q^{-m}
                    } x^m
                    + \cdots
                    ,
                    \notag
  \\
  E_m(x; q, t)
                  & =  x^m+ \cdots
                    ,
\end{align}
where $\cdots$ means Laurent polynomials
$x^k$ with $|k|<m$.
It is noted that $E_0(x;q,t)=1$ and that
$\mathsf{Y} E_0(x;q,t)=t^{-1} E_0(x;q,t)$.

Symmetric eigenfunctions of~\eqref{A1_Macdonald} are
the  Macdonald polynomials of $A_1$-type
(also known as
the $q$-ultraspherical polynomial, or the Rogers
polynomial~\cite{Macdonald03Book}).
Explicitly,
we have
\begin{equation}
  \label{Macdonald_eigen}
  \left( \mathsf{Y} + \mathsf{Y}^{-1} \right) M_n(x; q, t)
  =
  \left(
    t \, q^n + t^{-1} q^{-n}
  \right) \,
  M_n(x; q, t)    ,
\end{equation}
where
\begin{align}
  \label{A1_Macdo_def}
  M_n(x ; q, t)
  & =
    x^n \cdot
    {}_2 \phi_1
    \left[
    \begin{matrix}
      t^2 , \quad  q^{-2n}
      \\
      t^{-2} q^{2-2n}
    \end{matrix}
  ; ~
  q^2 , q^2 \, t^{-2} x^{-2}
  \right]
  \\
  & =
    \frac{
    (q^2;q^2)_n}{
    (t^2;q^2)_n}
    \sum_{\substack{
    j,k \geq 0 \\
  j+k=n
  }}
  \frac{
  (t^2;q^2)_j (t^2;q^2)_k
  }{
  (q^2;q^2)_j (q^2;q^2)_k
  } \,
  x^{j-k}
  .
  \notag
\end{align}
Here the polynomials are normalized to be
$M_0(x;q, t)=1$ and
\begin{equation*}
  M_n(x; q, t)
  =
  \left(  x^n+ x^{-n} \right)
  +
  \cdots
  ,
\end{equation*}
for $n>0$.
The polynomials $ M_n(x;q, t) $ span the
symmetric Laurent polynomial space
$\mathbb{C}[x+x^{-1}]$.
In terms of the non-symmetric polynomials~\eqref{A1_nonsymmetric}, we have
\begin{align}
  \label{Macdo_sum_non-sym}
    M_m(x;q, t)
    & =
    E_{-m}(x;q, t)
    + \frac{ q^m -q^{-m}}{
      t^2 q^m - q^{-m}
    } \, E_m(x; q, t)
    \\
    & =
    t^{-1}  \left( \mathsf{T} + t \right)
    E_m(x;q,t)
      .
      \notag
\end{align}
Some of them are explicitly written as follows;
\begin{align}
  M_0(x; q, t)
  & = 1
    ,
    \notag
  \\
  M_1 (x; q, t )
  & = x+x^{-1}
    ,
    \label{A1_Macdo_1}
  \\
  M_2(x; q,t )
  &=
    x^2+x^{-2}
    + \frac{\left( 1+q^2 \right) \left( 1-t^2 \right)}{
    1-q^2t^2}
    ,
    \notag
  \\
  M_3(x; q, t)
  &=
    x^3
    +x^{-3}
    +
    \frac{ \left(1-q^6 \right) \left(1-t^2 \right)}{
    \left( 1-q^2 \right)  \left( 1-q^4 t^2 \right)}
    \left( x+ x^{-1} \right)
    .
    \notag
\end{align}
Note that the generating function of the $A_1$-type Macdonald polynomials is
\begin{equation}
  \label{gf_A1_polynomial}
  \sum_{n=0}^\infty
  M_n(x;q, t) \,
  \frac{
    \left( t^2; q^2 \right)_n}{
    \left( q^2;q^2 \right)_n} \,
  z^n
  =
  \frac{
    \left( t^2 x \, z , t^2 x^{-1} z ; q^2 \right)_\infty
  }{
    \left(x \, z , x^{-1} z ; q^2 \right)_\infty
  } .
\end{equation}
One sees that
the Macdonald polynomial~\eqref{A1_Macdo_def} reduces  at $q=t$
to
\begin{equation}
  \label{Macdo_q_equal_t}
  M_n(x;q,q)
  =
  \frac{
    x^{n+1}-x^{-n-1}
  }{
    x -x^{-1}
  }
  =
  S_n(x+x^{-1})
  .
\end{equation}
Here $S_n(z)$ is 
the Chebyshev polynomial of the second kind,
which is also
defined recursively by
\begin{equation}
  \label{recursion_S}
  z \, S_n(z)=S_{n+1}(z)+S_{n-1}(z)  ,
\end{equation}
with
$S_0(z)=1$, $S_1(z)=z$.

% Known also is the connection formula
% \begin{equation}
%   M_n(x;q, s)
%   =
%   \frac{(q^2;q^2)_n}{(s^2;q^2)_n}
%   \sum_{k=0}^{\left\lfloor \frac{n}{2} \right\rfloor}
%   \frac{
%     (s^2;q^2)_{n-k} \,  (t^2 q^2;q^2)_{n-2k} \,
%     (\frac{s^2}{ t^{2}}; q^2)_k
%   }{
%     (q^2;q^2)_k (q^2;q^2)_{n-2k}
%     (t^2 q^2;q^2)_{n-k}
%   } \,
%   t^{2k} \,
%   M_{n-2k}(x;q, t)
% \end{equation}

%%%
%\subsubsection{}
We list some identities of the $A_1$-type
Macdonald polynomials.
As a typical property of   orthogonal polynomials,
we have the three-term recurrence relation,
\begin{gather}
  \label{three-term}
  (\mathsf{X}+\mathsf{X}^{-1}) \, M_n(\mathsf{X}; q, t)
  =
  M_{n+1}(\mathsf{X}; q, t) +
  \frac{
    \left( 1-q^{2n} \right) \left(1-q^{2n-2} t^4 \right)
  }{
    \left( 1-q^{2n-2}t^2 \right)
    \left( 1-q^{2n} t^2 \right)
  } \, M_{n-1} (\mathsf{X}; q, t) .
\end{gather}
We also have
\begin{multline}
  \label{three-term_2}
  \left(
    q^{-\frac{1}{2}} \mathsf{Y} \, \mathsf{X}
    +
    q^{\frac{1}{2}} \mathsf{X}^{-1} \mathsf{Y}^{-1}
  \right) M_n( \mathsf{X}; q, t)
  \\
  =
  t \, q^{n+\frac{1}{2}}  \, M_{n+1}(\mathsf{X}; q, t)
  +
  t^{-1}  q^{-n+\frac{1}{2}} 
  \frac{
    \left( 1-q^{2n} \right)  \left( 1-q^{2n-2} t^4 \right)
  }{
    \left( 1-q^{2n-2} t^2 \right) \left(1-q^{2n} t^2 \right) 
  } \, M_{n-1}(\mathsf{X}; q, t) .
\end{multline}

For our later computations, we introduce the
raising and lowering operators of $M_n(x;q,t)$.
\begin{prop}
  \label{prop:raising_lowering}
  We have  the raising operator with a parameter shift, 
  \begin{multline}
    \label{raise}
    \left\{
      \frac{
        \left( 1 - t^2 x^2 \right) \left( 1-q^2 t^2 x^2 \right)
      }{
        q\, t^2 x \left( x^2 -1 \right)
      } \,
      \eth
      -
      \frac{
        \left( t^2-x^2 \right) \left( t^2 q^2 - x^2 \right)
      }{
        q\, t^2 x \left( x^2 -1 \right)
      } \, 
      \eth^{-1}
    \right\} M_m(x; q, q \, t)
    \\
    =
    \left( q^{m+1} t^2 - q^{-m-1}t^{-2} \right) M_{m+1}(x; q, t)
    .
  \end{multline}
  The lowering operator with a parameter shift is given by
  \begin{equation}
    \label{lower}
    \frac{x}{x^2-1}
    \left( \eth - \eth^{-1} \right) M_m(x; q, t)
    =
    \left( q^m -q ^{-m} \right) M_{m-1}(x; q, q \, t )
    .
  \end{equation}
\end{prop}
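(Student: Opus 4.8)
\emph{The plan.}
Both formulas are ``shift operators'' for the $A_{1}$-type Macdonald (continuous $q$-ultraspherical) polynomials: each moves the parameter $t\mapsto q\,t$ while changing the degree by one. The plan is to prove \eqref{lower} first by a short generating-function computation, and then to deduce \eqref{raise} from \eqref{lower} together with a factorization of the Macdonald operator. Write $\mathcal{L}=\frac{x}{x^{2}-1}\,(\eth-\eth^{-1})$ for the operator in \eqref{lower} --- it carries no parameter, even though \eqref{lower} makes it intertwine parameter $t$ with parameter $q\,t$ --- write $\mathcal{R}$ for the operator in \eqref{raise}, $\mathcal{M}$ for the Macdonald operator \eqref{A1_Macdonald}, and $\lambda_{n}=t\,q^{n}+t^{-1}q^{-n}$ for the eigenvalue of $\mathcal{M}$ on $M_{n}(x;q,t)$ as in \eqref{Macdonald_eigen}.

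\emph{Lowering.}
Let $G(x,z;t)$ denote the right-hand side of \eqref{gf_A1_polynomial}. Since $\{M_{n}(x;q,t)\}_{n\ge 0}$ is a basis of $\mathbb{C}[x+x^{-1}]$ and \eqref{gf_A1_polynomial} is a genuine power series in $z$, the identity \eqref{lower} is equivalent to the one obtained by multiplying through by $\frac{(t^{2};q^{2})_{m}}{(q^{2};q^{2})_{m}}z^{m}$ and summing over $m$. I would carry out that sum: reindexing $m\mapsto m+1$ on the right and using $\frac{q^{n+1}-q^{-n-1}}{1-q^{2n+2}}=-q^{-n-1}$ together with $(t^{2};q^{2})_{n+1}=(1-t^{2})(q^{2}t^{2};q^{2})_{n}$, it becomes
\begin{equation*}
  \frac{x}{x^{2}-1}\,\bigl(G(q\,x,z;t)-G(q^{-1}x,z;t)\bigr)
  =-q^{-1}(1-t^{2})\,z\;G\bigl(x,q^{-1}z;q\,t\bigr) .
\end{equation*}
To verify this, divide both sides by $G(x,q^{-1}z;q\,t)=\frac{(q\,t^{2}xz,\,q\,t^{2}x^{-1}z;q^{2})_{\infty}}{(q^{-1}xz,\,q^{-1}x^{-1}z;q^{2})_{\infty}}$ and apply the elementary relation $(c\,q^{-1};q^{2})_{\infty}=(1-c\,q^{-1})\,(c\,q;q^{2})_{\infty}$ to the four Pochhammer symbols appearing in $G(q^{\pm1}x,z;t)$. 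All infinite products then cancel, the left-hand side collapsing to
\begin{equation*}
  \frac{x}{x^{2}-1}\Bigl[(1-q^{-1}xz)(1-q^{-1}t^{2}x^{-1}z)-(1-q^{-1}t^{2}xz)(1-q^{-1}x^{-1}z)\Bigr]
  =\frac{x}{x^{2}-1}\cdot q^{-1}(t^{2}-1)\,z\,(x-x^{-1}) ,
\end{equation*}
which equals the right-hand side because $\frac{x\,(x-x^{-1})}{x^{2}-1}=1$.

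\emph{Raising.}
Using $\frac{x}{x^{2}-1}=\frac1{x-x^{-1}}$ and clearing $x(x^{2}-1)=x^{2}(x-x^{-1})$, the operator in \eqref{raise} takes the form $\mathcal{R}=D_{1}(x)\,\eth-D_{2}(x)\,\eth^{-1}$ with $D_{1}(x)=\frac{q\,t^{2}x^{2}+q^{-1}t^{-2}x^{-2}-q-q^{-1}}{x-x^{-1}}$ and $D_{2}(x)=\frac{q\,t^{2}x^{-2}+q^{-1}t^{-2}x^{2}-q-q^{-1}}{x-x^{-1}}$, while $\mathcal{M}=A(x)\,\eth+B(x)\,\eth^{-1}$ with $A(x)=\frac{t\,x-t^{-1}x^{-1}}{x-x^{-1}}$, $B(x)=\frac{t^{-1}x-t\,x^{-1}}{x-x^{-1}}$. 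Expanding $\mathcal{R}\circ\mathcal{L}$ and $\mathcal{M}\circ\mathcal{M}$ into $\eth^{\pm2}$ and $\eth^{0}$ pieces, the only computation needed is the rational-function identity
\begin{equation*}
  (t\,x-t^{-1}x^{-1})(t\,q\,x-t^{-1}q^{-1}x^{-1})=q\,t^{2}x^{2}+q^{-1}t^{-2}x^{-2}-q-q^{-1}
\end{equation*}
and its image under $x\mapsto x^{-1}$: these give $D_{1}(x)\,C(q\,x)=A(x)\,A(q\,x)$ and $D_{2}(x)\,C(q^{-1}x)=B(x)\,B(q^{-1}x)$ with $C(x)=\frac1{x-x^{-1}}$, so the $\eth^{\pm2}$ parts of $\mathcal{R}\circ\mathcal{L}$ and of $\mathcal{M}^{2}$ agree, and since $A(y)+B(y)=t+t^{-1}$ for every $y$ their $\eth^{0}$ parts differ by exactly $(t+t^{-1})(A(x)+B(x))=(t+t^{-1})^{2}$. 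Hence $\mathcal{R}\circ\mathcal{L}=\mathcal{M}^{2}-(t+t^{-1})^{2}$ as operators on $\mathbb{C}[x+x^{-1}]$. Applying this to $M_{m}(x;q,t)$ and using \eqref{Macdonald_eigen},
\begin{equation*}
  \mathcal{R}\bigl(\mathcal{L}\,M_{m}(x;q,t)\bigr)=\bigl(\lambda_{m}^{2}-(t+t^{-1})^{2}\bigr)M_{m}(x;q,t)=(q^{m}-q^{-m})(q^{m}t^{2}-q^{-m}t^{-2})\,M_{m}(x;q,t) ,
\end{equation*}
while by \eqref{lower} $\mathcal{L}\,M_{m}(x;q,t)=(q^{m}-q^{-m})\,M_{m-1}(x;q,q\,t)$; dividing by $q^{m}-q^{-m}$ (with $m\ge1$, $q$ generic) and relabelling $m-1\mapsto m$ gives \eqref{raise}.

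\emph{Main obstacle.}
There is no deep difficulty, but the argument is load-bearing at \eqref{lower}: once one accepts that the parameterless $\mathcal{L}$ really does shift $t\mapsto q\,t$, the rest is bookkeeping --- the factorization $\mathcal{R}\circ\mathcal{L}=\mathcal{M}^{2}-(t+t^{-1})^{2}$ just matches three $q$-difference symbols. The place I expect to lose time is precisely that bookkeeping: keeping straight which of $t$ or $q\,t$ decorates each $M_{n}$ and each copy of $G$ at the intermediate stages, and getting the four Pochhammer cancellations and the collapse of the $C$-factors exactly right. Should one wish to avoid the factorization identity, \eqref{raise} can instead be proved by the same generating-function method as \eqref{lower}: its summed form becomes a fixed linear combination of $G(x,q^{\pm1}z;t)$ and $G(x,q^{3}z;t)$, and after the analogous cancellation of infinite products --- now also using $(c\,q^{3};q^{2})_{\infty}=(c\,q;q^{2})_{\infty}/(1-c\,q)$ --- it reduces to a polynomial identity in $x$ and $z$ of bounded bidegree, verifiable by expansion; but that route is messier, which is why I would go through the factorization.
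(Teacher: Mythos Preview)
Your argument is correct. For the lowering identity you carry out exactly the generating-function computation the paper points to, and your Pochhammer cancellations are right: dividing by $G(x,q^{-1}z;q\,t)$ collapses both $G(q^{\pm1}x,z;t)$ to the finite products $(1-q^{-1}xz)(1-q^{-1}t^{2}x^{-1}z)$ and $(1-q^{-1}x^{-1}z)(1-q^{-1}t^{2}xz)$, whose difference is $q^{-1}(t^{2}-1)z(x-x^{-1})$.

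For the raising identity your route differs from the paper's. The paper proves \emph{both} \eqref{raise} and \eqref{lower} by acting on the generating function \eqref{gf_A1_polynomial}, and only afterwards records the operator identity $\mathsf{K}^{(+)}\mathsf{K}^{(-)}=(\mathsf{Y}+\mathsf{Y}^{-1})^{2}-(t+t^{-1})^{2}$ as a consequence. You run the logic the other way: you verify $\mathcal{R}\circ\mathcal{L}=\mathcal{M}^{2}-(t+t^{-1})^{2}$ directly as an equality of $q$-difference operators (the key step being the factorization $(tx-t^{-1}x^{-1})(tqx-t^{-1}q^{-1}x^{-1})=qt^{2}x^{2}+q^{-1}t^{-2}x^{-2}-q-q^{-1}$, which gives the $\eth^{\pm2}$ coefficients, while $A+B\equiv t+t^{-1}$ handles the constant term), and then combine it with the already-established \eqref{lower} to obtain \eqref{raise}. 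This is a clean alternative: it avoids the heavier generating-function manipulation for the raising case, and it delivers the operator identity $\mathcal{R}\mathcal{L}=\mathcal{M}^{2}-(t+t^{-1})^{2}$ as an independent fact rather than a corollary, which is convenient since the paper needs precisely that identity later to rewrite $\mathsf{K}^{(+)}$ in DAHA generators. The paper's approach has the virtue of uniformity (one method for both formulas), while yours trades that for shorter bookkeeping on the harder half.
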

\begin{proof}
  It can be  proved by calculating actions on
  the generating function~\eqref{gf_A1_polynomial}.
  See also~\cite{KiriNoum96a,Koorn06a}.
\end{proof}

These raising and lowering operators, which 
preserve the  symmetric Laurent polynomial space
$\mathbb{C}[x+x^{-1}]$,
can be rewritten using the generators of DAHA.
%$\SH_{q,t}$.
For brevity,
we denote the  raising and lowering operators in
Prop.~\ref{prop:raising_lowering}
as
$\mathsf{K}^{(+)}$ and $\mathsf{K}^{(-)}$ respectively.
By use of~\eqref{A1_represent}, we have
\begin{equation*}
  \left. t \, \mathsf{Y} - t^{-1} \mathsf{Y}^{-1} \right|_{\text{sym}}
  =
  \left.
    t^{-1}  \left( t^{-1} - \mathsf{T} \right)
    \eth \right|_{\text{sym}}
  =
  t^{-1} \,
  \frac{t^{-1}x^2-t}{x^2-1} 
  \left( 1 - \mathsf{s} \right) \eth
  \bigr|_{\text{sym}}
  ,
\end{equation*}
which proves~\cite{Chered05Book,Macdonald03Book}
that the lowering operator~\eqref{lower} is written as
\begin{equation*}
  \left. \mathsf{K}^{(-)} \right|_{\text{sym}}
  =
  \frac{t}{
    t^{-1} \mathsf{X} - t \, \mathsf{X}^{-1}
  }
  \left( t \,  \mathsf{Y} - t^{-1} \mathsf{Y}^{-1} \right)
  \Bigr|_{\text{sym}} .
\end{equation*}
Note that $\mathsf{K}^{(-)}$ does not depend on $t$  as operators
on~$\mathbb{C}[x+x^{-1}]$.

Combining the identities~\eqref{raise} and~\eqref{lower}, we have
\begin{equation*}
  \mathsf{K}^{(+)} \mathsf{K}^{(-)}  \, M_m(x;q,t)
  =
  \left\{
  \left( \mathsf{Y} + \mathsf{Y}^{-1} \right)^2
  -
  \left(t+t^{-1}\right)^2
\right\} M_m(x;q,t)
.
\end{equation*}
Using the above expression for $\mathsf{K}^{(-)}$
and the fact that
the Macdonald polynomials $M_m(x;q,t)$ are bases of
$\mathbb{C}[x+x^{-1}]$,
we find
\begin{equation*}
  \left. \mathsf{K}^{(+)} \right|_{\text{sym}}
  =
  \left.
    t^{-1}
    \left( t^{-1} \mathsf{Y} - t \, \mathsf{Y}^{-1} \right)
    \left(
      t^{-1} \mathsf{X} - t \, \mathsf{X}^{-1}
    \right)
  \right|_{\text{sym}}
  .
\end{equation*}
To conclude, we have the following.
We recall that $\sh(x)$ is defined in~\eqref{hyperbolic}.
\begin{prop}
  Both the raising and lowering operators preserve the symmetric Laurent
  polynomial space
  $\mathbb{C}[x+x^{-1}]$, and they are written as
  \begin{gather}
    t^{-1} \sh \left( t^{-1}\mathsf{Y} \right)
    \sh \left( t^{-1}\mathsf{X} \right) \,
    M_m(x;q, q\, t)
    =
    \left( q^{m+1} t^2 - q^{-m-1} t^{-2} \right)
    M_{m+1}(x; q, t)
    ,
    \notag
    \\
      % \frac{
      %   t
      % }{ t^{-1} \mathsf{X} - t\, \mathsf{X}^{-1}}
      % \left(
      %   t \, \mathsf{Y} - t^{-1} \mathsf{Y}^{-1}
      % \right)
    \frac{t}{
      \sh \left( t^{-1}\mathsf{X} \right)
    } \, \sh(t \, \mathsf{Y}) \,
    M_m(x;q, t)
    =
    \left( q^m - q^{-m} \right)
    M_{m-1}(x; q, q\, t )
    .
    \label{lower_raise}
  \end{gather}
\end{prop}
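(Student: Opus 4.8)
The plan is to read both identities off the $q$-difference formulas \eqref{raise} and \eqref{lower} of Proposition~\ref{prop:raising_lowering}, once the raising and lowering operators $\mathsf{K}^{(+)},\mathsf{K}^{(-)}$ have been rewritten through the generators $\mathsf{X}^{\pm1},\mathsf{Y}^{\pm1}$ of $\mathcal{H}_{q,t}$. Unwinding \eqref{hyperbolic} one has $\sh(t^{-1}\mathsf{X})=t^{-1}\mathsf{X}-t\,\mathsf{X}^{-1}$, $\sh(t^{-1}\mathsf{Y})=t^{-1}\mathsf{Y}-t\,\mathsf{Y}^{-1}$ and $\sh(t\,\mathsf{Y})=t\,\mathsf{Y}-t^{-1}\mathsf{Y}^{-1}$, so the assertion is precisely that $\left.\mathsf{K}^{(-)}\right|_{\text{sym}}=\left.\tfrac{t}{\sh(t^{-1}\mathsf{X})}\,\sh(t\,\mathsf{Y})\right|_{\text{sym}}$ and $\left.\mathsf{K}^{(+)}\right|_{\text{sym}}=\left.t^{-1}\,\sh(t^{-1}\mathsf{Y})\,\sh(t^{-1}\mathsf{X})\right|_{\text{sym}}$, i.e., the two operator expressions recorded just above the statement, substituted into \eqref{lower} and \eqref{raise} respectively. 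So the content is to establish those two operator expressions and to check that $\mathsf{K}^{(\pm)}$ preserve $\mathbb{C}[x+x^{-1}]$; the latter is immediate, since in the explicit $q$-difference forms of Proposition~\ref{prop:raising_lowering} both operators have the shape $A(x)\,\eth+B(x)\,\eth^{-1}$ and are invariant under $\mathsf{s}$, hence send symmetric Laurent polynomials to symmetric ones.

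For $\mathsf{K}^{(-)}$ I would work from the polynomial representation \eqref{A1_represent}: on a symmetric $f\in\mathbb{C}[x+x^{-1}]$ one has $\mathsf{T}f=t^{-1}f$ and $\mathsf{T}^{-1}f=t\,f$, so $\left.t\,\mathsf{Y}-t^{-1}\mathsf{Y}^{-1}\right|_{\text{sym}}=\left.t^{-1}(t^{-1}-\mathsf{T})\,\eth\right|_{\text{sym}}=t^{-1}\tfrac{t^{-1}x^{2}-t}{x^{2}-1}(1-\mathsf{s})\,\eth\,\big|_{\text{sym}}$; dividing by $\sh(t^{-1}\mathsf{X})=t^{-1}x-t\,x^{-1}$ collapses this to $\tfrac{x}{x^{2}-1}(\eth-\eth^{-1})\big|_{\text{sym}}$, which is the $q$-difference operator of \eqref{lower} and so identifies $\mathsf{K}^{(-)}$ (this step is classical, cf.~\cite{Chered05Book,Macdonald03Book}, and shows incidentally that $\mathsf{K}^{(-)}$ is $t$-independent on $\mathbb{C}[x+x^{-1}]$). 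For $\mathsf{K}^{(+)}$ I would combine \eqref{raise} with \eqref{lower} to get $\mathsf{K}^{(+)}\mathsf{K}^{(-)}\,M_{m}(x;q,t)=\bigl\{(\mathsf{Y}+\mathsf{Y}^{-1})^{2}-(t+t^{-1})^{2}\bigr\}M_{m}(x;q,t)$, the right-hand side being identified through the eigenvalue relation \eqref{Macdonald_eigen}. Since $(\mathsf{Y}+\mathsf{Y}^{-1})^{2}-(t+t^{-1})^{2}=\sh(t^{-1}\mathsf{Y})\,\sh(t\,\mathsf{Y})$, and since composing the candidate $t^{-1}\sh(t^{-1}\mathsf{Y})\,\sh(t^{-1}\mathsf{X})$ with the now-known $\mathsf{K}^{(-)}$ yields the same operator — the two multiplication operators $\sh(t^{-1}\mathsf{X})$ and $t/\sh(t^{-1}\mathsf{X})$ sitting in the middle cancel — and since $\mathsf{K}^{(-)}$ is surjective onto $\mathbb{C}[x+x^{-1}]$ (each $M_{m}(x;q,q\,t)$ is hit from $M_{m+1}(x;q,t)$ up to the nonzero scalar $q^{m+1}-q^{-m-1}$), this pins down $\left.\mathsf{K}^{(+)}\right|_{\text{sym}}$, whence the raising identity of the statement on reinserting this form into \eqref{raise}.

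I expect the delicate point to be the step that recovers $\mathsf{K}^{(+)}$: the factor $t/\sh(t^{-1}\mathsf{X})$ is multiplication by a non-polynomial rational function, so $\tfrac{t}{\sh(t^{-1}\mathsf{X})}\,\sh(t\,\mathsf{Y})$ is a well-defined operator on $\mathbb{C}[x+x^{-1}]$ only because $\sh(t\,\mathsf{Y})$ maps symmetric polynomials into the image of multiplication by $t^{-1}x-t\,x^{-1}$; one must keep the order of composition straight for that partial cancellation to be legitimate, and one must know that $\mathsf{K}^{(-)}$ is genuinely onto — equivalently, that no parameter-shifted Macdonald polynomial degenerates at generic $q,t$ — so that the factorization $\mathsf{K}^{(+)}\mathsf{K}^{(-)}=\dots$ determines $\mathsf{K}^{(+)}$ rather than merely constraining it on a subspace.
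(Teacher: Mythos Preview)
Your proposal is correct and follows essentially the same route as the paper: you identify $\mathsf{K}^{(-)}$ via the polynomial representation exactly as the paper does, and you recover $\mathsf{K}^{(+)}$ from the factorization $\mathsf{K}^{(+)}\mathsf{K}^{(-)}=(\mathsf{Y}+\mathsf{Y}^{-1})^{2}-(t+t^{-1})^{2}$ together with the already-known form of $\mathsf{K}^{(-)}$. Your added remarks on well-definedness of $\tfrac{t}{\sh(t^{-1}\mathsf{X})}\sh(t\,\mathsf{Y})$ and on the surjectivity of $\mathsf{K}^{(-)}$ make explicit two points the paper leaves implicit (it simply invokes that the $M_m(x;q,t)$ are a basis), but the argument is the same.
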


%%%

%%%%%%%%%%%%
% \subsection{PBW bases}

% The PBW basis is
% \begin{equation}
%   \sum_{a, \varepsilon, b}
%   c_{a,\varepsilon,b}
%   \mathsf{X}^a \mathsf{T}^\varepsilon \mathsf{Y}^b
% \end{equation}

%%%%%%%%%%%%%%%
\subsection{Automorphisms as Conjugation}
We revisit the $SL(2;\mathbb{Z})$ action~\eqref{auto_A1} in the
polynomial representation of DAHA.
As a  completion of DAHA~\cite{Chered05Book}, we introduce a
function
\begin{equation*}
  U_R=\exp\left( \frac{ \left( \log \mathsf{X} \right)^2}{
      2 \log q
    }
  \right)
  .
\end{equation*}
As $U_R$ is symmetric in $\mathsf{X} \leftrightarrow \mathsf{X}^{-1}$ and
$\mathsf{s} \, U_R = U_R \, \mathsf{s}$, it commutes with~$\mathsf{T}$.
One easily sees that
$
  \eth \, U_R=
  q^{\frac{1}{2}} \mathsf{X} \, U_R \, \eth
$,
and we obtain
\begin{equation*}
  \mathsf{Y}\, U_R
  =
  \eth \, \mathsf{s} \, \mathsf{T} \, U_R
  =
  q^{\frac{1}{2}} \mathsf{X} \, U_R \, \eth \, \mathsf{s} \,
  \mathsf{T}
  =
  U_R \, q^{\frac{1}{2}} \mathsf{X} \, \mathsf{Y}
  .
\end{equation*}
Trivial is a commutativity between $U_R$ and $\mathsf{X}$, and
thus,
the automorphism~$\tau_R$~\eqref{auto_A1} is identified with
a conjugation by $U_R$.
For $\tau_L$~\eqref{auto_A1}, we recall~\eqref{A1_L_epsilon} to
find the following~$U_L$.

\begin{prop}
  The $SL(2;\mathbb{Z})$ action on $\SH_{q,t}$ is given as
  conjugation.
  In particular,
  the automorphisms $\tau_{\bullet}$~\eqref{tau_for_A1} are written as
  conjugations
  \begin{equation}
    \label{A1_conjugation}
    \tau_{\bullet}: h \mapsto U_{\bullet}^{~-1} \, h \, U_{\bullet}
    ,
  \end{equation}
  where
  \begin{align}
    U_R & =
          \exp \left(
          \frac{ ( \log \mathsf{X} )^2}{2 \log q}
          \right) ,
          &
    U_L & =
          \exp \left(
          -
          \frac{ ( \log \mathsf{Y} )^2}{2 \log q}
          \right) .
        \end{align}
\end{prop}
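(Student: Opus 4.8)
The statement to prove is that the $SL(2;\mathbb{Z})$ action on $\SH_{q,t}$ (or more precisely the automorphisms $\tau_R, \tau_L$) are realized as conjugations by the operators $U_R = \exp\bigl( (\log \mathsf{X})^2/(2\log q) \bigr)$ and $U_L = \exp\bigl( -(\log \mathsf{Y})^2/(2\log q) \bigr)$ acting in the polynomial representation. The case of $\tau_R$ and $U_R$ has essentially already been carried out in the paragraph preceding the proposition: one checks that $U_R$ commutes with $\mathsf{T}$ (since it is symmetric under $\mathsf{X} \leftrightarrow \mathsf{X}^{-1}$ and commutes with $\mathsf{s}$), that $U_R$ trivially commutes with $\mathsf{X}$, and that conjugating $\mathsf{Y} = \eth\,\mathsf{s}\,\mathsf{T}$ by $U_R$ produces $q^{\frac12}\mathsf{X}\mathsf{Y}$, using the key intertwining relation $\eth\, U_R = q^{\frac12}\mathsf{X}\,U_R\,\eth$. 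This exactly matches the images of $\mathsf{X}, \mathsf{Y}, \mathsf{T}$ under $\tau_R$ in~\eqref{auto_A1}. So the first part of the proof is simply to record these three verifications.

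For $\tau_L$ I would invoke the relation $\tau_L = \epsilon\,\tau_R\,\epsilon$ from~\eqref{A1_L_epsilon} together with the automorphism $\epsilon$ of~\eqref{A1_epsilon} which swaps $\mathsf{X} \leftrightarrow \mathsf{Y}$ and sends $\mathsf{T} \mapsto \mathsf{T}^{-1}$. Since $\epsilon$ interchanges the roles of $\mathsf{X}$ and $\mathsf{Y}$, applying $\epsilon$ to the identity $\tau_R(h) = U_R^{-1} h\, U_R$ formally transports it to $\tau_L(h) = \tilde U_L^{-1} h\, \tilde U_L$, where $\tilde U_L$ is obtained from $U_R$ by the substitution $\mathsf{X} \mapsto \mathsf{Y}$ together with the sign flip on $\log q$ forced by $\epsilon$ sending $q \mapsto q^{-1}$ (note $\epsilon: \mathcal{H}_{q,t} \to \mathcal{H}_{q^{-1},t^{-1}}$). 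This yields precisely $U_L = \exp\bigl( -(\log\mathsf{Y})^2/(2\log q) \bigr)$. Alternatively, and perhaps more transparently, I would verify directly — in parallel with the $U_R$ computation — that $U_L$ commutes with $\mathsf{T}$ and with $\mathsf{Y}$, and that conjugation by $U_L$ sends $\mathsf{X} \mapsto q^{-\frac12}\mathsf{Y}\mathsf{X}$, matching $\tau_L$ in~\eqref{auto_A1}; this requires the analogue $\eth^{-1}$-type intertwining relation for $U_L$ with respect to $\mathsf{Y}$, i.e. that conjugating $\mathsf{X} = \mathsf{T}^{-1}\mathsf{s}\,\eth^{-1}$... no, rather one uses that $\mathsf{Y}$ acts as a $q$-shift so that $U_L$ shifts $\mathsf{X}$ multiplicatively by a $\mathsf{Y}$-dependent factor.

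The main subtlety — the step I expect to be the real obstacle — is making rigorous sense of the operators $U_R, U_L$ themselves. They are not elements of $\mathcal{H}_{q,t}$ or even of any obvious algebraic completion: $U_R$ involves $\exp((\log\mathsf{X})^2)$, a Gaussian-type object, and it only makes sense after passing to a suitable completion (as emphasized by the phrase ``as a completion of DAHA'' in the text, citing~\cite{Chered05Book}). So the proof must either (i) work formally at the level of the defining intertwining relations $\eth\,U_R = q^{\frac12}\mathsf{X}\,U_R\,\eth$, which are themselves well-defined even if $U_R$ is not, and verify that conjugation-by-$U_R$ is a genuine automorphism of $\SH_{q,t}$ characterized by its action on generators; or (ii) restrict attention to the symmetric polynomial representation $\mathbb{C}[x+x^{-1}]$, where one can check convergence/well-definedness monomial-by-monomial, and note that since $\SH_{q,t}$ acts faithfully there, verifying the conjugation identity on that space suffices. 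I would adopt approach (i) for the algebraic statement and remark that (ii) gives the analytic incarnation. Once $U_R$ is granted, the verification that it induces $\tau_R$ reduces to the three short computations already displayed in the text, and $U_L$ follows by the $\epsilon$-symmetry argument; thus the proof is short modulo the completion issue.
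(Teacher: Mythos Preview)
Your proposal is correct and follows essentially the same approach as the paper: the paper verifies the $U_R$ case via the three checks you list (symmetry in $\mathsf{X}\leftrightarrow\mathsf{X}^{-1}$ giving commutation with $\mathsf{T}$, trivial commutation with $\mathsf{X}$, and the intertwining $\eth\,U_R=q^{1/2}\mathsf{X}\,U_R\,\eth$ yielding $U_R^{-1}\mathsf{Y}\,U_R=q^{1/2}\mathsf{X}\mathsf{Y}$), and then derives $U_L$ from $U_R$ via $\tau_L=\epsilon\,\tau_R\,\epsilon$~\eqref{A1_L_epsilon}, exactly as you suggest. Your additional discussion of the completion issue is more explicit than the paper's (which simply cites~\cite{Chered05Book}), but this is added care rather than a different route.
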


See~\cite{DiFraKedem17a} where the operators $U_\bullet$ were
introduced for DAHA of  $A_n$-type.

\subsection{Shift Operator}
As seen from the raising~\eqref{raise} and the lowering
operators~\eqref{lower},
it is useful to  introduce a parameter shift operator $\eth_t$
in $\SH_{q,t}$
satisfying \begin{equation}
  \label{shift_t_q}
  \eth_t t= q \, t  \, \eth_t
  .
\end{equation}
The $SL(2;\mathbb{Z})$ actions $\tau_\bullet$~\eqref{tau_for_A1} on
$t$ are trivial,
but 
% When we study the action $\eth_t$ on symmetric space
% $\mathbb{C}(q^{\frac{1}{2}},t)[x+x^{-1}]$,
we have the following action on $\eth_t$.
\begin{prop}
  We have
  $\tau_R: \eth_t \mapsto \eth_t$,
  and
  \begin{equation}
    \label{partial_t_Tau}
    \tau_L
    :
    \eth_t
    \mapsto
    \frac{  1}{
      \sh \left(t^{-1} q^{-\frac{1}{2}}  \mathsf{Y} \, \mathsf{X} \right)
    }
    \sh (t^{-1}\mathsf{X}) \,
    \eth_t
    .
      % \\
      % \eth_t^{~-1}
      % & \mapsto
      % \frac{1}{
      %   \sh_t ( \mathsf{Y}^{-1} )
      % }
      % \,
      % \sh_{t q^{\frac{1}{2}}} \left( \mathsf{Y} \, \mathsf{X} \right)
      % \,  
      % \frac{1}{
      %   \sh_t( \mathsf{X} )
      % }
      % \, 
      % \sh_t( \mathsf{Y}^{-1} )
      % \,
      % \eth_t^{~-1}
      % .
  \end{equation}
\end{prop}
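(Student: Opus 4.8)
The plan is to transport the $SL(2;\mathbb{Z})$-action, which on $\SH_{q,t}$ is realized as conjugation by $U_R$ and by $U_L$ of the previous Proposition, over to the parameter shift $\eth_t$, working throughout in the polynomial representation. The governing dichotomy is that $\eth_t$ commutes with $\mathsf{X}$ (and with $\eth$) but not with $\mathsf{T}$ — and hence not with $\mathsf{Y}=\eth\,\mathsf{s}\,\mathsf{T}$ — because the parameter $t$ enters the representation through $\mathsf{T}$.

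For $\tau_R$ this settles the statement at once: $U_R=\exp\bigl((\log\mathsf{X})^2/(2\log q)\bigr)$ is assembled from $\mathsf{X}$ and $q$ alone, it contains no $t$, so $\eth_t$ commutes with $U_R$ and $\tau_R(\eth_t)=U_R^{-1}\,\eth_t\,U_R=\eth_t$.

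For $\tau_L$ the Gaussian $U_L=\exp\bigl(-(\log\mathsf{Y})^2/(2\log q)\bigr)$ genuinely involves $t$, so $\eth_t$ and $U_L$ fail to commute, and moving $\eth_t$ past $U_L$ shifts $t\mapsto q\,t$ inside it. I would pin down the resulting correction by testing the asserted identity on the Macdonald basis $M_m(x;q,t)$ of $\mathbb{C}[x+x^{-1}]$: the raising and lowering operators of Prop.~\ref{prop:raising_lowering}, rewritten through the DAHA generators as in \eqref{lower_raise} and combined with the $\mathsf{Y}$-eigenvalue relations \eqref{A1_nonsymmetric}, \eqref{Macdonald_eigen}, record precisely how the two Macdonald towers $\{M_m(x;q,t)\}$ and $\{M_m(x;q,q\,t)\}$ are interlocked, and chasing $\tau_L$ through these relations should reproduce the operator $\sh(t^{-1}q^{-1/2}\mathsf{Y}\mathsf{X})^{-1}\sh(t^{-1}\mathsf{X})$. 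It is useful here to notice that $q^{-1/2}\mathsf{Y}\mathsf{X}=\tau_L(\mathsf{X})$, so that $\sh(t^{-1}q^{-1/2}\mathsf{Y}\mathsf{X})=\tau_L\bigl(\sh(t^{-1}\mathsf{X})\bigr)$; this exhibits the correction as $\tau_L$ applied to a single factor and makes transparent that the assignment respects $\eth_t\,t=q\,t\,\eth_t$.

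The main obstacle is exactly this closed-form identification: one is comparing the $\mathsf{Y}$-Gaussian — equivalently the Macdonald tower — at the two multiplicity values $t$ and $q\,t$, and since $\mathsf{Y}$ at $t$ does not commute with $\mathsf{Y}$ at $q\,t$, the combinatorics must be organized carefully, the cleanest route being to push everything onto the $M_m(x;q,t)$ and invoke the shift identities \eqref{raise} and \eqref{lower} one step at a time. One must also check consistency, i.e.\ that $\tau_L$ on $\SH_{q,t}$ together with the proposed image of $\eth_t$ really extends to an automorphism of the algebra generated by $\SH_{q,t}$ and $\eth_t^{\pm1}$. The relation $\tau_L=\epsilon\,\tau_R\,\epsilon$ can be retained as an independent cross-check on the two formulas.
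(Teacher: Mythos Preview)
Your proposal is correct and follows essentially the same route as the paper: invariance under $\tau_R$ from the $t$-independence of $U_R$, and for $\tau_L$ a verification on the Macdonald basis $M_m(x;q,t)$ via the eigenvalue relations~\eqref{A1_nonsymmetric} and the raising/lowering identities~\eqref{lower_raise}. The paper carries out exactly this chain --- act by $U_L$ using the $\mathsf{Y}$-eigenvalue, apply $\eth_t$, rewrite $M_m(x;q,q\,t)$ via the raising operator as a multiple of $M_{m+1}(x;q,t)$, pull the resulting $\SH_{q,t}$-factors through $U_L^{-1}$ by~\eqref{auto_A1}, then undo the step with the lowering operator --- arriving at the claimed correction factor.
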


\begin{proof}
  The conjugation~\eqref{A1_conjugation} shows
  an invariance of~$\eth_t$ under~$\tau_R$.
  For $\tau_L$, we use the fact that
  the Macdonald polynomials span the  symmetric polynomial space $\mathbb{C}[x+x^{-1}]$,
  and
  we compute actions on $M_m(x;q, t)$.
  Recalling that $M_m(x;,q,t)$ is a sum of the non-symmetric
  polynomials~\eqref{Macdo_sum_non-sym}
  and that the operator $U_L$ is symmetric in
  $\mathsf{Y}  \leftrightarrow \mathsf{Y}^{-1}$,
  we have the following equalities;
  \begin{align*}
    & U_L^{~-1} \eth_t \, U_L \, M_m(x; q, t)
    \\
    & \overset{\text{\eqref{A1_nonsymmetric}}}{=}
      U_L^{~-1} \eth_t \,
      \E^{ - \frac{ \left( \log \left( t \, q^m \right) \right)^2}{
      2 \log q
      }
      }
      M_m(x;q, t)
      =
      U_L^{~-1}
      \E^{ - \frac{ \left(
      \log \left( t \, q^{m+1} \right) \right)^2}{
      2 \log q
      }
      } \, M_m(x;q, q\, t)
    \\
    & \overset{\text{\eqref{lower_raise}}}{=}
      \E^{ - \frac{ \left( \log \left( t \, q^{m+1} \right) \right)^2}{
      2 \log q
      }
      }
      U_L^{~-1}
      \frac{t}{
      \sh(t^{-1} \mathsf{X})
      } \,
      \frac{1}{
      \sh(t^{-1} \mathsf{Y})
      }
      \left( q^{m+1} t^2 - q^{-m-1} t^{-2}\right)
      M_{m+1}(x; q, t)
    \\
    & \overset{\text{\eqref{auto_A1}}}{=}
      \E^{ - \frac{
      \left( \log  \left( t \, q^{m+1} \right) \right)^2}{
      2 \log q}
      }
      \frac{t}{
      \sh 
      \left(t^{-1} q^{-\frac{1}{2}}\mathsf{Y}\mathsf{X} \right)
      } \,
      \frac{1}{
      \sh(t^{-1} \mathsf{Y} )
      }
      U_L^{~-1}
      \left( q^{m+1} t^2 - q^{-m-1} t^{-2}\right)
      M_{m+1}(x; q, t)
    \\
    & \overset{\text{\eqref{A1_nonsymmetric}}}{=}
      \frac{t}{
      \sh
      \left(t^{-1} q^{-\frac{1}{2}} \mathsf{Y} \, \mathsf{X} \right)
      } \,
      \frac{1}{
      \sh \left( t^{-1} \mathsf{Y} \right)
      }
      \left( q^{m+1} t^2 - q^{-m-1} t^{-2}\right)
      M_{m+1}(x; q, t)
    \\
    & \overset{\text{\eqref{lower_raise}}}{=}
      \frac{1}{
      \sh \left(t^{-1} q^{-\frac{1}{2}} \mathsf{Y} \, \mathsf{X} \right)
      } \,
      \sh \left( t^{-1}\mathsf{X} \right) \,
      M_m(x;q, q\, t) .
  \end{align*}
  This proves the statement.
  % In the same manner, we have
  % \begin{equation*}
  %   \begin{aligned}
  %     & U_L^{~-1} \eth_t^{~-1} U_L \, M_m(x;q, t)
  %     \\
  %     &  \overset{\text{\eqref{A1_nonsymmetric}}}{=}
  %     U_L^{~-1} \eth_t^{~-1}
  %     \E^{
  %       - \frac{ \left( \log \left( t \, q^m \right) \right)^2}{
  %         2 \log q}
  %     }
  %     M_m(x;q, t)
  %     =
  %     \E^{- \frac{
  %         \left( \log \left(  t \, q^{m-1} \right) \right)^2
  %       }{
  %         2 \log q
  %       }}
  %     U_L^{~-1} M_m(x;q, t \, q^{-1})
  %     \\
  %     &  \overset{\text{\eqref{lower_raise}}}{=}
  %     \E^{- \frac{
  %         \left( \log \left(  t \, q^{m-1} \right) \right)^2
  %       }{
  %         2 \log q
  %       }}
  %     U_L^{~-1}
  %     \frac{-t^{-1}}{
  %       \sh_t( \mathsf{Y}^{-1} )
  %     } \,
  %     \sh_t (\mathsf{X}) \,
  %     \left( q^m-q^{-m} \right) M_{m-1}(x;q, t)
  %     \\
  %     & \overset{\text{\eqref{auto_A1}}}{=}
  %     \frac{-t^{-1}}{
  %       \sh_t(\mathsf{Y}^{-1})
  %     }
  %     \, \sh_{t q^{\frac{1}{2}}} \left(
  %       \mathsf{Y} \, \mathsf{X}
  %     \right) \,
  %     \left( q^m-q^{-m} \right) M_{m-1}(x;q, t)
  %     \\
  %     &  \overset{\text{\eqref{lower_raise}}}{=}
  %     \frac{1}{\sh_t (\mathsf{Y}^{-1})} \,
  %     \sh_{t q^{\frac{1}{2}}}
  %     \left(
  %       \mathsf{Y} \, \mathsf{X}
  %     \right) \, \frac{1}{
  %       \sh_t ( \mathsf{X} )
  %     } \,
  %     \sh_t(\mathsf{Y}^{-1}) \, M_m(x;q, t \, q^{-1})
  %     .
  %   \end{aligned}
  % \end{equation*}
  % Here we have used the fact that the lowering operator
  % $\mathsf{K}^{(-)}$~\eqref{lower} on the symmetric space
  % does not depend on $t$.
\end{proof}

%%%%%%%%%%%%%%%
\subsection{Algebra Embedding}

We shall  give a DAHA representation
for $\KBS_A(\Sigma_{1,1})$ defined in~\eqref{torus_xy_z}.
For a simple closed curve $\mathbb{c}_{(r,s)}$ with slope $s/r$ on
$\Sigma_{1,1}$,
we assign DAHA operators as
\begin{equation}
  \label{torus_c_and_M}
  \mathbb{c}_{(r,s)} \mapsto  \mathcal{M}_{(r,s)}
  =
  \ch \left( \mathcal{O}_{(r,s)} \right)
  \in \SH_{q,t}
  ,
\end{equation}
where we recall that  $\ch(x)$ is defined in~\eqref{hyperbolic}.
The
curves in 
Fig.~\ref{fig:xyz_torus} are identified as
$\mathbb{c}_{(1,0)}=\mathbb{x}$,
$\mathbb{c}_{(0,1)}=\mathbb{y}$, and $\mathbb{c}_{(1,1)}=\mathbb{z}$.
The algebra $\KBS_A(\Sigma_{1,1})$
was studied in detail in~\cite{FrohGelc00a}
(see also~\cite{BaMuPrWiWa18a}),
and
known is
a ``product-to-sum formula'' for  $\mathbb{c}_{(r,s)}$.
For  example, we can check easily the following skein algebra;
\begin{gather}
  \mathbb{c}_{(1,0)} \, \mathbb{c}_{(0,1)}
  =
  A \,  \mathbb{c}_{(1,1)} +
  A^{-1} \mathbb{c}_{(1,-1)} ,
  \notag
  \\
  \mathbb{c}_{(0,1)} \, \mathbb{c}_{(1,1)}
  =
  A \, \mathbb{c}_{(1,0)} +
  A^{-1} \mathbb{c}_{(1,2)} ,
  \notag
  \\
  \mathbb{c}_{(1,1)} \, \mathbb{c}_{(1,0)}
  =
  A \,  \mathbb{c}_{(0,1)} +
  A^{-1} \mathbb{c}_{(2,1)}
  .
  \label{skein_torus_1001}
\end{gather}
Here
we put
\begin{align}
  \mathcal{M}_{(1,0)}
  & =
    \ch (\mathsf{X} )
%    + \mathsf{X}^{-1}
    ,
    \notag
  \\
  \mathcal{M}_{(0,1)}
  & =
    \ch (\mathsf{Y})
%    + \mathsf{Y}^{-1}
    ,
    \notag
  \\
  \mathcal{M}_{(1,1)}
  & =
    \ch \left( q^{\frac{1}{2}} \mathsf{X} \, \mathsf{Y} \right)
    % q^{\frac{1}{2}} \mathsf{X} \, \mathsf{Y}
    % + q^{-\frac{1}{2}} \mathsf{Y}^{-1} \mathsf{X}^{-1}
    =
    \ch \left( q^{-\frac{1}{2}} \mathsf{Y} \, \mathsf{X} \right)
    % q^{-\frac{1}{2}} \mathsf{Y} \, \mathsf{X}
    % +q^{\frac{1}{2}} \mathsf{X}^{-1} \mathsf{Y}^{-1}
    .
  \label{define_M_xyz}
\end{align}
Recall that $\mathcal{M}_{(0,1)}$ is the Macdonald operator of $\SH_{q,t}$, and
see~\eqref{A1_Macdonald} and~\eqref{A1_rep_xyz} for explicit forms of the
$q$-difference
operators on the symmetric  Laurent polynomial space
$\mathbb{C}[x+x^{-1}]$.
By direct computations, we get a representation for~\eqref{skein_torus_1001}
\begin{gather}
  \mathcal{M}_{(1,0)} \, \mathcal{M}_{(0,1)}
  =
  q^{-\frac{1}{2}} \mathcal{M}_{(1,1)} +
  q^{\frac{1}{2}} \mathcal{M}_{(1,-1)} ,
  \notag
  \\
  \mathcal{M}_{(0,1)} \, \mathcal{M}_{(1,1)}
  =
  q^{-\frac{1}{2}} \mathcal{M}_{(1,0)} +
  q^{\frac{1}{2}} \mathcal{M}_{(1,2)} ,
  \notag
  \\
  \mathcal{M}_{(1,1)} \, \mathcal{M}_{(1,0)}
  =
  q^{-\frac{1}{2}} \mathcal{M}_{(0,1)} +
  q^{\frac{1}{2}} \mathcal{M}_{(2,1)}
  ,
  \label{mm_prod_1}
\end{gather}
where we have defined
\begin{gather}
  \mathcal{M}_{(1,-1)}
  =
  \ch \left( q^{-\frac{1}{2}} \mathsf{X}^{-1} \mathsf{Y} \right)
  % \left(
  % q^{-\frac{1}{2}} \mathsf{X}^{-1} \mathsf{Y}
  % +
  % q^{\frac{1}{2}} \mathsf{Y}^{-1} \mathsf{X}
  ,
  \notag
  % \right)\mathsf{e}
  \\
  \mathcal{M}_{(1,2)}
  =
  \ch \left( \mathsf{Y}  \, \mathsf{X} \, \mathsf{Y} \right)
  % \mathsf{Y}  \, \mathsf{X} \, \mathsf{Y}
  % +
  % \mathsf{Y}^{-1} \mathsf{X}^{-1}\mathsf{Y}^{-1}
  ,
  \notag
  \\
  \mathcal{M}_{(2,1)}
  =
  \ch \left( \mathsf{X} \, \mathsf{Y} \, \mathsf{X} \right)
  %  \mathsf{X} \, \mathsf{Y} \, \mathsf{X}
  % +
  % \mathsf{X}^{-1} \mathsf{Y}^{-1} \mathsf{X}^{-1}
  .
\end{gather}
Using these relations, we can check~\eqref{torus_xy_z} in
Prop.~\ref{prop:Sigma11}, and we obtain the following.
\begin{theorem}
  \label{thm:embed_A1}
  We have an algebra embedding
  $\KBS_{A}(\Sigma_{1,1}) \to SH_{q,t}$
  with $A=q^{-\frac{1}{2}}$ by
  \begin{equation}
    \label{torus_daha}
    \left(
      \begin{matrix}
        \mathbb{x} \\
        \mathbb{y} \\
        \mathbb{z} 
      \end{matrix}
    \right)
    \mapsto
    \left(
      \begin{matrix}
        \mathcal{M}_{(1,0)} \\
        \mathcal{M}_{(0,1)} \\
        \mathcal{M}_{(1,1)} 
      \end{matrix}
    \right)
    .
  \end{equation}
\end{theorem}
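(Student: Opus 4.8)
The plan is to use the presentation of $\KBS_A(\Sigma_{1,1})$ recorded in Proposition~\ref{prop:Sigma11}. Since that algebra is generated by $\mathbb{x},\mathbb{y},\mathbb{z}$ subject \emph{only} to the three relations~\eqref{torus_xy_z}, defining the map~\eqref{torus_daha} amounts to checking that $\mathcal{M}_{(1,0)},\mathcal{M}_{(0,1)},\mathcal{M}_{(1,1)}\in\SH_{q,t}$ satisfy~\eqref{torus_xy_z} after the specialization $A=q^{-\frac12}$, for which $A^{2}-A^{-2}=q^{-1}-q$; proving that the resulting homomorphism is injective is a separate matter, and is the step I expect to be the real obstacle.

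Checking the relations is bookkeeping once the product-to-sum identities~\eqref{mm_prod_1} are available, the only missing data being the three products in the opposite order. I would obtain these from the automorphism $\epsilon$ of~\eqref{A1_epsilon}: because $\epsilon$ interchanges $\mathsf{X}\leftrightarrow\mathsf{Y}$ and sends $q\mapsto q^{-1}$, it swaps $\mathcal{M}_{(1,0)}\leftrightarrow\mathcal{M}_{(0,1)}$, fixes $\mathcal{M}_{(1,1)}$ and $\mathcal{M}_{(1,-1)}$, and interchanges $\mathcal{M}_{(1,2)}\leftrightarrow\mathcal{M}_{(2,1)}$ (using $\ch(g)=\ch(g^{-1})$ together with the identity $\mathcal{M}_{(1,1)}=\ch(q^{\frac12}\mathsf{X}\mathsf{Y})=\ch(q^{-\frac12}\mathsf{Y}\mathsf{X})$ from~\eqref{define_M_xyz}). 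Applying $\epsilon$ to the three lines of~\eqref{mm_prod_1} gives
\begin{gather*}
  \mathcal{M}_{(0,1)}\,\mathcal{M}_{(1,0)}=q^{\frac12}\mathcal{M}_{(1,1)}+q^{-\frac12}\mathcal{M}_{(1,-1)},\\
  \mathcal{M}_{(1,0)}\,\mathcal{M}_{(1,1)}=q^{\frac12}\mathcal{M}_{(0,1)}+q^{-\frac12}\mathcal{M}_{(2,1)},\\
  \mathcal{M}_{(1,1)}\,\mathcal{M}_{(0,1)}=q^{\frac12}\mathcal{M}_{(1,0)}+q^{-\frac12}\mathcal{M}_{(1,2)}.
\end{gather*}
Pairing the first line of~\eqref{mm_prod_1} with the first of these, the second line of~\eqref{mm_prod_1} with the third, and the third line of~\eqref{mm_prod_1} with the second, and in each case forming the combination $q^{-\frac12}(\,\cdot\,)-q^{\frac12}(\,\cdot\,)$, the auxiliary terms $\mathcal{M}_{(1,-1)},\mathcal{M}_{(1,2)},\mathcal{M}_{(2,1)}$ cancel and one is left with $(q^{-1}-q)\mathcal{M}_{(1,1)}$, $(q^{-1}-q)\mathcal{M}_{(1,0)}$, $(q^{-1}-q)\mathcal{M}_{(0,1)}$ respectively, which are precisely~\eqref{torus_xy_z} with $A=q^{-\frac12}$. (Alternatively the reversed products can be computed directly from the explicit $q$-difference operators~\eqref{A1_Macdonald} and~\eqref{A1_rep_xyz}.) This gives the algebra homomorphism.

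For injectivity, write $\Phi$ for the map~\eqref{torus_daha}. By Bullock--Przytycki~\cite{BulloPrzyt99a} (see also~\cite{FrohGelc00a}), $\KBS_A(\Sigma_{1,1})$ is free over $\mathbb{C}[A^{\pm1}]$ on the simple multicurves, i.e.\ on $\{\mathbb{b}^{m}\}$ together with $\{\mathbb{c}_{(r,s)}^{\,n}\,\mathbb{b}^{\,m}\}$ for $(r,s)$ a primitive slope, $n\geq1$, $m\geq0$, so it suffices to prove that the images $\mathcal{M}_{(r,s)}^{\,n}\,\Phi(\mathbb{b})^{\,m}$ are linearly independent. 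Two points achieve this. First, as $q$-difference operators on $\mathbb{C}[x+x^{-1}]$ the $\mathcal{M}_{(r,s)}$ have leading symbol $x^{r}\eth^{s}+x^{-r}\eth^{-s}$, so by the triangular structure of~\eqref{mm_prod_1} and its iterates the powers $\mathcal{M}_{(r,s)}^{\,n}$ retain a leading symbol $x^{nr}\eth^{ns}+\cdots$ that is distinct for distinct $(n,(r,s))$; hence these operators are linearly independent over $\mathbb{C}(q,t)$. Second, $\mathbb{b}$ is central in $\KBS_A(\Sigma_{1,1})$, so $\Phi(\mathbb{b})$ commutes with $\mathcal{M}_{(1,0)},\mathcal{M}_{(0,1)},\mathcal{M}_{(1,1)}$; since the multiplication operator $\mathcal{M}_{(1,0)}$ and the Macdonald operator $\mathcal{M}_{(0,1)}$ generate $\SH_{q,t}$, $\Phi(\mathbb{b})$ is a scalar $\gamma(q,t)\in\mathbb{C}(q,t)$, and acting on $M_{0}=1$ (with $\mathcal{M}_{(0,1)}\cdot1=\ch(t)$, from~\eqref{A1_nonsymmetric}) one reads off from~\eqref{boundary_b} that $\gamma(q,t)$ is non-constant in $t$. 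Then a $\mathbb{C}[q^{\pm1/2}]$-linear relation $\sum\lambda_{n,(r,s),m}\,\mathcal{M}_{(r,s)}^{\,n}\,\gamma(q,t)^{\,m}=0$ collapses, by the first point, to $\sum_{m}\lambda_{n,(r,s),m}\,\gamma(q,t)^{\,m}=0$ for each $(n,(r,s))$, and then, $\gamma(q,t)$ being transcendental over $\mathbb{C}(q)$, to $\lambda_{n,(r,s),m}=0$; thus $\Phi$ is injective.

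The main obstacle is the first point of the injectivity argument --- the linear independence of the operators $\mathcal{M}_{(r,s)}^{\,n}$ --- since that is where the geometry of $\Sigma_{1,1}$ (distinct slopes yielding distinct leading symbols) has to be matched against the $q$-difference calculus; the relation-checking and the use of $\epsilon$ are routine, and the role of the extra parameter $t$ on the DAHA side is precisely to make room for the class $\mathbb{b}$. One could instead deduce injectivity from the comparison between the spherical DAHA of type $A_{1}$ and the skein algebra of the once-punctured torus in~\cite{BerestSamuel14a,BerestSamuel16a}.
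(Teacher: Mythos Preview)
Your verification of the relations is correct and is essentially what the paper does: the paper records the products~\eqref{mm_prod_1} ``by direct computations'' and then simply says that \eqref{torus_xy_z} follows, without writing out the reversed products. Your use of the involution~$\epsilon$ to obtain $\mathcal{M}_{(0,1)}\mathcal{M}_{(1,0)}$, $\mathcal{M}_{(1,0)}\mathcal{M}_{(1,1)}$, $\mathcal{M}_{(1,1)}\mathcal{M}_{(0,1)}$ from~\eqref{mm_prod_1} is a clean alternative to redoing the computation; the paper does not mention this shortcut but the outcome is identical.

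Where you diverge from the paper is on injectivity. The paper does not prove it at all: after checking~\eqref{torus_xy_z} it simply asserts the theorem with the word ``embedding'', and only afterwards (in~\eqref{mm_prod_3}--\eqref{A1_b_and_t}) computes that $\mathbb{b}\mapsto -t^2q^{-1}-t^{-2}q$. So your injectivity argument is additional content, not a different route to something the paper does. Its structure is sound: the multicurve basis of $\KBS_A(\Sigma_{1,1})$, the scalar image of $\mathbb{b}$, and the transcendence of $-t^2q^{-1}-t^{-2}q$ over $\mathbb{C}(q)$ combine exactly as you say. Two small points are worth tightening. First, rather than arguing that $\Phi(\mathbb{b})$ is central in $\SH_{q,t}$ and hence scalar (which requires knowing that $\mathcal{M}_{(1,0)},\mathcal{M}_{(0,1)}$ generate and that the centre is trivial), it is quicker to compute $\Phi(\mathbb{b})$ directly from~\eqref{boundary_b} using~\eqref{mm_prod_1} and the analogue of~\eqref{mm_prod_3}; this is exactly how the paper obtains~\eqref{A1_b_and_t}. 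Second, the ``leading symbol $x^{nr}\eth^{ns}$'' claim is the right idea but, as you note, needs a filtration statement to be airtight; one clean way is to pass to the associated graded for the filtration by $\eth$-degree on the ring of $q$-difference operators, where the symbols $x^{nr}\eth^{ns}+x^{-nr}\eth^{-ns}$ with $(nr,ns)\in\mathbb{Z}^2/\{\pm1\}$ are visibly independent. Citing~\cite{BerestSamuel14a,BerestSamuel16a}, as you suggest, is also acceptable.
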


% \begin{proof}
%   We can check~\eqref{torus_xy_z} directly.
% %  This can be checked by direct computations.
% \end{proof}

We can also check that
\begin{equation}
  \label{mm_prod_3}
  \mathcal{M}_{(1,-1)}  \, \mathcal{M}_{(1,1)}
  =
  q^{-1} \, \mathcal{M}_{(2,0)_T}
  +
  q \, \mathcal{M}_{(0,2)_T}
  +
  q+q^{-1} -
  \left(
    t^2 q^{-1} + t^{-2} q
  \right)
  %\mathsf{e}
  % -
  % (t q^{-1} - t^{-1} q) (t - t^{-1}) \mathsf{e}
  ,
\end{equation}
where we follow the notation of~\cite{FrohGelc00a},
\begin{align*}
  \mathcal{M}_{(2,0)_T}
    & =
    T_2(\mathcal{M}_{(1,0)})
    =
    % \left(
    \mathsf{X}^2 + \mathsf{X}^{-2}
    ,
    % \right) \mathsf{e}
    % =
    % \left(
    %   \mathcal{M}_{(1,0)}
    % \right)^2 -2
    \\
    \mathcal{M}_{(0,2)_T}
    & =
    T_2( \mathcal{M}_{(0,1)} )
    =
    \mathsf{Y}^2 + \mathsf{Y}^{-2}
      .
\end{align*}
Here
$T_n(z)$ denotes the Chebyshev polynomial of the first kind
defined by
\begin{equation}
  \label{Chebysheff}
  T_n(x+x^{-1})=x^n+x^{-n},
\end{equation}
which satisfies the same recurrence equation~\eqref{recursion_S} with
the second kind polynomial.
From the viewpoint of the skein algebra, the last term
in~\eqref{mm_prod_3}
denotes  the boundary circle
$\mathbb{b}$ in
Fig.~\ref{fig:xyz_torus},
\begin{equation}
  \label{A1_b_and_t}
  \mathbb{b} \mapsto
  -t^2 \, q^{-1} - t^{-2} \, q
  .
\end{equation}
Indeed the identities~\eqref{mm_prod_1} and~\eqref{mm_prod_3}
give
\begin{multline}
  q^{-\frac{1}{2}}
  \mathcal{M}_{(1,0)}   \mathcal{M}_{(0,1)}   \mathcal{M}_{(1,1)}
  \\
  =
  q^{-1}   \mathcal{M}_{(1,0)}^{~2}  +
  q \, \mathcal{M}_{(0,1)}^{~2} +
  q^{-1} \mathcal{M}_{(1,1)}^{~2}
  + \left( -q -q^{-1} \right)
  +\left(- t^2 \,  q^{-1}-t^{-2} \, q \right)
  .
\end{multline}
Recalling the skein algebra relation~\eqref{boundary_b}, we see that the
embedding~\eqref{torus_daha}
induces~\eqref{A1_b_and_t}.
% The skein module on $2$-torus is recovered when
% $t=1, q$.

Actions of $\mathcal{M}_{(r,s)}$  on
the Macdonald polynomial $M_n(x;q, t)$ as a basis of
the symmetric Laurent polynomials $\mathbb{C}[x+x^{-1}]$
can be
computed explicitly in principle.
The Macdonald polynomials are eigenpolynomials of
$\mathcal{M}_{(1,0)}=\mathsf{Y}+\mathsf{Y}^{-1}$ as
in~\eqref{Macdonald_eigen}, and
the three-term  recurrence relations~\eqref{three-term}
and~\eqref{three-term_2}  denote respectively
the actions of  $\mathcal{M}_{(1,0)}$ and
$\mathcal{M}_{(1,1)}$.
These relations  reduce to results of representation of $\KBS_A(\Sigma_{1,1})$~\cite{ChoGel14a,MarcPaul15a}
when  $q$ is a root of unity.

Other operators $\mathcal{M}_{(r,s)}$ can be  given explicitly using
the $SL(2;\mathbb{Z})$-action  of $\SH_{q,t}$.
For instance,
when we apply the automorphisms~\eqref{auto_A1} to~\eqref{mm_prod_1},
we get
\begin{gather}
  \label{mm_prod_2}
  \mathcal{M}_{(n,1)} \, \mathcal{M}_{(1,0)}
  =
  q^{-\frac{1}{2}} \mathcal{M}_{(n-1,1)}
  + q^{\frac{1}{2}} \mathcal{M}_{(n+1,1)}
  ,
  \notag
  \\
  \mathcal{M}_{(0,1)} \, \mathcal{M}_{(1,n)}
  =
  q^{-\frac{1}{2}} \mathcal{M}_{(1,n-1)}
  +
  q^{\frac{1}{2}} \mathcal{M}_{(1,n+1)}
  ,
\end{gather}
where
for $n \geq 1$
\begin{gather}
  \mathcal{M}_{(n,1)}
  =
  \ch \left(
    q^{\frac{n}{2}-1} \mathsf{X}^{n-1} \mathsf{Y} \, \mathsf{X}
  \right)
%  \left(
  % q^{\frac{n}{2}-1} \mathsf{X}^{n-1} \mathsf{Y} \, \mathsf{X}
  % +
  % q^{-\frac{n}{2}+1} \mathsf{X}^{-1} \mathsf{Y}^{-1} \mathsf{X}^{-n+1}
  ,
  \notag
  % \right)\mathsf{e}
  \\
  \mathcal{M}_{(1,n)}
  =
  \ch \left(
    q^{-\frac{n}{2}+1} \mathsf{Y}^{n-1} \mathsf{X} \, \mathsf{Y}
  \right)
  % \left(
  % q^{-\frac{n}{2}+1} \mathsf{Y}^{n-1} \mathsf{X} \, \mathsf{Y}
  % +
  % q^{\frac{n}{2}-1} \mathsf{Y}^{-1} \mathsf{X}^{-1} \mathsf{Y}^{-n}
  .
  % \right)\mathsf{e}
\end{gather}

With our algebra embedding, the
$SL(2;\mathbb{Z})$ actions~\eqref{auto_A1} of DAHA
naturally induce the mapping class group
$\Mod(\Sigma_{1,1}) \cong SL(2;\mathbb{Z})$
(see, \emph{e.g.},~\cite{Birman74,FarbMarg11Book}),
which is generated by the Dehn twists $\mathscr{T}_{\mathbb{x}}$
(resp. $\mathscr{T}_{\mathbb{y}}$)
about the curve
$\mathbb{x}$
(resp. $\mathbb{y}$).
As seen from the fact that
$\tau_R: \mathbb{c}_{(0,1)} \mapsto \mathbb{c}_{(1,1)}$,
$\tau_R: \mathbb{c}_{(1,0)} \mapsto \mathbb{c}_{(1,0)}$, and that the
boundary circle $\mathbb{b}$ is fixed,
the automorphism
$\tau_R$~\eqref{tau_for_A1}
denotes the right Dehn twist
$\mathscr{T}_{\mathbb{x}}^{~-1}$
about $\mathbb{x}=\mathbb{c}_{(1,0)}$.
In the same manner, as we have
$\tau_L: \mathbb{c}_{(0,1)} \mapsto \mathbb{c}_{(0,1)}$
and
$\tau_L: \mathbb{c}_{(1,0)} \mapsto \mathbb{c}_{(1,1)}$,
we can identify
the automorphism $\tau_L$~\eqref{tau_for_A1}
with  the  left Dehn twist
$\mathscr{T}_{\mathbb{y}}$
about
$\mathbb{y}=\mathbb{c}_{(0,1)}$.
% \begin{align*}
%   &\mathscr{T}_{\mathbb{y}} \mapsto \tau_L,
%     &
%   &\mathscr{T}_{\mathbb{x}} \mapsto \tau_R^{~ -1}.
% \end{align*}
For our later use, we summarize the actions of
the Dehn twists  as follows.
\begin{align}
  & \mathscr{T}_{\mathbb{y}} \mapsto
  \tau_L,
  &
  &
    \tau_L^{~ \pm 1}
      :
    \left(
    \begin{matrix}
      \mathsf{T}
      \\
      \mathsf{X}
      \\
      \mathsf{Y}
      \\
      \eth_t
    \end{matrix}
  \right)
  \mapsto
  \left(
  \begin{matrix}
    \mathsf{T}
    \\
    q^{\mp \frac{1}{2}} \mathsf{Y}^{\pm 1} \, \mathsf{X}
    \\
    \mathsf{Y}
    \\
    \frac{1}{\sh \left(t^{-1} q^{\mp \frac{1}{2}} \mathsf{Y}^{\pm 1} \mathsf{X}
      \right)}
    \sh \left( t^{-1} \mathsf{X} \right) \, \eth_t
  \end{matrix}
  \right),
  % &
  % \mathscr{T}_{\mathbb{y}}^{~-1}
  % =
  % \tau_L^{~-1}
  % & :
  %   \left(
  %   \begin{smallmatrix}
  %     \mathsf{T}
  %     \\
  %     \mathsf{X}
  %     \\
  %     \mathsf{Y}
  %     \\
  %     \eth_t
  %   \end{smallmatrix}
  % \right)
  % \mapsto
  % \left(
  % \begin{smallmatrix}
  %   \mathsf{T}
  %   \\
  %   q^{\frac{1}{2}} \mathsf{Y}^{-1} \, \mathsf{X}
  %   \\
  %   \mathsf{Y}
  %   \\
  %   \frac{1}{\sh \left(t^{-1} q^{\frac{1}{2}} \mathsf{Y}^{-1} \mathsf{X}
  %     \right)}
  %   \sh \left( t^{-1} \mathsf{X} \right) \, \eth_t
  % \end{smallmatrix}
  % \right)
%  ,
  \notag
  \\
  &
    \mathscr{T}_{\mathbb{x}}
    \mapsto
    \tau_R^{~ - 1} ,
  &
  &
  \tau_R^{~ \pm 1}
    :
    \left(
    \begin{matrix}
      \mathsf{T}
      \\
      \mathsf{X}
      \\
      \mathsf{Y}
      \\
      \eth_t
    \end{matrix}
  \right)
  \mapsto
  \left(
  \begin{matrix}
    \mathsf{T}
    \\
    \mathsf{X}
    \\
    q^{\pm \frac{1}{2}}\mathsf{X}^{\pm 1} \mathsf{Y}
    \\
    \eth_t
  \end{matrix}
  \right)
  .
  \label{auto_and_Dehn}
  % &
  % \mathscr{T}_{\mathbb{x}}^{~-1}
  % =
  % \tau_R
  % & :
  %       \left(
  %   \begin{smallmatrix}
  %     \mathsf{T}
  %     \\
  %     \mathsf{X}
  %     \\
  %     \mathsf{Y}
  %     \\
  %     \eth_t
  %   \end{smallmatrix}
  % \right)
  % \mapsto
  % \left(
  % \begin{smallmatrix}
  %   \mathsf{T}
  %   \\
  %   \mathsf{X}
  %   \\
  %   q^{\frac{1}{2}}\mathsf{X} \mathsf{Y}
  %   \\
  %   \eth_t
  % \end{smallmatrix}
  % \right) 
  % .
\end{align}

\subsection{DAHA Polynomial}
%%%%%%%%%%%%
%\subsection{}
%\subsubsection{}
% For
% $
% \left(
%   \begin{smallmatrix}
%     a & b
%     \\
%     c & d
%   \end{smallmatrix}
% \right)
% \in SL_2(\mathbb{Z})
% $,
% \begin{equation}
%   \begin{pmatrix}
%     a & b
%     \\
%     c & d
%   \end{pmatrix}
%   =
%   \mathbf{T}^{m_t} \mathbf{S} \cdots \mathbf{T}^{m_1} \mathbf{S}
% \end{equation}
% where
% \begin{align*}
%   \mathbf{T} &=
%   \begin{pmatrix}
%     1 & 1 \\
%     0 & 1
%   \end{pmatrix}
%   &
%   \mathbf{S} & =
%   \begin{pmatrix}
%     0 & -1 \\
%     1 & 0
%   \end{pmatrix}
% \end{align*}
% and
% \begin{equation*}
%   \frac{a}{c} = m_t - 
%   \cfrac{1}{
%     m_{t-1}
%     -\cfrac{1}{
%       \ddots -
%     \cfrac{1}{m_1}
%   }}
% \end{equation*}

%%%
%\subsubsection{}

The $N$-colored Jones polynomial of knot $K$
is a linear combination of 
the Kauffman bracket polynomials for $k$ parallel copies of knot $K$
($1 \leq k \leq N-1$), so that the polynomial for
an unknot is
$(-1)^{N-1}\frac{A^{2N}-A^{-2N}}{A^2-A^{-2}}$.
We should recall that this is the Chebyshev polynomial of the second kind.
A case of $N=2$ is the Jones polynomial.
As
a simple closed curve $\mathbb{c}_{(r,s)}$
with coprime integers
$(r,s)$
on the genus-one
Heegaard surface in $S^3$
denotes a torus knot,
the DAHA operator $\mathcal{M}_{(r,s)}$~\eqref{torus_c_and_M} associated to
$\mathbb{c}_{(r,s)}$
is expected to be related with
the (colored) Jones polynomial  of the torus knot.
Following Cherednik~\cite{IChered13a,IChered16a},
% defined
% for a linear combination of $n$-parallel of $\mathbb{c}_{(r,s)}$,
we define
\begin{equation}
  \label{P_poly_A1}
  P_n
  (x, q,t ; \mathbb{c}_{(r,s)})
  =
%  q^{b d (n^2-1)} \cdot
%  \varepsilon
  M_{n-1}(\mathcal{O}_{(r,s)} ; q, t)
  (1) .
\end{equation}
Here $\mathcal{O}_{(r,s)}= \tau(\mathsf{X})$ where
the $SL(2;\mathbb{Z})$-action is
$\mathscr{T}\mapsto \tau$~\eqref{auto_and_Dehn}
when the curve is $\mathbb{c}_{(r,s)} = \mathscr{T}(\mathbb{c}_{(1,0)})$.
Note that the case of $n=2$ is simply given from~\eqref{A1_Macdo_1} as
\begin{equation}
  \begin{aligned}[t]
    P_2(x,q,t;\mathbb{c}_{(r,s)})
    &=
    \mathcal{M}_{(r,s)}(1)
    \\
    &=
    \ch ( \mathcal{O}_{(r,s)}) (1) .
  \end{aligned}
\end{equation}

For instance,
we have
\begin{align}
  P_n(x, q,t;\mathbb{c}_{(1,0)})
  &= M_{n-1}(x ; q, t) ,
    \notag
  \\
  P_n(x, q,t;\mathbb{c}_{(0,1)})
  & =
    M_{n-1}(t^{-1}; q, t) .
    \label{A1_for_c01}
\end{align}
As
the Macdonald polynomial reduces
to the Chebyshev polynomial of the second kind~\eqref{Macdo_q_equal_t} at a specific setting $q=t$,
both  polynomials $P_n(x, q, t; \mathbb{c}_{(r,s)})$
for $(r,s)=(0,1)$,~$(1,0)$
are regarded as a deformation of  the $n$-colored Jones
polynomial
for unknot.

We show an explicit result for the curve $\mathbb{c}_{(2k+1,2)}=
\left( \mathscr{T}_{\mathbb{x}}^{~ -k} \circ \mathscr{T}_{\mathbb{y}}^{~2} \right)(\mathbb{c}_{(1,0)})$.
We have
\begin{equation}
  \label{tau_tau_X}
  \mathcal{O}_{(2k+1,2)}
  =
  \left( \tau_R^{~k} \circ \tau_L^2 \right)(\mathsf{X})
  = q^{k-1} (
    \mathsf{X}^k \mathsf{Y}  )^2 \mathsf{X} ,
\end{equation}
and
the DAHA operator associated to  $\mathbb{c}_{(2k+1,2)}$ is given by
\begin{equation*}
  \mathcal{M}_{(2k+1,2)}
  % =
  % \left(\tau_R^{~k} \circ \tau_L^{~2} \right) (\mathcal{M}_{(1,0)})
  =
  \ch \left(
    q^{k-1}
    ( \mathsf{X}^k \, \mathsf{Y} )^2 \mathsf{X}
  \right)
  % q^{k-1}
  % \left( \mathsf{X}^k \, \mathsf{Y} \right)^2 \mathsf{X}
  % +
  % q^{1-k}
  % \mathsf{X}^{-1} \left( \mathsf{X}^k  \, \mathsf{Y} \right)^{-2}
  .
\end{equation*}
This can be written as the operator
on  the symmetric polynomials
$\mathbb{C}[x+x^{-1}]$ as
\begin{multline}
  \label{M0_2k1}
  \mathcal{M}_{(2k+1,2)}
  \mapsto
  \widehat{M}_{(2k+1,2)}^{(0)}(x; q, t)
  \\
  =
  \left(q \, x \right)^{2k+1}
  \frac{
    \left( 1-t^2 x^2 \right)
    \left(1-q^2t^2x^2 \right)
  }{
    t^2 \left(1-x^2 \right)
    \left(1-q^2x^2\right)
  } \, \eth^2
  +
  \left(q \, x^{-1}\right)^{2k+1}
  \frac{
    \left( t^2-x^2 \right)
    \left(q^2t^2-x^2\right)}{
    t^2 
    \left(1-x^2\right) \left(q^2-x^2 \right)
  } \, \eth^{-2}
  \\
  -
  q \, t^{-2} \left(q^2-t^2 \right) \left(1-t^2 \right)
  \frac{
    x \left(1+x^2 \right)
  }{
     \left( q^2 - x^2 \right)
    \left( 1- q^2 x^2 \right)
  }
  .
\end{multline}
From this expression
we obtain
\begin{equation}
  \label{P2_and_torus_knot}
  \begin{aligned}[t]
    P_2(-q, q,-q; \mathbb{c}_{(2k+1,2)})
    & =
    \left.
      \widehat{M}_{(2k+1,2)}^{(0)}(x;q,t)(1)
    \right|_{x=t=-q}
    \\
    &    =
    1-q^{4k}-q^{4k+2}-
    q^{4k+4}
    \\
    & =
    - q^{6k+3} \left( q+q^{-1} \right) 
    J_2(q^2; T_{(2k+1,2)})
    ,
  \end{aligned}
\end{equation}
where $J_N(q; T_{(s,t)})$ is the colored Jones polynomial for torus
knot~\cite{Mort95a}
normalized to be
$J_N(q; \text{unknot})=1$,
\begin{equation}
  \label{cJones_torus}
  J_N(q; T_{(s,t)})
  =
  \frac{q^{\frac{1}{4} s t (1-N^2)}}{
    q^{\frac{N}{2}} - q^{-\frac{N}{2}}
  }
  \sum_{r=-\frac{N-1}{2}}^{\frac{N-1}{2}}
  \left(
    q^{s t r^2 - (s+t) r + \frac{1}{2}}
    -
    q^{s t r^2-(s-t) r - \frac{1}{2}}
  \right) .
\end{equation}

We note that
the three-term recurrence relation~\eqref{three-term} for the Macdonald
polynomial gives a recursion relation for the DAHA polynomial,
\begin{multline}
  P_{n+1}(x,q,t; \mathbf{c}_{(2k+1,2)})
  =
  \widehat{M}_{(2k+1,2)}^{(0)}(x;q,t) \left(
    P_n(x,q,t;\mathbf{c}_{(2k+1,2)})
  \right)
  \\
  -
  \frac{
    \left( 1-q^{2n-2} \right) \left(1-q^{2n-4} t^4 \right)
  }{
    \left( 1-q^{2n-4}t^2 \right)
    \left( 1-q^{2n-2} t^2 \right)
  } \,
  P_{n-1} (x, q, t ; \mathbf{c}_{(2k+1,2)})
  .
\end{multline}
We find  that this agrees with   the colored Jones polynomial  up to
framing factor at a
specific point,
\begin{equation}
  P_m(-q,q,-q; \mathbb{c}_{(2k+1,2)})
  =
  q^{(2k+1)(m^2-1)}
  \frac{q^m - q^{-m}}{q- q^{-1}}
  J_m(q^2; T_{(2k+1,2)}) .
\end{equation}

See~\cite{IChered13a,IChered16a} for further computations of the
$A_n$
DAHA polynomials.
\section{$4$-Punctured Sphere}
\label{sec:sphere}
\subsection{Skein Algebra}
We set simple closed curves $\mathbb{x}$, $\mathbb{y}$,
$\mathbb{z}$, and $\mathbb{b}_j$ on a  $4$-punctured sphere
$\Sigma_{0,4}$ as in Fig.~\ref{fig:4-hole_sphere}.
The boundary circles $\mathbb{b}_j$ of the punctures are central.

\begin{figure}[htbp]
  \centering
    \includegraphics[scale=1]{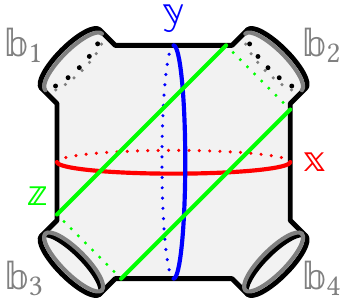}
  \caption{Depicted are simple closed curves on the 4-punctured sphere
  $\Sigma_{0,4}$.}
  \label{fig:4-hole_sphere}
\end{figure}

\begin{prop}[\cite{BulloPrzyt99a}]  
  The Kauffman bracket  skein module
  $\KBS_A(\Sigma_{0,4})$ is generated by
  $\mathbb{x}$, $\mathbb{y}$, $\mathbb{z}$, and $\mathbb{b}_j$ satisfying
  \begin{align}
    A^2 \, 
    \mathbb{x} \,  \mathbb{y}
    -A^{-2} \,
    \mathbb{y}  \,  \mathbb{x}
    & =
      \left( A^4-A^{-4} \right) \,
      \mathbb{z}
      + \left( A^2-A^{-2} \right) \,
      \left(
      \mathbb{b}_2 \,  \mathbb{b}_3
      +
      \mathbb{b}_1 \, \mathbb{b}_4
      \right)
      ,
      \notag
    \\
    A^2 \, 
    \mathbb{y} \,  \mathbb{z}
    -A^{-2} \,
    \mathbb{z}  \,  \mathbb{y}
    & =
      \left( A^4-A^{-4} \right) \,
      \mathbb{x}
      + \left( A^2-A^{-2} \right) \,
      \left(
      \mathbb{b}_1 \,  \mathbb{b}_2
      +
      \mathbb{b}_3 \, \mathbb{b}_4
      \right)
      ,
      \notag
    \\
    A^2 \, 
    \mathbb{z} \,  \mathbb{x}
    -A^{-2} \,
    \mathbb{x}  \,  \mathbb{z}
    & =
      \left( A^4-A^{-4} \right) \,
      \mathbb{y}
      + \left( A^2-A^{-2} \right) \,
      \left(
      \mathbb{b}_1 \,  \mathbb{b}_3
      +
      \mathbb{b}_2 \, \mathbb{b}_4
      \right)
      ,
      \label{algS04}
  \end{align}
  with
  \begin{multline}
    \label{center_x_b}
    A^2 \, \mathbb{x} \, \mathbb{y} \, \mathbb{z} =
    A^4 \,  \mathbb{x}^{2}
    + A^{-4} \, \mathbb{y}^{2}
    + A^4 \, \mathbb{z}^{2}
    \\
    +A^2  \,
    \left( \mathbb{b}_1\, \mathbb{b}_2
      + \mathbb{b}_3  \, \mathbb{b}_4 \right)
    \, \mathbb{x}
    +A^{-2} \,
    \left( \mathbb{b}_1\, \mathbb{b}_3
      + \mathbb{b}_2  \, \mathbb{b}_4 \right)
    \, \mathbb{y}
    +A^2 \,
    \left( \mathbb{b}_1\, \mathbb{b}_4
      + \mathbb{b}_2  \, \mathbb{b}_3 \right)
    \, \mathbb{z}
    \\
    +
    \mathbb{b}_1^{~2}
    +   \mathbb{b}_2^{~2}+  \mathbb{b}_3^{~2}+  \mathbb{b}_4^{~2}
    +
    \mathbb{b}_1\,  \mathbb{b}_2\,  \mathbb{b}_3\,  \mathbb{b}_4
    -\left( A^2+A^{-2} \right)^2
    .
  \end{multline}
\end{prop}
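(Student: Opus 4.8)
The plan is to prove the two assertions of the proposition in turn: that $\mathbb{x},\mathbb{y},\mathbb{z}$, together with the peripheral curves $\mathbb{b}_1,\dots,\mathbb{b}_4$, generate $\KBS_A(\Sigma_{0,4})$, and that the relations \eqref{algS04} and \eqref{center_x_b}, together with the (geometrically evident) centrality of the $\mathbb{b}_j$, form a complete set. The argument runs parallel to the once-punctured torus of Prop.~\ref{prop:Sigma11}, the new feature being that the third channel curve $\mathbb{z}$ is cubically dependent on $\mathbb{x},\mathbb{y}$ over the $\mathbb{b}_j$, which is the origin of \eqref{center_x_b}.

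For generation, I would use that $\Sigma_{0,4}$ is planar and so deformation retracts onto a spine: any framed link in $\Sigma_{0,4}\times[0,1]$ is isotopic to one in standard position projecting to the spine with finitely many crossings, and resolving each crossing by the Kauffman bracket relation writes the link as a $\mathbb{C}[A^{\pm1}]$-combination of disjoint unions of simple closed curves. Contractible components contribute the scalar $-A^{2}-A^{-2}$, boundary-parallel components are powers of the $\mathbb{b}_j$, and an essential non-peripheral curve has a well-defined slope; a recursion on the slope, trading a curve of large slope for ones of smaller slope by the skein relations (the product-to-sum mechanism familiar from $\Sigma_{1,1}$), rewrites it as a polynomial in $\mathbb{x},\mathbb{y},\mathbb{z}$ over $R:=\mathbb{C}[A^{\pm1}][\mathbb{b}_1,\dots,\mathbb{b}_4]$. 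Centrality of each $\mathbb{b}_j$ is immediate, since it lies in a collar of the $j$-th puncture, so any other link is isotopic to one disjoint from it and may be slid above or below it in the product.

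The relations are then checked diagrammatically. With the curves of Fig.~\ref{fig:4-hole_sphere} in minimal position, $\mathbb{x}\mathbb{y}$ and $\mathbb{y}\mathbb{x}$ each have two crossings; resolving these by the Kauffman relation and forming $A^{2}\mathbb{x}\mathbb{y}-A^{-2}\mathbb{y}\mathbb{x}$, the terms common to both products cancel and the remainder assembles into $(A^{4}-A^{-4})\mathbb{z}+(A^{2}-A^{-2})(\mathbb{b}_2\mathbb{b}_3+\mathbb{b}_1\mathbb{b}_4)$; the other two lines of \eqref{algS04} follow from the symmetry of the picture. The cubic relation \eqref{center_x_b} is obtained by fully resolving $\mathbb{x}\mathbb{y}\mathbb{z}$ --- equivalently, by right-multiplying the first line of \eqref{algS04} by $\mathbb{z}$ and symmetrizing --- and reading off the coefficients of $\mathbb{x}^{2},\mathbb{y}^{2},\mathbb{z}^{2}$, of the degree-one terms, and of the scalar $\mathbb{b}$-monomials; it is central because $\mathbb{x}\mathbb{y}\mathbb{z}$ is symmetric under the three-fold symmetry up to commutators that \eqref{algS04} absorbs.

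The substantive point, and the main obstacle, is completeness. Let $\mathcal{A}$ be the abstract $\mathbb{C}[A^{\pm1}]$-algebra generated by $\mathbb{x},\mathbb{y},\mathbb{z},\mathbb{b}_j$ subject to \eqref{algS04}, \eqref{center_x_b} and centrality of the $\mathbb{b}_j$; the preceding steps give a surjection $\mathcal{A}\to\KBS_A(\Sigma_{0,4})$, and one must show it is injective. The quasi-commutation relations \eqref{algS04} let one reorder any word in $\mathbb{x},\mathbb{y},\mathbb{z}$ into a fixed order, say all $\mathbb{z}$'s first, modulo lower-complexity corrections, and \eqref{center_x_b} then eliminates any $\mathbb{z}$-power $\ge 2$; a diamond-lemma argument with a suitably chosen monomial ordering (checking that the overlap ambiguities are confluent) shows that $\mathcal{A}$ is spanned over $R$ by the PBW-type set $\{\mathbb{z}^{\varepsilon}\mathbb{x}^{m}\mathbb{y}^{n}:\varepsilon\in\{0,1\},\ m,n\ge 0\}$. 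It then remains to see that the images of these monomials are $R$-linearly independent in $\KBS_A(\Sigma_{0,4})$, for which I would use the standard basis of the skein module by isotopy classes of essential multicurves --- a slope together with a multiplicity, times powers of the $\mathbb{b}_j$ --- and a filtration of the module (by geometric intersection number with $\mathbb{x}$, refined by that with $\mathbb{y}$) with respect to which the PBW monomials are triangular against that basis. A convenient shortcut is to specialize $A=-1$, where the relations collapse to the Fricke cubic cutting out the $SL(2;\mathbb{C})$-character variety of $\Sigma_{0,4}$, whose defining ideal is classically principal and generated by the limit of \eqref{center_x_b}; a flatness-in-$A$ argument, itself reducing to the PBW count, then lifts the presentation to the skein algebra. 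Verifying this independence --- equivalently, confluence of the diamond, or flatness --- is the one genuinely non-formal step; everything else is an organized use of the Kauffman bracket relation.
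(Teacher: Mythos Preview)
The paper does not prove this proposition: it is quoted from \cite{BulloPrzyt99a} and stated without argument, so there is no ``paper's own proof'' to compare against. Your sketch is therefore not competing with anything in the present paper; it is an attempt to reconstruct the content of the cited reference.

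That said, your outline is the standard one and is essentially how Bullock--Przytycki proceed. Generation from the spine plus crossing resolution is routine for planar surfaces; the diagrammatic verification of \eqref{algS04} and \eqref{center_x_b} is a finite check; and the completeness step is indeed the crux. Your plan to use the multicurve basis of $\KBS_A(\Sigma_{0,4})$ (isotopy classes of simple closed curves, indexed by slope and multiplicity, times boundary monomials) and a triangularity argument against a PBW-type spanning set is exactly the right shape. One caution: your proposed PBW set $\{\mathbb{z}^{\varepsilon}\mathbb{x}^m\mathbb{y}^n\}$ with $\varepsilon\in\{0,1\}$ is not quite what the relations give you --- \eqref{center_x_b} is a single cubic relation, not a quadratic one in $\mathbb{z}$, so it does not directly reduce $\mathbb{z}^2$ to lower terms; rather, combined with \eqref{algS04} it lets you eliminate one of the three generators in any monomial of sufficiently high total degree. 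The actual free $R$-basis used in \cite{BulloPrzyt99a} is $\{\mathbb{x}^i\mathbb{y}^j\mathbb{z}^k : i,j,k\ge 0,\ ijk=0\}$, matched against the multicurve basis by the observation that an essential curve of slope $s/r$ with multiplicity $n$ corresponds to a Chebyshev polynomial in the slope-$s/r$ generator. Your $A=-1$ flatness shortcut is attractive but circular unless you have an independent reason for flatness; the direct triangularity against multicurves is cleaner.
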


\begin{figure}[htbp]
  \centering
  \includegraphics[scale=.75]{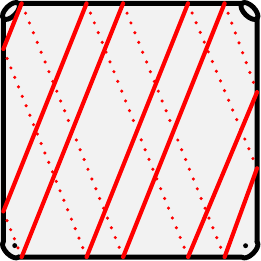}
  \caption[]{A simple closed curve with slope $5/2$ on $\Sigma_{0,4}$
    is given.
    Here punctures are on corners of the square.}
  \label{fig:slope4sphere}
\end{figure}
Essential simple closed curves on $\Sigma_{0,4}$ are parameterized by a
slope $\mathbb{Q} \cup \{\infty\}$.
Curves $\mathbb{x}$, $\mathbb{y}$, and $\mathbb{z}$ are
identified
respectively
with curves $\mathbb{c}_{(1,0)}$,
$\mathbb{c}_{(0,1)}$, and
$\mathbb{c}_{(1,1)}$,
where
$\mathbb{c}_{(r,s)}$ with coprime integers $r$ and $s$
denotes a simple closed curve
with a slope $s/r$
(see Fig.~\ref{fig:slope4sphere}).
The  multiplicative structure of these curves were
investigated in detail~\cite{BaMuPrWiWa18a}, and
a ``product-to-sum
formula''~\cite{FrohGelc00a} was given.
For example, one finds the following skein relation;
\begin{gather}
  \mathbb{c}_{(1,0)} \, \mathbb{c}_{(0,1)}
  =
  A^2 \mathbb{c}_{(1,1)} + A^{-2} \mathbb{c}_{(1,-1)}
    + \left(
      \mathbb{b}_1 \,     \mathbb{b}_4
      +
      \mathbb{b}_2 \,     \mathbb{b}_3
    \right)
    ,
    \notag
    \\
    \mathbb{c}_{(1,0)} \, \mathbb{c}_{(1,1)}
    =
    A^2 \mathbb{c}_{(2,1)} + A^{-2} \mathbb{c}_{(0,1)}
    + \left(
      \mathbb{b}_1 \,     \mathbb{b}_3
      +
      \mathbb{b}_2 \,     \mathbb{b}_4
    \right)
    ,
    \notag
    \\
    \mathbb{c}_{(1,1)} \, \mathbb{c}_{(0,1)}
    =
    A^2 \mathbb{c}_{(1,2)} + A^{-2} \mathbb{c}_{(1,0)}
    + \left(
      \mathbb{b}_1 \,     \mathbb{b}_2
      +
      \mathbb{b}_3 \,     \mathbb{b}_4
    \right)
    .
  \end{gather}

 %%%%%
\subsection{$C^\vee C_1$-DAHA}
A generalization of the rank-1 DAHA is known as  DAHA of
$C^\vee C_1$-type (or the Askey--Wilson type),
which has four parameters besides $q$.

\begin{definition}
  The DAHA of $C^\vee C_1$-type
  $\mathcal{H}_{q, t_0, t_1,t_2, t_3}$
  is generated by
  $\mathsf{T}_0^{~\pm 1}$,
  $\mathsf{T}_1^{~\pm 1}$,
  $\mathsf{T}_0^{\vee~\pm 1}$,
  and
  $\mathsf{T}_1^{\vee~\pm 1}$,
  satisfying
  \begin{gather}
    \left( \mathsf{T}_0 - {t_0}^{-1} \right) \,
      \left( \mathsf{T}_0 + t_0  \right)
      =0
      ,
      \notag
      \\
      \left( \mathsf{T}_1 - {t_1}^{-1} \right) \,
      \left( \mathsf{T}_1 +t_1 \right)
      =0
      ,
      \notag
      \\
      \left( \mathsf{T}_0^\vee - {t_2}^{-1} \right) \,
      \left( \mathsf{T}_0^\vee  + t_2 \right)
      =0
      ,
      \notag
      \\
      % x \mathsf{T}_1 = \mathsf{T}_1^{-1} x^{-1} + (t_3 - t_3^{-1})
      \left(  \mathsf{T}_1^{\vee}  - {t_3}^{-1} \right)
      \left( \mathsf{T}_1^{\vee}+t_3\right)
      =0
      ,
    \end{gather}
  and
  \begin{equation}
    \mathsf{T}_1^\vee \, \mathsf{T}_1 \, \mathsf{T}_0 \,
    \mathsf{T}_0^\vee = q^{-\frac{1}{2}}
    .
  \end{equation}
\end{definition}

We use an idempotent
\begin{equation}
  \mathsf{e}
  =
  \frac{1}{t_1+t_1^{~-1}} \left(t_1 + \mathsf{T}_1\right)
  .
\end{equation}
satisfying
\begin{gather}
  \mathsf{e}^2 = \mathsf{e}
  ,
  \notag
  \\
  \mathsf{e}  \, \mathsf{T}_1
  = \mathsf{T}_1 \, \mathsf{e}
  =
  t_1^{-1} \, \mathsf{e}
  .
\end{gather}

Hereafter we denote $\boldsymbol{t}=(t_0,t_1, t_2, t_3)$.
\begin{definition}
  The spherical DAHA $\SH_{q, \boldsymbol{t}}$ of $C^\vee C_1$-type
  is defined by
  $\SH_{q,\boldsymbol{t}}= \mathsf{e} \, \mathcal{H}_{q,\boldsymbol{t}} \, \mathsf{e}$.
\end{definition}

We recall some of the known  automorphisms of DAHA.
See, \emph{e.g.},~\cite{NoumiStokm00a,Oblom04a}.
\begin{lemma}
  We have an involutive anti-automorphism $\epsilon^\prime$ defined by
  \begin{align}
    \label{AW_epsilon}
    &
      \epsilon^\prime:
      \begin{pmatrix}
        \mathsf{T}_0 \\
        \mathsf{T}_1 \\
        \mathsf{T}_0^\vee \\
        \mathsf{T}_1^\vee 
      \end{pmatrix}
    \mapsto
    \begin{pmatrix}
      \mathsf{T}_1^\vee \\
      \mathsf{T}_1 \\
      \mathsf{T}_0^\vee \\
      \mathsf{T}_0
    \end{pmatrix}
    ,
    \qquad\qquad
       \begin{pmatrix}
         t_0 \\ t_1 \\ t_2 \\ t_3
       \end{pmatrix}
    \mapsto
    \begin{pmatrix}
      t_3 \\ t_1 \\ t_2 \\ t_0
    \end{pmatrix}
    .
  \end{align}
\end{lemma}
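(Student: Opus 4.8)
The plan is to verify directly that the map $\epsilon'$ on the generators in~\eqref{AW_epsilon}, together with the prescribed permutation of the parameters $(t_0,t_1,t_2,t_3)\mapsto(t_3,t_1,t_2,t_0)$, is compatible with all the defining relations of $\mathcal{H}_{q,\boldsymbol{t}}$, and that it squares to the identity. Since an anti-automorphism reverses products, I would check that $\epsilon'$ sends each relation to a valid relation after reversing the order of all factors. First I would handle the four quadratic Hecke relations: $(\mathsf{T}_0-t_0^{-1})(\mathsf{T}_0+t_0)=0$ is a polynomial in the single generator $\mathsf{T}_0$, so applying $\epsilon'$ turns it into $(\mathsf{T}_1^\vee+t_3)(\mathsf{T}_1^\vee-t_3^{-1})=0$ (order of the two scalar-shifted factors reversed), which is exactly the quadratic relation for $\mathsf{T}_1^\vee$ with parameter $t_3$; this matches the parameter substitution. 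The relations for $\mathsf{T}_1$, $\mathsf{T}_0^\vee$ are fixed since those generators and their parameters $t_1,t_2$ are fixed, and the relation for $\mathsf{T}_1^\vee$ maps back to that for $\mathsf{T}_0$ — so the quadratic relations are permuted among themselves consistently.

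The substantive relation is the "circular" one, $\mathsf{T}_1^\vee\,\mathsf{T}_1\,\mathsf{T}_0\,\mathsf{T}_0^\vee=q^{-1/2}$. Applying the anti-automorphism $\epsilon'$ reverses the product, so the left-hand side becomes $\epsilon'(\mathsf{T}_0^\vee)\,\epsilon'(\mathsf{T}_0)\,\epsilon'(\mathsf{T}_1)\,\epsilon'(\mathsf{T}_1^\vee) = \mathsf{T}_0^\vee\,\mathsf{T}_1^\vee\,\mathsf{T}_1\,\mathsf{T}_0$, and the right-hand side $q^{-1/2}$ is a central scalar, hence fixed. So I must check that $\mathsf{T}_0^\vee\,\mathsf{T}_1^\vee\,\mathsf{T}_1\,\mathsf{T}_0=q^{-1/2}$ holds as a consequence of the original relations. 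This is immediate: from $\mathsf{T}_1^\vee\,\mathsf{T}_1\,\mathsf{T}_0\,\mathsf{T}_0^\vee=q^{-1/2}$ and the invertibility of all generators, the product $\mathsf{T}_1^\vee\,\mathsf{T}_1\,\mathsf{T}_0\,\mathsf{T}_0^\vee$ is a unit equal to the central element $q^{-1/2}$, so any cyclic permutation of the four factors also equals $q^{-1/2}$; in particular $\mathsf{T}_0^\vee\,\mathsf{T}_1^\vee\,\mathsf{T}_1\,\mathsf{T}_0 = q^{-1/2}$ is the cyclic rotation moving $\mathsf{T}_0^\vee$ to the front. Thus the circular relation is preserved.

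Having checked compatibility with all defining relations, $\epsilon'$ extends to a well-defined anti-homomorphism $\mathcal{H}_{q,\boldsymbol{t}}\to\mathcal{H}_{q,\boldsymbol{t}}$ intertwining the parameter permutation. To see it is an anti-automorphism and involutive, I would observe that the substitution on generators is visibly an involution of the four-element set (it swaps $\mathsf{T}_0\leftrightarrow\mathsf{T}_1^\vee$ and fixes $\mathsf{T}_1$, $\mathsf{T}_0^\vee$) and likewise the parameter map swaps $t_0\leftrightarrow t_3$ and fixes $t_1,t_2$; hence $\epsilon'\circ\epsilon'$ fixes every generator and every parameter, so $(\epsilon')^2=\mathrm{id}$. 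In particular $\epsilon'$ is bijective. I do not anticipate a genuine obstacle here; the only point requiring a small argument is the order-reversal in the circular relation, and that is dispatched by the cyclic-invariance remark above. This type of verification is standard and is exactly how the cited references (e.g.~\cite{NoumiStokm00a,Oblom04a}) establish such automorphisms, so one could also simply cite them, but the direct check is short enough to include.
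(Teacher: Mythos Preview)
Your proof is correct. The paper does not actually supply a proof of this lemma at all: it merely states the result as ``known'' and refers to~\cite{NoumiStokm00a,Oblom04a}, so your direct verification of the defining relations (the four Hecke quadratics and the cyclic relation, the latter handled via conjugation by $\mathsf{T}_0^\vee$ since $q^{-1/2}$ is central) together with the observation that the generator and parameter permutations are involutions is exactly the intended straightforward check.
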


\begin{lemma}
  The $SL(2;\mathbb{Z})$ action on $\mathcal{H}_{q,\boldsymbol{t}}$ is
  generated by
  \begin{align}
    &
    \begin{pmatrix}
      1 & 1
      \\
      0 &1
    \end{pmatrix}
    \mapsto \sigma_{R}
          ,
        &
        &
          \begin{pmatrix}
            1 & 0 \\
            1 & 1
          \end{pmatrix}
                \mapsto \sigma_{L}
                ,
  \end{align}
  where
  the automorphisms are
  \begin{align}
    &
      \sigma_{R}:
      \begin{pmatrix}
        \mathsf{T}_0 \\
        \mathsf{T}_1 \\
        \mathsf{T}_0^\vee \\
        \mathsf{T}_1^\vee 
      \end{pmatrix}
      \mapsto
      \begin{pmatrix}
        \mathsf{T}_0  \, \mathsf{T}_0^\vee \, {\mathsf{T}_0}^{-1} \\
        \mathsf{T}_1 \\
        \mathsf{T}_0 \\
        \mathsf{T}_1^\vee 
      \end{pmatrix}
      ,
      \qquad \qquad
      \begin{pmatrix}
        t_0 \\ t_1 \\ t_2 \\ t_3
      \end{pmatrix}
      \mapsto
      \begin{pmatrix}
        t_2 \\ t_1 \\ t_0 \\ t_3
      \end{pmatrix}
    ,
    \notag
      \\[2ex]
    &
      \sigma_{L}:
      \begin{pmatrix}
        \mathsf{T}_0 \\
        \mathsf{T}_1 \\
        \mathsf{T}_0^\vee \\
        \mathsf{T}_1^\vee 
      \end{pmatrix}
      \mapsto
      \begin{pmatrix}
        \mathsf{T}_0 \\
        \mathsf{T}_1 \\
        \mathsf{T}_1^\vee \\
        {\mathsf{T}_1^\vee}^{-1} \,  \mathsf{T}_0^\vee \, \mathsf{T}_1^\vee
      \end{pmatrix}
      ,
      \qquad \qquad
      \begin{pmatrix}
        t_0 \\ t_1 \\ t_2 \\ t_3
      \end{pmatrix}
      \mapsto
      \begin{pmatrix}
        t_0 \\ t_1 \\ t_3 \\ t_2
      \end{pmatrix}
    .
    \label{auto_AW}
  \end{align}
\end{lemma}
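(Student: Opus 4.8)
The plan is to verify directly that $\sigma_R$ and $\sigma_L$ are algebra automorphisms of $\mathcal{H}_{q,\boldsymbol{t}}$ (compatible with the stated permutations of $\boldsymbol{t}$) and that, as such, they satisfy the two relations of the standard presentation $SL(2;\mathbb{Z}) = \langle R, L \mid R\,L^{-1}R = L^{-1}R\,L^{-1},\ (R\,L^{-1}R)^4 = 1 \rangle$, with $R = \bigl(\begin{smallmatrix}1 & 1\\ 0 & 1\end{smallmatrix}\bigr)$ and $L = \bigl(\begin{smallmatrix}1 & 0\\ 1 & 1\end{smallmatrix}\bigr)$.

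First I would check that $\sigma_R$ preserves all the defining relations. The four quadratic relations are respected: $\mathsf{T}_1$ and $\mathsf{T}_1^\vee$ are fixed, while the image $\mathsf{T}_0\,\mathsf{T}_0^\vee\,{\mathsf{T}_0}^{-1}$ of $\mathsf{T}_0$ is a conjugate of $\mathsf{T}_0^\vee$ and so satisfies the quadratic with parameter $t_2$, and the image $\mathsf{T}_0$ of $\mathsf{T}_0^\vee$ satisfies it with $t_0$, exactly as demanded by $(t_0,t_1,t_2,t_3)\mapsto(t_2,t_1,t_0,t_3)$. The product relation is immediate: $\sigma_R(\mathsf{T}_1^\vee\mathsf{T}_1\mathsf{T}_0\mathsf{T}_0^\vee) = \mathsf{T}_1^\vee\mathsf{T}_1\,(\mathsf{T}_0\mathsf{T}_0^\vee{\mathsf{T}_0}^{-1})\,\mathsf{T}_0 = \mathsf{T}_1^\vee\mathsf{T}_1\mathsf{T}_0\mathsf{T}_0^\vee = q^{-\frac{1}{2}}$. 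The same bookkeeping handles $\sigma_L$, for which $\mathsf{T}_0,\mathsf{T}_1$ are fixed, $\mathsf{T}_0^\vee\mapsto\mathsf{T}_1^\vee$, $\mathsf{T}_1^\vee\mapsto{\mathsf{T}_1^\vee}^{-1}\mathsf{T}_0^\vee\mathsf{T}_1^\vee$, and where rewriting the product relation as $\mathsf{T}_1^\vee\mathsf{T}_1\mathsf{T}_0 = q^{-\frac{1}{2}}{\mathsf{T}_0^\vee}^{-1}$ gives $\sigma_L(\mathsf{T}_1^\vee\mathsf{T}_1\mathsf{T}_0\mathsf{T}_0^\vee) = {\mathsf{T}_1^\vee}^{-1}\mathsf{T}_0^\vee\,(\mathsf{T}_1^\vee\mathsf{T}_1\mathsf{T}_0)\,\mathsf{T}_1^\vee = {\mathsf{T}_1^\vee}^{-1}\mathsf{T}_0^\vee\,q^{-\frac{1}{2}}{\mathsf{T}_0^\vee}^{-1}\,\mathsf{T}_1^\vee = q^{-\frac{1}{2}}$. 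Invertibility I would then establish by exhibiting explicit two-sided inverses --- for $\sigma_R$: $\mathsf{T}_0\mapsto\mathsf{T}_0^\vee$, $\mathsf{T}_0^\vee\mapsto{\mathsf{T}_0^\vee}^{-1}\mathsf{T}_0\mathsf{T}_0^\vee$ (others fixed, same parameter transposition); for $\sigma_L$: $\mathsf{T}_1^\vee\mapsto\mathsf{T}_0^\vee$, $\mathsf{T}_0^\vee\mapsto\mathsf{T}_0^\vee\mathsf{T}_1^\vee{\mathsf{T}_0^\vee}^{-1}$ --- each composing with $\sigma_R$ (resp. $\sigma_L$) to the identity on generators. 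Alternatively, one notes that $\sigma_R,\sigma_L$ act compatibly on the faithful Askey--Wilson polynomial representation by explicit conjugations and parameter shifts, which gives injectivity for free.

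For the group relations it then suffices to check on generators that, setting $\sigma := \sigma_R\circ\sigma_L^{-1}\circ\sigma_R$ (the $C^\vee C_1$ analogue of the Fourier transform), one has $\sigma = \sigma_L^{-1}\circ\sigma_R\circ\sigma_L^{-1}$ and $\sigma^4 = \mathrm{id}$. Here the induced action on $\boldsymbol{t}$ is the homomorphism $SL(2;\mathbb{Z})\to S_4$ generated by the transpositions $t_0\leftrightarrow t_2$ and $t_2\leftrightarrow t_3$, which factors through $PSL(2;\mathbb{Z}/2)\cong S_3$ and fixes $t_1$, compatibly with $\sigma^4$ acting trivially on parameters. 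It should also help to record that the involutive anti-automorphism $\epsilon^\prime$ of \eqref{AW_epsilon} conjugates $\sigma_R$ into $\sigma_L$ in a manner parallel to \eqref{A1_L_epsilon}, which halves the amount of bookkeeping.

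I expect the main obstacle to be the braid relation $\sigma_R\circ\sigma_L^{-1}\circ\sigma_R = \sigma_L^{-1}\circ\sigma_R\circ\sigma_L^{-1}$ evaluated on the generators $\mathsf{T}_0^\vee$ and $\mathsf{T}_1^\vee$: both $\sigma_R$ and $\sigma_L$ act on these by conjugation-type substitutions, so the two sides are \emph{a priori} long words in $\mathsf{T}_0,\mathsf{T}_1,\mathsf{T}_0^\vee,\mathsf{T}_1^\vee$, and reducing them to a common normal form requires repeated use of the product relation $\mathsf{T}_1^\vee\mathsf{T}_1\mathsf{T}_0\mathsf{T}_0^\vee = q^{-\frac{1}{2}}$. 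The cleanest organization is to first pass to the Laurent-type generators $\mathsf{X},\mathsf{Y}$ built from the $\mathsf{T}$'s, on which $\sigma_R,\sigma_L$ act by monomial substitutions analogous to \eqref{auto_A1}, carry out the word reduction there, and only then translate back; with that done, $\sigma^4 = \mathrm{id}$ is a short further computation.
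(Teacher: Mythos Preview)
The paper does not supply its own proof of this lemma: it is stated as a known fact, immediately after the sentence ``We recall some of the known automorphisms of DAHA. See, \emph{e.g.},~\cite{NoumiStokm00a,Oblom04a}.'' Your direct verification --- checking that $\sigma_R,\sigma_L$ respect the defining relations (with the indicated parameter permutations), exhibiting explicit inverses, and then verifying the braid relation and the order relation on generators --- is precisely the standard way to establish the result from scratch, and your bookkeeping on the quadratic and product relations is correct.

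One point of caution concerns the relation $\sigma^4 = \mathrm{id}$. The paper itself, in the later subsection ``Automorphisms and Braiding'', phrases things as ``the $SL(2;\mathbb{Z})$ action corresponds to the Artin braid group $B_3$'', which is the subgroup of $\Mod(\Sigma_{0,4})$ fixing one puncture. In other words, what is unambiguously true for generic $\boldsymbol{t}$ is that $\sigma_R,\sigma_L$ satisfy the braid relation and hence give a $B_3$-action; whether $(\sigma_R\sigma_L^{-1}\sigma_R)^4$ is literally the identity on $\mathcal{H}_{q,\boldsymbol{t}}$ (so that the action descends to $SL(2;\mathbb{Z})$ on the nose, rather than projectively or through $B_3$) is something you should check explicitly rather than assert as ``a short further computation''. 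Your observation that the induced parameter action factors through $PSL(2;\mathbb{Z}/2)$ is correct and necessary, but not sufficient: you still need the generator-level check in the algebra. If it turns out that $\sigma^4$ is only a central or inner automorphism rather than the identity, that is consistent with how the paper actually uses the lemma (only the $B_3$ structure is invoked downstream), and you should adjust the statement you are proving accordingly.
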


We note that
\begin{equation}
  \label{tau_AW_R_and_L}
  \sigma_{R}= \epsilon^\prime \, \sigma_{L} \, \epsilon^\prime
  ,
\end{equation}
where $\epsilon^\prime$ is an involutive
anti-automorphism~\eqref{AW_epsilon}.

%%%%

\subsection{Polynomial Representation and Askey--Wilson Polynomial}
We review the representation on the Laurent polynomials
$\mathbb{C}[x^{\pm 1}]$.
%$\mathbb{C}(q^{\frac{1}{2}}, \boldsymbol{t})[x^{\pm 1}]$.
See, \emph{e.g.},~\cite{Noum95,SSah97b,NoumiStokm00a}.

\begin{prop}
  A polynomial representation is given by
  \begin{align}
    \mathsf{T}_0
    & \mapsto
      {t_0}^{-1} \, \mathsf{s} \, \eth -
      \frac{
      q^{-1} \left({t_0}^{-1}-t_0\right)  x^2
      + q^{-\frac{1}{2}} \left({t_2}^{-1}-t_2 \right)  x
      }{
      1-q^{-1} x^2
      } \,
      \left( 1 - \mathsf{s} \, \eth \right)
      ,
      \notag
    \\
    \mathsf{T}_1 
    & \mapsto
      {t_1}^{-1} \mathsf{s}+
      \frac{
      \left( {t_1}^{-1}-t_1 \right)
      +
      \left( {t_3}^{-1}-t_3 \right)  x
      }{
      x^2-1} \,
      \left( \mathsf{s} -1 \right)
      ,
      \notag
    \\
    \mathsf{T}_0^\vee
    & \mapsto
      q^{-\frac{1}{2}} {\mathsf{T}_0}^{-1} x
      ,
      \notag
    \\
    \mathsf{T}_1^\vee
    & \mapsto
      x^{-1} {\mathsf{T}_1}^{-1}
      .
      \label{AW_poly_rep}
  \end{align}
\end{prop}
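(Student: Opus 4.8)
The plan is to check directly that the four operators prescribed in~\eqref{AW_poly_rep} satisfy the five defining relations of $\mathcal{H}_{q,\boldsymbol{t}}$ — the four quadratic relations and $\mathsf{T}_1^\vee\,\mathsf{T}_1\,\mathsf{T}_0\,\mathsf{T}_0^\vee = q^{-\frac12}$ — since by the presentation this is exactly what is needed for the assignment to extend to an algebra homomorphism into $\mathrm{End}(\mathbb{C}[x^{\pm 1}])$. First I would note that the operators are well defined on $\mathbb{C}[x^{\pm 1}]$: the rational coefficient of $(\mathsf{s}-1)$ in $\mathsf{T}_1$ has (simple) poles only at $x=\pm1$, which are fixed by $x\mapsto x^{-1}$, so $(\mathsf{s}-1)f$ vanishes there for every Laurent polynomial $f$; likewise the coefficient of $(1-\mathsf{s}\eth)$ in $\mathsf{T}_0$ has poles only at $x=\pm q^{\frac12}$, the fixed points of $x\mapsto q\,x^{-1}$, where $(1-\mathsf{s}\eth)f$ vanishes. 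Hence $\mathsf{T}_0,\mathsf{T}_1$ — and therefore $\mathsf{T}_0^\vee,\mathsf{T}_1^\vee$, which are built from $\mathsf{T}_0^{-1},\mathsf{T}_1^{-1}$ and from multiplication by $x^{\pm 1}$ — are well-defined invertible operators on $\mathbb{C}[x^{\pm 1}]$.

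For the quadratic relations of $\mathsf{T}_0$ and $\mathsf{T}_1$ I would exploit that each is a reflection operator $T=\alpha(x)+\beta(x)\,\mathsf{w}$ for an involution $\mathsf{w}$: $\mathsf{w}=\mathsf{s}$ for $\mathsf{T}_1$, and $\mathsf{w}=\mathsf{s}\eth$ for $\mathsf{T}_0$ (indeed $(\mathsf{s}\eth)^2=\mathrm{id}$ because $\eth\,\mathsf{s}=\mathsf{s}\,\eth^{-1}$). Using $\mathsf{w}^2=1$ and $\mathsf{w}\,\phi=\mathsf{w}(\phi)\,\mathsf{w}$ for a multiplication operator $\phi$, one finds $T^2=\bigl(\alpha^2+\beta\,\mathsf{w}(\beta)\bigr)+\beta\bigl(\alpha+\mathsf{w}(\alpha)\bigr)\mathsf{w}$, and since here $\beta=t^{-1}-\alpha$ the relation $(T-t^{-1})(T+t)=0$ collapses to the single scalar condition $\alpha+\mathsf{w}(\alpha)=t^{-1}-t$. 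Writing $c(x)=\frac{(t_1^{-1}-t_1)+(t_3^{-1}-t_3)x}{x^2-1}$ and $d(x)=\frac{q^{-1}(t_0^{-1}-t_0)x^2+q^{-\frac12}(t_2^{-1}-t_2)x}{1-q^{-1}x^2}$ for the coefficient functions appearing in~\eqref{AW_poly_rep}, these conditions become the elementary identities $c(x)+c(x^{-1})=t_1-t_1^{-1}$ and $d(x)+d(q\,x^{-1})=t_0-t_0^{-1}$, each proved by clearing denominators.

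For $\mathsf{T}_0^\vee$ and $\mathsf{T}_1^\vee$ the quadratic relations just established give $\mathsf{T}_i^{-1}=\mathsf{T}_i-(t_i^{-1}-t_i)$, so $\mathsf{T}_1^\vee=x^{-1}\mathsf{T}_1^{-1}$ and $\mathsf{T}_0^\vee=q^{-\frac12}\mathsf{T}_0^{-1}x$ are again of the form $\widehat\alpha(x)+\widehat\beta(x)\,\mathsf{w}$, once the factor $x^{\pm 1}$ is moved past $\mathsf{w}$ via $\mathsf{s}\,x=x^{-1}\mathsf{s}$ and $(\mathsf{s}\eth)\,x=q\,x^{-1}(\mathsf{s}\eth)$; the same bookkeeping then reduces $(\mathsf{T}_1^\vee-t_3^{-1})(\mathsf{T}_1^\vee+t_3)=0$ and $(\mathsf{T}_0^\vee-t_2^{-1})(\mathsf{T}_0^\vee+t_2)=0$ to a short list of elementary rational identities in $x$, such as $x^2c(x)+c(x^{-1})=(t_3^{-1}-t_3)\,x$ and its $q$-shifted counterpart. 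The last relation is then immediate from the definitions, with no computation: $\mathsf{T}_1^\vee\,\mathsf{T}_1=x^{-1}\mathsf{T}_1^{-1}\mathsf{T}_1=x^{-1}$ and $\mathsf{T}_0\,\mathsf{T}_0^\vee=\mathsf{T}_0\,q^{-\frac12}\mathsf{T}_0^{-1}x=q^{-\frac12}x$, so $\mathsf{T}_1^\vee\,\mathsf{T}_1\,\mathsf{T}_0\,\mathsf{T}_0^\vee=x^{-1}q^{-\frac12}x=q^{-\frac12}$.

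None of this is conceptually deep; the main obstacle is simply keeping the $\mathsf{T}_0^\vee$ computation in check, where the interaction of $\mathsf{s}\eth$ with multiplication by $x$ and with the $q$-dependent coefficient $d(x)$ makes the rational identity the bulkiest one. Routing everything through the normal form $\alpha+\beta\,\mathsf{w}$ with $\mathsf{w}^2=1$, as above, keeps it mechanical; alternatively one may simply cite the classical derivations of these formulas in~\cite{Noum95,SSah97b,NoumiStokm00a}.
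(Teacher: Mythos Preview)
Your proposal is correct. The paper itself does not supply a proof of this proposition: it simply records the formulas and refers the reader to~\cite{Noum95,SSah97b,NoumiStokm00a} for the derivation. Your direct verification---checking well-definedness on $\mathbb{C}[x^{\pm 1}]$, reducing each Hecke relation for $\mathsf{T}_0,\mathsf{T}_1$ to the single scalar identity $\alpha+\mathsf{w}(\alpha)=t^{-1}-t$ via the normal form $\alpha+\beta\mathsf{w}$ with $\beta=t^{-1}-\alpha$, then handling $\mathsf{T}_0^\vee,\mathsf{T}_1^\vee$ by the same mechanism (now with two rational identities each, since $\hat\beta\neq t^{-1}-\hat\alpha$), and finally reading off $\mathsf{T}_1^\vee\mathsf{T}_1\mathsf{T}_0\mathsf{T}_0^\vee=x^{-1}\cdot q^{-1/2}x=q^{-1/2}$---is exactly the standard argument given in those references, and you have carried it through accurately. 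So there is no divergence to compare: you have written out what the paper elects to cite.
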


Based on this representation, we define
\begin{align}
  \mathsf{Y}
  & =
    \mathsf{T}_1 \mathsf{T}_0
    ,
    \notag
  \\
  \mathsf{X}
  & =
    \left( \mathsf{T}_1^\vee \, \mathsf{T}_1 \right)^{-1}
    ,
    \label{CC1_XY}
\end{align}
where  $\mathsf{Y}$ is the Dunkl--Cherednik operator for the
Askey--Wilson polynomial.

As in the case of $A_1$-type,
we have
$\mathsf{T}_1 f = t_1^{~-1} f$
for a symmetric Laurent polynomial
$f \in \mathbb{C}[x+x^{-1}]$.
% $\mathbb{C}(q^{\frac{1}{2}},\boldsymbol{t})[x+x^{-1}]$,
We see that the projection 
$\mathsf{e}$ is
$ \mathbb{C}[x]\to \mathbb{C}[x+x^{-1}]$,
and
% $\mathsf{e}\mathbb{C}(q^{\frac{1}{2}}, \boldsymbol{t})[x]
% =\mathbb{C}(q^{\frac{1}{2}}, \boldsymbol{t})[x+x^{-1}]$.
$\SH_{q,\boldsymbol{t}}$ preserves
$\mathbb{C}[x+x^{-1}]$.
%$\mathbb{C}(q^{\frac{1}{2}}, \boldsymbol{t})[x+x^{-1}]$.
On this symmetric polynomial space,
the so-called Askey--Wilson operator is explicitly written as
\begin{equation}
  \label{AW_operator}
  \left.
    \mathsf{Y}+ \mathsf{Y}^{-1}
  \right|_{\text{sym}}
  \mapsto
  A(x; \boldsymbol{t}) \, ( \eth -1 ) 
    +
    A(x^{-1}; \boldsymbol{t}) \, (\eth^{-1}-1)
    + t_0 \,t_1
    + \left( t_0 \, t_1 \right)^{-1}
    ,
\end{equation}
where
\begin{equation}
  \label{AW_function_A}
  A(x; \boldsymbol{t})
  =
  t_0 \, t_1 \,
  \frac{
    \left( 1- \frac{1}{t_1 t_3} x \right)
    \left( 1+ \frac{t_3}{t_1 } x \right) 
    \left( 1- \frac{q^{\frac{1}{2}}}{t_0 t_2} x \right) 
    \left( 1+ \frac{q^{\frac{1}{2}} t_2 }{t_0} x \right) 
  }{
    \left( 1-x^2 \right)
    \left( 1-q \, x^2 \right)
  } .
\end{equation}
% For our later use,
% we define
% \begin{equation}
%   \tilde{\mathsf{Y}}=
%   \mathsf{T}_0^\vee  \, \mathsf{T}_1
%   .
% \end{equation}
% We have
% \begin{multline}
%   \label{tilde_y}   
%   \tilde{\mathsf{Y}}
%   +
%   \tilde{\mathsf{Y}}^{-1}
%   \mapsto
%   \tilde{A}(x;\boldsymbol{t}) \, \left(\eth - 1\right)
%   +
%   \tilde{A}(x^{-1};\boldsymbol{t}) \, \left(\eth^{-1} - 1\right)
%   \\
%   +
%   \frac{q^{\frac{1}{2}}}{t_0 \, t_1}
%   (x+x^{-1})
%   +
%   t_1^{-1} \left( t_2 - t_2^{~-1} \right)
%   +
%   q^{\frac{1}{2}} t_0^{~-1} \left( t_3 - t_3^{~-1} \right)
%   ,
% \end{multline}
% where
% \begin{equation}
%   \tilde{A}(x;\boldsymbol{t})
%   =
%   q^{\frac{1}{2}} t_0 \, t_1 \,
%   \frac{
%     x
%     \left( 1- \frac{x}{t_1 \, t_3} \right)
%     \left( 1+ \frac{t_3}{t_1} x \right)
%     \left( 1- \frac{q^{\frac{1}{2}}}{t_0 \, t_2} x \right)
%     \left( 1+ q^{\frac{1}{2}} \frac{t_2}{t_0} x \right)
%   }{
%     \left( 1 -x^2 \right)
%     \left( 1- q \, x^2 \right)
%   }
%   .
% \end{equation}

%\subsubsection{Askey--Wilson polynomial}
Eigenfunctions of $\mathsf{Y}$~\eqref{CC1_XY} are called the
non-symmetric Askey--Wilson
polynomial,
\begin{align}
    \mathsf{Y} \, E_m(x; q, \boldsymbol{t})
    & = \left( t_0 \, t_1 \right)^{-1} q^m \,
      E_m(x; q, \boldsymbol{t}),
      \notag
    \\
    \mathsf{Y} \, E_{-m}(x; q, \boldsymbol{t} )
    & = t_0 \, t_1 \, q^{-m} \,
    E_{-m}(x; q, \boldsymbol{t})
      .
\end{align}
Here
$m >0$ and
\begin{align}
    E_m(x; q, \boldsymbol{t}) & =
    x^m +
    \frac{
      \left( t_0^{~2} -1 \right) t_1^{~2}
      +q^m \left( t_1^{~2} -1 \right)
    }{
      \left( t_0 \, t_1 \right)^2
      q^{-m}
      - q^m
    } \, x^{-m}
    + \cdots
    ,
    \\
    E_{-m}(x; q, \boldsymbol{t}) & =
    x^{-m} + \cdots
                                   ,
                                   \notag
\end{align}
where $\cdots$ denote the Laurent polynomials~$x^k$ with~$|k|<m$.
We note that
$E_0(x;q,\boldsymbol{t})=1$ and that
$\mathsf{Y} E_0(x;q, \boldsymbol{t})= \left( t_0 t_1 \right)^{-1}
E_0(x;q,\boldsymbol{t})$.

The eigenfunctions of~\eqref{AW_operator} are the symmetric Askey--Wilson
polynomials~\cite{AsWi85}.
We have
\begin{equation}
  \label{AW_eigenfunction}
  \left( \mathsf{Y} + \mathsf{Y}^{-1} \right)
  P_m(x; q, \boldsymbol{t})
  =
  \left(
    \left(t_0\,  t_1 \right)^{-1} q^m
    + t_0 \, t_1 \,  q^{-m} \right)  \,
  P_m(x; q, \boldsymbol{t})
  .
\end{equation}
Here  we have
\begin{equation}
  \label{def_AW_poly}
  P_m(x ; q, \boldsymbol{t})
  =
  \frac{
    \left( a \,  b, a \ c, a \, d ; q\right)_m
  }{
    a^m
    \left( a \,  b \,  c \,  d \,  q^{m-1} ; q\right)_m
  } \,
  {}_4\phi_3
  \left[
    \begin{matrix}
      q^{-m},
      \
      q^{m-1} a \, b  \,  c  \, d,
      \
      a \, x,
      \
      a  \, x^{-1}
      \\
      a  \, b ,
      \
      a \,  c,
      \
      a \, d
    \end{matrix}
    ; 
    q, q
  \right]
  ,
\end{equation}
where
\begin{equation}
  \label{para_abcd}
  a=\frac{1}{t_1 \, t_3}
  ,
  \qquad
  b= - \frac{t_3}{t_1}
  ,
  \qquad
  c=\frac{q^{\frac{1}{2}}}{t_0 \, t_2}
  ,
  \qquad
  d= -\frac{q^{\frac{1}{2}} t_2}{t_0 }
  .
\end{equation}
Note that
% \begin{equation}
%   \mathsf{e} \,
%   P_m(x;q,\boldsymbol{t})
%   =
%   P_m(x;q,\boldsymbol{t})
% \end{equation}
% and that
we have normalized the polynomials so that
\begin{equation*}
  P_m(x; q, \boldsymbol{t}) =
  \left( x^m+x^{-m} \right)+
  \cdots
  ,
\end{equation*}
and $P_0(x; q, \boldsymbol{t})=1$.
These are written in terms of the non-symmetric polynomials as
\begin{equation}
  \begin{aligned}[t]
    P_m(x;q,\boldsymbol{t})
    & =
    E_m(x; q, \boldsymbol{t})
    + \frac{
      \left( q^m -1 \right)
      \left( t_0^{~2} +q^m \right) t_1^{~2}
    }{
      q^{2m} - \left( t_0 \, t_1\right)^2
    } \,
    E_{-m}(x; q, \boldsymbol{t})
    \\
    & =
    t_1 \left(
      \mathsf{T}_1 + t_1
    \right) E_{-m}(x;q, \boldsymbol{t})
    .
  \end{aligned}
\end{equation}
Some of them are explicitly written as
\begin{align}
  &
    P_0(x; q, \boldsymbol{t}) = 1,
    \notag
  \\
  &
    P_1(x;q, \boldsymbol{t})
    =
    x+x^{-1}
    +
    \frac{
    q^{\frac{1}{2}} t_0
    \left( 1+t_1^{~2} \right)  \left( 1-t_2^{~2} \right)  t_3
    +
    \left( q+t_0^{~2} \right)  t_1 \,  t_2
    \left( 1-t_3^{~2} \right)
    }{
    \left(
    q-t_0^{~2} t_1^{~2}
    \right)   t_2 \, t_3
    }
    .
\end{align}
Higher order polynomials are  generated from the three-term recurrence
relation.
It is read as
(see, \emph{e.g.},~\cite{GaspRahm04})
\begin{equation}
  \label{AW_three_1}
  \left( \mathsf{X}+\mathsf{X}^{-1} \right)
  P_m(x ; q, \boldsymbol{t})
  =
  P_{m+1}(x ; q, \boldsymbol{t})
  + B_m \, P_m(x ; q, \boldsymbol{t})
  + C_m  \, P_{m-1}(x ; q, \boldsymbol{t})
  .
\end{equation}
Here using~\eqref{para_abcd} we have
\begin{gather}
  \begin{multlined}
    C_n
    =
    \frac{
      1-a b c d q^{n-2}
    }{
      \left( 1-a \, b \, c \, d \, q^{2n-3} \right)
      \left( 1-a \,  b \, c \, d \, q^{2n-2} \right)
    }
    \cdot
    \frac{
      1-q^n
    }{
      \left( 1-a \, b \, c \, d \, q^{2n-2} \right)
      \left( 1-a \, b \, c \, d \, q^{2n-1} \right)
    }
    \\
    \times
    \left( 1-a \, b \, q^{n-1} \right)
    \left( 1-a \, c \, q^{n-1} \right)
    \left( 1-a \, d \, q^{n-1} \right)
    \left( 1-b \, c \, q^{n-1} \right)
    \left( 1-b \, d \, q^{n-1} \right)
    \left( 1-c \, d \, q^{n-1} \right)
    ,
  \end{multlined}
  \notag
  \\[1ex]
  \begin{multlined}
    B_n
    =a+a^{-1}
    - 
    \frac{1-a \, b \, c \, d \, q^{n-1}}{
      \left( 1-a \, b  \, c  \, d  \, q^{2n-1} \right)
      \left( 1-a \, b \, c \,  d \, q^{2n} \right)
    } \, a^{-1}
    \left( 1-a \,  b \,  q^n \right)
    \left( 1-a \, c \, q^n \right)
    \left( 1-a \, d \, q^n \right)
    \\
    -
    C_n
    \frac{
      \left( 1-a \, b \, c \, d \, q^{2n-3} \right)
      \left( 1-a \, b \, c \, d \, q^{2n-2} \right)
    }{
      1-a \, b \, c \, d \, q^{n-2}
    }
    \frac{
      a
    }{
      \left( 1-a  \, b \, q^{n-1} \right)
      \left( 1-a \, c \, q^{n-1} \right)
      \left( 1-a \, d \, q^{n-1} \right)
    }
    .
  \end{multlined}
\end{gather}
It is noted that we also  have
\begin{multline}
  \label{AW_three_2}
  % \left( q^{-\frac{1}{2}} \mathsf{T}_1 \mathsf{T}_0^{-1} x
  %   +
  %   q^{\frac{1}{2}} x^{-1} \mathsf{T}_0 \mathsf{T}_1^{-1}
  % \right)
  \left( \mathsf{T}_1 \, \mathsf{T}^\vee_0
    + \left(\mathsf{T}_1 \, \mathsf{T}^\vee_0 \right)^{-1}
  \right)
  P_m(x ; q, \boldsymbol{t})
  =
  q^{m+\frac{1}{2}}
  \left(  t_0 \, t_1 \right)^{-1}
  P_{m+1}(x ; q, \boldsymbol{t})
  \\
  +
  \left(
    B_m
    -
    q^{m-\frac{1}{2}}
    \frac{
      \left( t_0 - q^{\frac{1}{2}} t_1 \right)
      \left( 1+q^{\frac{1}{2}} t_0 \, t_1 \right)
      \left( t_2- t_3 \right)
      \left( 1+t_2 \,  t_3 \right)
    }{
      \left( q^{m-\frac{1}{2}}+t_0  \, t_1 \right)
      \left( q^{m+\frac{1}{2}}+t_0 \, t_1 \right)
      t_2 \,  t_3}
  \right) \, P_m(x ; q, \boldsymbol{t})
  \\
  +
  t_0 \,  t_1 \,
  q^{-m+\frac{1}{2}} C_m \,
  P_{m-1}(x ; q, \boldsymbol{t})
  .
\end{multline}

%%%%%%%%%%
\subsection{Automorphisms as Conjugation}
We  study the $SL(2;\mathbb{Z})$ action~\eqref{auto_AW} of $\SH_{q,\boldsymbol{t}}$
under the polynomial representation~\eqref{AW_poly_rep}.
We introduce
\begin{equation*}
  V_R = \exp \left( - \frac{
      \left( \log \mathsf{X} \right)^2}{
      2 \log q
    }
  \right)
  .
\end{equation*}
This function  is symmetric in $\mathsf{X} \leftrightarrow
\mathsf{X}^{-1}$,
thus
$\mathsf{s} \, V_R=V_R \, \mathsf{s}$, and commutes
with $\mathsf{T}_1$ and $\mathsf{T}^\vee_1$.
As we have
$\mathsf{s} \, \eth \, V_R
= q^{-\frac{1}{2}} \mathsf{X} \, V_R \, \mathsf{s} \, \eth$,
we get
\begin{equation*}
  V_R \, \mathsf{T}_0
  =
  \left(
    q^{\frac{1}{2}} x^{-1} a(x ) \, \mathsf{s} \, \eth
    + b(x)
  \right) \, V_R
  ,
\end{equation*}
where we have used
$\mathsf{T}_0 = a(x) \, \mathsf{s} \, \eth + b(x)$
in~\eqref{AW_poly_rep} for brevity.
We see that  the expression in the parenthesis coincides with
\begin{equation*}
  \mathsf{T}^\vee_0 =
  q^{-\frac{1}{2}} \mathsf{T}_0^{~-1} \, x
  =
  q^{\frac{1}{2}} x^{-1} a(x) \, \mathsf{s} \, \eth
  + q^{-\frac{1}{2}} x \left( b(x) + t_0-t_0^{~-1} \right) ,
\end{equation*}
when
\begin{equation}
  \label{condition_AW_R}
  \left( t_0-t_2 \right) \left( 1+t_0 \, t_2 \right) = 0
  .
\end{equation}
So
assuming~\eqref{condition_AW_R},
we have $V_R^{~-1} \mathsf{T}^\vee_0 V_R = \mathsf{T}_0$,
and also
\begin{equation*}
  \mathsf{T}_0 V_R \, \mathsf{T}_0
  =\mathsf{T}_0 \mathsf{T}^\vee_0 V_R
  =
  q^{-\frac{1}{2}} x \, V_R
  = V_R \,   q^{-\frac{1}{2}} x
  = V_R \, \mathsf{T}_0 \mathsf{T}^\vee_0 ,
\end{equation*}
which proves
$V_R^{~-1} \mathsf{T}_0 V_R= \mathsf{T}_0 \mathsf{T}^\vee_0
\mathsf{T}_0^{~-1}$.
The automorphism~$\sigma_R$~\eqref{auto_AW}
is thus realized by conjugation of $V_R$.

For~$\sigma_L$~\eqref{auto_AW}, we recall~\eqref{tau_AW_R_and_L} where
the anti-involution~$\epsilon^\prime$~\eqref{AW_epsilon} sends
$\mathsf{X} \mapsto \mathsf{Y}^{-1}$ and $\mathsf{Y} \mapsto
\mathsf{X}^{-1}$.
As in the case of $A_1$-type, the automorphism $\sigma_L$ is also
realized by conjugation of $V_L$.
To conclude, we have the following.
\begin{prop}
  Under the condition~\eqref{condition_AW_R},
  the $SL(2;\mathbb{Z})$ actions~\eqref{auto_AW} are conjugations 
  \begin{equation*}
    \sigma_{\bullet}: h \mapsto
    V_\bullet^{-1} h \,
    V_\bullet
    ,
  \end{equation*}
  where
  \begin{align}
    \label{auto_CC1}
      V_R
      & =
      \exp \left( - \frac{ \left( \log \mathsf{X} \right)^2}{
          2 \log q}
      \right)
      ,
        &
      V_L&
      =
      \exp \left( \frac{\left( \log \mathsf{Y} \right)^2}{
          2 \log q}
      \right)
           .
  \end{align}
\end{prop}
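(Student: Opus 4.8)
The plan is to verify the identities $\sigma_\bullet(h)=V_\bullet^{-1}\,h\,V_\bullet$ one generator at a time, using the polynomial representation~\eqref{AW_poly_rep} as the arena. Since both $\sigma_\bullet$ and conjugation by the invertible operator $V_\bullet$ are algebra homomorphisms out of $\mathcal{H}_{q,\boldsymbol{t}}$ (carrying along the permutation of $(t_0,t_1,t_2,t_3)$ recorded in~\eqref{auto_AW}), it is enough to match them on the four generators $\mathsf{T}_0,\mathsf{T}_1,\mathsf{T}_0^\vee,\mathsf{T}_1^\vee$. One should note at the outset that the $V_\bullet$ are not elements of $\mathcal{H}_{q,\boldsymbol{t}}$ but live in a completion of it, exactly as for the operators $U_\bullet$ in the $A_1$ discussion above, so I argue throughout at the level of operators on $\mathbb{C}[x^{\pm1}]$.

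For $\sigma_R$ I would make rigorous the computation sketched just before the statement. Under~\eqref{AW_poly_rep} and~\eqref{CC1_XY} one has $\mathsf{X}\mapsto x$, so $V_R=\exp\!\big(-(\log\mathsf{X})^2/(2\log q)\big)$ is multiplication by a function of $x$ even in $\log x$; hence $V_R$ commutes with all Laurent polynomials in $x$ and with $\mathsf{s}$, which immediately gives $V_R^{-1}\mathsf{T}_1V_R=\mathsf{T}_1$ and $V_R^{-1}\mathsf{T}_1^\vee V_R=\mathsf{T}_1^\vee$. From $V_R(qx)=q^{-1/2}x^{-1}V_R(x)$ one derives the cross relation $(\mathsf{s}\eth)V_R=q^{-1/2}\mathsf{X}\,V_R\,(\mathsf{s}\eth)$; substituting it into $V_R\mathsf{T}_0$ with $\mathsf{T}_0=a(x)\,\mathsf{s}\eth+b(x)$ from~\eqref{AW_poly_rep}, and comparing with $\mathsf{T}_0^\vee=q^{-1/2}\mathsf{T}_0^{-1}x=q^{-1/2}\big(\mathsf{T}_0+(t_0-t_0^{-1})\big)x$ (using $\mathsf{T}_0^{-1}=\mathsf{T}_0+(t_0-t_0^{-1})$), one finds that the coefficients of $\mathsf{s}\eth$ agree identically while the purely multiplicative parts agree if and only if $b(x)=q^{-1/2}x\,\big(b(x)+t_0-t_0^{-1}\big)$, which as an identity in $x$ holds precisely when $t_0-t_0^{-1}=t_2-t_2^{-1}$, i.e.\ exactly~\eqref{condition_AW_R}. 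Under that hypothesis $V_R^{-1}\mathsf{T}_0^\vee V_R=\mathsf{T}_0$, and then $\mathsf{T}_0V_R\mathsf{T}_0=\mathsf{T}_0\mathsf{T}_0^\vee V_R=q^{-1/2}x\,V_R=V_R\,q^{-1/2}x$ forces $V_R^{-1}\mathsf{T}_0V_R=\mathsf{T}_0\mathsf{T}_0^\vee\mathsf{T}_0^{-1}$. These four identities are $\sigma_R$ on the generators, and they are compatible with the swap $t_0\leftrightarrow t_2$ because~\eqref{condition_AW_R} makes the quadratic relations for $\mathsf{T}_0$ and $\mathsf{T}_0^\vee$ coincide.

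For $\sigma_L$ I would avoid redoing the calculation and instead transport the $\sigma_R$ statement through the anti-automorphism $\epsilon^\prime$, using $\sigma_R=\epsilon^\prime\,\sigma_L\,\epsilon^\prime$ from~\eqref{tau_AW_R_and_L}, i.e.\ $\sigma_L=\epsilon^\prime\,\sigma_R\,\epsilon^\prime$. Since $\epsilon^\prime$ is an anti-automorphism that fixes $q$ and, under~\eqref{CC1_XY}, sends $\mathsf{X}\mapsto\mathsf{Y}^{-1}$ and $\mathsf{Y}\mapsto\mathsf{X}^{-1}$, applying $\epsilon^\prime$ to $\sigma_R(k)=V_R^{-1}kV_R$ (and using that $\epsilon^\prime$ reverses products) yields $\sigma_L(h)=\epsilon^\prime(V_R)\,h\,\epsilon^\prime(V_R)^{-1}$, so $V_L=\epsilon^\prime(V_R)^{-1}$; and $\epsilon^\prime(V_R)=\exp\!\big(-(\log\mathsf{Y}^{-1})^2/(2\log q)\big)=\exp\!\big(-(\log\mathsf{Y})^2/(2\log q)\big)$, hence $V_L=\exp\!\big((\log\mathsf{Y})^2/(2\log q)\big)$ as in~\eqref{auto_CC1}. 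One should also check that $\epsilon^\prime\circ(t_0\!\leftrightarrow\!t_2)\circ\epsilon^\prime$ is the swap $t_2\leftrightarrow t_3$ demanded of $\sigma_L$ in~\eqref{auto_AW}, and note that the $\sigma_L$ half rests on the $\epsilon^\prime$-transform of~\eqref{condition_AW_R}, namely $(t_2-t_3)(1+t_2t_3)=0$.

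The work is mostly bookkeeping; the one genuinely substantive step is the rational-function identity in the $\sigma_R$ computation, since it is exactly this that singles out~\eqref{condition_AW_R}, and getting the split into an $\mathsf{s}\eth$-part and a multiplication part correct (together with the formula $\mathsf{T}_0^{-1}=\mathsf{T}_0+(t_0-t_0^{-1})$ and the half-integer powers of $q$) is where care is needed. The remaining obstacles are of the usual kind: keeping track of which parameter $t_i$ enters each operator, matching the parameter permutations in~\eqref{auto_AW}, and justifying that operator identities verified on $\mathbb{C}[x^{\pm1}]$ lift to the appropriate completion of the DAHA, for which one invokes the same framework as in~\cite{Chered05Book} and the $A_1$ treatment above.
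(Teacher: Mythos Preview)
Your proposal is correct and follows essentially the same route as the paper: verify $\sigma_R$ on the generators in the polynomial representation using the commutation of $V_R$ with $\mathsf{s}$ and the shift identity $(\mathsf{s}\eth)V_R=q^{-1/2}\mathsf{X}\,V_R\,(\mathsf{s}\eth)$, isolate condition~\eqref{condition_AW_R} from the multiplicative part, and then transport everything to $\sigma_L$ via the anti-involution $\epsilon^\prime$ of~\eqref{tau_AW_R_and_L}. Your added remark that the $\sigma_L$ half actually rests on the $\epsilon^\prime$-image $(t_2-t_3)(1+t_2t_3)=0$ rather than~\eqref{condition_AW_R} itself is a useful sharpening that the paper leaves implicit.
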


% \begin{proof}
%   We recall that
%   $\mathsf{Y} + \mathsf{Y}^{-1}$
%   and
%   $\tau_R(\mathsf{Y} + \mathsf{Y}^{-1})$
%   are given respectively
%   as~\eqref{AW_operator} and~\eqref{tauR_on_AW}.
%   Their constant terms are equal when $t_0=t_2$ due to that
%   \begin{multline*}
%     \left(
%       -A(x;\boldsymbol{t})- A(x^{-1}; \boldsymbol{t})
%       + t_0 \, t_1 + \left( t_0 \, t_1 \right)^{-1}
%     \right)
%     -
%     \left(
%       -A(x; \tilde{\boldsymbol{t}})- A(x^{-1}; \tilde{\boldsymbol{t}})
%       + t_2 \, t_1 + \left( t_2 \, t_1 \right)^{-1}
%     \right)
%     \\
%     =
%     \frac{
%       \left( t_0 - t_2 \right)
%       \left( 1+t_0 \, t_2 \right)
%       \left( q^{\frac{1}{2}} t_1 - t_3 \right)
%       \left( 1+ q^{\frac{1}{2}} t_1 \, t_3 \right)
%       x
%     }{
%       t_0 \, t_1 \, t_2 \, t_3
%       \left( q^{\frac{1}{2}}+ x \right)
%       \left(1 + q^{\frac{1}{2}} x \right)
%     }
%     .
%   \end{multline*}
%   We thus obtain an expression for $V_R$.

%   For $V_L$, we 
%   use~\eqref{tau_AW_R_and_L}
%   applying the anti-automorphism~$\epsilon^\prime$~\eqref{AW_epsilon}.
% \end{proof}

%%%%%%%%%%
\subsection{Algebra Embedding}
We shall  define $\mathcal{A}_{(r,s)} \in \SH_{q,\boldsymbol{t}}$
associated  to a simple closed curve $\mathbb{c}_{(r,s)}$ on  the
sphere $\Sigma_{0,4}$
with a slope $s/r$,
\begin{equation}
  \label{sphere_c_and_A}
  \mathbb{c}_{(r,s)} \mapsto \mathcal{A}_{(r,s)}
  =
  \ch \left( \mathcal{O}_{(r,s)} \right)
  .
\end{equation}
% We introduce  $\mathcal{A}_{(r,s)} \in \SH_{q,\boldsymbol{t}}$
% for  $r, s \in \mathbb{Z}$.
Amongst them,
we put
\begin{align}
  \mathcal{A}_{(1,0)}
  &=
    \mathsf{X}+\mathsf{X}^{-1}
    =
    \ch \left( \mathsf{T}^\vee_1 \mathsf{T}_1 \right)
    % \mathsf{T}^\vee_1 \mathsf{T}_1  +
    % \left( \mathsf{T}^\vee_1 \mathsf{T}_1 \right)^{-1}
    =
    % \left(
    \ch   \left(
    q^{\frac{1}{2}} \mathsf{T}_0  \mathsf{T}_0^\vee
    \right)
    % q^{\frac{1}{2}} \mathsf{T}_0  \mathsf{T}_0^\vee
    % +
    % q^{-\frac{1}{2}} {\mathsf{T}_0^\vee}^{-1}  {\mathsf{T}_0}^{-1}
    % \right)
    % \mathsf{e}
    ,
    \notag
  \\
  \mathcal{A}_{(0,1)}
  &=
    \mathsf{Y}+\mathsf{Y}^{-1}
    % \left(
    =
    \ch \left( \mathsf{T}_1 \mathsf{T}_0\right)
    % \mathsf{T}_1 \mathsf{T}_0
    % +
    % {\mathsf{T}_0}^{-1} {\mathsf{T}_1}^{-1}
    % \right) \mathsf{e}
    ,
    \notag
  \\
  \mathcal{A}_{(1,1)}
  & =
    \ch \left(
    \mathsf{T}_1 \mathsf{T}_0^\vee
    \right)
  % \left(
    % \mathsf{T}_1 \mathsf{T}_0^\vee
    % +
    % {\mathsf{T}_0^\vee}^{-1} {\mathsf{T}_1}^{-1}
    % \right) \mathsf{e}
    =
    \ch \left(
    q^{-\frac{1}{2}} \mathsf{T}_1 \, {\mathsf{T}_0}^{-1} \, \mathsf{X}
    \right)
    % q^{-\frac{1}{2}} \mathsf{T}_1 \, {\mathsf{T}_0}^{-1} \, \mathsf{X}
    % +
    % q^{\frac{1}{2}} \mathsf{X}^{-1} \mathsf{T}_0  \, {\mathsf{T}_1}^{-1}
    .
    \label{def_A11}
\end{align}
A tedious but straightforward computation proves
  \begin{align}
    \label{a10_a01}
      \mathcal{A}_{(1,0)}    \mathcal{A}_{(0,1)}
      & =
      q^{-\frac{1}{2}} \mathcal{A}_{(1,1)}
      + q^{\frac{1}{2}} \mathcal{A}_{(1,-1)}
      -
      {t}_{03,12}
      ,
      \\
      \mathcal{A}_{(1,1)}    \mathcal{A}_{(0,1)}
      & =
      q^{-\frac{1}{2}} \mathcal{A}_{(1,2)}
      + q^{\frac{1}{2}} \mathcal{A}_{(1,0)}
      -
      {t}_{02,13}
        ,
        \notag
      \\
      \mathcal{A}_{(1,0)}    \mathcal{A}_{(1,1)}
      & =
      q^{-\frac{1}{2}} \mathcal{A}_{(2,1)}
      + q^{\frac{1}{2}} \mathcal{A}_{(0,1)}
      -
      {t}_{01,23}
        .
        \notag
  \end{align}
  Here we have
  \begin{align}
    \mathcal{A}_{(1,-1)}
    & =
      \ch \left(
      q^{\frac{1}{2}} \mathsf{T}_0 \mathsf{T}^\vee_1
      \right)
      % q^{\frac{1}{2}} \mathsf{T}_0 \mathsf{T}^\vee_1
      % +
      % q^{-\frac{1}{2}}
      % \left(
      % \mathsf{T}_0
      % \mathsf{T}^\vee_1
      % \right)^{-1}
      ,
      \notag
    \\
    \mathcal{A}_{(1,2)}
    & =
      \ch \left(
      \mathsf{T}_1\mathsf{T}^\vee_0 \mathsf{T}^\vee_1
      \left( \mathsf{T}^\vee_0 \right)^{-1}
      \right)
          =
    \sigma_{L}^{~-1}(\mathcal{A}_{(1,1)})
      % \mathsf{T}_1\mathsf{T}^\vee_0 \mathsf{T}^\vee_1
      % \left( \mathsf{T}^\vee_0 \right)^{-1}
      % +
      % \mathsf{T}^\vee_0
      % \left( \mathsf{T}_1 \mathsf{T}^\vee_0 \mathsf{T}^\vee_1
      % \right)^{-1}
      ,
      \notag
    \\
    \mathcal{A}_{(2,1)}
    & =
      \ch \left(
      \mathsf{T}_1 \left( \mathsf{T}^\vee_0\right)^{-1}
      \mathsf{T}_0 \mathsf{T}^\vee_0
      \right)
          =
    \sigma_{R}^{~-1}(\mathcal{A}_{(1,1)})
      % \mathsf{T}_1 \left( \mathsf{T}^\vee_0\right)^{-1}
      % \mathsf{T}_0 \mathsf{T}^\vee_0
      % +
      % \left(
      % \mathsf{T}_0 \mathsf{T}^\vee_0
      % \right)^{-1} \mathsf{T}^\vee_0
%      \mathsf{T}_1^{~-1}
      ,
      \label{def_Ars_others}
  \end{align}
and
\begin{align}
  {t}_{03,12}
  & =
    \left(
    q^{\frac{1}{2}} t_1 - q^{-\frac{1}{2}}t_1^{~-1}
    \right) \left( t_2 - t_2^{~-1} \right)
    +
    \left( t_0 -t_0^{~-1} \right)
    \left(t_3 -t_3^{~-1} \right)
    ,
    \notag
  \\
  {t}_{02,13}
  & =
    \left(
    q^{\frac{1}{2}} t_1 - q^{-\frac{1}{2}}t_1^{~-1}
    \right) \left( t_3 - t_3^{~-1} \right)
    +
    \left( t_0 -t_0^{~-1} \right)
    \left(t_2 -t_2^{~-1} \right)
    ,
    \notag
  \\
  {t}_{01,23}
  & =
    \left(t_2 -t_2^{~-1} \right)
    \left( t_3 -t_3^{~-1} \right)
    +
    \left( t_0 - t_0^{~-1} \right)
    \left(
    q^{\frac{1}{2}} t_1 - q^{-\frac{1}{2}}t_1^{~-1}
    \right)
    .
\end{align}
Furthermore, we can check that
\begin{multline}
  \label{AW_11_11}
  \mathcal{A}_{(1,-1)} \mathcal{A}_{(1,1)}
  =
  q^{-1} \mathcal{A}_{(2,0)_T} +q \mathcal{A}_{(0,2)_T}
  -q^{-\frac{1}{2}} {t}_{02,13} \mathcal{A}_{(1,0)}
  -q^{\frac{1}{2}} {t}_{01,23} \mathcal{A}_{(0,1)}
  + \left( q^{\frac{1}{2}} - q^{-\frac{1}{2}} \right)^2
  \\
  -
  \left( t_0-t_0^{~-1} \right)^2
  -
  \left( q^{\frac{1}{2}} t_1- q^{-\frac{1}{2}}t_1^{~-1} \right)^2
  -
  \left( t_2-t_2^{~-1} \right)^2
  -
  \left( t_3-t_3^{~-1} \right)^2
  \\
  +
  \left( t_0-t_0^{~-1} \right)
  \left( q^{\frac{1}{2}} t_1- q^{-\frac{1}{2}}t_1^{~-1} \right)
  \left( t_2-t_2^{~-1} \right)
  \left( t_3-t_3^{~-1} \right)
  ,
\end{multline}
where we have
in terms of the Chebyshev polynomial~\eqref{Chebysheff} 
\begin{align}
  \mathcal{A}_{(2,0)_T}
  & =
  % \left( \mathcal{A}_{(1,0)} \right)^2 -2
    T_2(\mathcal{A}_{(1,0)})
    =
    \left(
    \mathsf{T}_1^\vee \mathsf{T}_1
    \right)^2
    +
    \left(
    \mathsf{T}_1^\vee \mathsf{T}_1
    \right)^{-2}
    ,
    \notag
    \\
  \mathcal{A}_{(0,2)_T}
  & =
  % \left( \mathcal{A}_{(0,1)} \right)^2 -2
    T_2(\mathcal{A}_{(0,1)})
    =
    \left(
    \mathsf{T}_1 \mathsf{T}_0
    \right)^2
    +
    \left(
    \mathsf{T}_1 \mathsf{T}_0
    \right)^{-2}
    .
\end{align}
Combining~\eqref{a10_a01} with~\eqref{AW_11_11}, we
find that
$\mathcal{A}_{(1,0)}$, $\mathcal{A}_{(0,1)}$, and
$\mathcal{A}_{(1,1)}$ fulfill
the cubic relation~\eqref{center_x_b}.
See also~\cite{Oblom04a,Terwil13a}.
As a result,
we have  
the algebra embedding of the skein
algebra~\eqref{algS04} and~\eqref{center_x_b} as follows.
\begin{theorem}
  \label{thm:embed_CC1}
  We have an algebra embedding
  $\KBS_A(\Sigma_{0,4}) \to \SH_{q,\boldsymbol{t}}$ with
  $A^2=q^{-\frac{1}{2}}$ by
  \begin{equation}
    \label{CC_t_and_x}
    \left(
      \begin{matrix}
        \mathbb{x} \\
        \mathbb{y} \\
        \mathbb{z}
      \end{matrix}
    \right)
    \mapsto
    \left(
      \begin{matrix}
        \mathcal{A}_{(1,0)} \\
        \mathcal{A}_{(0,1)} \\
        \mathcal{A}_{(1,1)}
      \end{matrix}
    \right)
    ,
    \qquad \qquad
    \left(
      \begin{matrix}
        \mathbb{b}_1 \\
        \mathbb{b}_2 \\
        \mathbb{b}_3 \\
        \mathbb{b}_4
      \end{matrix}
    \right)
    \mapsto
    \pm \left(
      \begin{matrix}
        \I \left( t_0 - t_0^{~-1} \right)
        \\
        \I \left( t_2 - t_2^{~-1} \right)
        \\
        \I \left(
          q^{\frac{1}{2}} t_1 - q^{-\frac{1}{2}}t_1^{~-1}
        \right)
        \\
        \I \left( t_3 - t_3^{~-1} \right)
      \end{matrix}
    \right)
    .
  \end{equation}
\end{theorem}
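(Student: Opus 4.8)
The plan is to check that the assignment~\eqref{CC_t_and_x} respects the presentation of $\KBS_A(\Sigma_{0,4})$ recalled in the quoted proposition --- the three ``$q$-commutator'' relations~\eqref{algS04} and the cubic relation~\eqref{center_x_b} --- and then to promote the resulting homomorphism to an embedding. Throughout I set $A^2=q^{-\frac12}$, so that $A^4-A^{-4}=q^{-1}-q$ and $A^2-A^{-2}=q^{-\frac12}-q^{\frac12}$. One preliminary remark makes the bookkeeping cleaner: in~\eqref{algS04} and~\eqref{center_x_b} every boundary class $\mathbb{b}_j$ occurs only inside products of an even number of boundary curves, so the global sign $\pm$ in the statement is irrelevant (the two choices differ by the automorphism $\mathbb{b}_j\mapsto-\mathbb{b}_j$ of $\KBS_A(\Sigma_{0,4})$), and each product $\mathbb{b}_i\mathbb{b}_j$ is sent, because of $\I^2=-1$, to minus the corresponding product of the quantities $t_k-t_k^{-1}$ and $q^{\frac12}t_1-q^{-\frac12}t_1^{-1}$; in particular $\mathbb{b}_2\mathbb{b}_3+\mathbb{b}_1\mathbb{b}_4\mapsto-t_{03,12}$, $\mathbb{b}_1\mathbb{b}_2+\mathbb{b}_3\mathbb{b}_4\mapsto-t_{02,13}$, $\mathbb{b}_1\mathbb{b}_3+\mathbb{b}_2\mathbb{b}_4\mapsto-t_{01,23}$.

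First I would carry out the relation check starting from the products~\eqref{a10_a01} and~\eqref{AW_11_11}, which already give $\mathcal{A}_{(1,0)}\mathcal{A}_{(0,1)}$, $\mathcal{A}_{(1,1)}\mathcal{A}_{(0,1)}$, $\mathcal{A}_{(1,0)}\mathcal{A}_{(1,1)}$ and $\mathcal{A}_{(1,-1)}\mathcal{A}_{(1,1)}$ in $\SH_{q,\boldsymbol{t}}$. The reversed products follow from the analogous direct computation: reversing the order interchanges the weights $q^{\pm\frac12}$ on the two curve terms and leaves the scalar boundary correction unchanged, for instance $\mathcal{A}_{(0,1)}\mathcal{A}_{(1,0)}=q^{\frac12}\mathcal{A}_{(1,1)}+q^{-\frac12}\mathcal{A}_{(1,-1)}-t_{03,12}$. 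Then $q^{-\frac12}\mathcal{A}_{(1,0)}\mathcal{A}_{(0,1)}-q^{\frac12}\mathcal{A}_{(0,1)}\mathcal{A}_{(1,0)}=(q^{-1}-q)\mathcal{A}_{(1,1)}+(q^{\frac12}-q^{-\frac12})\,t_{03,12}$, which is exactly the image of the first line of~\eqref{algS04} under~\eqref{CC_t_and_x}; the other two commutator relations are identical with $t_{02,13}$, $t_{01,23}$ in place of $t_{03,12}$. For the cubic relation, multiplying the first line of~\eqref{a10_a01} on the right by $\mathcal{A}_{(1,1)}$, substituting~\eqref{AW_11_11} for $\mathcal{A}_{(1,-1)}\mathcal{A}_{(1,1)}$, and using $T_2(\mathcal{A}_{(1,0)})=\mathcal{A}_{(1,0)}^{~2}-2$ together with the same identity for $\mathcal{A}_{(0,1)}$ expresses $q^{-\frac12}\mathcal{A}_{(1,0)}\mathcal{A}_{(0,1)}\mathcal{A}_{(1,1)}$ as a quadratic-plus-linear-plus-constant combination of $\mathcal{A}_{(1,0)},\mathcal{A}_{(0,1)},\mathcal{A}_{(1,1)}$; matching coefficients with~\eqref{center_x_b} via~\eqref{CC_t_and_x} reproduces the Askey--Wilson / affine-cubic relation of~\cite{Oblom04a,Terwil13a}, as already stated after~\eqref{AW_11_11}. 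This shows that~\eqref{CC_t_and_x} defines an algebra homomorphism.

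The hard part will be injectivity. Here I would invoke the Bullock--Przytycki description of $\KBS_A(\Sigma_{0,4})$~\cite{BulloPrzyt99a,BaMuPrWiWa18a}: after the central classes $\mathbb{b}_j$ are specialized to the scalars of~\eqref{CC_t_and_x}, it is free on a PBW-type set of monomials in $\mathbb{x},\mathbb{y},\mathbb{z}$ (for instance those in which $\mathbb{z}$ occurs to a bounded power), so it suffices to show that the images of these monomials are linearly independent in $\SH_{q,\boldsymbol{t}}$. This can be tested on the symmetric Laurent polynomials $\mathbb{C}[x+x^{-1}]$, on which $\mathcal{A}_{(1,0)}=\mathsf{X}+\mathsf{X}^{-1}$ acts by multiplication, $\mathcal{A}_{(0,1)}=\mathsf{Y}+\mathsf{Y}^{-1}$ by the Askey--Wilson operator~\eqref{AW_operator}, and $\mathcal{A}_{(1,1)}$ by the operator encoded in~\eqref{AW_three_2}; evaluating on the Askey--Wilson basis $P_m(x;q,\boldsymbol{t})$ and following the leading degrees through the three-term recurrences~\eqref{AW_three_1} and~\eqref{AW_three_2} yields the required independence. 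Alternatively one may appeal to Oblomkov's theorem~\cite{Oblom04a} that $\SH_{q,\boldsymbol{t}}$ is generated by $\mathcal{A}_{(1,0)},\mathcal{A}_{(0,1)},\mathcal{A}_{(1,1)}$ subject precisely to the cubic relation, which makes the homomorphism onto, and then compare Hilbert series of the two algebras --- both filtered with associated graded the coordinate ring of the same affine cubic surface --- to conclude that it is an isomorphism, hence injective.
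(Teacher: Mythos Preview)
Your proposal is correct and follows essentially the same route as the paper: verify the three $q$-commutator relations and the cubic relation by combining the product identities~\eqref{a10_a01} and~\eqref{AW_11_11} (together with their reversals), and then invoke Oblomkov's presentation of $\SH_{q,\boldsymbol{t}}$ for injectivity. The paper is terser --- it simply says that combining~\eqref{a10_a01} with~\eqref{AW_11_11} gives the cubic relation and points to~\cite{Oblom04a,Terwil13a} --- while you spell out the boundary-label matching, the reversed products, and two options for the injectivity step; but the content is the same.
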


With the  embedding,
the three
term relations~\eqref{AW_three_1} and~\eqref{AW_three_2}
give the representation of $\mathbb{x}$ and $\mathbb{z}$
in terms of the Askey--Wilson polynomials as eigenfunctions of
$\mathbb{y}$~\eqref{AW_eigenfunction}.
As in the case of the skein algebra on the once-punctured
torus~\cite{ChoGel14a}, we obtain a finite-dimensional
representation when we set
$q$ to be a root of unity.

It should be 
noted that
a torus is a two-fold branched cover of a sphere over four-points.
% surfaces $\Sigma_{1,1}$ and $\Sigma_{0,4}$ are related by
% a two-fold branched covering map, where four punctures are branch
% points.
This may correspond
to the fact~\cite{Hikam95g}
that the  DAHA of $C^\vee C_1$-type is constructed using
the
$A_1$-type with the reflection equation~\cite{Skl88}.

The other DAHA operators $\mathcal{A}_{(r,s)}$  associated to
$\mathbb{c}_{(r,s)}$
can be given using the
$SL(2;\mathbb{Z})$ actions on $\SH_{q,\boldsymbol{t}}$.
For instance,
when we
apply the automorphisms~\eqref{auto_AW} to~\eqref{a10_a01}, we get
\begin{gather}
  \mathcal{A}_{(1,n)} \mathcal{A}_{(0,1)}
  =
  q^{-\frac{1}{2}} \mathcal{A}_{(1,n+1)}
  +
  q^{\frac{1}{2}} \mathcal{A}_{(1,n-1)}
  -
  \begin{cases}
    {t}_{02,13}
    ,
    & \text{for odd $n$,}
    \\
    {t}_{03,12}
    ,
    & \text{for even $n$,}
  \end{cases}
  \notag
  \\
  \mathcal{A}_{(1,0)} \mathcal{A}_{(n,1)}
  =
  q^{-\frac{1}{2}} \mathcal{A}_{(n+1,1)}
  +
  q^{\frac{1}{2}} \mathcal{A}_{(n-1,1)}
  -
  \begin{cases}
    {t}_{01,23}
    ,
    & \text{for odd $n$,}
    \\
    {t}_{03,12}
    ,
    & \text{for even $n$,}
  \end{cases}
\end{gather}
where
\begin{align}
  \mathcal{A}_{(1,2k)}
  & =
    \ch \left( \mathsf{Y}^{-k} \mathsf{X}^{-1} \mathsf{T}_1^{~-1} \mathsf{Y}^k
    \mathsf{T}_1 \right)
    % +
    % \left(
    % \mathsf{Y}^{-k} \mathsf{X}^{-1} \mathsf{T}_1^{~-1} \mathsf{Y}^k
    % \mathsf{T}_1
    % \right)^{-1}
    ,
    \notag
    % q^{-\frac{1}{2}}
    % \left(
    %   \mathsf{T}_1 \mathsf{T}_0
    % \right)^{-k}
    % \left( \mathsf{T}_0^\vee \right)^{-1}
    % \left(
    %   \mathsf{T}_1 \mathsf{T}_0
    % \right)^{k-1} \mathsf{T}_1
    % +()^{-1}
  \\
  \mathcal{A}_{(1,2k+1)}
  & =
    \ch \left(
    q^{-\frac{1}{2}}
    \mathsf{Y}^{-k-1} \mathsf{T}_1 \mathsf{X} \mathsf{Y}^k
    \mathsf{T}_1
    \right)
    ,
    \label{slope1odd}
  \\
  \mathcal{A}_{(2k,1)}
  & =
    \ch \left(
    \mathsf{T}_1 \mathsf{X}^{-k} \mathsf{T}_1^{~-1}
    \mathsf{Y} \mathsf{X}^k
    \right)
    ,
    \notag
  \\
  \mathcal{A}_{(2k+1,1)}
  & =
    \ch \left(
    q^{-\frac{1}{2}}
    \mathsf{T}_1 \mathsf{X}^{-k} \mathsf{Y}^{-1} \mathsf{T}_1
    \mathsf{X}^{k+1}
    \right) .
    \notag
\end{align}
See~\cite{BaMuPrWiWa18a} for the algorithms to obtain the curve
$\mathbb{c}_{(r,s)}$
from the viewpoint of the skein algebra on the  sphere $\Sigma_{0,4}$.

\subsection{Automorphisms and Braiding}
The relationship between DAHA of $C^\vee C_1$-type and the skein
algebra $\KBS_A(\Sigma_{0,4})$ gives an interpretation of 
the $SL(2;\mathbb{Z})$ action~\eqref{auto_AW} of DAHA.
The mapping class group $\Mod(\Sigma_{0,4})$ is generated by half Dehn
twists, and
it is known
that the $SL(2;\mathbb{Z})$ action corresponds to
the Artin braid group
$B_3$,
which denotes
the subgroup  of $\Mod(\Sigma_{0,4})$
fixing one puncture~\cite{Birman74,FarbMarg11Book}.
As seen from the fact that
\begin{align*}
  \sigma_R : &
             \mathcal{A}_{(0,1)} \mapsto
             \mathcal{A}_{(1,-1)},
             \qquad \qquad
             \mathcal{A}_{(2,1)}  \mapsto
             \mathcal{A}_{(1,1)} ,
  \\
  \sigma_L : &
             \mathcal{A}_{(1,0)} \mapsto
             \mathcal{A}_{(1,-1)},
             \qquad \qquad
             \mathcal{A}_{(1,2)} \mapsto
             \mathcal{A}_{(1,1)}
             ,
\end{align*}
the automorphism $\sigma_{R}$
(resp. $\sigma_L$)
is identified with
a braiding of punctures $\mathbb{b}_1$ and $\mathbb{b}_2$
(resp. $\mathbb{b}_2$ and $\mathbb{b}_4$)
as in Fig.~\ref{fig:auto_braiding},
and
it 
denotes
the Dehn twist about $\mathbb{x}$
(resp. $\mathbb{y}$).

\begin{figure}[htbp]
  \centering
  \includegraphics[scale=1.0]{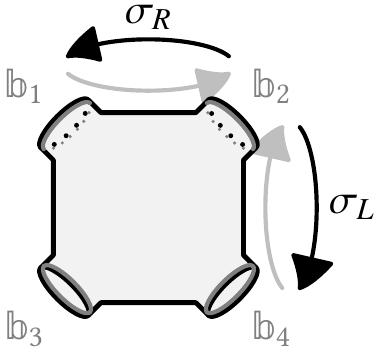}
  \caption{The automorphisms $\sigma_R$ and $\sigma_L$ are braidings of punctures.}
  \label{fig:auto_braiding}
\end{figure}

\subsection{DAHA polynomials for $\mathbb{c}_{(r,s)}$}
% We  have
% \begin{equation}
%   \mathbb{c}_{(r,s)} \mapsto \mathcal{A}_{(r,s)}
% \end{equation}
For the simple closed curve $\mathbb{c}_{(r,s)}$ with slope $s/r$ on
$\Sigma_{0,4}$, we have assigned  the  DAHA
operator~$\mathcal{A}_{(r,s)} \in \SH_{q,\boldsymbol{t}}$~\eqref{sphere_c_and_A}.
The DAHA polynomial associated to the curve $\mathbb{c}_{(r,s)}$ is
defined by
\begin{equation}
  \label{DAHA_poly_sphere}
  P_n(x,q, \boldsymbol{t}; \mathbb{c}_{(r,s)})
  =
  M_{n-1}(\mathcal{O}_{(r,s)}; q, q)
  (1)
  .
  % \\
  % & =
  % \gamma_{b,d} \left(
  %   \frac{\mathsf{Y}^{n+1} - \mathsf{Y}^{-n-1}}{
  %   \mathsf{Y} - \mathsf{Y}^{-1}
  % }
  % \right)
  % (1)
\end{equation}
Especially
\begin{equation}
  P_2(x,q,\boldsymbol{t}; \mathbb{c}_{(r,s)})
  =
  \mathcal{A}_{(r,s)}(1) .
\end{equation}
This is why we use the $A_1$-type
Macdonald polynomial in~\eqref{DAHA_poly_sphere}
rather than the Askey--Wilson polynomial,
$P_1(x;q, \boldsymbol{t})\neq x+x^{-1}$,
but
we may introduce  a new parameter as a $t$-parameter of
the $A_1$-Macdonald
polynomial~\eqref{DAHA_poly_sphere}.
Indeed
a different definition was used in~\cite{IChered16a} as the
$C^\vee C_1$ DAHA polynomial.
% We note that $\mathcal{O}_{(r,s)}$ is given by the $SL(2;\mathbb{Z})$ action
% $\sigma(\mathsf{X})$~\eqref{auto_AW}
% when the curve $\mathbb{c}_{(r,s)}$ is the Dehn
% twist
% $\mathscr{T}(\mathbb{c}_{(0,1)})$.

We give some explicit forms in the following.
We have
\begin{align*}
  {P}_n(x,q, \boldsymbol{t}; \mathbb{c}_{(1,0)})
  & =
    M_{n-1}(x; q, q)
    ,
  \\
  {P}_n(x,q, \boldsymbol{t}; \mathbb{c}_{(0,1)})
  & =
    M_{n-1}( (t_0 \, t_1)^{-1}; q,  q)
    .
\end{align*}
Using~\eqref{def_A11} and~\eqref{def_Ars_others} we have
\begin{align*}
  {P}_2(x,q, \boldsymbol{t}; \mathbb{c}_{(1,1)})
  & =
    q^{\frac{1}{2}} (t_0 \, t_1)^{-1} \left( x+ x^{-1} \right)
    -
    q^{\frac{1}{2}} t_0^{~-1} \left( t_3 - t_3^{~-1}\right)
    - t_1^{~-1} \left( t_2 - t_2^{~-1} \right) ,
  \\
  {P}_2(x,q, \boldsymbol{t}; \mathbb{c}_{(1,-1)})
  & =
    q^{-\frac{1}{2}} x^{-1} \,
    A(x;\boldsymbol{t}) 
    +
    q^{-\frac{1}{2}} x \, A(x^{-1}; \boldsymbol{t}) 
    - A(x; \tilde{\boldsymbol{t}})
    -A(x^{-1}; \tilde{\boldsymbol{t}})
    + t_1 \, t_2 +
    \left( t_1 \, t_2 \right)^{-1}
    ,
\end{align*}
where
$\tilde{\boldsymbol{t}}=
(t_2, t_1, t_0, t_3)$.

% We have
% \begin{multline}
%   \label{tauR_on_AW}
%   \mathcal{A}_{(1,-1)}
%   =
%     q^{\frac{1}{2}} \mathsf{T}_0 \mathsf{T}_1^\vee
%     +
%     q^{-\frac{1}{2}}
%     \left( \mathsf{T}_0 \mathsf{T}_1^\vee \right)^{-1}
%   \\
%   \mapsto
%   q^{-\frac{1}{2}} x^{-1} \,
%   A(x;\boldsymbol{t}) \, \eth
%   +
%   q^{-\frac{1}{2}} x \, A(x^{-1}; \boldsymbol{t}) \, \eth^{-1}
%   - A(x; \tilde{\boldsymbol{t}})
%   -A(x^{-1}; \tilde{\boldsymbol{t}})
%   + t_1 \, t_2 +
%   \left( t_1 \, t_2 \right)^{-1}
%   ,
%   %  \boldsymbol{t}_{0\leftrightarrow 2})
% \end{multline}
% where
% $\tilde{\boldsymbol{t}}=
% (t_2, t_1, t_0, t_3)$.

%%%%%%%%%%%%%%%%%%%%%%%%%%%%%%%% 
%\section{Gluing DAHA}

\section{Twice-punctured Torus}
\label{sec:twice_torus}
\subsection{Skein Algebra}
% As a prototype of the skein algebra $\KBS_A(\Sigma_{2,0})$, we
% study the DAHA repre
The skein algebra $\KBS_A(\Sigma_{1,2})$ on a twice-punctured torus
was studied in~\cite{BulloPrzyt99a}.
We define simple closed curves
$\mathbb{x}$, $\mathbb{y}$, $\mathbb{x}_u$, $\mathbb{y}_u$,
$\mathbb{b}_3$, $\mathbb{b}_4$
%  on twice-punctured torus
% $\Sigma_{1,2}$
as in
Fig.~\ref{fig:torus_two_punctures},
where $\mathbb{b}_3$ and $\mathbb{b}_4$ are the boundary circles.
We  regard
$\Sigma_{1,2}$  as the surface constructed by
gluing an annulus $S^1\times [0,1]$ with
$\Sigma_{0,4}$ in Fig.~\ref{fig:4-hole_sphere},
$\Sigma_{1,2}=\Sigma_{0,4} \cup S^1 \times [0,1]$,
where
both
$S^1\times \{0\} \approx \mathbb{b}_1$ and
$S^1\times \{1\} \approx \mathbb{b}_2$ are isotopic to $\mathbb{x}_u$.
Then  we have the skein algebra of $\Sigma_{0,4}$-type,~\eqref{algS04} and~\eqref{center_x_b}
with $\mathbb{b}_1 \approx \mathbb{b}_2 \approx \mathbb{x}_u$.
Here $\mathbb{z}$ is generated by $\mathbb{x}$ and $\mathbb{y}$ from
the first identity of~\eqref{algS04}.
On the other hand,
$\Sigma_{1,2}$ is regarded as the surface
given by gluing a once-punctured  torus with a
thrice-punctured sphere,
$\Sigma_{1,2}=\Sigma_{1,1}\cup \Sigma_{0,3}$,
where the boundary circle of $ \Sigma_{1,1}$ is $\mathbb{x}$ and
the three boundary circles of $\Sigma_{0,3}$ are isotopic to $\mathbb{x}$,
$\mathbb{b}_3$, $\mathbb{b}_4$.
Then we have the  algebra~\eqref{torus_xy_z} of $\Sigma_{1,1}$
for $\mathbb{x}_u$, $\mathbb{y}_u$, and $\mathbb{z}_u$.
Here $\mathbb{z}_u$ is generated by $\mathbb{x}_u$ and $\mathbb{y}_u$
by the first identity of~\eqref{torus_xy_z}.
We see that
$\mathbb{x}$, which  is isotopic to the boundary circle
of
$ \Sigma_{1,1}$,
is generated by~\eqref{boundary_b}.
In addition, we need the  consistency
condition~\eqref{skein_twice_puncture} for $\mathbb{y}$ and
$\mathbb{y}_u$ as the skein algebra for $\Sigma_{1,2}$,
which can be checked directly.

\begin{figure}[htbp]
  \centering
  \includegraphics[scale=1.0]{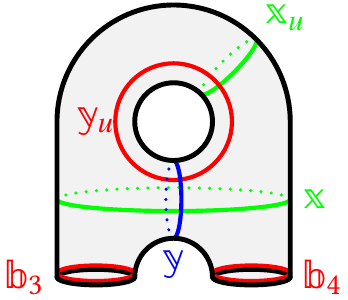}
  \caption{Depicted are simple closed curves on the twice-punctured
    torus $\Sigma_{1,2}$.}
  \label{fig:torus_two_punctures}
\end{figure}

\begin{prop}
  \label{prop:Sigma12}
  The skein algebra $\KBS_A(\Sigma_{1,2})$ is as follows;
  \begin{itemize}
    
  \item $\Sigma_{0,4}$-type,
    \begin{align}
      A^2 \, 
      \mathbb{x} \,  \mathbb{y}
      -A^{-2} \,
      \mathbb{y}  \,  \mathbb{x}
      & =
        \left( A^4-A^{-4} \right) \,
        \mathbb{z}
        + \left( A^2-A^{-2} \right) \,
        \left(
        \mathbb{b}_3
        +
        \mathbb{b}_4
        \right) \mathbb{x}_u
        ,
        \notag
      \\
        A^2 \, 
        \mathbb{y} \,  \mathbb{z}
        -A^{-2} \,
        \mathbb{z}  \,  \mathbb{y}
        & =
        \left( A^4-A^{-4} \right) \,
        \mathbb{x}
        + \left( A^2-A^{-2} \right) \,
        \left(
          \mathbb{x}_u^{~2}
          +
          \mathbb{b}_3 \, \mathbb{b}_4
        \right)
          ,
          \notag
        \\
        A^2 \, 
        \mathbb{z} \,  \mathbb{x}
        -A^{-2} \,
        \mathbb{x}  \,  \mathbb{z}
        & =
        \left( A^4-A^{-4} \right) \,
        \mathbb{y}
        + \left( A^2-A^{-2} \right) \,
        \left(
          \mathbb{b}_3
          +
          \mathbb{b}_4
        \right) \mathbb{x}_u
          ,
      \label{04type}      
    \end{align}
    with
    \begin{multline}
      A^2 \, \mathbb{x} \, \mathbb{y} \, \mathbb{z} =
      A^4 \,  \mathbb{x}^{2}
      + A^{-4} \, \mathbb{y}^{2}
      + A^4 \, \mathbb{z}^{2}
      \\
      +A^2  \,
      \left( \mathbb{x}_u^{~2}
        + \mathbb{b}_3  \, \mathbb{b}_4 \right)
      \, \mathbb{x}
      +A^{-2} \,
      \left( \mathbb{b}_3
        +  \mathbb{b}_4 \right) \mathbb{x}_u
      \, \mathbb{y}
      +A^2 \,
      \left(  \mathbb{b}_4
        +  \mathbb{b}_3 \right) \mathbb{x}_u
      \, \mathbb{z}
      \\
      +
      2 \,  \mathbb{x}_u^{~2}
      +  \mathbb{b}_3^{~2}+  \mathbb{b}_4^{~2}
      +
      \mathbb{x}_u^{~2} \,  \mathbb{b}_3\,  \mathbb{b}_4
      -\left( A^2+A^{-2} \right)^2
      ,
    \end{multline}
    
  \item $\Sigma_{1,1}$-type,
    \begin{align}
      A \, \mathbb{x}_u \, \mathbb{y}_u
      - A^{-1} \,  \mathbb{y}_u \, \mathbb{x}_u
      &=
        \left( A^2-A^{-2} \right)  \, \mathbb{z}_u
        ,
        \notag
      \\
      A  \,  \mathbb{y}_u \,  \mathbb{z}_u
      - A^{-1} \,  \mathbb{z}_u \,  \mathbb{y}_u
      & =
        \left( A^2-A^{-2} \right) \, \mathbb{x}_u
        ,
        \notag
      \\
      A  \, \mathbb{z}_u \, \mathbb{x}_u
      - A^{-1}\,  \mathbb{x}_u  \,  \mathbb{z}_u
      & =
        \left( A^2-A^{-2} \right) \, \mathbb{y}_u
        ,
        \label{torus11_type}
    \end{align}
    with
    \begin{equation}
      \label{11_generate_x}
      \mathbb{x}
      =
      A \, \mathbb{x}_u \, \mathbb{y}_u \, \mathbb{z}_u -
      A^2 \, \mathbb{x}_u^{~2} - A^{-2} \, \mathbb{y}_u^{~2}
      - A^2 \, \mathbb{z}_u^{~2}
      +A^2+A^{-2}
      ,
    \end{equation}

  \item consistency,
    \begin{gather}
        - \mathbb{y}^2 \mathbb{y}_u
        + \left(
          A^2+A^{-2}
        \right)
        \mathbb{y} \, \mathbb{y}_u \, \mathbb{y}
        - \mathbb{y}_u \, \mathbb{y}^2
        =
        \left(
          A^2- A^{-2}
        \right)^2 \mathbb{y}_u,
        \notag
        \\
        - \mathbb{y}_u^{~2} \mathbb{y}
        + \left(
          A^2+A^{-2}
        \right)
        \mathbb{y}_u \,  \mathbb{y} \, \mathbb{y}_u
        - \mathbb{y}  \, \mathbb{y}_u^{~2}
        =
        \left(
          A^2- A^{-2}
        \right)^2 \mathbb{y}
        ,
      \label{skein_twice_puncture}
      \end{gather}
    with
    \begin{align}
      \label{twice_torus_x_y}
      &
        \mathbb{x} \, \mathbb{y}_u
      = \mathbb{y}_u \, \mathbb{x}
        ,
        &&
      \mathbb{x}_u \, \mathbb{y}
      = \mathbb{y} \, \mathbb{x}_u
      .
    \end{align}
  \end{itemize}
  It is noted that the boundary circles $\mathbb{b}_3$ and $\mathbb{b}_4$
  are central.
\end{prop}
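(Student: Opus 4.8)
The plan is to assemble the presentation from the two subsurface decompositions of $\Sigma_{1,2}$ described just before the statement, and then to verify the single genuinely new family of relations by resolving crossings. First I would use the inclusion $\Sigma_{0,4}\hookrightarrow\Sigma_{1,2}$ coming from $\Sigma_{1,2}=\Sigma_{0,4}\cup S^1\times[0,1]$, where the two boundary circles $\mathbb{b}_1,\mathbb{b}_2$ of $\Sigma_{0,4}$ are glued to the core $\mathbb{x}_u$ of the annulus. The induced algebra map $\KBS_A(\Sigma_{0,4})\to\KBS_A(\Sigma_{1,2})$ carries $\mathbb{x},\mathbb{y},\mathbb{z}$ to the curves of the same name and $\mathbb{b}_1,\mathbb{b}_2\mapsto\mathbb{x}_u$; since $\mathbb{b}_1,\mathbb{b}_2$ are central and $\mathbb{b}_1\mathbb{b}_2=\mathbb{x}_u^{~2}$, substituting into \eqref{algS04} and into the cubic relation \eqref{center_x_b} gives exactly the $\Sigma_{0,4}$-type relations \eqref{04type} together with the cubic displayed there, and $\mathbb{z}$ is then determined by the first identity of \eqref{04type}. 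Centrality of $\mathbb{b}_3,\mathbb{b}_4$ is inherited because they bound annuli onto the two remaining punctures.

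Next I would invoke the second decomposition $\Sigma_{1,2}=\Sigma_{1,1}\cup\Sigma_{0,3}$, in which the $\Sigma_{1,1}$ factor contains $\mathbb{x}_u,\mathbb{y}_u$ (and hence $\mathbb{z}_u$, determined by the first identity of \eqref{torus_xy_z}) and has boundary isotopic to $\mathbb{x}$. The inclusion $\Sigma_{1,1}\hookrightarrow\Sigma_{1,2}$ transports Proposition~\ref{prop:Sigma11}, i.e.\ \eqref{torus_xy_z}, to the $\Sigma_{1,1}$-type relations \eqref{torus11_type}, and it transports the boundary-word identity \eqref{boundary_b} to \eqref{11_generate_x}. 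Thus $\mathbb{x}$ is expressed as a boundary word in $\mathbb{x}_u,\mathbb{y}_u,\mathbb{z}_u$ while simultaneously entering the $\Sigma_{0,4}$-type relations as a generator; I would check that the two descriptions of $\mathbb{x}$ are compatible (this is automatic, since both arise from the same element of $\KBS_A(\Sigma_{1,2})$).

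It then remains to see that $\mathbb{x},\mathbb{y},\mathbb{x}_u,\mathbb{y}_u,\mathbb{b}_3,\mathbb{b}_4$ generate $\KBS_A(\Sigma_{1,2})$ and that the only further relations are \eqref{twice_torus_x_y} and \eqref{skein_twice_puncture}. Generation follows from the gluing description of Bullock--Przytycki~\cite{BulloPrzyt99a}: the map $\KBS_A(\Sigma_{1,1})\otimes\KBS_A(\Sigma_{0,3})\to\KBS_A(\Sigma_{1,2})$ is onto and $\KBS_A(\Sigma_{0,3})$ is generated by its three boundary curves (here $\mathbb{x},\mathbb{b}_3,\mathbb{b}_4$). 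The commutation relations \eqref{twice_torus_x_y} are immediate from isotopy: $\mathbb{x}$ is the separating curve bounding the once-punctured-torus piece and can be pushed off $\mathbb{y}_u$, which lies inside that piece, and dually $\mathbb{x}_u$ can be pushed off $\mathbb{y}$, and disjoint simple closed curves commute in the skein algebra.

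The heart of the argument — and the step I expect to be the main obstacle — is the consistency family \eqref{skein_twice_puncture} for the pair $(\mathbb{y},\mathbb{y}_u)$, the one pair of generators drawn from the two blocks that cannot be isotoped apart. I would prove it by the Kauffman resolution: placing $\mathbb{y}$ above $\mathbb{y}_u$ and resolving the crossings expresses $\mathbb{y}\,\mathbb{y}_u$ (and $\mathbb{y}_u\,\mathbb{y}$, $\mathbb{y}^2\mathbb{y}_u$, $\mathbb{y}\,\mathbb{y}_u\,\mathbb{y}$, $\mathbb{y}_u\,\mathbb{y}^2$) in terms of curves that, after using the relations \eqref{04type} and \eqref{torus11_type} already established, collapse to $(A^2-A^{-2})^2\,\mathbb{y}_u$. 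The delicate part is the diagram bookkeeping: tracking the isotopy classes of the curves produced by each resolution — their intersection pattern and which handle or puncture they enclose — and confirming that no genuinely new generator escapes. Equivalently, one can localize the computation to a four-punctured-sphere (or once-punctured-torus) subsurface of $\Sigma_{1,2}$ containing $\mathbb{y}$, $\mathbb{y}_u$ and all the relevant products, and quote the product-to-sum formula of \cite{FrohGelc00a,BaMuPrWiWa18a} there; this trades the pictures for the task of identifying the subsurface correctly.
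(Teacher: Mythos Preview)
Your proposal is correct and follows essentially the same route as the paper: the paper derives the $\Sigma_{0,4}$- and $\Sigma_{1,1}$-type relations from the two decompositions $\Sigma_{1,2}=\Sigma_{0,4}\cup S^1\times[0,1]$ (with $\mathbb{b}_1\approx\mathbb{b}_2\approx\mathbb{x}_u$) and $\Sigma_{1,2}=\Sigma_{1,1}\cup\Sigma_{0,3}$, cites Bullock--Przytycki~\cite{BulloPrzyt99a} for the overall presentation, and says the consistency relations~\eqref{skein_twice_puncture} ``can be checked directly.'' Your write-up is more explicit about generation, the commutation~\eqref{twice_torus_x_y}, and the skein-resolution verification of~\eqref{skein_twice_puncture}, but the strategy is identical.
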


We note that the $\Sigma_{1,1}$-type relations~\eqref{torus11_type} are
redundant, and we have
\begin{gather}
  - \mathbb{y}_u^{~2} \mathbb{x}_u
  + \left(
    A^2+A^{-2}
  \right)
  \mathbb{y}_u \mathbb{x}_u\mathbb{y}_u
  - \mathbb{x}_u \mathbb{y}_u^{~2}
  =
  \left(
    A^2- A^{-2}
  \right)^2 \mathbb{x}_u,
  \notag
  \\
  - \mathbb{x}_u^{~2} \mathbb{y}_u
  + \left(
    A^2+A^{-2}
  \right)
  \mathbb{x}_u \mathbb{y}_u\mathbb{x}_u
  - \mathbb{y}_u \mathbb{x}_u^{~2}
  =
  \left(
    A^2- A^{-2}
  \right)^2 \mathbb{y}_u
  .
\end{gather}

% Based on previous result, we may identify a correspondence between the
% Askey--Wilson algebra and the skein algebra as in
% Fig.~\ref{fig:skein_AW}.
% \begin{equation}
%   \begin{pmatrix}
%     \mathbb{b}_1 \\
%     \mathbb{b}_2 \\
%     \mathbb{b}_3 \\
%     \mathbb{b}_4    
%   \end{pmatrix}
%   \mapsto
%   \pm
%   \begin{pmatrix}
%     \I \left( t_0 - t_0^{~-1} \right)
%     \\
%     \I \left( t_2 - t_2^{~-1} \right)
%     \\
%     \I \left(
%       q^{\frac{1}{2}} t_1 - q^{-\frac{1}{2}}t_1^{~-1}
%     \right)
%     \\
%     \I \left( t_3 - t_3^{~-1} \right)
%   \end{pmatrix}
% \end{equation}

% \begin{figure}
%   \centering
%   \includegraphics[scale=.6]{sphere_via_AW_hand}
%   \caption{The Askey--Wilson algebra and the skein algebra on sphere}
%   \label{fig:skein_AW}
% \end{figure}

\subsection{Construction of  DAHA}
To construct the  DAHA representation 
for $\KBS_A(\Sigma_{1,2})$ in Prop.~\ref{prop:Sigma12},
we shall first  make use of the
DAHA of $C^\vee C_1$-type 
which represents
$\KBS_A(\Sigma_{0,4})$ in Fig.~\ref{fig:4-hole_sphere}.
Due to that the curves
$\mathbb{b}_1 $ and $\mathbb{b}_2$ are set to  be
isotopic,
$\mathbb{b}_1 \approx \mathbb{b}_2 \approx \mathbb{x}_u$,
and that we have the embedding~\eqref{CC_t_and_x} for $\KBS_A(\Sigma_{0,4})$,
we  put
$t_0=t_2=\I x_u$.
% For this purpose,
% we glue upper two punctures in Fig.~\ref{fig:4-hole_sphere} to have
% Fig.~\ref{fig:torus_two_punctures}.
Namely the parameters of $C^\vee C_1$-DAHA $\mathcal{H}_{q,\boldsymbol{t}}$~\eqref{AW_operator}  are set to be
\begin{equation}
  \label{t_natural}
  (t_0, t_1, t_2, t_3)
  =
  \left(
    \I \, x_u,
    \I  \, q^{-\frac{1}{2}} x_\ell,
    \I \, x_u ,
    \I  \, x_r
  \right)
  =
  \boldsymbol{t}_\natural
  .
\end{equation}
In the spherical $C^\vee C_1$-DAHA  $\SH_{q,\boldsymbol{t}_\natural}$
with $\boldsymbol{t}_\natural$~\eqref{t_natural},
we assign
$\mathsf{X}+\mathsf{X}^{-1}$ and
$ \mathsf{Y}+\mathsf{Y}^{-1} $
for the curves $\mathbb{x}$ and $\mathbb{y}$ respectively.
Explicitly,
we have  the following representation on $\mathbb{C}[x+x^{-1}]$
satisfying the $\Sigma_{0,4}$-type skein
relations~\eqref{04type},
\begin{align}
  \mathbb{x}
  & \mapsto
    x + x^{-1}
    ,
    \notag
  \\
  \mathbb{y}
  & \mapsto
    -\beta(x,x_u, x_\ell, x_r) \, \eth
    -\beta(x^{-1}, x_u, x_\ell, x_r ) \, \eth^{-1}
    -\varphi(x,x_u, x_\ell, x_r)
    ,
  \label{define_aw_glue}
    \\
    \mathbb{x}_u
    & \mapsto
    x_u+x_u^{~-1}
      ,
      \notag
    \\
    \mathbb{b}_3
    & \mapsto
    x_\ell + x_\ell^{~-1}
      ,
      \notag
    \\
    \mathbb{b}_4
    & \mapsto
    x_r + x_r^{~-1}
      ,
      \notag
\end{align}
where
\begin{align}
  \beta(x, x_u, x_\ell, x_r)
  & =
    \frac{
      \left( x_\ell + q^{\frac{1}{2}} x \, x_r \right)
      \left( q^{\frac{1}{2}} x + x_\ell \, x_r \right)
      \left( q^{\frac{1}{2}} x + x_u^{~2}  \right)
    }{
      q^{\frac{1}{2}}
      \left( 1- q^{\frac{1}{2}} x \right)
      \left( 1-x^2 \right)
      x_\ell \, x_r \, x_u
      },
      \notag
  \\
  \varphi(x,x_u,x_\ell,x_r)
  & =
    -
    \frac{
      x \,
      \left( x_\ell + x_r \right) 
      \left( 1 + x_\ell \, x_r \right)
      \left( 1+ x_u^{~2} \right)
    }{
      \left( 1 - q^{-\frac{1}{2}} x  \right)
      \left( 1 - q^{\frac{1}{2}} x  \right)
      x_\ell \, x_r \, x_u
    }
    .
\end{align}
% As we have set representations based on the Askey--Wilson
% operator,
% these operator preserve symmetric polynomials~$\mathbb{C}[x+x^{-1}]$.

% \begin{figure}
%   \centering
%   \includegraphics[scale=.6]{torus_Macdo_hand}
%   \caption{Once-punctured torus}
%   \label{fig:torus_puncture}
% \end{figure}

To give  representations for $\mathbb{x}_u$ and $\mathbb{y}_u$
satisfying~\eqref{torus11_type}
and~\eqref{11_generate_x}, we use the
$A_1$-type  DAHA $\SH_{q_u,t}$.
The skein algebra embeddings in Theorems~\ref{thm:embed_A1} and~\ref{thm:embed_CC1}
suggest to set
\begin{equation}
  \label{qu_and_q}
  q_u= q^{\frac{1}{2}} .
\end{equation}
Recalling that
the boundary circle $\mathbb{b}$ in Fig.~\ref{fig:xyz_torus}
is generated by~\eqref{boundary_b}
and that the DAHA representation gives~\eqref{A1_b_and_t},
we see that~\eqref{11_generate_x} is satisfied by 
\begin{equation}
  \label{t_and_x}
  t=\I  \, q_u^{~\frac{1}{2}}x^{\frac{1}{2}}
  .
\end{equation}
For the consistency
conditions~\eqref{skein_twice_puncture},
we may
take a conjugation of $\SH_{q_u,t}$
by use of a ``gluing function''
$G(x,x_u)$ to be determined
as
\begin{align}
  \mathbb{x}_u
    & \mapsto
     G(x,x_u)^{-1} 
    \left(
      x_u+ x_u^{~-1}
    \right) 
    G(x,x_u)
    =
    x_u+ x_u^{~-1}
      ,
      \notag
    \\
    \mathbb{y}_u
    & \mapsto
    G(x,x_u)^{-1}
    \left(
      \gamma(x,x_u) \, \eth_u
      +
      \gamma(x, x_u^{~-1}) \, \eth_u^{~-1}
    \right) 
    G(x,x_u)
      .
  \label{define_yu_glue}
\end{align}
Here $\eth_u$ is a difference operator for $x_u$,
\begin{equation}
  \eth_u x_u = q_u x_u \eth_u
  ,
\end{equation}
and
$\gamma(x,x_u)$ is for
the Macdonald operator~\eqref{A1_Macdonald} of $\SH_{q_u,\I  q_u^{\frac{1}{2}} x^{\frac{1}{2}}}$
\begin{equation}
  \label{define_gamma}
  \gamma(x,x_u) =
  \frac{
    \I \,  q_u^{~\frac{1}{2}} \, x^{\frac{1}{2}}
    x_u -
    \left(
      \I \, q_u^{~\frac{1}{2}} \, x^{\frac{1}{2}}
    \right)^{-1}
    x_u^{~-1}
  }{
    x_u - x_u^{~-1}
  }
  .
\end{equation}
Under this setting,
the commutativity~\eqref{twice_torus_x_y} is trivial,
and
we have the $\Sigma_{1,1}$-type skein relation~\eqref{torus11_type} 
because of the conjugation of the $A_1$ DAHA representation.
To check the remaining consistency
condition~\eqref{skein_twice_puncture},
we assume that
\begin{equation}
  \mathbb{y}_u
  \mapsto
  \gamma_1(x_u) \,   \eth_u +
  \gamma_2(x,x_u) \,  \eth_u^{~-1}
  ,
  \label{define_gamma_a}
\end{equation}
By brute force computations, we find that the  consistency
conditions~\eqref{skein_twice_puncture}
are  fulfilled
when
\begin{gather}
  \gamma_1(x_u)
  =
  \I \, q^{-\frac{1}{4}}
  \frac{-1}{1-x_u^{~2}} ,
  \notag
  \\
  \gamma_2(x,x_u)
  =
  \I \, q^{\frac{1}{4}}
  \frac{
    \left( 1+ q^{-\frac{1}{2}} x^{-1} x_u^{~2} \right)
    \left( 1 + q^{-\frac{1}{2}} x \, x_u^{~2} \right)
  }{
    1- x_u^{~2}
  }
  .
\end{gather}
The representation~\eqref{define_gamma_a} is indeed the form
of~\eqref{define_yu_glue}, when 
the gluing function is defined  in terms of the quantum dilogarithm
function by
\begin{equation}
  \label{gluing_function}
  G(x,x_u)
  =
  \E^{\frac{\log x \,  \log x_u}{\log q}}
  (-q^{\frac{1}{2}} x \, x_u^{~2};q)_\infty
  .
\end{equation}
We should note that the gluing function $G(x,x_u)$
satisfies the following $q$-difference equations,
\begin{align}
  \label{difference_G}
  &
    \frac{G(q\, x, x_u)}{G(x, x_u)}
    =
    \frac{x_u}{1+q^{\frac{1}{2}} x \, x_u^{~2}}
    ,
  &&
    \frac{
      G(x, q^{\frac{1}{2}} x_u)
    }{
      G(x,x_u)
    }
    =
    \frac{x^{\frac{1}{2}}}{
      1+ q^{\frac{1}{2}} x \, x_u^{~2}
    }
     .
\end{align}

As a result, we obtain the following.

\begin{theorem}
  \label{thm:Sigma12}
  We have an algebra
  embedding~\eqref{define_aw_glue},~\eqref{define_yu_glue},~\eqref{gluing_function}
  of
  $\KBS_A(\Sigma_{1,2}) \to \SH_{q, \boldsymbol{t}_\natural}$ with
  $A^2=q^{-\frac{1}{2}}$
  as operators
  on symmetric polynomial
  $\mathbb{C}[x+x^{-1}]$.
  % We then have
  % \begin{gather}
  %   - \mathbb{y}_u^{~2} \mathbb{x}_u
  %   + \left(
  %     q^{\frac{1}{2}} + q^{-\frac{1}{2}} \right)
  %   \mathbb{y}_u \mathbb{x}_u\mathbb{y}_u
  %   - \mathbb{x}_u \mathbb{y}_u^{~2}
  %   =
  %   \left(q^{\frac{1}{2}} - q^{-\frac{1}{2}}\right)^2 \mathbb{y}_u,
  %   \\
  %   - \mathbb{x}_u^{~2} \mathbb{y}_u
  %   + \left(
  %     q^{\frac{1}{2}} + q^{-\frac{1}{2}} \right)
  %   \mathbb{x}_u \mathbb{y}_u\mathbb{x}_u
  %   - \mathbb{y}_u \mathbb{x}_u^{~2}
  %   =
  %   \left(q^{\frac{1}{2}} - q^{-\frac{1}{2}}\right)^2 \mathbb{x}_u,
  %   \\
  %   - \mathbb{y}^2 \mathbb{y}_u
  %   + \left(
  %     q^{\frac{1}{2}} + q^{-\frac{1}{2}} \right)
  %   \mathbb{y} \mathbb{y}_u\mathbb{y}
  %   - \mathbb{y}_u \mathbb{y}^2
  %   =
  %   \left(q^{\frac{1}{2}} - q^{-\frac{1}{2}}\right)^2 \mathbb{y}_u,
  %   \\
  %   - \mathbb{y}_u^{~2} \mathbb{y}
  %   + \left(
  %     q^{\frac{1}{2}} + q^{-\frac{1}{2}} \right)
  %   \mathbb{y}_u \mathbb{y}\mathbb{y}_u
  %   - \mathbb{y} \mathbb{y}_u^{~2}
  %   =
  %   \left(q^{\frac{1}{2}} - q^{-\frac{1}{2}}\right)^2 \mathbb{y},
  % \end{gather}
\end{theorem}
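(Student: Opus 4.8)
The plan is to verify, one family at a time, that the assignments \eqref{define_aw_glue}, \eqref{define_yu_glue}, \eqref{gluing_function} are compatible with all the defining relations of $\KBS_A(\Sigma_{1,2})$ collected in Prop.~\ref{prop:Sigma12}. These split naturally into the $\Sigma_{0,4}$-type relations \eqref{04type} with their cubic relation, the $\Sigma_{1,1}$-type relations \eqref{torus11_type} with the generating identity \eqref{11_generate_x}, and the consistency conditions \eqref{skein_twice_puncture} together with the commutativities \eqref{twice_torus_x_y}. Since all of these are operator identities, the argument is independent of which module we realize them on.

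The $\Sigma_{0,4}$-type part is essentially free. Specializing the embedding \eqref{CC_t_and_x} of $\KBS_A(\Sigma_{0,4})$ at $\boldsymbol{t}=\boldsymbol{t}_\natural$ \eqref{t_natural} identifies $\mathbb{b}_1\approx\mathbb{b}_2\approx\mathbb{x}_u\mapsto x_u+x_u^{~-1}$, and reading off the Askey--Wilson operator \eqref{AW_operator} with coefficient \eqref{AW_function_A} at $\boldsymbol{t}_\natural$ reproduces precisely the difference operator with coefficients $\beta,\varphi$ in \eqref{define_aw_glue}; the central images $\mathbb{b}_3\mapsto x_\ell+x_\ell^{~-1}$ and $\mathbb{b}_4\mapsto x_r+x_r^{~-1}$ are scalars. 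Hence Theorem~\ref{thm:embed_CC1} delivers \eqref{04type} and its cubic relation verbatim. For the $\Sigma_{1,1}$-type part, matching $A^2=q^{-\frac12}$ with the $A_1$-normalization forces $q_u=q^{\frac12}$ \eqref{qu_and_q}; the relations \eqref{torus11_type} then hold for the conjugated operators \eqref{define_yu_glue} because conjugation by $G(x,x_u)$ is an algebra automorphism and Theorem~\ref{thm:embed_A1} applies to $\SH_{q_u,t}$. The identity \eqref{11_generate_x} follows from the boundary-circle formula \eqref{boundary_b} on $\Sigma_{1,1}$ together with its DAHA image \eqref{A1_b_and_t}, $\mathbb{b}\mapsto-t^2q_u^{~-1}-t^{-2}q_u$: setting $t=\I q_u^{~\frac12}x^{\frac12}$ \eqref{t_and_x} turns the right-hand side into $x+x^{-1}$, and conjugation fixes this central scalar.

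The commutativities \eqref{twice_torus_x_y} are immediate, since the $x_u$-shift $\eth_u$ commutes with multiplication by functions of $x$ and the $x$-shift $\eth$ commutes with multiplication by functions of $x_u$; as $\mathbb{x}\mapsto x+x^{-1}$ and $\mathbb{x}_u\mapsto x_u+x_u^{~-1}$ are pure multiplications while $\mathbb{y}$ (resp.\ $\mathbb{y}_u$) involves only $\eth$ (resp.\ $\eth_u$), both commutativities follow. The genuine content is the consistency pair \eqref{skein_twice_puncture}. I would verify it through the ansatz \eqref{define_gamma_a}, $\mathbb{y}_u\mapsto\gamma_1(x_u)\,\eth_u+\gamma_2(x,x_u)\,\eth_u^{~-1}$: insert it together with \eqref{define_aw_glue} into both equations of \eqref{skein_twice_puncture}, expand each side as an operator $\sum_{i,j}c_{ij}(x,x_u)\,\eth^i\eth_u^{~j}$, and match coefficients; this forces $\gamma_1,\gamma_2$ to the stated values. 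One then checks that the resulting operator coincides with the conjugate \eqref{define_yu_glue} of the Macdonald operator \eqref{define_gamma} of $\SH_{q_u,\I q_u^{1/2}x^{1/2}}$, by applying the two $q$-difference equations \eqref{difference_G} for the quantum-dilogarithm gluing function $G(x,x_u)$ of \eqref{gluing_function}.

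The main obstacle is precisely this verification of \eqref{skein_twice_puncture}: it is a noncommutative identity in which $\eth$ must be commuted past the $x$-dependent coefficient $\gamma_2(x,x_u)$, so the bookkeeping of the double expansion in $\eth$ and $\eth_u$ is delicate. The cancellations that make it close up rely on the factorization of $\gamma_2$ into two linear factors in $x\,x_u^{~2}$, which mirrors the shift relations \eqref{difference_G} of $G$; recognizing this both organizes the computation and yields the clean closed form \eqref{define_yu_glue}. Finally, the assignment is an embedding because each constituent representation is already faithful — the $\Sigma_{0,4}$-block by Theorem~\ref{thm:embed_CC1} and the $\Sigma_{1,1}$-block by Theorem~\ref{thm:embed_A1} — and $\KBS_A(\Sigma_{1,2})$ imposes no relations on $\mathbb{y},\mathbb{y}_u$ beyond \eqref{skein_twice_puncture}, so no kernel can arise.
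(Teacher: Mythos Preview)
Your proposal is correct and follows essentially the same route as the paper: reduce the $\Sigma_{0,4}$-type block to Theorem~\ref{thm:embed_CC1} at $\boldsymbol{t}_\natural$, the $\Sigma_{1,1}$-type block to Theorem~\ref{thm:embed_A1} with $q_u=q^{1/2}$ and $t=\I q_u^{1/2}x^{1/2}$, note the commutativities are trivial, and then solve the consistency relations \eqref{skein_twice_puncture} by brute force from the ansatz \eqref{define_gamma_a} and identify the result as the $G$-conjugate of the Macdonald operator via \eqref{difference_G}. The only cosmetic difference is order of presentation---the paper first posits an undetermined gluing function $G$ and then solves for it, whereas you solve for $\gamma_1,\gamma_2$ and then recognize $G$---but the content is identical.
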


It should be stressed  that the above representation preserves the
symmetric Laurent polynomial space
$\mathbb{C}[x+x^{-1}]$.
% (q^{\frac{1}{4}}, x_u, x_\ell,x_r)
% Both the Askey--Wilson difference operator $\mathbb{y}$ and
% the conjugated operator $\mathbb{y}_u$ are symmetric in $x$.
% as we have
% constructed the representation based on the Askey--Wilson difference operator.
Due to conjugation by the gluing function $G(x,x_u)$~\eqref{gluing_function},
broken is a  symmetry $x_u \leftrightarrow x_u^{~-1}$.
We can recover the  symmetry of $x_u$ when we discard the  symmetry
$x \leftrightarrow x^{-1}$,
by taking an
inverse conjugation,
$h \mapsto G(x,x_u) \, h \, G(x,x_u)^{-1}$,
for the representations in Theorem~\ref{thm:Sigma12}.
We use $t$ by
\begin{equation}
  \label{t_and_x_2}
  x=-\frac{t^{2}}{q_u}
  ,
\end{equation}
as in~\eqref{t_and_x}, and
a $q$-difference operator for $t$  in $\SH_{q_u, t}$ is
\begin{equation}
  \label{shift_t}
  \eth_t \,  t
  =
  q_u \, t \, \eth_t
  .
  % \eth_t \, f(t , x_u, x_\ell, x_r) =
  % f(q_u \,t , x_u, x_\ell, x_r)
  % .
\end{equation}
% With the difference operator $\eth_t$,
% we have an extended algebra $\widetilde{\SH_{q_u,t}}$,
% and
We obtain the following representation which acts on $\mathbb{C}[x_u+x_u^{~-1}]$.
\begin{coro}
  We have an algebra embedding,
  $\KBS_A(\Sigma_{1,2}) \to {\SH_{q_u,t}}$,
  with
  $A^2=q_u^{~-1}$,
  \begin{align}
    \mathbb{x}_u
    & \mapsto
      \mathsf{X}_u + \mathsf{X}_u^{~-1}
      ,
      \notag
    \\
    \mathbb{y}_u
    & \mapsto
      \mathsf{Y}_u + \mathsf{Y}_u^{~-1}
      ,
      \notag
    \\
    \mathbb{b}_3
    & \mapsto
      x_\ell + x_\ell^{~-1} ,
      \notag
    \\
    \mathbb{b}_4
    &
      \mapsto
      x_r + x_r^{~-1},
      \notag
    \\
    %%%%
    \mathbb{y}
    &    \mapsto
      \begin{multlined}[t]
        - \frac{q_u  \, t^2 \left(x_\ell - x_r \, t^2 \right)
          \left(x_\ell x_r - t^2 \right)
        }{
          \left( 1+t^2 \right) \left( q_u^{~2} - t^4 \right)
          x_\ell \, x_r
        }      \,
        \left( \mathsf{X}_u^{~-1} t^{-1} - \mathsf{X}_u \, t \right)
        \left( \mathsf{X}_u \, t^{-1} - \mathsf{X}_u^{~-1} t \right)
        \eth_t
        \\
        - \frac{ q_u  \left( q_u^{~2} x_r - x_\ell  \,  t^2 \right)
          \left( q_u^{~2} - x_\ell \,  x_r \, t^2 \right)
        }{
          \left( t^2+q_u^{~2} \right)
          \left(   q_u^{~2} - t^4 \right)
          x_\ell \, x_r
        } \, \eth_t^{~-1}
        -
        \frac{
          q_u \,  t^2 \left(x_\ell+x_r\right)
          \left(1+x_\ell  \, x_r \right)
        }{
          \left(1+t^2 \right)
          \left( t^2 + q_u^{~2} \right)
          x_\ell  \, x_r 
        } \,
        \left( \mathsf{X}_u + \mathsf{X}_u^{~-1} \right)
        .
      \end{multlined}
    \label{rep11_xu}
  \end{align}
  Here $\mathsf{X}_u$ and $\mathsf{Y}_u$ defined
  by~\eqref{A1_represent} acting on the Laurent polynomial of $x_u$
  constitute the DAHA $\SH_{q_u,t}$.
  The representation~\eqref{rep11_xu}  preserves  the symmetric polynomials
  $\mathbb{C}[x_u+x_u^{~-1}]$,
  and they are explicitly written as follows;
  \begin{align}
    \mathbb{x}_u
    & \mapsto
      x_u+x_u^{~-1} ,
      \notag
    \\
    \mathbb{y}_u
    & \mapsto
      \frac{t \, x_u - t^{-1} x_u^{-1}}{x_u - x_u^{~-1}} \,\eth_u
      +
      \frac{t^{-1} x_u - t \, x_u^{~-1}}{x_u - x_u^{~-1}} \,
      \eth_u^{~-1}
      ,
      \notag
      % \gamma(x, x_u)\, \eth_u +
      % \gamma(x,x_u^{~-1}) \,    \eth_u^{~-1},
    \\
    \mathbb{b}_3
      & \mapsto
      x_\ell + x_\ell^{~-1} ,
      \notag
    \\
    \mathbb{b}_4
    & \mapsto
      x_r + x_r^{~-1},
      \notag
    \\
    %%%%
    \mathbb{y}
    &    \mapsto
      \begin{multlined}[t]
        - \frac{q_u \left(x_\ell - x_r \, t^2 \right)
          \left(x_\ell \, x_r -  t^2 \right)
          \left(1 - x_u^{~2} t^2 \right)
          \left(x_u^{~2}- t^2 \right)
        }{
           \left( 1+t^2 \right) \left( q_u^{~2} -t^4 \right)
          x_\ell \, x_r\,  x_u^{~2}
        } \,
        \eth_t
        \\
        - \frac{q_u  \left( q_u^{~2} x_r - x_\ell  \, t^2 \right)
          \left(  q_u^{~2}- x_\ell \, x_r \, t^2 \right)
        }{
           \left( t^2+q_u^{~2} \right)
          \left(  q_u^{~2} - t^4 \right)
          x_\ell \, x_r
        } \, \eth_t^{~-1}
        -
        \frac{
          q_u \,  t^2 \left(x_\ell+x_r\right)
          \left(1+x_\ell \,  x_r \right)
          \left( 1+ x_u^{~2} \right)
        }{
          \left(1+t^2 \right)
          \left( t^2 + q_u^{~2} \right)
          x_\ell \, x_r \,  x_u
        }
        .
      \end{multlined}
  \end{align}
\end{coro}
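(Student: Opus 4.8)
The plan is to deduce this corollary from Theorem~\ref{thm:Sigma12} by transporting the embedding along the inner automorphism $h\mapsto G(x,x_u)\,h\,G(x,x_u)^{-1}$, that is, the \emph{inverse} of the conjugation that was built into~\eqref{define_yu_glue}. Since conjugation by the invertible operator $G(x,x_u)$ is an algebra isomorphism, it automatically carries the algebra embedding $\KBS_A(\Sigma_{1,2})\to\SH_{q,\boldsymbol{t}_\natural}$ of Theorem~\ref{thm:Sigma12} to another algebra embedding with the same source $\KBS_A(\Sigma_{1,2})$; what has to be checked is only the explicit form of the transported generators and the change of the invariant subspace. First I would observe that $\mathbb{x}_u$, $\mathbb{b}_3$, $\mathbb{b}_4$ are untouched, being scalar multiplication operators, while on $\mathbb{y}_u$ the inverse conjugation cancels the factor $G(x,x_u)$ in~\eqref{define_yu_glue}, leaving the bare operator $\gamma(x,x_u)\,\eth_u+\gamma(x,x_u^{-1})\,\eth_u^{-1}$; by~\eqref{define_gamma} this is exactly the Macdonald operator $\mathsf{Y}_u+\mathsf{Y}_u^{-1}$ of the $A_1$-DAHA $\SH_{q_u,t}$ at $t=\I\,q_u^{1/2}x^{1/2}$, i.e.\ with $x=-t^2/q_u$ as in~\eqref{t_and_x}. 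Consequently the $\Sigma_{1,1}$-block relations~\eqref{torus11_type} and the identity~\eqref{11_generate_x} hold because they already hold inside $\SH_{q_u,t}$, whose generators $\mathsf{X}_u,\mathsf{Y}_u$ preserve $\mathbb{C}[x_u+x_u^{-1}]$.

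The substantive step is the transport of $\mathbb{y}$ from~\eqref{define_aw_glue}. Conjugating that $q$-difference operator by $G(x,x_u)$ and using the two difference equations~\eqref{difference_G}, the shift $\eth$ picks up the scalar $G(x,x_u)/G(q\,x,x_u)=(1+q^{1/2}x\,x_u^{2})/x_u$ and $\eth^{-1}$ the scalar $G(x,x_u)/G(q^{-1}x,x_u)=x_u/(1+q^{-1/2}x\,x_u^{2})$, while the multiplication term $-\varphi$ is unchanged. I would then reparametrize by $x=-t^2/q_u$, consistent with $q=q_u^{2}$, so that $\eth$ becomes the shift $t\mapsto q_u\,t$, i.e.\ $\eth_t$ with $\eth_t\,t=q_u\,t\,\eth_t$, and $\eth^{-1}$ becomes $\eth_t^{-1}$. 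Substituting $q^{1/2}x=-t^2$ into $\beta$ and into the conjugation factors, the $x_u$-part of the $\eth_t$-coefficient collects to $(x_u^{2}-t^2)(1-t^2x_u^{2})/x_u^{2}=t^2\bigl(t^2+t^{-2}-x_u^{2}-x_u^{-2}\bigr)=t^2(\mathsf{X}_u^{-1}t^{-1}-\mathsf{X}_u t)(\mathsf{X}_u t^{-1}-\mathsf{X}_u^{-1}t)$, yielding the factorized form recorded in the first display; the $\eth_t^{-1}$-coefficient contains the telescoping simplification $(x_u^{2}-q_u^{2}/t^2)/(1-t^2x_u^{2}/q_u^{2})=-q_u^{2}/t^2$, so it is $x_u$-independent exactly as stated; and the constant term keeps the factor $(1+x_u^{2})/x_u=\mathsf{X}_u+\mathsf{X}_u^{-1}$ coming from $\varphi$. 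Collecting the remaining $x_\ell,x_r$-dependent prefactors from $\beta$ and $\varphi$ then reproduces the displayed rational coefficients.

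Finally, because the conjugation by $G(x,x_u)$ in Theorem~\ref{thm:Sigma12} broke the symmetry $x_u\leftrightarrow x_u^{-1}$ while keeping $x\leftrightarrow x^{-1}$, the inverse conjugation restores the $x_u$-symmetry at the cost of the $x$-symmetry; hence the transported representation preserves $\mathbb{C}[x_u+x_u^{-1}]$, with $x$ (equivalently $t$) now a spectral parameter, and the relation $A^2=q_u^{-1}$ follows from $A^2=q^{-1/2}$ together with $q=q_u^{2}$. The second, fully explicit display is then obtained by letting $\mathsf{X}_u$ act as multiplication by $x_u$ and replacing $\mathsf{Y}_u+\mathsf{Y}_u^{-1}$ by the Macdonald operator~\eqref{A1_Macdonald} with $q\to q_u$. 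The only genuine labor is the rational-function bookkeeping in the second paragraph — expanding $\beta$ and $\varphi$ from~\eqref{define_aw_glue}, multiplying in the conjugation factors from~\eqref{difference_G}, and verifying the telescoping that removes the $x_u$-dependence of the $\eth_t^{-1}$-coefficient — which is of the same brute-force character as the verification already used to prove Theorem~\ref{thm:Sigma12}; everything else is formal.
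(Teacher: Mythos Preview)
Your proposal is correct and follows essentially the same route as the paper: the corollary is obtained from Theorem~\ref{thm:Sigma12} by the inverse conjugation $h\mapsto G(x,x_u)\,h\,G(x,x_u)^{-1}$ together with the change of variable $x=-t^{2}/q_u$ (so that $\eth$ becomes $\eth_t$), exactly as the paper indicates in the paragraph preceding the corollary. Your additional bookkeeping --- the conjugation factors from~\eqref{difference_G}, the factorization of the $x_u$-part of the $\eth_t$-coefficient, and the cancellation $(x_u^{2}-q_u^{2}/t^{2})/(1-t^{2}x_u^{2}/q_u^{2})=-q_u^{2}/t^{2}$ that removes the $x_u$-dependence from the $\eth_t^{-1}$-term --- is accurate and simply fills in details the paper leaves implicit.
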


\subsection{DAHA Polynomial of  Simple Closed Curves on $\Sigma_{1,2}$}

The  representation in Theorem~\ref{thm:Sigma12} is on
$\mathbb{C}[x+x^{-1}]$, while the representation~\eqref{rep11_xu} is
on $\mathbb{C}[x_u+x_u^{~-1}]$.
Using Theorem~\ref{thm:Sigma12}, we can assign
an operator $\mathcal{A}$ acting on $\mathbb{C}[x+x^{-1}]$ for a simple closed curve $\mathbb{c}$
on $\Sigma_{1,2}$,
\begin{equation}
  \mathbb{c}\mapsto \mathcal{A}=\ch (\mathcal{O}) .
\end{equation}
When $\mathbb{c}$ is given by Dehn twists from  $\mathbb{y}$,
$\mathbb{c}=\mathscr{T}(\mathbb{y})$,
we have $\mathcal{O}=\gamma(\mathsf{Y})$ due to that
$\mathbb{y}\mapsto\mathsf{Y}+\mathsf{Y}^{-1}\in \SH_{q,\boldsymbol{t}_\natural}$.
The automorphism  $\gamma$ is induced from $\mathscr{T}$,
and it is
written as the  conjugation by $U_\bullet$
and $V_\bullet$.
We then define the DAHA polynomial for $\mathbb{c}$ by
\begin{equation}
  Q_n(x,x_u,x_\ell, x_r,q; \mathbb{c})
  =
  M_{n-1}(\mathcal{O}; q, q)
  (1)
  .
\end{equation}
Especially we have
\begin{equation*}
  Q_2(x,x_u,x_\ell, x_r, q ; \mathbb{c})
  = \mathcal{A}(1).
\end{equation*}
For example, we have
\begin{gather*}
  Q_2(x,x_u, x_\ell, x_r, q; \mathbb{y}_u)
  =
  \I \, q^{-\frac{1}{2}} \frac{ x_u^{~2}}{1-x_u^{~2}}
  \left( x+x^{-1} \right)
  -
  \I \, q^{-\frac{3}{2}}
  \frac{ q (1-q)- x_u^{~4}}{1-x_u^{~2}} ,
  \\
  Q_2(x,x_u, x_\ell, x_r, q; \mathbb{y})
  =
  - q^{-\frac{1}{2}} x_u  \, x_\ell
  - q^{\frac{1}{2}} x_u^{~-1} \, x_\ell^{~-1}
  .
\end{gather*}
These  are in $\mathbb{C}(q,x_u,x_\ell,x_r)[x+x^{-1}]$, and
we do not know a relationship with the previously known quantum
polynomial invariants at this stage.

We shall pay attention to the representation~\eqref{rep11_xu} which
acts on $\mathbb{C}[x_u+x_u^{~-1}]$.
We assume that a simple closed curve $\mathbb{c}$ on $\Sigma_{1,2}$ is
given from either
$\mathbb{x}_u$
or
$\mathbb{c}_{(r,s)}$ on subsurface $\Sigma_{0,4} \subset
\Sigma_{1,2}$
by the Dehn twist
$\mathscr{T}$, where $\mathscr{T}$ is
generated by $\mathscr{T}_{\mathbb{x}_u}$ and
$\mathscr{T}_{\mathbb{y}_u}$.
Then we have
\begin{equation*}
  \mathbb{c} \mapsto
  \begin{cases}
    \ch (\gamma(\mathsf{X}_u))
    ,
    & \text{when $\mathbb{c}=\mathscr{T}(\mathbb{x}_u)$,}
    \\
    \ch (\gamma(\mathcal{O}_{(r,s)}))
    ,
    & \text{when $\mathbb{c}=\mathscr{T}(\mathbb{c}_{(r,s)})$,}
  \end{cases}
\end{equation*}
where
$\gamma$ denotes automorphisms of DAHA $\mathcal{H}_{q_u,t}$ induced
from $\mathscr{T}$.
% As the condition~\eqref{condition_AW_R} is fulfilled by gluing two
% punctures,
% the  automorphisms of the  $A_1$ and
% $C^\vee C_1$ DAHA are conjugation by $U_\bullet$ and $V_\bullet$
% respectively.
We define
\begin{equation}
  {P}_n(x_u, x_\ell, x_r, q_u, t; \mathbb{c})
  =
  \begin{cases}
    M_{n-1}(
    \gamma(\mathsf{X}_u) ; q_u, t)
    (1) ,
    & \text{when $\mathbb{c}=\mathscr{T}(\mathbb{x}_u)$,}
    \\
    M_{n-1}(
    \gamma(\mathcal{O}_{(r,s)}) ; q_u, t)
    (1) ,
    & \text{when $\mathbb{c}=\mathscr{T}(\mathbb{c}_{(r,s)})$.}    
  \end{cases}
\end{equation}
% We note that, to apply the automorphisms~\eqref{auto_AW} for $C^\vee C_1$-type,
% we should study the operators
% in Theorem~\ref{thm:Sigma12} on symmetric space of $x$.
% Here we give some examples obtained using only automorphism of~$A_1$-type.

We show some concrete examples.
In the following~$\tau_{\bullet(u)}$ denotes
the automorphisms~\eqref{tau_for_A1} for~$\mathsf{X}_u$
and~$\mathsf{Y}_u$ of~$A_1$-DAHA,
which correspond to the Dehn twists about the curve $\mathbb{x}_u$ and
$\mathbb{y}_u$ respectively.
The first example is
\begin{equation*}
 \mathbb{c}^\prime_{(2k+1,2)}
  =
  \left(
    \mathscr{T}_{\mathbb{x}_u}^{~-k} \circ
    \mathscr{T}_{\mathbb{y}_u}^{~2}
  \right)
  ( \mathbb{x}_u),
\end{equation*}
  In the DAHA representation we have the automorphism
  $ \tau_{R(u)}^{~k} \circ \tau_{L(u)}^{~2} $ on
  $\mathsf{X}_u$~\eqref{tau_tau_X}
  to obtain the same results for torus knots
  $\mathbb{c}_{(2k+1,2)}$  on the once-punctured torus,
  \begin{equation}
    \mathbb{c}^\prime_{(2k+1,2)}
    \mapsto
    \widehat{M}_{(2k+1,2)}^{(0)}(x_u; q_u, t),
  \end{equation}
  where $\widehat{M}_{(2k+1,2)}^{(0)}(x;q, t)$ is defined
  in~\eqref{M0_2k1}.
  We thus have
  \begin{equation*}
    {P}_2
    (x_u, x_\ell, x_r, q_u, t; \mathbb{c}^\prime_{(2k+1,2)})
    =
    \widehat{M}_{(2k+1,2)}^{(0)}(x_u; q_u, t) (1)
    ,
  \end{equation*}
  which reduces to
  \begin{equation}
    {P}_2
    \left(
      x_u=-q, x_\ell=-q^{-1}, x_r=-q^{-1}, q , t=q;
      \mathbb{c}^{\prime}_{(2k+1,2)}
    \right)
    =
    1-q^{4k}-q^{4k+2}-q^{4k+4} ,
  \end{equation}
  which is the Jones polynomial for torus knot $T_{(2k+1,2)}$ as in~\eqref{P2_and_torus_knot}.

  As the second example, we treat
  \begin{equation*}
   \mathbb{c}^{\prime\prime}_{(2k+1,2)}
  =
  \left(
    \mathscr{T}_{\mathbb{x}_u}^{~-k} \circ
    \mathscr{T}_{\mathbb{y}_u}^{~2}
  \right)
  (\mathbb{y}),
\end{equation*}
We need
the action
$  \tau_{R(u)}^{~k} \circ \tau_{L(u)}^{~2} $
on the  representation~\eqref{rep11_xu} of
$\mathbb{y}$ arising from the DAHA
$\mathcal{H}_{q,\boldsymbol{t}_\natural}$
of $C^\vee C_1$ type.
  Recalling~\eqref{auto_and_Dehn},
  we have
  \begin{multline}
    \label{tau_on_yaw}
    % (\tau_R^{~k} \circ \tau_L^{~2} )(\mathbb{y})
    \mathbb{c}^{\prime\prime}_{(2k+1,2)}
    \mapsto
    \frac{q_u  \, t^2 \left(x_\ell - x_r t^2 \right)
      \left(x_\ell x_r - t^2 \right)
    }{
      \left( 1+t^2 \right) \left( q_u^{~2} -t^4 \right)
      x_\ell \, x_r
    }      \,
    \sh \left(
       t\, q_u^{~k-1} (\mathsf{X}_u^{~k} \mathsf{Y}_u )^{2}
      \mathsf{X}_u
    \right) \,
    \sh (t^{-1}\mathsf{X}_u) \,
    \eth_t
    \\
    - \frac{q_u \left( q_u^{~2} x_r - x_\ell \,  t^2 \right)
      \left( q_u^{~2} - x_\ell \,  x_r \, t^2 \right)
    }{
      \left( t^2+q_u^{~2} \right)
      \left(  q_u^{~2}-t^4 \right)
      x_\ell \, x_r
    } \, \eth_t^{~-1} \,
    \frac{1}{
      \sh \left(t^{-1}\mathsf{X}_u \right)
    } \, \sh
    \left( t^{-1} q_u^{~k-1}(\mathsf{X}_u^{~k}\mathsf{Y}_u)^{2} \mathsf{X}_u \right)
    \\
    -
    \frac{
      q_u \,  t^2 \left(x_\ell+x_r\right)
      \left(1+x_\ell  \, x_r \right)
    }{
      \left(1+t^2 \right)
      \left( t^2 + q_u^{~2} \right)
      x_\ell  \, x_r 
    } \,
    \ch
    \left(
      q_u^{~k-1} (\mathsf{X}_u^{~k} \mathsf{Y}_u)^{2}\mathsf{X}_u
    \right)
    % +
    %   q_u^{~1-k}  \mathsf{X}_u^{~-1}
    %   ( \mathsf{X}_u^{~k} \mathsf{Y}_u)^{-2} \right)
    .
  \end{multline}
  This can be written as an operator on $\mathbb{C}[x_u+x_u^{~-1}]$
  as
  \begin{multline}
    \label{tau_on_yaw_2}
    % (\tau_R^{~k} \circ \tau_L^{~2} )(\mathbb{y})
    \mathbb{c}^{\prime\prime}_{(2k+1,2)}
    \mapsto
    -
    \frac{q_u t^2 \left(x_\ell - x_r t^2 \right)
      \left(x_\ell x_r- t^2 \right)
    }{
      \left( 1+t^2 \right) \left( q_u^{~2}-t^4 \right)
      x_\ell \, x_r
    }      \,
    \widehat{M}_{(2k+1,2)}^{(+)}(x_u; q_u,t) \,
    \eth_t
    \\
    - \frac{q_u \left( q_u^{~2} x_r - x_\ell \,  t^2 \right)
      \left( q_u^{~2} - x_\ell \,  x_r \, t^2 \right)
    }{
       \left( t^2+q_u^{~2} \right)
      \left( q_u^{~2}-t^4 \right)
      x_\ell \, x_r
    } \, \eth_t^{~-1} \,
    \widehat{M}_{(2k+1,2)}^{(-)}(x_u;q_u,t) \,
    \\
    -
    \frac{
      q_u \,  t^2 \left(x_\ell+x_r\right)
      \left(1+x_\ell  \, x_r \right)
    }{
      \left(1+t^2 \right)
      \left( t^2 + q_u^{~2} \right)
      x_\ell  \, x_r 
    } \,
    \widehat{M}_{(2k+1,2)}^{(0)}(x_u;q_u,t)
    ,
  \end{multline}
  where we use~\eqref{M0_2k1} and
  the operators $\widehat{M}_{(2k+1,2)}^{(\pm)}(x;q,t)$ are
  given by
  \begin{align}
    \notag
    &
      \begin{multlined}[b]
        \widehat{M}_{(2k+1,2)}^{(+)}(x;q,t)
        =
        \left\{
          \left( q \, x \right)^{2k}
          \frac{
            \left( 1-t^2 x^2 \right) \left( 1- q^2 t^2 x^2 \right)
            \left(1-q^4 t^2 x^2 \right)
          }{
            q \, t^4 \left(1-x^2 \right) \left( 1-q^2 x^2 \right)
          } 
        \right\} \eth^2
        + \left\{ \text{$x \to x^{-1}$}
        \right\} \, \eth^{-2}
        \\
        + \frac{q \left( 1+q^2 \right) \left( 1-t^2 \right)
          \left(t^2-x^2 \right) \left( 1- t^2 x^2\right)
        }{
          t^4 \left( q^2-x^2 \right) \left(1-q^2 x^2 \right)
        }
        ,
      \end{multlined}
    \\
    &
      \widehat{M}_{(2k+1,2)}^{(-)}(x;q,t)
      =
      \left\{
      \left(q \, x \right)^{2k}
      \frac{q \, x^2 \left(-1+q^2 t^2 x^2 \right)}{
      \left( 1- x^2 \right) \left( 1- q^2 x^2 \right)
      } 
      \right\} \eth^{2}
      +
      \left\{ \text{$x\to x^{-1} $} \right\} \, \eth^{-2}
      +
      \frac{q \left(1+q^2 \right) \left( 1-t^2 \right) x^2}{
        \left( q^2 - x^2 \right)
        \left(1-q^2 x^2 \right)
      }
      .
      \label{Mmp_2k1}
      %
    % \\
    % &
    %   \begin{multlined}[b]
    %     \widehat{M}_{(2k+1,2)}^{(0)}(x;q,t)
    %     =
    %     \left\{
    %       \frac{
    %         \left( 1-t^2 x^2 \right) \left( 1-q^2 t^2 x^2 \right)
    %       }{
    %         t^2 \left( 1-x^2 \right) \left( 1-q^2 x^2 \right)
    %       } \, q^{2k+1} x^{2k+1}
    %     \right\} \eth^2
    %     +
    %     \left\{ \text{$x \to x^{-1}$} \right\}  \, \eth^{-2}
    %     \\
    %     +
    %     \frac{
    %       q \left( t^2 -q^2 \right) \left( 1-t^2 \right)
    %       x \left(1+x^2 \right)
    %     }{
    %       t^2 \left( q^2 - x^2 \right)
    %       \left( 1- q^2 x^2 \right)
    %     }
    %     .
    %   \end{multlined}
  \end{align}
  Here $\eth$ is a $q$-difference operator for $x$, and
  $\left\{ x\to x^{-1} \right\}$ in the second term
  denotes  the coefficient $\{\cdots\}$
  of
  $\eth^2$ in the first term
  replacing $x$ by $x^{-1}$ so that the operators
  $\widehat{M}^{(\pm)}(x;q,t)$ preserve the symmetric Laurent polynomial space
  $\mathbb{C}[x+x^{-1}]$.
  Then we obtain the DAHA polynomial for   $\mathbb{c}^{\prime\prime}_{(2k+1,2)}$
  by acting~\eqref{tau_on_yaw_2} on~$1$.
  This is  indeed a deformation of the Jones polynomial for torus knot
  $T_{(2k+1,2)}$;
  it
 reduces to the Jones polynomial for $\mathbb{c}_{(2k+1,2)}$
  up to framing when we take specific values for deformation parameters,
  \begin{equation}
    {P}_2
    \left(
      x_u=-q, x_\ell=-q^{-1}, x_r=-q^{-1}, q, t=q;
      \mathbb{c}^{\prime\prime}_{(2k+1,2)}
      % (\tau_R^{~k} \circ \tau_L^{~2})(\mathbb{y})
    \right)
    =
    1-q^{4k}-q^{4k+2}-q^{4k+4} .
  \end{equation}

%%%%%%%%%%%%%%%
\section{Genus-Two Torus}
\label{sec:genus-two}
\subsection{Skein Algebra}
We shall construct
the skein algebra $\KBS_A(\Sigma_{2,0})$ based on the previous
sections.
We define simple closed curves on $\Sigma_{2,0}$ as in
Fig.~\ref{fig:genus_two_surface}.
We regard
the  twice-punctured torus $\Sigma_{1,2}$ in
Fig.~\ref{fig:torus_two_punctures}
as a  subsurface of $\Sigma_{2,0}$, and
we can construct
$\Sigma_{2,0}=\Sigma_{1,2} \cup S^1 \times [0,1]$
where both
$S^1 \times \{ 0\} \approx \mathbb{b}_3$ and
$S^1 \times \{ 1\} \approx \mathbb{b}_4$
are isotopic to $\mathbb{x}_d$.
As a reduction of the $C^\vee C_1$-type DAHA~\eqref{algS04},
the simple closed curves $\mathbb{x}$,  $\mathbb{y}$, $\mathbb{z}$ 
constitute the following
skein algebra~\eqref{04type_genus2} of~$\Sigma_{0,4}$-type.
Here  $\mathbb{z}$ is generated by $\mathbb{x}$ and $\mathbb{y}$ as in~\eqref{04type_genus2}.
On the other hand,
we can regard $\Sigma_{2,0}$ as a union of two once-punctured tori,
$\Sigma_{2,0}=\Sigma^u_{1,1} \cup \Sigma^d_{1,1}$,
where $\Sigma^u_{1,1}\cap \Sigma^d_{1,1} \approx \mathbb{x}$,
and
the curves
$\mathbb{x}_u$ and $\mathbb{y}_u$
(resp. $\mathbb{x}_d$ and $\mathbb{y}_d$)
satisfy
the skein algebra~\eqref{torus11_type_genus2}
on the once-puncture torus $\Sigma^u_{1,1}$
(resp. $\Sigma^d_{1,1}$).
Two sets of curves $\{ \mathbb{x}_\circledcirc,
\mathbb{y}_\circledcirc,
\mathbb{z}_\circledcirc \}$
for $\circledcirc\in \{u, d\}$
fulfill the skein algebra of
$\Sigma_{1,1}$ whose boundary circle is isotopic to $\mathbb{x}$.
Further we need skein relations for
$\widetilde{\mathbb{y}}$ in Fig.~\ref{fig:torus_two_punctures}, and
they can be written in~\eqref{condition_ya}.

\begin{figure}[htbp]
  \centering
      \includegraphics[scale=1.]{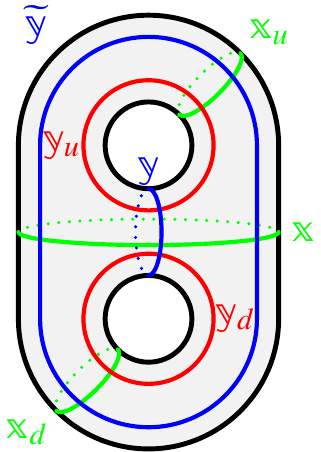}
  \caption{Simple closed curves on the genus-two torus $\Sigma_{0,2}$ are given.}
  \label{fig:genus_two_surface}
\end{figure}

In summary, we have the following skein algebra.
See also~\cite{ArthaShaki17a}.
\begin{prop}
  The skein algebra $\KBS_A(\Sigma_{2,0})$
  is
  \begin{itemize}
  \item $\Sigma_{0,4}$-type,
    \begin{align}
      \label{04type_genus2}      
        A^2 \, 
        \mathbb{x} \,  \mathbb{y}
        -A^{-2} \,
        \mathbb{y}  \,  \mathbb{x}
        & =
        \left( A^4-A^{-4} \right) \,
        \mathbb{z}
        + 2 \left( A^2-A^{-2} \right) \,
        \mathbb{x}_d \,
        \mathbb{x}_u
        ,
        \\
        A^2 \, 
        \mathbb{y} \,  \mathbb{z}
        -A^{-2} \,
        \mathbb{z}  \,  \mathbb{y}
        & =
        \left( A^4-A^{-4} \right) \,
        \mathbb{x}
        + \left( A^2-A^{-2} \right) \,
        \left(
          \mathbb{x}_u^{~2}
          +
          \mathbb{x}_d^{~2}
        \right)
          ,
          \notag
        \\
        A^2 \, 
        \mathbb{z} \,  \mathbb{x}
        -A^{-2} \,
        \mathbb{x}  \,  \mathbb{z}
        & =
        \left( A^4-A^{-4} \right) \,
        \mathbb{y}
        +
        2 \left( A^2-A^{-2} \right) \,
        \mathbb{x}_d
        \, \mathbb{x}_u
          ,
          \notag
    \end{align}
    with
    \begin{multline}
      A^2 \, \mathbb{x} \, \mathbb{y} \, \mathbb{z} =
      A^4 \,  \mathbb{x}^{2}
      + A^{-4} \, \mathbb{y}^{2}
      + A^4 \, \mathbb{z}^{2}
      +A^2  \,
      \left( \mathbb{x}_u^{~2}
        + \mathbb{x}_d^{~2}   \right)
      \, \mathbb{x}
      + 2 \, A^{-2} \,
      \mathbb{x}_d     \,    \mathbb{x}_u
      \, \mathbb{y}
      +2 \, A^2 \,
      \mathbb{x}_d \,    \mathbb{x}_u
      \, \mathbb{z}
      \\
      +
      2 \,  \mathbb{x}_u^{~2}
      + 2 \,  \mathbb{x}_d^{~2}
      +
      \mathbb{x}_u^{~2} \,  \mathbb{x}_d^{~2}
      -\left( A^2+A^{-2} \right)^2
      ,
    \end{multline}

  \item $\Sigma_{1,1}$-type,
    \begin{align}
      \label{torus11_type_genus2}
      A \, \mathbb{x}_\circledcirc \, \mathbb{y}_\circledcirc
      - A^{-1} \,  \mathbb{y}_\circledcirc \, \mathbb{x}_\circledcirc
      &=
        \left( A^2-A^{-2} \right)  \, \mathbb{z}_\circledcirc
        ,
      \\
      A  \,  \mathbb{y}_\circledcirc \,  \mathbb{z}_\circledcirc
      - A^{-1} \,  \mathbb{z}_\circledcirc \,  \mathbb{y}_\circledcirc
      & =
        \left( A^2-A^{-2} \right) \, \mathbb{x}_\circledcirc
        ,
        \notag
      \\
      A  \, \mathbb{z}_\circledcirc \, \mathbb{x}_\circledcirc
      - A^{-1}\,  \mathbb{x}_\circledcirc  \,  \mathbb{z}_\circledcirc
      & =
        \left( A^2-A^{-2} \right) \, \mathbb{y}_\circledcirc
        ,
        \notag
    \end{align}
    with
    \begin{equation}
      \label{11_generate_x_genus2}
      \mathbb{x}
      =
      A \, \mathbb{x}_\circledcirc \, \mathbb{y}_\circledcirc \, \mathbb{z}_\circledcirc -
      A^2 \, \mathbb{x}_\circledcirc^{~2} - A^{-2} \, \mathbb{y}_\circledcirc^{~2}
      - A^2 \, \mathbb{z}_\circledcirc^{~2}
      +A^2+A^{-2}
      ,
    \end{equation}
    where $\circledcirc\in \{u, d\}$,

  \item consistency,
    \begin{gather}
      \label{consistency_genus2}
        - \mathbb{y}^2 \mathbb{y}_\circledcirc
        + \left(
          A^2+A^{-2}
        \right)
        \mathbb{y} \, \mathbb{y}_\circledcirc \, \mathbb{y}
        - \mathbb{y}_\circledcirc \, \mathbb{y}^2
        =
        \left(
          A^2- A^{-2}
        \right)^2 \mathbb{y}_\circledcirc ,
        \\
        - \mathbb{y}_\circledcirc^{~2} \mathbb{y}
        + \left(
          A^2+A^{-2}
        \right)
        \mathbb{y}_\circledcirc \,  \mathbb{y} \, \mathbb{y}_\circledcirc
        - \mathbb{y}  \, \mathbb{y}_\circledcirc^{~2}
        =
        \left(
          A^2- A^{-2}
        \right)^2 \mathbb{y}
        ,
        \notag
      \end{gather}
      with
    \begin{align}
      \label{genus2_x_y}
      &
        \mathbb{x} \, \mathbb{y}_\circledcirc
        = \mathbb{y}_\circledcirc \, \mathbb{x}
        ,
        &&
           \mathbb{x}_\circledcirc \, \mathbb{y}
           = \mathbb{y} \, \mathbb{x}_\circledcirc
           ,
    \end{align}
    where $\circledcirc\in \{u,d \}$,
    and
    \begin{align*}
      &
        \mathbb{x}_u \, \mathbb{x}_d = \mathbb{x}_d \, \mathbb{x}_u
        ,
        &&
        \mathbb{x}_u \, \mathbb{x}_d = \mathbb{x}_d \, \mathbb{x}_u
        ,
      \\
      &
        \mathbb{y}_u \, \mathbb{x}_d = \mathbb{x}_d \, \mathbb{y}_u
        ,
        &&
        \mathbb{y}_u \, \mathbb{x}_d = \mathbb{x}_d \, \mathbb{y}_u
           ,
    \end{align*}

  \item skein relations for $\widetilde{\mathbb{y}}$,
    \begin{gather}
      \label{condition_ya}
        \widetilde{\mathbb{y}} \, \mathbb{y}_\circledcirc
        = \mathbb{y}_\circledcirc \, \widetilde{\mathbb{y}}
        ,
        \\
        - \widetilde{\mathbb{y}}^{2} \,  \mathbb{x}_\circledcirc
        +
        \left( A^2+ A^{-2} \right)
        \widetilde{\mathbb{y}} \, \mathbb{x}_\circledcirc \,
        \widetilde{\mathbb{y}}
        - \mathbb{x}_\circledcirc \, \widetilde{\mathbb{y}}^2
        =
        \left( A^2 - A^{-2} \right)^2 \mathbb{x}_\circledcirc
        ,
        \notag
        \\
        - \mathbb{x}_\circledcirc^{~2} \, \widetilde{\mathbb{y}}
        +\left( A^2 + A^{-2} \right)
        \mathbb{x}_\circledcirc \, \widetilde{\mathbb{y}} \, \mathbb{x}_\circledcirc
        - \widetilde{\mathbb{y}} \, \mathbb{x}_\circledcirc^{~2}
        =
        \left( A^2 - A^{-2} \right)^2 \widetilde{\mathbb{y}}
        .
        \notag
      \end{gather}
      where $\circledcirc\in \{u, d\}$,
    and
    \begin{gather}
      \label{skein_y_yall}
        \widetilde{\mathbb{y}} \, \mathbb{x}_u \, \mathbb{y}_u
        - \mathbb{y}_u \, \mathbb{x}_u \, \widetilde{\mathbb{y}}
        =
        \mathbb{y} \, \mathbb{y}_d \, \mathbb{x}_d
        - \mathbb{x}_d \, \mathbb{y}_d \, \mathbb{y}
        ,
        \\
        \widetilde{\mathbb{y}}  \, \mathbb{y}
        = \mathbb{y} \, \widetilde{\mathbb{y}}
        ,
        \notag
      \end{gather}
  \end{itemize}
\end{prop}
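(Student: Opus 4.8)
The plan is to build $\KBS_A(\Sigma_{2,0})$ out of the pieces already analyzed and then verify only the genuinely new relations by direct diagrammatic computation. I would use the two decompositions indicated in Fig.~\ref{fig:genus_two_surface}: first $\Sigma_{2,0}=\Sigma_{1,2}\cup(S^1\times[0,1])$, obtained by capping the two punctures of the twice-punctured torus of Fig.~\ref{fig:torus_two_punctures} by an annulus whose core is isotopic to $\mathbb{x}_d$, so that $\mathbb{b}_3\approx\mathbb{b}_4\approx\mathbb{x}_d$; and second $\Sigma_{2,0}=\Sigma^u_{1,1}\cup\Sigma^d_{1,1}$, two once-punctured tori glued along the separating waist $\mathbb{x}$. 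The first decomposition together with Prop.~\ref{prop:Sigma12} gives the $\Sigma_{0,4}$-type block directly: substituting $\mathbb{b}_3,\mathbb{b}_4\mapsto\mathbb{x}_d$ in~\eqref{04type} and in its cubic relation collapses them to~\eqref{04type_genus2}, and it reproduces the $\Sigma_{1,1}$-type relations~\eqref{torus11_type_genus2} together with~\eqref{11_generate_x_genus2} for $\circledcirc=u$. Symmetrically, Prop.~\ref{prop:Sigma11} applied to the second handle $\Sigma^d_{1,1}$, whose boundary circle is again $\mathbb{x}$, supplies~\eqref{torus11_type_genus2} and~\eqref{11_generate_x_genus2} for $\circledcirc=d$; one checks that the two resulting expressions for $\mathbb{x}$ agree in the skein algebra.

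Next I would dispose of the positional relations. The curves $\mathbb{x}_\circledcirc$ and $\mathbb{y}_\circledcirc$ isotope into the interior of the handle $\Sigma^\circledcirc_{1,1}$, hence off the waist $\mathbb{x}$ and off each other across handles, giving the commutations in~\eqref{genus2_x_y} and between the two handles; similarly $\mathbb{y}$ and $\widetilde{\mathbb{y}}$ can be made disjoint from $\mathbb{y}_\circledcirc$ and from one another where claimed, yielding the commutations in~\eqref{consistency_genus2}, \eqref{condition_ya} and~\eqref{skein_y_yall}. The remaining quadratic ``product-to-sum'' relations --- the last two lines of~\eqref{consistency_genus2} and of~\eqref{condition_ya} --- involve pairs of curves meeting transversally in two points, and each follows by resolving those two crossings with the Kauffman bracket relation; here one can simply observe that the triples $(\mathbb{y},\mathbb{x}_\circledcirc,\cdot)$ and $(\widetilde{\mathbb{y}},\mathbb{x}_\circledcirc,\cdot)$ realize the defining relations of Prop.~\ref{prop:Sigma11} on appropriate subsurfaces, exactly as~\eqref{skein_twice_puncture} was handled in Prop.~\ref{prop:Sigma12}.

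The one genuinely new computation is the relation~\eqref{skein_y_yall} tying $\widetilde{\mathbb{y}}$ to the other curves, namely $\widetilde{\mathbb{y}}\,\mathbb{x}_u\,\mathbb{y}_u-\mathbb{y}_u\,\mathbb{x}_u\,\widetilde{\mathbb{y}}=\mathbb{y}\,\mathbb{y}_d\,\mathbb{x}_d-\mathbb{x}_d\,\mathbb{y}_d\,\mathbb{y}$. I would verify it by placing all six generators on a standard genus-two diagram (two square tori glued along the waist), stacking the curves as prescribed by the two products, and resolving every crossing; both sides should reduce to the same $\mathbb{Z}[A^{\pm 1}]$-combination of embedded multicurves. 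This is the step I expect to be the main obstacle: $\widetilde{\mathbb{y}}$ runs over both handles, so the crossing bookkeeping is delicate, and one must track framings and the sign conventions inherited from~\eqref{boundary_b} carefully.

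Finally, for completeness --- that these generators and relations present the whole algebra --- I would either invoke the known presentation of the genus-two Kauffman bracket skein algebra (Bullock--Przytycki, and the discussion in~\cite{ArthaShaki17a}), or argue directly: any diagram is first normalized with respect to the waist $\mathbb{x}$ using the $\Sigma_{1,1}$-type relations~\eqref{torus11_type_genus2} on each handle, reducing it to a combination of the listed generators, after which~\eqref{04type_genus2}, \eqref{consistency_genus2} and~\eqref{condition_ya}--\eqref{skein_y_yall} govern all further reductions, so no additional relations are needed.
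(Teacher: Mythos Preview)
Your proposal is correct and follows essentially the same approach as the paper. The paper does not give a formal proof after the proposition; instead, the justification is the paragraph preceding it, which invokes precisely the two decompositions you use --- $\Sigma_{2,0}=\Sigma_{1,2}\cup S^1\times[0,1]$ with $\mathbb{b}_3\approx\mathbb{b}_4\approx\mathbb{x}_d$ to import Prop.~\ref{prop:Sigma12}, and $\Sigma_{2,0}=\Sigma^u_{1,1}\cup\Sigma^d_{1,1}$ for the $\Sigma_{1,1}$-type blocks --- and then notes that the extra relations for $\widetilde{\mathbb{y}}$ can be checked directly, referring also to \cite{ArthaShaki17a}; your write-up is a more detailed version of exactly this.
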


Note that the above $\Sigma_{1,1}$-type
relations~\eqref{torus11_type_genus2} are redundant, and we have
\begin{gather}
  \label{genus_skein_1}
    - \mathbb{y}_\circledcirc^{~2} \mathbb{x}_\circledcirc
    + \left(
      A^2+A^{-2}
    \right)
    \mathbb{y}_\circledcirc \mathbb{x}_\circledcirc \mathbb{y}_\circledcirc
    - \mathbb{x}_\circledcirc \mathbb{y}_\circledcirc^{~2}
    =
    \left(
      A^2- A^{-2}
    \right)^2 \mathbb{x}_\circledcirc,
    \\
    - \mathbb{x}_\circledcirc^{~2} \mathbb{y}_\circledcirc
    + \left(
      A^2+A^{-2}
    \right)
    \mathbb{x}_\circledcirc \mathbb{y}_\circledcirc \mathbb{x}_\circledcirc
    - \mathbb{y}_\circledcirc \mathbb{x}_\circledcirc^{~2}
    =
    \left(
      A^2- A^{-2}
    \right)^2 \mathbb{y}_\circledcirc
    .
    \notag
      % \\
      % - \mathbb{y}^2 \mathbb{y}_\circledcirc
      % + \left(
      %   A^2+A^{-2}
      % \right)
      % \mathbb{y} \, \mathbb{y}_\circledcirc \, \mathbb{y}
      % - \mathbb{y}_\circledcirc \, \mathbb{y}^2
      % =
      % \left(
      %   A^2- A^{-2}
      % \right)^2 \mathbb{y}_\circledcirc
      % ,
      % \\
      % - \mathbb{y}_\circledcirc^{~2} \mathbb{y}
      % + \left(
      %   A^2+A^{-2}
      % \right)
      % \mathbb{y}_\circledcirc \,  \mathbb{y} \, \mathbb{y}_\circledcirc
      % - \mathbb{y}  \, \mathbb{y}_\circledcirc^{~2}
      % =
      % \left(
      %   A^2- A^{-2}
      % \right)^2 \mathbb{y}
      % ,
  \end{gather}
%  where $\circledcirc\in \{ u, d \}$.
% We note that
% $\{ \mathbb{x}_u, \mathbb{y}_u \}$ and
% $\{ \mathbb{x}_d, \mathbb{y}_d \}$ commute with each other.

% and
% \begin{equation}
%   \begin{gathered}
%     \tilde{\mathbb{y}} \, \mathbb{y}_d
%     = \mathbb{y}_d \, \tilde{\mathbb{y}}
%     ,
%     \\
%     - \tilde{\mathbb{y}}^{2} \, \mathbb{x}_d +
%     \left( A^2+A^{-2} \right)
%     \tilde{\mathbb{y}} \, \mathbb{x}_d \,  \tilde{\mathbb{y}}
%     - \mathbb{x}_d \, \tilde{\mathbb{y}}^2 =
%     \left( A^{2} - A^{-2} \right)^2 \mathbb{x}_d
%     ,
%     \\
%     - \mathbb{x}_d^{~2} \, \tilde{\mathbb{y}}
%     +\left( A^{2} + A^{-2} \right)
%     \mathbb{x}_d \, \tilde{\mathbb{y}} \, \mathbb{x}_d
%     - \tilde{\mathbb{y}} \, \mathbb{x}_d^{~2}
%     =
%     \left( A^{2} - A^{-2} \right)^2 \tilde{\mathbb{y}}
%     .
%   \end{gathered}
% \end{equation}

\subsection{Polynomial Representation}
To give a representation,
we make use of the DAHA representation for the twice-punctured torus in
the previous section.
We   glue the two punctures of $\Sigma_{1,2}$ in
Fig.~\ref{fig:torus_two_punctures}
by
setting
\begin{equation}
  x_\ell = x_r
  = x_d
  .
\end{equation}
It  should be remarked
that four parameters
$\boldsymbol{t}$  of DAHA of $C^\vee C_1$-type
$\SH_{q,\boldsymbol{t}}$
(see Fig.~\ref{fig:4-hole_sphere})
are now set to be
\begin{equation}
  \label{t_star}
  (t_0,t_1,t_2,t_3)
  =
  \left(
    \I \, x_u, \I \, q^{-\frac{1}{2}} x_d ,
    \I \, x_u , \I \, x_d
  \right)=
    \boldsymbol{t}_{\star}
  ,
\end{equation}
which means that we have glued the punctures,
$\mathbb{b}_1$ with $\mathbb{b}_2$, and
$\mathbb{b}_3$ with $\mathbb{b}_4$, together.
In gluing $\mathbb{b}_1$ with $\mathbb{b}_2$ in the previous
section, we have employed DAHA of $A_1$-type
$\SH_{q_u,\I q_u^{1/2} x^{1/2}}$  so that
$\mathbb{x}$ is generated from $\mathbb{x}_u$ and $\mathbb{y}_u$
as the boundary circle~\eqref{boundary_b} of $\Sigma_{1,1}^{u}$.
As we have $\Sigma_{1,1}^{u}\cap \Sigma_{1,1}^{d} \approx \mathbb{x}$,
another DAHA $\SH_{q_u,\I q_u^{1/2} x^{1/2}}$ plays the role of $\Sigma_{1,1}^d$
so  that
$\mathbb{x}$ is also generated from $\mathbb{x}_d$ and $\mathbb{y}_d$
as the boundary circle~\eqref{boundary_b} of $\Sigma_{1,1}^{d}$.
Thus
it is natural to use the gluing function~\eqref{gluing_function} for $\mathbb{y}_d$,
and to put
as follows;
\begin{align}
  \mathbb{x}
  &  \mapsto
    x + x^{-1}
    ,
    \notag
  \\
  \mathbb{y}
  &  \mapsto
    - \beta(x,x_u,x_d) \, \eth
    - \beta(x^{-1}, x_u , x_d) \, \eth^{-1}
    -
    \varphi(x,x_u,x_d)
    ,
    \notag
  \\
  \mathbb{x}_u
  &  \mapsto
    x_u + x_u^{~-1}
    ,
    \notag
  \\
  %%%%
  \mathbb{y}_u
  &      \mapsto
    \begin{multlined}[t]
      G(x,x_u)^{-1} 
      \left( \gamma(x,x_u) \, \eth_u
        +
        \gamma(x,x_u^{~-1}) \, \eth_u^{~-1}
      \right) G(x,x_u)
      \\
      =\gamma_1(x_u) \, \eth_u +
      \gamma_2(x,x_u) \,
      \eth_u^{~-1}
      ,
    \end{multlined}
  \notag
  \\
  \mathbb{x}_d
  &  \mapsto
    x_d + x_d^{~-1}
    ,
    \notag
  \\
  \mathbb{y}_d
  & \mapsto
    \begin{multlined}[t]
      G(x,x_d)^{-1} 
      \left( \gamma(x,x_d) \, \eth_d
        +
        \gamma(x,x_d^{~-1}) \, \eth_d^{~-1}
      \right)  G(x,x_d)
      \\
      =
      \gamma_1(x_d) \, \eth_d + \gamma_2(x,x_d) \,  \eth_d^{~-1}
      ,
    \end{multlined}
  \label{glue_genus_two}
\end{align}
where
$\gamma(x,x_u)$, $\gamma_1(x_u)$,
and $\gamma_2(x, x_u)$ are defined in~\eqref{define_gamma}
and~\eqref{define_gamma_a}.
The functions $\beta(x,x_u,x_d)$ and $\varphi(x,x_u,x_d)$ come from
the Askey--Wilson operator~\eqref{AW_operator}
with $\boldsymbol{t}_\star$~\eqref{t_star},
\begin{gather*}
  \beta(x,x_u,x_d)
  =
  \frac{
    q^{-\frac{1}{2}} + x
  }{
    \left( 1- q^{\frac{1}{2}} x \right)
    \left( 1 - x^2 \right)
  } 
  \left(x_u + q^{\frac{1}{2}} x \, x_u^{~-1} \right) \,
  \left(x_d + q^{\frac{1}{2}} x \, x_d^{~-1} \right)
  ,
  \\
  \varphi(x,x_u,x_d)
  =
  \frac{2}{
    \left( q^{-\frac{1}{2}} - x^{-1} \right)
    \left(1 -q^{\frac{1}{2}} x \right)
  } 
  \left( x_u + x_u^{~-1} \right)
  \left( x_d + x_d^{~-1} \right)
  .
\end{gather*}
We  should note that $\mathbb{y}$ is represented by the Askey--Wilson
difference operator~\eqref{AW_operator},
while $\mathbb{y}_u$ and $\mathbb{y}_d$
are   the (conjugated)
$A_1$-type Macdonald difference operators~\eqref{A1_Macdonald}.
The $q$-difference operators $\eth_u$ and $\eth_d$ are
for
$x_u$ and $x_d$ respectively
\begin{gather*}
  \eth_u \, \eth_d = \eth_d \, \eth_u
  ,
  \\
  \eth_u \,
  f(x,x_u,x_d) = f(x,q_u x_u, x_d)
  ,
  \\
  \eth_d \,
  f(x,x_u,x_d) =
  f(x,x_u,q_d x_d )
  ,
\end{gather*}
where we mean  $q_u=q_d=q^{\frac{1}{2}}$~\eqref{qu_and_q}.
Note that the representation for $\mathbb{y}$ is symmetric
$x_u\leftrightarrow x_d$, and that the associated Askey--Wilson polynomial for
$\boldsymbol{t}_\star$~\eqref{t_star} is
written in a symmetric form by use of the Sears'
transformation formula~\cite{GaspRahm04} to~\eqref{def_AW_poly}
as
\begin{equation}
  P_m(x; q, x_u, x_d)
  =
  (-1)^m q^{- \frac{m}{2}}
  \frac{
    \left( \frac{q}{x_u^{~2}} ,
      \frac{q}{x_d^{~2}} , q; q \right)_m
  }{
    \left( \frac{q^{m+1}}{x_d^{~2} x_u^{~2}}; q \right)_m
  } \,
  {}_4 \phi_3
  \left[
    \begin{matrix}
      q^{-m} ,
      \
      \frac{q^{m+1}}{x_d^{~2} x_u^{~2}} ,
      \
      -q^{\frac{1}{2}} x ,
      \
      - q^{\frac{1}{2}} x^{-1}
      \\
      \frac{q}{x_d^{~2}} ,
      \
      \frac{q}{x_u^{~2}} ,
      \
      q
    \end{matrix}
    ;
    q, q
  \right]
  .
\end{equation}

By construction,
both skein relations of
the $\Sigma_{0,4}$-type~\eqref{04type_genus2}
and
the $\Sigma_{1,1}$-type~\eqref{torus11_type_genus2}
are fulfilled by
the above representation~\eqref{glue_genus_two}.
Moreover the above Askey--Wilson operator for $\mathbb{y}$ is symmetric
in $x_u \leftrightarrow x_d$, and
the consistency
conditions~\eqref{consistency_genus2} are satisfied thanks to
the results for the skein algebra on   $\Sigma_{1,2}$ in the previous section.
Hence a
non-trivial  is for the simple closed curve~$\widetilde{\mathbb{y}}$ in
Fig.~\ref{fig:torus_two_punctures}.
To give a representation, we suppose that
\begin{equation}
  \label{ansatz_for_yall}
  \widetilde{\mathbb{y}}
  \mapsto
  \sum_{\varepsilon\in \{-, 0, +\}}
  \sum_{ \varepsilon_u, \varepsilon_d=\pm 1}
  a_{\varepsilon, \varepsilon_u, \varepsilon_d}(x,x_u,x_d) \,
  \eth^{\varepsilon} \,
  \eth_u^{~\varepsilon_u} \,
  \eth_d^{~\varepsilon_d}
  ,
\end{equation}
and  consider a condition for
the first identity of~\eqref{skein_y_yall}.
Equating each coefficient of $\eth \, \eth_u^{~2} \, \eth_d^{-1}$,
$\eth\, \eth_u^{~-2}\,\eth_d^{-1}$,
$\eth \, \eth_d^{-1}$,
we have the following functional equations
for~$a_{+ \pm -}(x,x_u,x_d)$;
\begin{gather*}
  \frac{
    a_{++-}(x, q_u x_u, x_d)
  }{
    a_{++-}(x, x_u, x_d)
  }
  =
  \frac{\gamma_1(  q_u x_u)}{
    \gamma_1(x_u)}
  ,
  \\
  \frac{
    a_{+--}(x, q_u^{~-1} \, x_u, x_d)
  }{
    a_{+--}(x, x_u, x_d)
  }
  =
  \frac{
    \gamma_2(q x, q_u^{~-1}\, x_u)
  }{
    \gamma_2(x,x_u)
  }
  ,
  \\
  \begin{multlined}[b]
    \left( q_u^{~-1} x_u+q_u x_u^{~-1}\right)
    \left\{
      \gamma_1( q_u^{~-1} x_u) \,
      a_{+--}(x,x_u,x_d)
      -
      \gamma_2(x,x_u) \, a_{++-}(x,q_u^{~-1} x_u, x_d)
    \right\}
    \\
    -
    \left( q_u  \, x_u+ q_u^{~-1}  x_u^{~-1}\right)
    \left\{
      \gamma_1(x_u) \, a_{+--}(x, q_u \, x_u, x_d)
      -
      \gamma_2(q \, x, q_u x_u ) \,
      a_{++-}(x,x_u,x_d)
    \right\}
    \\
    =
    \I \,  \frac{
      (1-q)  \left( 1+q^{\frac{1}{2}} x  \right)
      \left( q^{\frac{1}{2}} x+x_d^{~2} \right)
      \left( q^{\frac{3}{2}} x+x_d^{~2} \right)
      \left( q^{\frac{1}{2}} x+x_u^{~2} \right)
    }{
      q^{\frac{9}{4}} x \left( 1-x^2 \right)
      \left( 1-x_d^{~2} \right) x_u
    } .
  \end{multlined}
\end{gather*}
The first two are solved to be
\begin{gather*}
  a_{+--}(x,x_u,x_d)
  =
  \tilde{a}_{+--}(x,x_d) \,
  \frac{
    \left( q^{\frac{1}{2}} x + x_u^{~2} \right)
    \left( q^{\frac{3}{2}} x + x_u^{~2} \right)
  }{
    1-x_u^{~2}
  } 
  ,
  \\
  a_{++-}(x,x_u,x_d)
  =
  \tilde{a}_{++-}(x,x_d) \, \frac{1}{1-x_u^{~2}}
  .
\end{gather*}
Due to that functions $\tilde{a}_{+++}$ and $\tilde{a}_{+-+}$ do not
depend on $x_u$,
we can solve them from the above third equations,
\begin{gather*}
  \tilde{a}_{+--}(x,x_d)
  =
  \frac{
    \left( 1+q^{\frac{1}{2}} x \right)
    \left( q^{\frac{1}{2}} x + x_d^{~2} \right)
    \left( q^{\frac{3}{2}} x + x_d^{~2} \right)
  }{
    q^{\frac{5}{2}} x \left( 1-x^2 \right)
    \left( 1-q^{\frac{1}{2}}x \right)
    \left( 1-x_d^{~2} \right)
  },
  \\
  \tilde{a}_{++-}(x,x_d)
  = -q \, x \,
  \tilde{a}_{+--}(x,x_d) .
\end{gather*}
In this manner, we get
$a_{\pm, \varepsilon_u, \varepsilon_d}(x,x_u,x_d)$.

For $a_{0,\pm,\pm}(x,x_u,x_d)$, we see that
a sum for $\varepsilon \neq 0$ in~\eqref{ansatz_for_yall} commute with
both~$\mathbb{y}_u$ and~$\mathbb{y}_d$.
Consequently
we can suppose that a sum for $\varepsilon=0$ in~\eqref{ansatz_for_yall}
has a form of 
\begin{equation*}
  \psi(x) \,
  \left(
    \gamma_1(x_d) \, \eth_d + \gamma_2(x, x_d) \, \eth_d^{~-1}
  \right)
  \left(
    \gamma_1(x_u) \, \eth_u + \gamma_2(x, x_u) \, \eth_u^{~-1}
  \right) .
\end{equation*}
A commutativity between $\mathbb{y}$ and
$\widetilde{\mathbb{y}}$ solves $\psi(x)$, and as a result we obtain
\begin{multline}
  \label{define_ya}
  \widetilde{\mathbb{y}}
  \mapsto
  \left(
     \kappa(x_d) \, \eth_d+    \lambda(x,x_d) \, \eth_d^{~-1}
  \right)
  \left(
     \kappa(x_u) \, \eth_u+    \lambda(x,x_u) \, \eth_u^{~-1}
  \right) \, \omega(x) \, \eth
  \\
  +
  \left(
    \kappa(x_d) \, \eth_d +    \lambda(x^{-1},x_d) \, \eth_d^{~-1}
  \right)
  \left(
    \kappa(x_u) \, \eth_u +    \lambda(x^{-1},x_u) \, \eth_u^{~-1}
  \right) \, \omega(x^{-1}) \, \eth^{-1}
  \\
  +
  \psi(x) \,
  \left(
    \gamma_1(x_d) \, \eth_d + \gamma_2(x, x_d) \, \eth_d^{~-1}
  \right)
  \left(
    \gamma_1(x_u) \, \eth_u + \gamma_2(x, x_u) \, \eth_u^{~-1}
  \right)
  ,
\end{multline}
where
\begin{align}
  \omega(x) & =
  \frac{x \left( 1 + q^{\frac{1}{2}} x \right)}{
    q^{\frac{1}{2}} \left( 1 - x^2 \right)
    \left( 1 - q^{\frac{1}{2}} x  \right)
  }
  ,
              &
  \psi(x) & = \frac{2 \, x}{
    \left( 1 - q^{-\frac{1}{2}} x \right)
    \left( 1 - q^{\frac{1}{2}} x \right)
  }
  ,
  \\
  \lambda(x,x_u) &=
  \frac{
    \left( q^{\frac{1}{2}} x + x_u^{~2} \right)
    \left( q^{\frac{3}{2}} x + x_u^{~2} \right)
  }{
    q\, x  \left(1 - x_u^{~2} \right)
  } ,
                   &
  \kappa(x_u) &= \frac{-1}{1-x_u^{~2}}
  .
                \notag
\end{align}
With the setting~\eqref{define_ya},
we can check the relations~\eqref{condition_ya} by tedious
computations.

As a result, we have obtained the following theorem.
\begin{theorem}
  We have a representation of
  $\KBS_A(\Sigma_{2,0})$
  by~\eqref{glue_genus_two},~\eqref{define_ya}
  with  $A^2=q^{-\frac{1}{2}}$.
\end{theorem}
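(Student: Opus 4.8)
The plan is to verify that the assignment~\eqref{glue_genus_two}, extended by~\eqref{define_ya}, satisfies every defining relation of $\KBS_A(\Sigma_{2,0})$, inheriting as much as possible from the twice-punctured torus of Section~\ref{sec:twice_torus} and isolating the genuinely new work in the curve $\widetilde{\mathbb{y}}$. Specializing the parameters $\boldsymbol{t}_\natural$ of Theorem~\ref{thm:Sigma12} to $\boldsymbol{t}_\star$~\eqref{t_star}, i.e.\ gluing the boundary circles $\mathbb{b}_3\approx\mathbb{b}_4$ of $\Sigma_{1,2}$ to the single curve $\mathbb{x}_d$ by setting $x_\ell=x_r=x_d$, at once yields the $\Sigma_{0,4}$-type relations~\eqref{04type_genus2}, the $\Sigma_{1,1}^u$-type relations~\eqref{torus11_type_genus2} and the generation identity~\eqref{11_generate_x_genus2} for $\circledcirc=u$, the commutativities~\eqref{genus2_x_y} for $\circledcirc=u$, and the consistency conditions~\eqref{consistency_genus2} for $\circledcirc=u$. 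For the second once-punctured torus I would note that $\mathbb{y}_d$ is built by the same recipe as $\mathbb{y}_u$ --- conjugating the $A_1$-type Macdonald operator of $\SH_{q_u,\I q_u^{1/2} x^{1/2}}$ by the gluing function, with $x_u$ and $G(x,x_u)$ replaced by $x_d$ and $G(x,x_d)$ --- and that $G(x,x_d)$ obeys the identical $q$-difference equations~\eqref{difference_G}; hence the computations of Section~\ref{sec:twice_torus} carry over verbatim to give~\eqref{torus11_type_genus2} and~\eqref{11_generate_x_genus2} for $\circledcirc=d$. Since the Askey--Wilson operator representing $\mathbb{y}$ at $\boldsymbol{t}_\star$ is symmetric under $x_u\leftrightarrow x_d$, the consistency conditions~\eqref{consistency_genus2} for $\circledcirc=d$ follow from those for $\circledcirc=u$ by exchanging the two tori, and the commutativities $\mathbb{x}_u\mathbb{x}_d=\mathbb{x}_d\mathbb{x}_u$, $\mathbb{y}_u\mathbb{x}_d=\mathbb{x}_d\mathbb{y}_u$ are immediate since $\eth_u$ does not act on $x_d$.

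The new content is the formula~\eqref{define_ya}, which I would derive from the ansatz~\eqref{ansatz_for_yall}. Imposing the first identity of~\eqref{skein_y_yall} and equating coefficients of the monomials $\eth\,\eth_u^{\pm2}\,\eth_d^{-1}$ and $\eth\,\eth_d^{-1}$ (together with their images under $x\to x^{-1}$ and $x_u\leftrightarrow x_d$) produces first-order $q$-difference functional equations for the coefficient functions $a_{\pm,\varepsilon_u,\varepsilon_d}(x,x_u,x_d)$; solving these, with the residual freedom fixed by demanding that $\widetilde{\mathbb{y}}$ preserve $\mathbb{C}[x+x^{-1}]$ and carry no extraneous poles, produces the $\varepsilon=\pm1$ terms. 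For the $\varepsilon=0$ part one observes that it must commute with both $\mathbb{y}_u$ and $\mathbb{y}_d$, hence factors as $\psi(x)\bigl(\gamma_1(x_d)\eth_d+\gamma_2(x,x_d)\eth_d^{-1}\bigr)\bigl(\gamma_1(x_u)\eth_u+\gamma_2(x,x_u)\eth_u^{-1}\bigr)$, and the scalar $\psi(x)$ is then pinned down by commutativity with the Askey--Wilson operator $\mathbb{y}$, which is the second identity of~\eqref{skein_y_yall}. Assembling the pieces gives~\eqref{define_ya}.

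The main obstacle --- and the only genuinely laborious step --- is to confirm that the explicit operator~\eqref{define_ya} actually satisfies the full set of relations for $\widetilde{\mathbb{y}}$, namely~\eqref{condition_ya} (commutativity with $\mathbb{y}_\circledcirc$ and the two quadratic relations with $\mathbb{x}_\circledcirc$ for $\circledcirc\in\{u,d\}$) together with both identities of~\eqref{skein_y_yall}; the fact that certain coefficients of these relations were used to solve for the $a$'s does not make the remaining coefficients automatic. Each such relation is an identity of $q$-difference operators in the three variables $x,x_u,x_d$; expanding both sides as finite sums $\sum a_{ijk}\,\eth^i\eth_u^j\eth_d^k$ and matching coefficients reduces it to a finite list of rational-function identities in $x,x_u,x_d,q$, which are checked directly --- this is the ``tedious'' computation referred to in the statement. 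Once~\eqref{condition_ya} and~\eqref{skein_y_yall} are confirmed, together with the inherited relations every defining relation of $\KBS_A(\Sigma_{2,0})$ holds, so the assignment is a representation, as claimed.
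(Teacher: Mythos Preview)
Your proposal is correct and follows essentially the same route as the paper: inherit the $\Sigma_{0,4}$- and $\Sigma_{1,1}^u$-type relations by specializing Theorem~\ref{thm:Sigma12} to $\boldsymbol{t}_\star$, duplicate the gluing construction with $G(x,x_d)$ for the $d$-copy, exploit the $x_u\leftrightarrow x_d$ symmetry of the Askey--Wilson operator to transfer the consistency conditions, and then determine $\widetilde{\mathbb{y}}$ from the ansatz~\eqref{ansatz_for_yall} by reading off functional equations from~\eqref{skein_y_yall} (coefficients of $\eth\,\eth_u^{\pm2}\,\eth_d^{-1}$ and $\eth\,\eth_d^{-1}$), fixing the $\varepsilon=0$ part via commutativity with $\mathbb{y}_\circledcirc$ and $\mathbb{y}$, and finishing with a direct verification of~\eqref{condition_ya}. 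The paper's argument is organized in exactly this way, with the same coefficient extractions and the same final ``tedious computation''.
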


% \begin{proof}
%   All  skein relations in the definition
%   can be checked directly.
% \end{proof}

This representation denotes the difference
operators on $\mathbb{C}[x+x^{-1}]$.
As representation on $\mathbb{C}[x_u+x_u^{-1}, x_d+x_d^{-1}]$,
we take a conjugation
$h \mapsto
G\,h \, G^{-1}$
with
\begin{equation}
  \label{GGGud}
  G=G(x,x_u) \, G(x,x_d) .  
\end{equation}
Using~\eqref{t_and_x_2},
we have the following representation.

\begin{coro}
  We have a representation of
  $\KBS_A(\Sigma_{2,0}) $
  with $A^2 = q_u^{~-1}$ by
  \begin{align}
    &
      \mathbb{x}_\circledcirc
      \mapsto
      \mathsf{X}_\circledcirc + \mathsf{X}_\circledcirc^{~-1}
      ,
      \notag
    \\
    &
      \mathbb{y}_\circledcirc
      \mapsto
      \mathsf{Y}_\circledcirc + \mathsf{Y}_\circledcirc^{~-1},
      \qquad \text{for $\circledcirc \in \{u, d \}$},
    \\
    \label{y_SH_t_qu}
    &
      \begin{multlined}[b]
        \mathbb{y}
        \mapsto
        -\frac{ q_u \, t^4 \left( 1-t^2 \right)}{
          \left( 1 + t^2 \right) \left(  q_u^{~2} - t^4 \right)}
        \left\{        
          \prod_{\circledcirc\in \{u,d \}}
          \sh \left( t \, \mathsf{X}_\circledcirc \right)
          \sh \left( t^{-1} \, \mathsf{X}_\circledcirc \right)
          % \left( t \, \mathsf{X}_\circledcirc - t^{-1}\mathsf{X}_\circledcirc^{~-1}\right)
          % \left( t^{-1} \, \mathsf{X}_\circledcirc - t \, \mathsf{X}_\circledcirc^{~-1}\right)
        \right\} \eth_t
        \\
        -\frac{q_u^{~3} \left( q_u^{~2} -t^2 \right)}{
           \left( q_u^{~2} +t^2 \right)
          \left(  q_u^{~2} -t^4 \right)
        } \, \eth_t^{~-1}
        -
        \frac{2 \, q_u \, t^2}{
          \left( t^2 + q_u^{~2}  \right) \left( t^2+1 \right)
        } \,
        \left\{
          \prod_{\circledcirc\in \{u,d\}}
          \ch \left( \mathsf{X}_\circledcirc \right)
%          \left( \mathsf{X}_\circledcirc + \mathsf{X}_\circledcirc^{~-1} \right)
        \right\}
        ,
      \end{multlined}
    \\
    \label{ya_SH_t_qu}
    &
      \begin{multlined}[b]
        \widetilde{\mathbb{y}}
        \mapsto
        \frac{q_u \, t^4 \left( 1-t^2 \right)}{
          \left( 1+t^2 \right) \left( q_u^{~2} -t^4 \right)}
        \left\{
          \prod_{\circledcirc\in\{u,d\}}
          t^{-1}
          \sh \left( t^{-1} \mathsf{Y}_\circledcirc \right)
          \sh \left(t^{-1} \mathsf{X}_\circledcirc \right)
          % \left(
          %   t^{-1} \mathsf{Y}_\circledcirc - t \, \mathsf{Y}_\circledcirc^{~-1}
          % \right)
          % \left(
          %   t^{-1} \mathsf{X}_\circledcirc - t \, \mathsf{X}_\circledcirc^{~-1}
          % \right)
        \right\} \eth_t
        \\
        +
        \frac{q_u^{~3} \left(q_u^{~2} - t^2 \right)}{
          \left( q_u^{~2} + t^2 \right)
          \left(  q_u^{~2}-t^4 \right)}
        \left\{
          \prod_{\circledcirc\in \{u,d \}}
          \frac{t}{
            \sh \left( t^{-1} \mathsf{X}_\circledcirc \right)}
          \sh \left( t \, \mathsf{Y}_\circledcirc \right)
          % \frac{t}{
          %   t^{-1} \mathsf{X}_\circledcirc - t \, \mathsf{X}_\circledcirc^{~-1}
          % }
          % \left(
          %   t \, \mathsf{Y}_\circledcirc - t^{-1} \mathsf{Y}_\circledcirc^{~-1}
          % \right)
        \right\}
        \eth_t^{~-1}
        \\
        -
        \frac{2 \, q_u \, t^2}{
          \left(  q_u^{~2} +t^2  \right)
          \left(  1+t^2  \right)
        } \,
        \left\{
          \prod_{\circledcirc \in \{u, d\}}
          \ch \left( \mathsf{Y}_\circledcirc \right)
          % \left(
          %   \mathsf{Y}_\circledcirc
          %   + \mathsf{Y}_\circledcirc^{~-1}
          % \right)
        \right\}
        .
      \end{multlined}
  \end{align}
Here
$\left\{
  \mathsf{X}_\circledcirc,
  \mathsf{Y}_\circledcirc,
  \mathsf{T}_\circledcirc
\right\}$
are generators of $\mathcal{H}_{q_u,t}$, and
these representations preserve
symmetric space  $\mathbb{C}[x_u+x_u^{~-1}, x_d+x_d^{~-1}]$.
Recalling~\eqref{lower_raise},
these are explicitly written as operators on the symmetric polynomial space,
\begin{align}
  &
    \mathbb{x}_\circledcirc
    \mapsto
    x_\circledcirc + x_\circledcirc^{~-1}
    ,
    \notag
  \\
  &
    \mathbb{y}_\circledcirc
    \mapsto
    \frac{ t\, x_\circledcirc - t^{-1} x_\circledcirc^{~-1}}{
    x_\circledcirc - x_\circledcirc^{~-1}
    }\, \eth_\circledcirc
    +
    \frac{t^{-1} x_\circledcirc - t \, x_\circledcirc^{~-1}}{
    x_\circledcirc - x_\circledcirc^{~-1}
    }
    \, \eth_\circledcirc^{~-1}
    ,
     \qquad
    \text{for $\circledcirc \in \{u, d\},$}
  \\
  %%% 
      % \mathbb{x}_d
      % & \mapsto
      % x_d + x_d^{~-1}
      % ,
      % \\
      % \mathbb{y}_d
      % & \mapsto
      % \gamma(x,x_d) \, \eth_d
      % + \gamma(x,x_d^{~-1}) \, \eth_d^{~-1}
      % ,
      % \\
  &
    \mathbb{x}
    \mapsto
    -q_u t^{-2}-q_u^{~-1} t^2
    ,
    \notag
  \\
  &
    \begin{multlined}[b]
      \mathbb{y}
      \mapsto
      -\frac{ q_u \, t^4 \left( 1-t^2 \right)}{
        \left( 1+ t^2 \right) \left(  q_u^{~2} - t^4 \right)}
      \left\{        
        \prod_{\circledcirc\in \{u,d \}}
        \left( t \, x_\circledcirc - t^{-1}x_\circledcirc^{~-1}\right)
        \left( t^{-1} \, x_\circledcirc - t \, x_\circledcirc^{~-1}\right)
      \right\} \eth_t
      \\
      -\frac{q_u^{~3} \left( q_u^{~2} - t^2 \right)}{
         \left( q_u^{~2} + t^2 \right)
        \left(  q_u^{~2} -t^4 \right)
      } \, \eth_t^{~-1}
      -
      \frac{2 \, q_u \, t^2}{
        \left( q_u^{~2} + t^2 \right) \left( 1+t^2 \right)
      } \,
      \prod_{\circledcirc\in \{u,d\}}
      \left( x_\circledcirc + x_\circledcirc^{~-1} \right)
      ,
    \end{multlined}
  \\
      % -\frac{1+q^{\frac{1}{2}} x
      % }{
      %   q^{\frac{1}{2}}
      %   \left( 1-q^{\frac{1}{2}} x \right)
      %   \left( 1-x^2 \right)
      % }
      % \frac{
      %   \left(q^{\frac{1}{2}} x + x_u^{~2} \right)
      %   \left( 1+q^{\frac{1}{2}} x \, x_u^{~2} \right)
      % }{
      %   x_u^{~2}
      % }
      % \frac{
      %   \left(q^{\frac{1}{2}} x + x_d^{~2} \right)
      %   \left( 1+q^{\frac{1}{2}} x \, x_d^{~2} \right)
      % }{
      %   x_d^{~2}
      % } \, \eth
      % \\
      % & \qquad
      % - \frac{
      %   q^{\frac{1}{2}} \left( q^{\frac{1}{2}}+x \right)
      % }{
      %   \left( q^{\frac{1}{2}}-x \right)
      %   \left(1 - x^2 \right)
      % } \,  \eth^{-1}
      % -
      % \frac{2 q^{\frac{1}{2}}}{
      %   \left( 1 - q^{\frac{1}{2}} x^{-1} \right)
      %   \left( 1 - q^{\frac{1}{2}} x \right)
      % }
      % \left(x_u+x_u^{~-1} \right)
      % \left(x_d+x_d^{~-1} \right)
      % ,
  %%%%%%%%%%%%
      %%%%
  &
    \begin{multlined}
      \widetilde{\mathbb{y}}
      \mapsto
      \frac{q_u \, t^4\left( 1- t^2 \right)}{
        \left( 1+ t^2 \right) \left( q_u^{~2} - t^4 \right)}
      \left\{
        \prod_{\circledcirc\in\{u,d\}}
        \left(
          \frac{
            \left( 1 - t^2 x_\circledcirc^{~2} \right)
            \left( 1-q_u^{~2} t^2 x_\circledcirc^{~2} \right)
          }{
            q_u\, t^2 x_\circledcirc \left( x_\circledcirc^{~2} -1 \right)
          } \,
          \eth_\circledcirc
          -
          \frac{
            \left( t^2-x_\circledcirc^{~2} \right)
            \left( t^2 q_u^{~2} - x_\circledcirc^{~2} \right)
          }{
            q_u\, t^2 x_\circledcirc \left( x_\circledcirc^{~2} -1 \right)
          } \, 
          \eth_\circledcirc^{~-1}
        \right)
      \right\} \eth_t
      \\
      +
      \frac{q_u^{~3} \left(q_u^{~2} - t^2 \right)}{
        \left( q_u^{~2} +t^2 \right)
        \left(  q_u^{~2} -t^4 \right)}
      \left\{
        \prod_{\circledcirc\in \{u,d \}}
        \frac{x_\circledcirc}{x_\circledcirc^{~2}-1}
        \left( \eth_\circledcirc - \eth_\circledcirc^{~-1} \right)
      \right\}
      \eth_t^{~-1}
      \\
      -
      \frac{2 \, q_u \, t^2}{
        \left( q_u^{~2} + t^2 \right)
        \left( 1 + t^2 \right)
      } \,
      \left\{
        \prod_{\circledcirc \in \{u, d\}}
        \left(
          \frac{ t\, x_\circledcirc - t^{-1} x_\circledcirc^{~-1}}{
            x_\circledcirc - x_\circledcirc^{~-1}
          }\, \eth_\circledcirc
          +
          \frac{t^{-1} x_\circledcirc - t \, x_\circledcirc^{~-1}}{
            x_\circledcirc - x_\circledcirc^{~-1}
          }
          \, \eth_\circledcirc^{~-1}
        \right)
      \right\}
      .
    \end{multlined}
    % \frac{1 +q^{\frac{1}{2}} x}{
      % q^{\frac{3}{2}} \left( 1 - x^2 \right)
      %   \left( 1- q^{\frac{1}{2}} x \right)
      % } \,
      % \left( \alpha(x, x_u^{~-1}) \, \eth_u
      %   + \alpha(x,x_u) \eth_u^{~-1}
      % \right)
      % \left( \alpha(x, x_d^{~-1}) \, \eth_d
      %   + \alpha(x,x_d) \eth_d^{~-1}
      % \right)  \eth
      % \\
      % & \qquad
      % +\frac{
      %   q^{\frac{1}{2}} \left( q^{\frac{1}{2}}+x \right)
      % }{
      %   \left(1-x^2 \right)
      %   \left(q^{\frac{1}{2}} - x \right)
      % } \,
      % \frac{x_u}{1-x_u^{~2}} 
      % \left( \eth_u - \eth_u^{~-1} \right) \, 
      % \frac{x_d}{1-x_d^{~2}}
      % \left( \eth_d -\eth_d^{~-1} \right) \,
      % \eth^{-1}
      % \\
      % & \qquad \qquad
      % +
      % \frac{2 \, x}{
      %   \left( 1 - q^{-\frac{1}{2}} x \right)
      %   \left( 1 - q^{\frac{1}{2}} x \right)
      % }
      % \left(
      %   \gamma(x, x_u) \, \eth_u
      %   +
      %   \gamma(x,x_u^{~-1}) \,   \eth_u^{~-1}
      % \right)
      % \left(
      %   \gamma(x, x_d)  \, \eth_d
      %   +
      %   \gamma(x,x_d^{~-1} ) \, 
      %   \eth_d^{~-1}
      % \right)
\end{align}
  % where
  % \begin{equation*}
  %   \alpha(x,x_u)
  %   =
  %   \frac{
  %     \left(1 + q^{\frac{1}{2}} x \, x_u^{~2} \right)
  %     \left( 1 + q^{\frac{3}{2}} x \, x_u^{~2} \right)
  %   }{
  %     x_u \left(1-x_u^{~2} \right)
  %   }
  % \end{equation*}
\end{coro}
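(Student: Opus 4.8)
The plan is to derive the stated representation directly from the Theorem just proved — the representation of $\KBS_A(\Sigma_{2,0})$ on $\mathbb{C}[x+x^{-1}]$ given by \eqref{glue_genus_two} and \eqref{define_ya} — by applying the conjugation $h\mapsto G\,h\,G^{-1}$ with $G=G(x,x_u)\,G(x,x_d)$ of \eqref{GGGud} and then substituting $x=-t^2/q_u$ as in \eqref{t_and_x_2}. As in the passage from Theorem~\ref{thm:Sigma12} to its Corollary \eqref{rep11_xu}, conjugation by $G$ exchanges the roles of the $x$-symmetry and the $(x_u,x_d)$-symmetry, turning difference operators in $x$ into difference operators in $x_u,x_d$ with $t$ (and the shift $\eth_t$) now playing the auxiliary role; since conjugation by an invertible operator and a change of variables are algebra isomorphisms, the resulting operators automatically satisfy all the skein relations of $\KBS_A(\Sigma_{2,0})$ verified in that Theorem, so it only remains to compute the conjugated operators and match them with the displayed formulas. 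The computational toolkit is: $G(x,x_u)$ and $G(x,x_d)$ are commuting multiplication operators; $\eth_u$ commutes with $G(x,x_d)$ and $\eth_d$ with $G(x,x_u)$; and the two $q$-difference equations \eqref{difference_G} control the conjugates of $\eth^{\pm1}$ (in $x$) and $\eth_\circledcirc^{\pm1}$ (in $x_\circledcirc$). Note also that under $x=-t^2/q_u$ one has $q^{1/2}x=-t^2$, and the $x$-shift $\eth$ becomes precisely the parameter shift $\eth_t$ of \eqref{shift_t}, while $\eth_u,\eth_d$ stay independent.

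The cases $\mathbb{x}_\circledcirc$, $\mathbb{y}_\circledcirc$, $\mathbb{x}$ will be immediate. The operator $\mathbb{x}_\circledcirc\mapsto x_\circledcirc+x_\circledcirc^{-1}$ is multiplication, hence unchanged: $\mathbb{x}_\circledcirc\mapsto\mathsf{X}_\circledcirc+\mathsf{X}_\circledcirc^{-1}$; likewise $\mathbb{x}=x+x^{-1}\mapsto -q_u^{-1}t^2-q_u t^{-2}$. For $\mathbb{y}_\circledcirc$, recall that \eqref{glue_genus_two} defines it as the conjugate by $G(x,x_\circledcirc)^{-1}$ of the plain $A_1$ Macdonald operator $\gamma(x,x_\circledcirc)\,\eth_\circledcirc+\gamma(x,x_\circledcirc^{-1})\,\eth_\circledcirc^{-1}$ in the variable $x_\circledcirc$; conjugating back by $G$ (the other factor $G(x,x_{\circledcirc'})$, $\circledcirc'\neq\circledcirc$, passing through and cancelling) recovers that operator, which by \eqref{A1_Macdonald} together with the identification $t=\I q_u^{1/2}x^{1/2}$ of \eqref{define_gamma} is exactly $\mathsf{Y}_\circledcirc+\mathsf{Y}_\circledcirc^{-1}$ in the polynomial representation of $\SH_{q_u,t}$, so $\mathbb{y}_\circledcirc\mapsto\mathsf{Y}_\circledcirc+\mathsf{Y}_\circledcirc^{-1}$.

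For $\mathbb{y}$ I would conjugate $-\beta(x,x_u,x_d)\,\eth-\beta(x^{-1},x_u,x_d)\,\eth^{-1}-\varphi(x,x_u,x_d)$: the constant term survives unchanged, while each $\eth^{\pm1}$-term acquires the factor $\frac{(1+q^{1/2}xx_u^2)(1+q^{1/2}xx_d^2)}{x_ux_d}$ (resp. the $q^{-1}$-shifted version) coming from \eqref{difference_G}; combining with $\beta$, using $q^{1/2}x=-t^2$, and rewriting $(x_\circledcirc-t^2x_\circledcirc^{-1})(x_\circledcirc^{-1}-t^2x_\circledcirc)=-t^2\,\sh(t\mathsf{X}_\circledcirc)\,\sh(t^{-1}\mathsf{X}_\circledcirc)$ (with $\mathsf{X}_\circledcirc$ acting by multiplication) reproduces the coefficients of \eqref{y_SH_t_qu}. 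The curve $\widetilde{\mathbb{y}}$ of \eqref{define_ya} is the main work: its three summands are each a product of an $x_u$-factor, an $x_d$-factor (of type $\kappa(x_\circledcirc)\eth_\circledcirc+\lambda(x,x_\circledcirc)\eth_\circledcirc^{-1}$ or $\gamma_1(x_\circledcirc)\eth_\circledcirc+\gamma_2(x,x_\circledcirc)\eth_\circledcirc^{-1}$), and an $x$-factor ($\omega(x)\eth$, $\omega(x^{-1})\eth^{-1}$, or $\psi(x)$). Conjugating by $G$, the $\eth_\circledcirc^{\pm1}$ pick up factors from the second relation in \eqref{difference_G} and the $\eth^{\pm1}$ from the first; after substituting $x=-t^2/q_u$ (so $\eth\mapsto\eth_t$ and $x^{1/2}$ becomes the appropriate power of $t$), the conjugated $x_\circledcirc$-factors should be recognized — via the raising/lowering identities \eqref{lower_raise} — as $t^{-1}\sh(t^{-1}\mathsf{Y}_\circledcirc)\,\sh(t^{-1}\mathsf{X}_\circledcirc)$, $\frac{t}{\sh(t^{-1}\mathsf{X}_\circledcirc)}\sh(t\mathsf{Y}_\circledcirc)$, and $\ch(\mathsf{Y}_\circledcirc)$, exactly as in the twice-punctured-torus computation \eqref{tau_on_yaw}--\eqref{tau_on_yaw_2}, which yields \eqref{ya_SH_t_qu}.

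The hard part will be the bookkeeping in this last step. The raising operator of Prop.~\ref{prop:raising_lowering} secretly carries the parameter shift $t\mapsto q_u t$ (it sends $M_m(\,\cdot\,;q_u,q_u t)\mapsto M_{m+1}(\,\cdot\,;q_u,t)$), and this shift must be commuted past the $\eth_t^{\pm1}$ appearing in $\widetilde{\mathbb{y}}$ — the same manoeuvre as in the proof of \eqref{partial_t_Tau}; one must also track the branch $x^{1/2}=t/(\I q_u^{1/2})$ so that the factors of $\I$ cancel, and confirm that the $\varepsilon=0$ summand $\psi(x)(\cdots)$ of \eqref{define_ya} conjugates into the $\eth_t$-free term $\ch(\mathsf{Y}_u)\ch(\mathsf{Y}_d)$ rather than contributing elsewhere. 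Once the abstract formulas are in hand, the explicit operator forms on $\mathbb{C}[x_u+x_u^{-1},x_d+x_d^{-1}]$ follow by inserting the polynomial representation \eqref{A1_represent} (equivalently \eqref{A1_Macdonald} and \eqref{lower_raise}) for $\mathsf{X}_\circledcirc,\mathsf{Y}_\circledcirc$, and preservation of the symmetric space is inherited: $\mathsf{Y}_\circledcirc+\mathsf{Y}_\circledcirc^{-1}$ and the raising/lowering operators preserve $\mathbb{C}[x_\circledcirc+x_\circledcirc^{-1}]$ by Prop.~\ref{prop:raising_lowering} and Section~\ref{sec:torus}, while $\eth_t$ acts only on the parameter $t$.
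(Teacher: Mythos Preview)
Your proposal is correct and follows exactly the paper's approach: the Corollary is obtained from the preceding Theorem by the conjugation $h\mapsto G\,h\,G^{-1}$ with $G=G(x,x_u)\,G(x,x_d)$ of \eqref{GGGud} and the substitution \eqref{t_and_x_2}, which is precisely what the paper states in the sentence introducing the Corollary. Your identification of the conjugated $\widetilde{\mathbb{y}}$-factors with the raising/lowering operators of Prop.~\ref{prop:raising_lowering} is the right mechanism; the paper additionally remarks (after the Corollary) that \eqref{ya_SH_t_qu} can alternatively be recovered from \eqref{y_SH_t_qu} by applying the Dehn-twist automorphisms $\tau_{R(u)}^{-1}\tau_{R(d)}\tau_{L(u)}\tau_{L(d)}^{-1}$ since $\widetilde{\mathbb{y}}=\mathscr{T}_{\mathbb{x}_u}\mathscr{T}_{\mathbb{x}_d}^{-1}\mathscr{T}_{\mathbb{y}_u}\mathscr{T}_{\mathbb{y}_d}^{-1}(\mathbb{y})$, but this is a supplementary cross-check rather than the primary derivation.
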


% See that,
% from~\eqref{raise} and~\eqref{lower}, the actions of $\mathbb{y}_u$, $\mathbb{y}_d$,
% $\mathbb{y}$, and
% $\widetilde{\mathbb{y}}$ preserves the symmetric space of $x_u$ and $x_d$.

We should note that the representation~\eqref{ya_SH_t_qu} of
$\widetilde{\mathbb{y}}$
can be
recovered by use of the automorphisms of $\mathcal{H}_{q_u,t}$.
As shown  in Fig.~\ref{fig:yall_auto}, we have
$\widetilde{\mathbb{y}}
=\mathscr{T}_{\mathbb{x}_u}\mathscr{T}_{\mathbb{x}_d}^{~-1}
\mathscr{T}_{\mathbb{y}_u}\mathscr{T}_{\mathbb{y}_d}^{~-1}(\mathbb{y})$,
and the DAHA operator for
$\mathbb{y}=\mathbb{c}_{(0,1)}$ is
$\mathbb{c}_{(0,1)} \mapsto \mathcal{A}_{(0,1)}=\ch( \mathsf{Y})$ in~\eqref{y_SH_t_qu}.
Using~\eqref{auto_and_Dehn}, an action
$\tau_{R(u)}^{~-1} \tau_{R(d)} \tau_{L(u)} \tau_{L(d)}^{~-1}$
is
\begin{equation*}
  \left(
    \begin{smallmatrix}
      \mathsf{X}_u \\
      \mathsf{X}_d \\
      \eth_t
    \end{smallmatrix}
  \right)
  \mapsto
  \left(
    \begin{smallmatrix}
      q^{-1} \mathsf{X}_u^{~-1} \mathsf{Y}_u \mathsf{X}_u
      \\
      \mathsf{Y}_d^{~-1}
      \\
      \frac{1}{
        \sh \left(
          t^{-1} q^{-1} \mathsf{X}_u^{~-1} \mathsf{Y}_u
          \mathsf{X}_u
        \right)
      }
      \sh \left( t^{-1} \mathsf{X}_u \right)
      \frac{1}{
        \sh \left( t^{-1} \mathsf{Y}_d^{~-1} \right)
      }
      \sh  \left( t^{-1} \mathsf{X}_d \right) \,
      \eth_t
    \end{smallmatrix}
  \right) .
\end{equation*}
We then  recover~\eqref{ya_SH_t_qu} from~\eqref{y_SH_t_qu} as operators
on $\SH_{q_u,t}\times \SH_{q_u,t}$.

As depicted  in
Fig.~\ref{fig:yall_auto}, the simple closed
curve $\widetilde{\mathbb{y}}$ is also given
from the curve $\mathbb{c}_{(1,1)}$ by
$\widetilde{\mathbb{y}}
= \mathscr{T}_{\mathbb{x}_u}\mathscr{T}_{\mathbb{x}_d}\mathscr{T}_{\mathbb{y}_u}\mathscr{T}_{\mathbb{y}_d}
(\mathbb{c}_{(1,1)})$.
The associated operator 
$\mathbb{c}_{(1,1)}\mapsto \mathcal{A}_{(1,1)}
=
\ch(\mathsf{T}_1 \mathsf{T}^\vee_0)
\in \SH_{q,\boldsymbol{t}_\star}$ 
is explicitly written as the operator on $\mathbb{C}[x+x^{-1}]$,
\begin{multline*}
  \left. \mathcal{A}_{(1,1)} \right|_{\text{sym}}
  =
  -\frac{
    x \left( 1+q^{\frac{1}{2}} x \right)
    \left(q^{\frac{1}{2}} x \, x_u^{~-1}+ x_u \right)
    \left(q^{\frac{1}{2}} x \, x_d^{~-1}+ x_d \right)
  }{
    \left( 1-x^2 \right)
    \left( 1-q^{\frac{1}{2}}x \right)}
  \,
  \eth
  \\
  -
  \frac{
    \left( q^{\frac{1}{2}} + x \right)
    \left( q^{\frac{1}{2}} x_u^{~-1} + x \, x_u \right)
    \left( q^{\frac{1}{2}} x_d^{~-1} + x \, x_d \right)
  }{
    x \left( x^2-1 \right)
    \left( x-q^{\frac{1}{2}} \right)
  } \,
  \eth^{-1}
  +\frac{2 q^{\frac{1}{2}} x
    \left( x_u+x_u^{~-1} \right)
    \left( x_d+x_d^{~-1} \right)
  }{\left( q^{\frac{1}{2}}-x \right)
    \left( 1-q^{\frac{1}{2}}x \right)}
  ,
\end{multline*}
which follows from~\eqref{def_A11} with $\boldsymbol{t}_\star$~\eqref{t_star}.
We take a conjugation
$\mathcal{A}_{(1,1)}   \mapsto G \, \mathcal{A}_{(1,1)} \, G^{-1}$
with the gluing function~\eqref{GGGud},
to have an operator on
$\mathbb{C}[x_u+x_u^{~-1}, x_d+x_d^{~-1}]$.
Making a 
change of variables~\eqref{t_and_x_2},
we
get
\begin{multline}
  % \mathbb{c}_{(1,1)}
  % \mapsto
  G \left. \mathcal{A}_{(1,1)} \right|_{\text{sym}} G^{-1}
  =
  \frac{
    q_u \, t^6 (1-t^2)
  }{ \left( q_u^{~2} - t^4  \right)  \left( 1+t^2 \right)}
  \prod_{\circledcirc \in \{u,d\}}
  \sh\left( t \, \mathsf{X}_\circledcirc \right)
  \sh \left( t^{-1} \, \mathsf{X}_\circledcirc \right) \,
  \eth_t
  \\
  +
  \frac{q_u^{~5} \left(  q_u^{~2} - t^2 \right)}{
    t^2 \left( q_u^{~2} - t^4 \right)
    \left( q_u^{~2} + t^2 \right)
  } \eth_t^{~-1}
  -
  \frac{2 \, q_u \, t^2}{
    \left( q_u^{~2}+ t^2 \right)
    \left( 1+ t^2 \right)
  }
  \prod_{\circledcirc \in \{ u, d \}}
  \ch   \left(  \mathsf{X}_\circledcirc  \right)
  .
\end{multline}
Applying the automorphism
$
\tau_{R(u)}^{~-1}\tau_{R(d)}^{~-1}
\tau_{L(u)}\tau_{L(d)}
$, 
we obtain~\eqref{ya_SH_t_qu} as well.

\begin{figure}[htbp]
  \centering
  \begin{tikzpicture}
    \node (BB) at (0,0) {
      \makecell{
        \includegraphics[scale=.7]{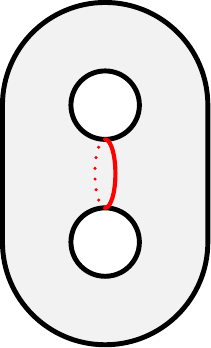}
      }};
    \node[right=16mm of BB] (BB1) {
      \makecell{
        \includegraphics[scale=.7]{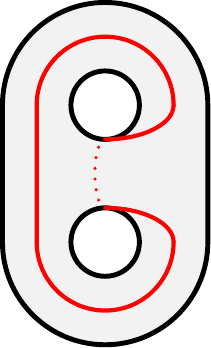}
      }};
    \node[right=16mm of BB1] (BB2) {
      \makecell{
        \includegraphics[scale=.7]{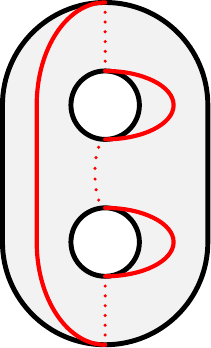}
      }};
    \node[below=3mm of BB] (AA)  {
      \makecell{
        \includegraphics[scale=.7]{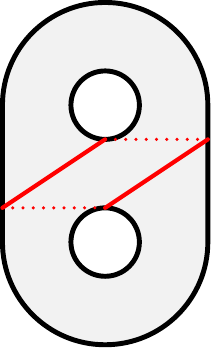}
      }};
    \node[right=16mm of AA] (AA1) {
      \makecell{
        \includegraphics[scale=.7]{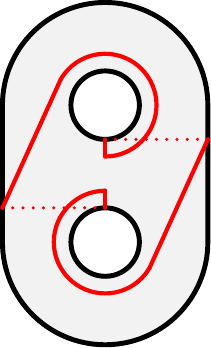}
      }};
    \node[right=16mm of AA1] (AA2) {
      \makecell{
        \includegraphics[scale=.7]{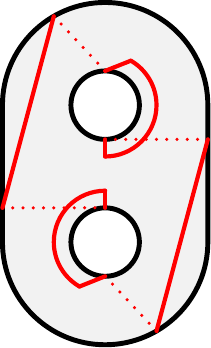}
      }};
    \node[right=8mm  of BB2] (AA3) {
      \makecell{
        \includegraphics[scale=.7]{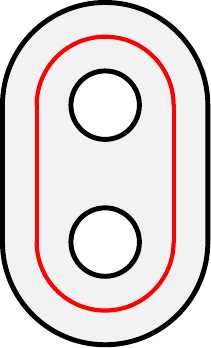}
      }};
    \draw[->] (BB) edge node[below] {\scriptsize
      $\mathscr{T}_{\mathbb{y}_u} \, \mathscr{T}_{\mathbb{y}_d}^{~-1}$}
    (BB1);
    \draw[->] (BB1) edge node[below] {\scriptsize
      $\mathscr{T}_{\mathbb{x}_u} \mathscr{T}_{\mathbb{x}_d}^{~-1}$
    } (BB2);
    \draw[white] (BB2) edge node[black] {$\approx$} (AA3);
    \draw[->] (AA) edge node[below] {\scriptsize $\mathscr{T}_{\mathbb{y}_u} \, \mathscr{T}_{\mathbb{y}_d}$} (AA1);
    \draw[->] (AA1) edge node[below] {\scriptsize
      $\mathscr{T}_{\mathbb{x}_u} \mathscr{T}_{\mathbb{x}_d}$
    } (AA2);
    \draw[white] (AA2) edge node[black,sloped] {$\approx$} (AA3);
  \end{tikzpicture}
  \caption{A simple closed curve $\widetilde{\mathbb{y}}$ is given
    from curves with slope-$1/0$ and $1/1$
    by Dehn twists.
  }
  \label{fig:yall_auto}
\end{figure}

As a simple closed curve~$\widetilde{\mathbb{y}}$ does not intersect
with both~$\mathbb{y}_u$ and~$\mathbb{y}_d$,
the Dehn twists~$\mathscr{T}_{\mathbb{y}_u}$
and~$\mathscr{T}_{\mathbb{y}_d}$
have trivial
actions on~$\widetilde{\mathbb{y}}$,
$\widetilde{\mathbb{y}}=
\mathscr{T}_{\mathbb{y}_u}(\widetilde{\mathbb{y}})
=
\mathscr{T}_{\mathbb{y}_d}(\widetilde{\mathbb{y}})$.
Our representation for $\widetilde{\mathbb{y}}$
is indeed invariant under $\tau_{L(\circledcirc)}$
due to the following
proposition.

\begin{prop}
  \label{coro:A1_invariant_op}
  The operators of $\SH_{q,t}$,
  \begin{equation*}
    t^{-1} \sh (t^{-1}\mathsf{Y}) \,
    \sh (t^{-1}\mathsf{X}) \,
    \eth_t
    % t^{-1} \left( t^{-1} \mathsf{Y} - t \mathsf{Y}^{-1} \right)
    % \left( t^{-1} \mathsf{X} - t \mathsf{X}^{-1} \right)
    % \eth_t
    ,
  \end{equation*}
  and
  \begin{equation*}
    \frac{t}{\sh(t^{-1}\mathsf{X}) } \,
    \sh(t \,\mathsf{Y}) \,
    \eth_t^{~-1} ,
    % \frac{t}{t^{-1} \mathsf{X} - t \mathsf{X}^{-1}
    % }
    % \left( t \mathsf{Y} - t^{-1} \mathsf{Y}^{-1} \right)
    % \eth_t^{~-1}
  \end{equation*}
  are invariant under $\tau_L$.
\end{prop}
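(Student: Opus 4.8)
The plan is to verify both identities as operators on the symmetric space $\mathbb{C}[x+x^{-1}]$, by evaluating on the Macdonald basis $\{M_m(x;q,t)\}_{m\ge 0}$. Recall first from \eqref{lower_raise} that the two operators in question are exactly $\mathsf{K}^{(+)}\eth_t$ and $\mathsf{K}^{(-)}\eth_t^{-1}$ restricted to $\mathbb{C}[x+x^{-1}]$. Since $\eth_t^{\pm1}$ pre-shifts the $t$-parameter of a Macdonald polynomial, \eqref{raise} and \eqref{lower} (the latter using that $\mathsf{K}^{(-)}|_{\mathrm{sym}}=\frac{x}{x^2-1}(\eth-\eth^{-1})$ is independent of $t$) give the clean actions
\begin{gather*}
  \mathsf{K}^{(+)}\eth_t\, M_m(x;q,t)=\big(q^{m+1}t^2-q^{-m-1}t^{-2}\big)\,M_{m+1}(x;q,t),
  \\
  \mathsf{K}^{(-)}\eth_t^{-1}\,M_m(x;q,t)=\big(q^m-q^{-m}\big)\,M_{m-1}(x;q,t).
\end{gather*}
Recall also that $\tau_L$ is conjugation by $U_L=\exp\!\big(-(\log\mathsf{Y})^2/2\log q\big)$, which by \eqref{A1_nonsymmetric} is diagonal on $\mathbb{C}[x+x^{-1}]$: $U_L\,M_m(x;q,t)=c_m\,M_m(x;q,t)$ with $c_m=\exp\!\big(-(\log(t\,q^m))^2/2\log q\big)$. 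The single identity $c_m\big|_{t\mapsto q^{\pm1}t}=c_{m\pm1}$ will make everything fit.

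For the first operator a one-line substitution already works. By \eqref{auto_and_Dehn} (equivalently \eqref{auto_A1} and \eqref{partial_t_Tau}), $\tau_L$ fixes $t$ and $\mathsf{Y}$, sends $\mathsf{X}\mapsto q^{-\frac12}\mathsf{Y}\mathsf{X}$, and sends $\eth_t\mapsto \sh(t^{-1}q^{-\frac12}\mathsf{Y}\mathsf{X})^{-1}\,\sh(t^{-1}\mathsf{X})\,\eth_t$, hence
\[
\tau_L\!\left(t^{-1}\sh(t^{-1}\mathsf{Y})\,\sh(t^{-1}\mathsf{X})\,\eth_t\right)
= t^{-1}\sh(t^{-1}\mathsf{Y})\,\sh(t^{-1}q^{-\frac12}\mathsf{Y}\mathsf{X})\,\sh(t^{-1}q^{-\frac12}\mathsf{Y}\mathsf{X})^{-1}\,\sh(t^{-1}\mathsf{X})\,\eth_t ,
\]
and the two adjacent factors $\sh(t^{-1}q^{-\frac12}\mathsf{Y}\mathsf{X})^{\pm1}$ cancel, leaving the operator unchanged.

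For the second operator I would instead compute $U_L^{-1}\!\left(\tfrac{t}{\sh(t^{-1}\mathsf{X})}\sh(t\,\mathsf{Y})\,\eth_t^{-1}\right)\!U_L$ on $M_m(x;q,t)$, imitating the computation in the proof of \eqref{partial_t_Tau}. Replacing $U_L\,M_m$ by $c_m\,M_m$, the inner $\eth_t^{-1}$ turns it into $c_{m-1}(t)\,M_m(x;q,q^{-1}t)$ (using $c_m(q^{-1}t)=c_{m-1}(t)$); then $\mathsf{K}^{(-)}|_{\mathrm{sym}}$, being a scalar-free $q$-difference operator in $x$, pulls the scalar out and by \eqref{lower} yields $c_{m-1}(t)(q^m-q^{-m})\,M_{m-1}(x;q,t)$; finally $U_L^{-1}$ divides by $c_{m-1}$, giving back $(q^m-q^{-m})\,M_{m-1}(x;q,t)$. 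As $\{M_m(x;q,t)\}_{m\ge0}$ is a $\mathbb{C}(q,t)$-basis of $\mathbb{C}[x+x^{-1}]$, this proves $\tau_L$-invariance of the second operator; the same scheme re-proves the first (the inner $\eth_t$ now promoting $c_m$ to $c_{m+1}$, matched against the degree shift in \eqref{raise}).

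The only delicate point is the bookkeeping around the shift operator $\eth_t^{\pm1}$: one must track that it acts on the scalar eigenvalue $c_m$ of $U_L$ as well as on the parameter of $M_m$, and the whole cancellation rests on the matching $c_m|_{t\mapsto q^{\pm1}t}=c_{m\pm1}$ between the $t$-shift and the degree shift — the mechanism already used for \eqref{partial_t_Tau}. As a consequence, this proposition is precisely what makes the representation \eqref{ya_SH_t_qu} of $\widetilde{\mathbb{y}}$ invariant under $\tau_{L(u)}$ and $\tau_{L(d)}$, in accordance with $\widetilde{\mathbb{y}}$ being disjoint from $\mathbb{y}_u$ and $\mathbb{y}_d$.
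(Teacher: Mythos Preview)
Your proof is correct. For the first operator you use exactly the paper's argument: direct substitution of the $\tau_L$-images of $\mathsf{X}$, $\mathsf{Y}$, $\eth_t$ from~\eqref{auto_A1} and~\eqref{partial_t_Tau}, followed by the obvious cancellation.

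For the second operator your route differs from the paper's. The paper observes that $\mathsf{K}^{(-)}|_{\mathrm{sym}}$ is $t$-independent, hence commutes with $\eth_t$, so one may rewrite the operator as $\eth_t^{-1}\mathsf{K}^{(-)}$ and then apply the \emph{same} direct substitution: $\tau_L(\eth_t^{-1})$ produces a factor $\sh(t^{-1}q^{-1/2}\mathsf{Y}\mathsf{X})$ on the right, which cancels against the denominator coming from $\tau_L(\mathsf{K}^{(-)})$, exactly as in your first computation. You instead re-run the eigenvalue bookkeeping of the proof of~\eqref{partial_t_Tau}, tracking the $U_L$-eigenvalue $c_m$ through the $t$-shift and matching $c_m(q^{-1}t)=c_{m-1}(t)$ against the degree drop from~\eqref{lower}. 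Both arguments hinge on the same fact (the $t$-independence of $\mathsf{K}^{(-)}|_{\mathrm{sym}}$); the paper's version is a two-line algebraic cancellation, while yours is more computational but makes explicit the underlying mechanism --- the parameter-shift/degree-shift coincidence --- that also governs~\eqref{partial_t_Tau}.
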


\begin{proof}
  It is straightforward to see the invariance of the first operator
  using~\eqref{tau_for_A1}
  and~\eqref{partial_t_Tau}.
  For the second, we recall that
  it denotes the lowering operator
  $\mathsf{K}^{(-)}$ which does
  not depend on $t$, and that it commutes with $\eth_t$.
  We then have
  \begin{equation*}
    \eth_t^{~-1} \frac{t}{\sh \left( t^{-1} \mathsf{X} \right)}
    \sh \left( t \, \mathsf{Y}\right)
    \mapsto
    \eth_t^{~-1} \frac{1}{\sh(t^{-1}\mathsf{X})} \,
    \sh (t^{-1}q^{-\frac{1}{2}} \mathsf{Y} \, \mathsf{X} ) \,
    \frac{t}{
      \sh (t^{-1}q^{-\frac{1}{2}} \mathsf{Y} \, \mathsf{X} )
    } \,
    \sh ( t\, \mathsf{Y} ) .
  \end{equation*}
  which proves the invariance under $\tau_L$.
\end{proof}

% we  have genus-two DAHA from two copies of
% $A_1$ DAHA
% $\SH_{q_u,t}$ with $q_u$-shift operator $\eth_t$~\eqref{shift_t} as follows;

% We see that,
% for $\boldsymbol{t}_\star$~\eqref{t_star},
% the expression~\eqref{tilde_y}
% reduces to
% \begin{multline}
%   \tilde{\mathsf{Y}}
%   +
%   \tilde{\mathsf{Y}}^{-1}
%   \mapsto
%   - \frac{x \left( 1+q^{\frac{1}{2}} x \right)}{
%     \left(1-x^2 \right)
%     \left( 1- q^{\frac{1}{2}} x \right)
%   } \,
%   \left( x_u+
%     q^{\frac{1}{2}} x  \, x_u^{~-1}
%   \right)
%   \left( x_d+
%     q^{\frac{1}{2}} x  \, x_d^{~-1}
%   \right)
%   \eth
%   \\
%   -
%   \frac{x+q^{\frac{1}{2}}}{
%     x \left( 1- x^2 \right)
%     \left( q^{\frac{1}{2}} - x \right)
%   } \,
%   \left(
%     x_u \, x + q^{\frac{1}{2}} x_u^{~-1}
%   \right)
%   \left(
%     x_d \, x + q^{\frac{1}{2}} x_d^{~-1}
%   \right)
%   \eth^{-1}
%   \\
%   +
%   \frac{2 \, x}{
%     \left( 1- q^{-\frac{1}{2}} x \right)
%     \left( 1- q^{\frac{1}{2}} x \right)
%   } \,
%   \left( x_u + x_u^{~-1} \right) \,
%   \left( x_d + x_d^{~-1} \right)  .
% \end{multline}

\subsection{Double-Torus Knots}
A simple closed curve on a genus two Heegaard surface in $S^3$ is
called a double-torus knot.
% A simple closed curve $\mathbb{c}$
% on $\Sigma_{2,0}$ is called double torus knot.
A construction of a  non-trivial knot was studied
in~\cite{PHill99a,HillMura00a}, but
a classification  of the double-torus knots  is far from complete.

We shall propose a  DAHA polynomial for the double-torus knot.
We  assign a difference operator
for the simple
closed curve $\mathbb{c}$ as follows.
We suppose that $\mathbb{c}$ is given from ${\mathbb{c}}_{(r,s)}$ by
Dehn twists $\mathscr{T}$ which is generated by
$\mathscr{T}_{\mathbb{x}_u}$, $\mathscr{T}_{\mathbb{y}_u}$,
$\mathscr{T}_{\mathbb{x}_d}$, and $\mathscr{T}_{\mathbb{y}_d}$,
\begin{equation*}
  \mathbb{c}=\mathscr{T}(\mathbb{c}_{(r,s)}) ,
\end{equation*}
and that
a curve~${\mathbb{c}}_{(r,s)}$ does not intersect
with~$\mathbb{x}_u$,~$\mathbb{x}_d$,
Here
we mean that ${\mathbb{c}}_{(r,s)}$ is a simple closed curve on
$\Sigma_{0,4} \subset \Sigma_{2,0}$ with slope $s/r$,
when we regard $\Sigma_{2,0}=\Sigma_{0,4} \cup S^1\times[0,1] \cup
S^1\times[0,1]$ and
the boundary circles in Fig.~\ref{fig:4-hole_sphere} are
$\mathbb{b}_1 \approx \mathbb{b}_2 \approx \mathbb{x}_u$ and
$\mathbb{b}_3 \approx \mathbb{b}_4 \approx \mathbb{x}_d$.
Using the $C^\vee C_1$-type  DAHA $\SH_{q,\boldsymbol{t}_\star}$
with $\boldsymbol{t}_\star$~\eqref{t_star},
we have the $q$-difference operator
\begin{equation*}
  \mathbb{c}_{(r,s)} \mapsto
  \mathcal{A}_{(r,s)}=
  \ch( \mathcal{O}_{(r,s)} )
  \in \SH_{q,\boldsymbol{t}_\star}
  .
\end{equation*}
We then take a conjugation
$G \, \mathcal{A}_{(r,s)} \, G^{-1}$
by the gluing function~\eqref{GGGud},
and make a change of variables~\eqref{t_and_x_2}.
Applying the automorphism $\gamma$ associated to the Dehn twist
$\mathscr{T}$, we have the difference operator
\begin{equation*}
  \mathbb{c} \mapsto \gamma( G\, \mathcal{A}_{(r,s)} \, G^{-1} )
  .
\end{equation*}
We then  define the DAHA polynomial of
$\mathbb{c}$
by
\begin{align}
  \label{define_our_Poly}
  P_n(t, q_u, x_u, x_d; \mathbb{c})
  & =
    \gamma
    \left(
    G \, M_{n-1}(\mathcal{O}_{(r,s)}; q, q) \,
    G^{-1}
    \right) (1)
  \\
  & =
    \gamma
    \left(
    G \, S_{n-1}( \mathcal{A}_{(r,s)}) \, G^{-1}
    \right)
    (1)
    .
    \notag
\end{align}
It is noted that we can  further  take a conjugation by $\Phi$ to be determined,
\begin{equation*}
  \mathbb{c} \mapsto
  \Phi(q_u, t) \, \gamma( G\, \mathcal{A}_{(r,s)} \, G^{-1} ) \,
  \Phi(q_u,  t)^{-1} ,
\end{equation*}
and can modify the definition of the DAHA polynomial as
\begin{equation}
  \label{def_modify_P}
  P_n(t, q_u, x_u, x_d, \Phi; \mathbb{c})
  =
  \gamma \left(
    \Phi(q_u,t) \,
    G \, S_{n-1}(\mathcal{A}_{(r,s)}) \, G^{-1} \,
    \Phi(q_u, t)^{-1}
  \right) (1) .
\end{equation}
We expect that, by suitably choosing the function $\Phi$,
there may exist a relationship between our DAHA
polynomial and a Poincar{\'e} polynomial of knot homology
(see, \emph{e.g.},~\cite{FujiGukoSulk13a}), but we do not
know at this stage.

To see  a relationship with the Jones polynomial,
we pay attention to the constant term $\eth_t^0$ in the operators
$ S_{n-1}(\mathcal{A}_{(r,s)})$.
This extraction is realized in~\eqref{def_modify_P} by putting $t=q_u$ with a condition
\begin{equation}
  \label{Phi_zero}
  \Phi(q_u,q_u)=0 .
\end{equation}
For simplicity, we write
the  reduced DAHA polynomial as the constant term $\eth_t^0$
of
$ S_{n-1}(\mathcal{A}_{(r,s)}) $ as
\begin{equation}
  \label{colored_P_bar}
  \begin{aligned}[t]
    \overline{P}_n(q_u, x_u,x_d; \mathbb{c})
    &
    =
    \gamma
    \left(
      \left.
        \Phi(q_u, t)
        G \, M_{n-1}(\mathcal{O}_{(r,s)}; q, q) \,
        G^{-1}
        \Phi(q_u,t )^{-1}
      \right|_{t=q_u}
    \right)(1)
    \\
    & =
    \gamma
    \left(
      \left.
        \Const \left(
          S_{n-1}(\mathcal{A}_{(r,s)})
        \right)
      \right|_{t=q_u}
    \right)
    (1) ,
  \end{aligned}
\end{equation}
where $\Const(\bullet)$ is the $\eth_t^0$ term of~$\bullet$.
We do not need to take a conjugation by the gluing function~$G$ to pick up the constant
term.

\begin{conj}
  \label{conj:poly_reduced}
  The DAHA invariant $\overline{P}_n(q_u,x_u,x_d; \mathbb{c})$ for
  a simple closed curve $\mathbb{c}$ on $\Sigma_{2,0}$
  coincides with  the $n$-colored Jones polynomial for the
  double-torus
  knot $\mathbb{c}$
  up to framing
  when $x_u=x_d=q_u$.
\end{conj}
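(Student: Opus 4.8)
The plan is to realize the reduced invariant $\overline{P}_n(q_u,x_u,x_d;\mathbb{c})$, at the specialization $x_u=x_d=q_u$, as a Kauffman bracket evaluation in $S^3$ through the genus-two Heegaard splitting, so that it becomes the colored Jones polynomial essentially by definition. Recall that a double-torus knot $\mathbb{c}\subset\Sigma_{2,0}$, regarded as a knot in $S^3=H\cup_{\Sigma_{2,0}}H'$, has Kauffman bracket equal to the image of $S_{n-1}(\mathbb{c})\in\KBS_A(\Sigma_{2,0})$ under the composite $\KBS_A(\Sigma_{2,0})\to\KBS_A(H)\to\KBS_A(S^3)=\mathbb{C}[A^{\pm1}]$, where $H$ is the handlebody whose handle cores are $\mathbb{x}_u$ and $\mathbb{x}_d$ and whose third generating curve (separating $\Sigma^u_{1,1}$ from $\Sigma^d_{1,1}$) is $\mathbb{x}$. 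Up to the framing correction coming from the Dehn twists this is the $n$-colored Jones polynomial of the double-torus knot, so it suffices to identify three things: (i) the DAHA representation of $\KBS_A(\Sigma_{2,0})$ in the Corollary with the stacking action on $\KBS_A(H)$; (ii) the functional $\gamma\bigl(\Const_{\eth_t}(\,\cdot\,)\bigr)\big|_{t=q_u}(1)$ with the capping map $\KBS_A(H)\to\KBS_A(S^3)$ at $x_u=x_d=q_u$; and (iii) the automorphism $\gamma$ with the mapping class group element $\mathscr{T}$ realizing $\mathbb{c}=\mathscr{T}(\mathbb{c}_{(r,s)})$, the last of which is already established in Sections~\ref{sec:torus}--\ref{sec:sphere}.

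For (i), the point is that $\KBS_A(H)$ is the polynomial algebra generated by the cores $\mathbb{x}_u,\mathbb{x}_d$ and the linking curve $\mathbb{x}$, with Chebyshev basis $S_{i-1}(\mathbb{x}_u)S_{j-1}(\mathbb{x}_d)S_{k-1}(\mathbb{x})$; identifying $\mathbb{x}$ with the $t$-direction via \eqref{t_and_x_2}, this is exactly the space on which the representation of the Corollary acts. On the $\circledcirc$-handle each operator $\mathbb{x}_\circledcirc\mapsto \mathsf{X}_\circledcirc+\mathsf{X}_\circledcirc^{~-1}$, $\mathbb{y}_\circledcirc\mapsto\mathsf{Y}_\circledcirc+\mathsf{Y}_\circledcirc^{~-1}$ reproduces the skein action of $\Sigma_{1,1}$ on the solid-torus skein module — this is Theorem~\ref{thm:embed_A1} — and the ``mixed'' operators $\mathbb{y},\widetilde{\mathbb{y}}$ act through the gluing data \eqref{glue_genus_two}, \eqref{define_ya}; compatibility with stacking follows from Theorems~\ref{thm:embed_CC1} and~\ref{thm:Sigma12} together with the fact that the quantum-dilogarithm gluing function $G$ of \eqref{GGGud} is precisely the change of basis matching the two subsurface skein modules inside $\KBS_A(H)$ (and one uses Proposition~\ref{coro:A1_invariant_op} and the realization of $\widetilde{\mathbb{y}}$ through the automorphisms as in Fig.~\ref{fig:yall_auto}). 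For (ii), the capping $\KBS_A(H)\to\KBS_A(S^3)$ sends each handle core to the trivial-unknot value $-A^2-A^{-2}$ — since the cores bound disks in $H'$ — which with $A^2=q_u^{~-1}$ is exactly the substitution $\mathbb{x}_u,\mathbb{x}_d\mapsto-(q_u+q_u^{~-1})$, i.e.\ $x_u=x_d=q_u$ up to the sign convention already absorbed in the framing prefactor of \eqref{P2_and_torus_knot}; and the operations ``$\Const_{\eth_t}$ then $t=q_u$ then evaluation at $1$'' implement the remaining capping along $\mathbb{x}$, the constant-term extraction killing precisely the nontrivial windings created by the Dehn twists, with $G$ dropping out as noted after \eqref{colored_P_bar}.

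Granting (i)--(iii), naturality of $\KBS_A(\Sigma_{2,0})\to\KBS_A(S^3)$ under the mapping class group gives $\overline{P}_n(q_u,q_u,q_u;\mathbb{c})=\langle S_{n-1}(\mathbb{c})\rangle_{S^3}$, which is the $n$-colored Jones polynomial up to the framing anomaly, as asserted; the three-term relation \eqref{three-term} reduces the colored case to $n=2$, where \eqref{def_A11} and \eqref{def_Ars_others} make the comparison explicit, and one cross-checks against the double-twist and torus-knot families already computed. \textbf{The main obstacle}, and the reason the statement is posed as a conjecture rather than a theorem, is step (ii): proving that $\gamma\bigl(\Const_{\eth_t}(\,\cdot\,)\bigr)\big|_{t=q_u}(1)$ is \emph{exactly} the handlebody-capping functional rather than merely agreeing with it on the families verified above. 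This in turn requires controlling the representation \eqref{ya_SH_t_qu} of the ``new'' curve $\widetilde{\mathbb{y}}$, which was obtained by solving functional equations by brute force, and showing that the full DAHA representation of $\KBS_A(\Sigma_{2,0})$ is faithful with the evaluation at $1$ the correct dual co-vector for the $S^3$-filling — together with a careful accounting of the framing and quantum-dimension normalization factors relating $\overline{P}_n$ to $J_n$.
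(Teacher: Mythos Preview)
This statement is labeled a \emph{Conjecture} in the paper and is not proved there; the paper offers only computational evidence, checking the claim for the families $\mathbb{c}_{(1,2):(k,\ell)}$ and $\mathbb{c}_{(1,3):(k,\ell)}$ against known colored Jones polynomials (Table~\ref{tab:knot_k_l}, \eqref{Masbaum_Jones}, and Appendices~\ref{sec:other_slope2}--\ref{sec:example_slope3}). There is thus no proof in the paper to compare against, and your proposal --- as you yourself acknowledge in the final paragraph --- is a strategy outline rather than a proof.

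Your architecture is the natural one, and you correctly locate the crux in step~(ii). Two further gaps deserve emphasis. First, step~(i) is less settled than you suggest: Theorems~\ref{thm:embed_A1}, \ref{thm:embed_CC1}, and~\ref{thm:Sigma12} show only that the DAHA operators satisfy the skein relations (an algebra homomorphism exists), not that the resulting module $\mathbb{C}[x_u+x_u^{-1},x_d+x_d^{-1}]$ together with its $t$-shift is isomorphic \emph{as a $\KBS_A(\Sigma_{2,0})$-module} to the handlebody skein module $\KBS_A(H)$. The $t$-direction enters via the gluing function~\eqref{gluing_function}, whose geometric meaning the paper never establishes, and the operator for $\widetilde{\mathbb{y}}$ was found by brute-force solution of functional equations and only afterward matched to Dehn twists. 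Second, the definition~\eqref{colored_P_bar} of $\overline{P}_n$ takes the $\eth_t$-constant term and sets $t=q_u$ \emph{before} applying the automorphism $\gamma$ and evaluating at~$1$; this ordering is not obviously the same as ``apply the DAHA operator for $\mathbb{c}$ to~$1$, then cap off'', and your appeal to naturality of the filling map $\KBS_A(\Sigma_{2,0})\to\KBS_A(S^3)$ under $\Mod(\Sigma_{2,0})$ is false as stated --- that map is equivariant only for the handlebody subgroup, which does not contain all four of $\mathscr{T}_{\mathbb{x}_u},\mathscr{T}_{\mathbb{y}_u},\mathscr{T}_{\mathbb{x}_d},\mathscr{T}_{\mathbb{y}_d}$. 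None of this is fatal to the strategy, but together with your step~(ii) these points constitute exactly the content of the conjecture; the paper does not supply the missing pieces, and neither does your outline.
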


\begin{figure}[htbp]
  \centering
  \begin{tikzpicture}
    \node (AA) at (0,0) {
      \makecell{
        \includegraphics[scale=.7]{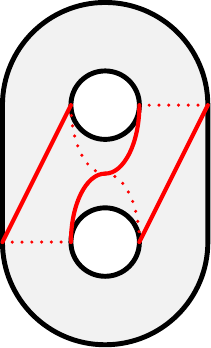}
      }};
    \node[right=16mm of AA] (AB) {
      \makecell{
        \includegraphics[scale=.7]{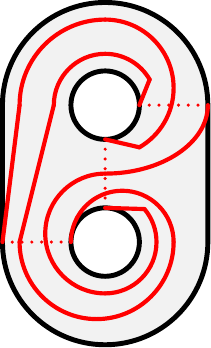}
      }};
    \draw[->] (AA) edge node[below] {
      \scriptsize $\mathscr{T}_{\mathbb{y}_u} \,
      \mathscr{T}_{\mathbb{y}_d}^{~-1}$
      } (AB);
  \end{tikzpicture}
  \caption{Shown is a curve $\mathbb{c}_{(1,2):(1,-1)}$ as an example
    of a  Dehn twist on a simple closed curve
    $\mathbb{c}_{(1,2)}$
    with slope $2$.
    In this case, we get  the figure-eight knot $4_1$ in $S^3$.}
  \label{fig:slope2}
\end{figure}

We give explicit  examples.
When $\mathscr{T}$ is generated by
$\mathscr{T}_{\mathbb{x}_u}$, $\mathscr{T}_{\mathbb{y}_u}$,
$\mathscr{T}_{\mathbb{x}_d}$, $\mathscr{T}_{\mathbb{y}_d}$,
the Dehn twist $\mathscr{T}(\mathbb{c}_{(1,1)})$
is a
connected sum of torus knots.
To have a hyperbolic knot,
we consider a simple closed curve
$\mathbb{c}=\mathscr{T}(\mathbb{c}_{(1,2)})$ given from the slope-$2/1$ curve.
Let  $\mathbb{c}_{(1,2):(k,\ell)}$
be a simple curve on $\Sigma_{2,0}$
given by 
$\mathscr{T}_{\mathbb{y}_u}^{~k} \mathscr{T}_{\mathbb{y}_d}^{~\ell}(\mathbb{c}_{(1,2)})$.
See Fig.~\ref{fig:slope2} for a case of $(k,\ell)=(1,-1)$.
For small $k$ and $\ell$, the simple closed curve
$\mathbb{c}_{(1,2):(k,\ell)}$
is identified with a knot of Rolfsen's notation as
in Table~\ref{tab:knot_k_l}~\cite{SnapPy,PHill99a}.
The curve $\mathbb{c}_{(1,2):(1,p)}$  is the twist knot
$K_p$ in $S^3$.

\begin{table}[tbhp]
  \centering
  \newcolumntype{C}{>{$}c<{$}}
  \resizebox{.98\textwidth}{!}{
    \begin{tabular}[]{C*{12}C}
      \toprule
      \text{knot}
      & 3_1 & 4_1 & 5_2 & 6_1 & 7_2 & 7_4
      & 8_1 & 8_3 & 9_2
      &9_5  &  10_1  & 10_3
      \\
      \midrule
      (k,\ell)
      &   (1,1) & (1,-1)  & (1,2) & (-1,2) & (1,3)
                                    & (-2,-2) & (-1,3)
            &(-2,2)
                  & (1,4) & (2,3) &(-1,4) & (-2,3)
      \\
      \bottomrule
    \end{tabular}
  }
  \caption{Examples of double-torus knots
    $\mathbb{c}_{(1,2):(k,\ell)}$.
    The so-called twist knot corresponds to the case of $k=1$.}
  \label{tab:knot_k_l}
\end{table}

%A twist knot is a simple closed curve $\mathbb{c}_{(1,2):(k,-1)}$.
The operator associated to~$\mathbb{c}_{(1,2)}$
on $\Sigma_{0,4}$,
$\mathbb{c}_{(1,2)} \mapsto \mathcal{A}_{(1,2)} \in \SH_{q,\boldsymbol{t}_\star}$,
is given
in~\eqref{def_Ars_others}, which preserves $\mathbb{C}[x+x^{-1}]$.
At $\boldsymbol{t}_\star$~\eqref{t_star} it is
written as
\begin{equation}
  \left. \mathcal{A}_{(1,2)} \right|_{\text{sym}}
  =
  A_{(1,2)}^{[0]}(q, x, x_u, x_d)
  +
  \sum_{j=1}^2 \sum_{\varepsilon= \pm 1}
  A_{(1,2)}^{[j]}(q, x^\varepsilon, x_u, x_d) \, \eth^{\varepsilon \, j}
  ,
  % h_2(x, x_u, x_d) \, \eth^2 + h_1(x, x_u, x_d) \, \eth + h_0(x, x_d, x_u)
  % +h_1(x^{-1}, x_u, x_d) \, \eth^{-1} + h_2(x^{-1}, x_u, x_d) \, \eth^{-2} .
\end{equation}
where
\begin{align*}
  &
    \begin{multlined}[b]
      A_{(1,2)}^{[2]}(q, x, x_u, x_d)
      \\
      =
      \frac{
        x \left( 1+ q^{\frac{1}{2}}x    \right)  \left(1+q^{\frac{3}{2}} x \right)
      }{
        \left( 1-x^2 \right)    \left( 1 -q^2 x^2 \right)
        \left( 1-q^{\frac{1}{2}} x \right) \left( 1-q^{\frac{3}{2}} x
        \right)
      }
      \,
      \prod_{\circledcirc \in \{u, d\}}
      \left( x_\circledcirc + q^{\frac{1}{2}} x \, x_\circledcirc^{~-1}
      \right)
      \left( x_\circledcirc + q^{\frac{3}{2}} x \,
        x_\circledcirc^{~-1}  \right)
      ,
    \end{multlined}
  \\
  &
    \begin{multlined}[b]
      A_{(1,2)}^{[1]}(q, x, x_u, x_d)
      \\
      =
      - \frac{2 q^{\frac{1}{2}} x \left( 1+ q^{\frac{1}{2}}x
        \right)^2}{
        \left( 1-x^2 \right)    \left( q^{\frac{1}{2}} -x \right)
        \left( 1-q^{\frac{1}{2}} x \right) \left( 1-q^{\frac{3}{2}} x
        \right)
      }
      \,
      \prod_{\circledcirc \in \{u, d\}}
      \left( x_\circledcirc + x_\circledcirc^{~-1} \right)
      \left( x_\circledcirc + q^{\frac{1}{2}} x \, x_\circledcirc^{~-1}
      \right)
      ,
    \end{multlined}
  \\
  &
    \begin{multlined}[t]
      A_{(1,2)}^{[0]}(x, q, x_u, x_d)
      \\
      =
      \frac{q^{\frac{3}{2}} x
        \left\{
          q \left( 1+x^2 \right)^2+
          4  \left( 1+q^2 \right)x^2+
          3 \,q^{\frac{1}{2}}  \left( 1+q \right)
          \left( 1+x^2 \right) x
        \right\}
      }{
        \left( q^{\frac{1}{2}} - x \right)
        \left( 1-q^{\frac{1}{2}} x \right)  \left(q^2-x^2 \right)
        \left( 1-q^2 x^2 \right)
      }
      \left(x_u+x_u^{~-1} \right)^2  \left( x_d+x_d^{~-1} \right)^2
      \\
      +
      \frac{q^{\frac{1}{2}} x
        \left\{
          2\, q^{\frac{3}{2}} \left(1+x^2 \right)
          + \left(1+q \right) \left(1+q^2 \right) x
        \right\}
      }{
        \left( q^2 -x^2 \right)  \left( 1-q^2 x^2 \right)
      }
      \left\{
        \left( x_u+x_u^{~-1} \right)^2
        +  \left( x_d+x_d^{~-1} \right)^2
      \right\}
      \\
      +
      \frac{q \left( 1-q \right)^2  x \left( 1+x^2 \right)}{
        \left( q^2-x^2\right) \left( 1-q^2 x^2 \right)
      } .
    \end{multlined}
\end{align*}
We take a conjugation by the gluing function $G$~\eqref{GGGud}, and then replace $x$ with $t$~\eqref{t_and_x_2} to obtain an
operator on $\SH_{q_u,t}\times \SH_{q_u,t}$ as
\begin{multline}
  \label{A12_conjugate}
  G \left.  \mathcal{A}_{(1,2)} \right|_{\text{sym}} G^{-1}
  =
  a_{(1,2)}^{[2]}(q_u,t) \,
  \left[
    \sh \left( t^{-1}\mathsf{X}_u\right)  \sh \left( t\,  \mathsf{X}_u\right) 
    \sh \left(t^{-1} \mathsf{X}_d \right) \sh \left( t \, \mathsf{X}_d \right)
    \eth_t
  \right]^2
  \\
  + a_{(1,2)}^{[1]}(q_u,t) \,
  \ch \left( \mathsf{X}_u \right)
  \ch \left( \mathsf{X}_d \right)
  \left[
    \sh \left( t^{-1} \mathsf{X}_u \right) \sh \left( t \,  \mathsf{X}_u \right)
    \sh \left( t^{-1} \mathsf{X}_d \right) \sh \left( t \,  \mathsf{X}_d \right)
    \eth_t
  \right]
  \\
  +
  a_{(1,2)}^{[0]}(q_u, t, \mathsf{X}_u, \mathsf{X}_d)
  +
  a_{(1,2)}^{[1]}(q_u, q_u \,t^{-1}) \,
  \ch  \left( \mathsf{X}_u\right)
  \ch \left( \mathsf{X}_d \right)\, \eth_t^{~-1}
  +
  a_{(1,2)}^{[2]}(q_u, q_u\, t^{-1}) \, \eth_t^{~-2}
  .
\end{multline}
Here we have
\begin{align*}
  a_{(1,2)}^{[2]}(q_u, t)
  & =
    \frac{
    - q_u^{~5} t^{10} \left( 1-t^2 \right) \left( 1-q_u^{~2} t^2 \right)
    }{
    \left( q_u^{~2}-t^4 \right)
    \left( 1+t^2 \right) \left( 1-q_u^{~2} t^4 \right)
    \left( 1+q_u^{~2} t^2 \right)
    }
    ,
  \\
  a_{(1,2)}^{[1]}(q_u, t)
  & =
    \frac{
    2 \,  q_u^{~3} t^6 \left( 1-t^2 \right)^2
    }{
    \left( 1+t^2 \right)
    \left( q_u^{~2}-t^4 \right)  \left(q_u^{~2}+t^2\right)
    \left( 1+q_u^{~2} t^2 \right)
    }
    ,
\end{align*}
and
\begin{multline}
  a_{(1,2)}^{[0]}(q_u, t, \mathsf{X}_u, \mathsf{X}_d)
  \\
  =
  \frac{- q_u^{~3} t^2
    \left\{
      t^4  \left(4-3t^2+t^4 \right)
      +q_u^{~4} \left( 1-3t^2+4t^4 \right)
      +q_u^{~2} t^2
      \left( -3+2t^2-3t^4 \right)
    \right\}
  }{
    \left( q_u^{~6}-t^4 \right)  \left( 1-q_u^{~2} t^4 \right)
    \left( 1+t^2 \right)  \left(  q_u^{~2} +t^2 \right)
  }
  \,
  \ch^2 \left( \mathsf{X}_u \right)
  \ch^2 \left( \mathsf{X}_d \right)
  \\
  +
  \frac{
    q_u  \, t^2 \, \left\{
      t^2(1+q_u^{~6}) +q_u^{~4} (t^2-2) +
      q_u^{~2} t^2 (1-2 t^2)
    \right\}
  }{
    \left( q_u^{~6} -t^4 \right)  \left( 1-q_u^{~2} t^4 \right)
  }
  \,
  \left\{
    \ch^2  \left( \mathsf{X}_u \right)
    +
    \ch^2 \left( \mathsf{X}_d \right)
  \right\}
  \\
  +
  \frac{
    -q_u  \, t^2  \left( 1-q_u^{~2} \right)^2  \left( q_u^2 + t^4 \right)
  }{
    \left( q_u^{~6}-t^4 \right)  \left(  1-q_u^{~2} t^4 \right)
  } .
\end{multline}

The reduced DAHA polynomial $\overline{P}_2$ for the simple closed curve
$\mathbb{c}_{(1,2):(k,\ell)}$
is given by
applying automorphism $\tau_{R(u)}^{~k} \tau_{R(d)}^{~\ell}$
to~\eqref{A12_conjugate}.
At $t=q_u$, the constant term $\eth^0$ of $\mathcal{A}_{(1,2)}$
reduces to a symmetric bilinear form,
\begin{equation}
  \begin{aligned}[b]
    &a_{(1,2)}^{[0]}( q_u, q_u, \mathsf{X}_u, \mathsf{X}_d)
    \\
    & =
    -\frac{q_u \left(1-q_u^2 \right)}{
       1-q_u^4    } \,
    \left(
      \frac{
        \left( 1-q_u^4 \right)^2}{
        q_u^2 \left( 1-q_u^2 \right)^2}
      -
      \frac{1-q_u^2}{q_u^2 \left( 1-q_u^4 \right)}
      \prod_{\circledcirc\in\{u,d\}}
      \left( 1-q_u^2 \mathsf{X}_\circledcirc^{~ 2} \right)
      \left( 1-q_u^2 \mathsf{X}_\circledcirc^{~ -2} \right)
    \right)
    \\
        & =
    -\frac{q_u \left(1-q_u^2 \right)}{
       1-q_u^4    } \,
    \left( 1, S_2(\ch  \mathsf{X}_u  ) \right) \,
    \begin{pmatrix}
      1 & 1 \\
      1 & - \frac{q_u^{2}(1-q_u^2)}{1-q_u^6}
    \end{pmatrix}
    \begin{pmatrix}
      1 \\
      S_2( \ch \mathsf{X}_d )
    \end{pmatrix}
    ,
  \end{aligned}
\end{equation}
where we have used the Chebyshev polynomial of the second kind~\eqref{Macdo_q_equal_t}.
As we have for $\SH_{q,t}$
\begin{equation*}
  \ch \left(
    \left( q^{-\frac{k}{2}} \mathsf{Y}^k \, \mathsf{X} \right)^2
  \right) (1)
  =
  \left( q \, t \right)^{2k} M_2(x; q,t)
  -\frac{ \left( 1-t^2 \right) \left( 1+q^2 \right)}{1-q^2 t^2} ,
\end{equation*}
we obtain
\begin{align}
  &\overline{P}_2(q_u, x_u,x_d; \mathbb{c}_{(1,2):(k,\ell)})
    =
    a_{(1,2)}^{[0]}(q_u, q_u,
    q_u^{-\frac{k}{2}} \mathsf{Y}_u^{~k} \mathsf{X}_u,
    q_u^{-\frac{\ell}{2}} \mathsf{Y}_u^{~\ell} \mathsf{X}_u
    ) (1)
    \notag
  \\
  & =
    -\frac{q_u \left(1-q_u^2 \right)}{
     1-q_u^4    } \,
    \left( 1, q_u^{4k } \, S_2(\ch  x_u  ) \right) \,
    \begin{pmatrix}
      1 & 1 \\
      1 & - \frac{q_u^{2}(1-q_u^2)}{1-q_u^6}
    \end{pmatrix}
    \,
    \begin{pmatrix}
      1 \\
      q_u^{4 \ell}  S_2( \ch x_d )
    \end{pmatrix}
  .
  %   \\
  % & = 
  %   \frac{q_u^3 \left(1-q_u^2 \right)^2}{
  %     \left( 1-q_u^4 \right) \left( 1- q_u^6 \right)
  %   } \,
  %   \left(
  %     q_u^{4k} S_2(\ch x_u) - S_2(\ch q_u)
  %   \right)
  %   \left(
  %     q_u^{4\ell} S_2(\ch x_d) - S_2(\ch q_u)
  %   \right) - q_u - q_u^{-1} ,
    % I_k(x_u; q_u) \, I_\ell(x_d; q_u)
\end{align}
% where
% \begin{equation*}
%   I_k(x; q)=
%   q^{4k}
%   \left(
%     x^2+1 +x^{-2} \right)
%   -
%   1
%   -q^2-q^{-2}
%   .
%   % (q \, t)^{2k}
%   % \left(
%   %   x^2+x^{-2} \right)
%   % -
%   % \frac{(1-t^2)(1+q^2)}{
%   %   1-q^2 t^2}
%   % (1-q^{2k} t^{2k})
%   % -q^2-q^{-2} .
% \end{equation*}
We have checked that $\overline{P}_2(q_u,q_u,q_u;
\mathbb{c}_{(1,2):(k,\ell)})$ for $(k,\ell)$ in Table~\ref{tab:knot_k_l}
agrees with the Jones polynomial up to
framing 
(see, \emph{e.g.},~\cite{KnotInfo2018,KnotAtlas}).

We compute the colored reduced DAHA
polynomial $\overline{P}_{n>2}$~\eqref{colored_P_bar}.
For  $n=3$, we need the difference operator
$S_2(\mathcal{A}_{(1,2)})=\mathcal{A}_{(1,2)}^{~2} -1$.
Using~\eqref{A12_conjugate}, we find that the operator
which survives at $t=q_u$ is a term $\eth^0$
given by
\begin{multline*}
  a_{(1,2)}^{[2]}(q_u,q_u) \, a_{(1,2)}^{[2]}(q_u, q_u^{-2})
    \prod_{\circledcirc \in \{u, d\} }
  \sh( q_u^{-2} \mathsf{X_\circledcirc} )
  \sh( q_u^{-1} \mathsf{X_\circledcirc} )
  \sh( q_u \mathsf{X_\circledcirc} )
  \sh( q_u^{2} \mathsf{X_\circledcirc} )
  \\
  + a_{(1,2)}^{[1]}(q_u, q_u) \, a_{(1,2)}^{[1]}(q_u, q_u^{-1})
  \prod_{\circledcirc \in \{ u, d\} }
  \ch^2 (\mathsf{X}_\circledcirc)
  \sh( q_u^{-1} \mathsf{X_\circledcirc} )
  \sh( q_u \mathsf{X_\circledcirc} )
  +
  \left(
    a_{(1,2)}^{[0]}(q_u, q_u, \mathsf{X}_u, \mathsf{X}_d)
  \right)^2
  -1
  \\
  =
  \frac{q  \left( 1-q \right)}{
    1-q^3}
  \Biggl\{
  \frac{ \left( 1-q^3 \right)^2
  }{ q^2 \left( 1-q \right)^2}
  -
  \frac{1-q^4}{q^3 \left( 1-q^2 \right)}
  \prod_{\circledcirc \in \{u,d \}}
  \left( 1-q \mathsf{X}_\circledcirc^{~ 2}  \right)
  \left( 1-q \mathsf{X}_\circledcirc^{~ -2} \right)
  \\
  +\frac{
    \left( 1-q \right) \left( 1-q^2 \right)
  }{
    q^3 \left( 1-q^4 \right) \left( 1-q^5 \right)
  }
  \prod_{\circledcirc \in \{u,d\}}
  \left(1-q \mathsf{X}_\circledcirc^{~ 2} \right) 
  \left( 1-q \mathsf{X}_\circledcirc^{~ -2} \right)
  \left( 1-q^2 \mathsf{X}_\circledcirc^{~ 2} \right)
  \left( 1-q^2 \mathsf{X}_\circledcirc^{~ -2} \right)
  \Biggr\}
  ,
\end{multline*}
where $q=q_u^{~2}$.
%where $S_n(x)$ is the Chebyshev polynomial of the second kind~\eqref{Macdo_q_equal_t}.
The $3$-colored polynomial for $\mathbb{c}_{(1,2):(k,\ell)}$ is given
by replacing $\mathsf{X}_u$ (resp. $\mathsf{X}_d$) by
$q_u^{-\frac{k}{2}} \mathsf{Y}_u^{~k} \, \mathsf{X}_u$
(resp.
$q_u^{-\frac{\ell}{2}} \mathsf{Y}_d^{~\ell} \,  \mathsf{X}_d$).
The DAHA $\SH_{q,t}$ proves
\begin{equation}
  \label{macdo_2j_auto}
  M_{j}(q^{-\frac{k}{2}} \mathsf{Y}^k \mathsf{X} ; q, t)(1)
  =
  \left( q^{\frac{1}{2} j^2 } t^{ j  } \right)^k \,
  M_{j}(x; q, t) ,
\end{equation}
% \begin{multline*}
%   \ch \left( \left(
%       q^{-\frac{k}{2}} \mathsf{Y}^k \mathsf{X}
%     \right)^4 \right)(1)
%   \\
%   =
%   q^{8k} t^{4k} M_4(x;q,t)
%   -q^{2k} t^{2k}
%   \frac{
%     \left( 1-q^8\right) \left(1-t^2 \right)}{
%     \left( 1-q^6 t^2 \right) \left( 1-q^2 \right)} \,
%   M_2(x;q ,t)
%   +
%   \frac{
%     \left( 1+q^4 \right) \left(1-t^2\right)
%     \left(q^2-t^2 \right)
%   }{
%     \left( 1-q^2 t^2\right) \left(1-q^4 t^2 \right)
%   } ,
% \end{multline*}
which gives
\begin{multline}
  \overline{P}_3(q_u, x_u, x_d; \mathbb{c}_{(1,2):(k,\ell)})
  =
  \frac{q_u^2  \left( 1-q_u^2 \right) }{1-q_u^6}
  \times
  \\
  \left(
    1, q_u^{4k} S_2 ( \ch x_u), q_u^{12k} S_4 (\ch x_u)
  \right)
  \begin{pmatrix}
    1 & 1 & 1 \\
    1 & \frac{
      \left(1-q_u^2 \right)  \left(1- q_u^{12}\right)}{
      \left( 1-q_u^6 \right)  \left( 1-q_u^8 \right)} &
    - \frac{q_u^2 \left( 1-q_u^4 \right)}{1-q_u^8}
    \\
    1 & - \frac{q_u^2 (1-q_u^4)}{1-q_u^8} &
    \frac{q_u^6 \left( 1-q_u^2 \right) \left( 1-q_u^4 \right)}{
      \left( 1-q_u^8 \right) \left( 1-q_u^{10} \right)
    }
  \end{pmatrix}
  \begin{pmatrix}
    1 \\
    q_u^{4\ell} S_2(\ch {x}_d ) \\
    q_u^{12 \ell} S_4 (\ch {x}_d)
  \end{pmatrix}
  .
\end{multline}
% \begin{multline}
% \frac{q_u^8 \left( 1-q_u^2\right)^2
%     \left( 1-q_u^4 \right)
%   }{
%     \left( 1- q_u^6\right)  \left( 1-q_u^8\right)
%     \left( 1- q_u^{10}\right)
%   } \,
%   q_u^{12(k+\ell)}
%   S_4(\ch x_u) \, S_4(\ch {x}_d)
%   \\
%   +\frac{q_u^2 \left(1-q_u^2\right)^2
%     \left( 1-q_u^{12} \right)
%   }{
%     \left(1-q_u^6 \right)^2 \left(1- q_u^8 \right)}  \,
%   q_u^{4(k+\ell)}
%   S_2(\ch {x}_u) \, S_2(\ch {x}_d)
%   \\
%   -
%   \frac{q_u^4 \left(1-q_u^2\right)  \left( 1- q_u^4 \right)
%   }{
%     \left( 1-q_u^6 \right) \left(1-q_u^8\right)
%   }  \,
%   q_u^{4( k+\ell)}
%   \left\{
%     q_u^{8 k}
%     S_4(\ch {x}_u) \, S_2(\ch {x}_d) +
%     q_u^{8 \ell}
%     S_2(\ch {x}_u) \, S_4(\ch {x}_d)
%   \right\}
%   \\
%   +\frac{
%     q_u^2 \left( 1-q_u^2 \right)
%   }{1-q_u^6}
%   \left\{
%     q_u^{12k}S_4(\ch {x}_u) +  q_u^{12\ell}  S_4(\ch {x}_d)+
%     q_u^{4k} S_2(\ch {x}_u) +  q_u^{4\ell}  S_2(\ch {x}_d)    +1
%   \right\}
% \end{multline}
We have checked that the results for knots in Table~\ref{tab:knot_k_l}
coincide with the known $N=3$ colored Jones polynomial
in~\cite{KnotAtlas}.

The DAHA polynomial for  $n=4$  is given similarly  from the  difference operator
$S_3(\mathcal{A}_{(1,2)})=
\mathcal{A}_{(1,2)}^{~3} - 2\mathcal{A}_{(1,2)}$.
Based on these explicit computations,
we  have the following observation for the $N$-colored reduced DAHA 
polynomial $\overline{P}_N$.
We define
\begin{gather}
  \label{define_base_s}
  \mathbf{s}_N({x})
  =
  \left(
    S_{2j}(x+x^{-1})
  \right)_{0 \leq j \leq N-1}
  ,
  \\
  \mathbf{v}_N(q,{x})
  =
  \left(
    \left(q \, {x}^2, q\, {x}^{-2} ; q\right)_{j}
  \right)_{0 \leq j \leq N-1}
  .
\end{gather}
These are bases of the symmetric Laurent polynomial space
$\mathbb{C}[x+x^{-1}]$ of
even power, and we have
\begin{equation}
  \label{v_and_s}
  \mathbf{v}_N(q, x )
  =
  \mathbf{s}_N(x) \, \mathbf{B}_N(q) ,
\end{equation}
where the $N\times N$ triangular matrix $\mathbf{B}_N(q)$ is defined by
\begin{equation*}
  \left( \mathbf{B}_N(q) \right)_{j,k}
  =
  \begin{cases}
    \displaystyle
    (-1)^j q^{\frac{1}{2} j(j+1)}
    \frac{
      (q;q)_{2k+1}}{
      (q;q)_{k-j} \, (q;q)_{k+j+1}
    },
    & \text{for $0 \leq j \leq k \leq N-1$,}
    \\
    0,  & \text{otherwise.}
  \end{cases}
\end{equation*}
One sees that these bases were used in~\cite{KHabiro06b,GMasb03a} for
a cyclotomic expansion
of the colored Jones polynomial.

\begin{conj}
  \label{conj:constant_term}
  The constant term $\eth^0$ of $S_{N-1}(\mathcal{A}_{(1,2)})$
  at $t=q_u=q^{\frac{1}{2}}$
  is given by
  \begin{equation}
    \Const\left(
      \left.
        S_{N-1}(
        \mathcal{A}_{(1,2)}
        )
      \right|_{t=q_u}
    \right)
    =
    (-1)^{N-1} q^{\frac{1}{2}(N-1)}
    \frac{1-q}{1-q^N} \,
    \mathbf{v}_N(q, \mathsf{X}_u) \,
    \mathbf{T}_N(q; \mathbb{c}_{(1,2)} ) \,
    \mathbf{v}_N(q, \mathsf{X}_d)^\top
    ,
    \label{const_slope12}
  \end{equation}
  where  the diagonal matrix $\mathbf{T}_N$ is
  \begin{equation}
    \mathbf{T}_N(q; \mathbb{c}_{(1,2)})
    =
    \diag
    \left(
      (-1)^{k-1} q^{\frac{1}{2} k(k+1-2N)}
      \frac{
        (q^{2k} ; q)_{N+1-2k}
      }{
        (q^{k}; q)_{N+1-2k}
      } \,
      \frac{
        (q^{2k}; q)_{N-k}}{
        (q^{k}; q)_{N-k}
      }
    \right)_{1 \leq k \leq N}
    .
  \end{equation}
\end{conj}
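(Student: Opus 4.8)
I would prove the identity by induction on $N$ via the Chebyshev recursion
$S_{N-1}(\mathcal{A}_{(1,2)})=\mathcal{A}_{(1,2)}\,S_{N-2}(\mathcal{A}_{(1,2)})-S_{N-3}(\mathcal{A}_{(1,2)})$,
but carrying \emph{the whole operator} through the recursion in the ``Habiro'' basis $\mathbf{v}_N(q,\mathsf{X}_u)\otimes\mathbf{v}_N(q,\mathsf{X}_d)$ of $\mathbb{C}[x_u+x_u^{~-1},x_d+x_d^{~-1}]$ and extracting the constant term only at the very end. First I would re-expand the explicit operator \eqref{A12_conjugate} in this basis: using $\sh(t^{-1}\mathsf{X}_\circledcirc)\sh(t\,\mathsf{X}_\circledcirc)=q^{-1}\bigl((1+q^2)-(q\mathsf{X}_\circledcirc^{~2},q\mathsf{X}_\circledcirc^{~-2};q)_1\bigr)-(t^2+t^{-2})$ together with the analogous identity for $\ch^2(\mathsf{X}_\circledcirc)$, every coefficient of $\mathcal{A}_{(1,2)}$ becomes a combination of $\mathbf{v}$-basis elements. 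Since multiplication by a degree-$2$ symmetric Laurent polynomial is tridiagonal on $\{(q\mathsf{X}^2,q\mathsf{X}^{-2};q)_k\}_k$ — a $q$-Pieri rule provable directly from the product defining the basis — the operator $\mathcal{A}_{(1,2)}$ acts on $\mathbf{v}_j(q,\mathsf{X}_u)\,\mathbf{v}_l(q,\mathsf{X}_d)$ through a finite band in $(j,l)$ whose entries are explicit rational functions of $t$ multiplied by $\eth_t^{\,\pm1},\eth_t^{\,\pm2},1$; because the band matrices in the $u$- and $d$-variables commute and enter symmetrically, this structure is preserved by the recursion.

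\textbf{Handling the constant term.} The decisive simplification is that the $\eth_t^{~-1}$ and $\eth_t^{~-2}$ coefficients of \eqref{A12_conjugate}, namely $a_{(1,2)}^{[1]}(q_u,q_u t^{-1})$ and $a_{(1,2)}^{[2]}(q_u,q_u t^{-1})$, vanish at $t=q_u$ — they contain the factors $(1-q_u^{~2}t^{-2})$, respectively $(1-q_u^{~2}t^{-2})(1-q_u^{~4}t^{-2})$ — and, more generally, after the $t$-shifts accumulated inside a product $\mathcal{A}_{(1,2)}^{\,m}$ they vanish whenever a negative step would take the running $\eth_t$-exponent below zero. Tracking these exponents and then setting $t=q_u$, the surviving monomials $\alpha_{j_1}(t)\alpha_{j_2}(q_u^{j_1}t)\cdots$ are indexed by lattice paths $0=h_0,h_1,\dots,h_m=0$ with steps in $\{-2,-1,0,1,2\}$ staying weakly non-negative; each such path contributes a product that \emph{factorises} over $\circledcirc\in\{u,d\}$. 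Hence $\Const\!\left(S_{N-1}(\mathcal{A}_{(1,2)})\right)\big|_{t=q_u}$ is automatically a bilinear form in $\mathbf{v}_N(q,\mathsf{X}_u)$ and $\mathbf{v}_N(q,\mathsf{X}_d)$, and what remains is to show that its matrix is diagonal with entries $\mathbf{T}_N$. I would prove diagonality inductively — the off-diagonal contributions from $\mathcal{A}_{(1,2)}S_{N-2}$ cancel against those of $S_{N-3}$ by the tridiagonal Pieri coefficients — and then verify that the diagonal entries satisfy the same single-index recursion as $(-1)^{k-1}q^{\frac12 k(k+1-2N)}\frac{(q^{2k};q)_{N+1-2k}}{(q^{k};q)_{N+1-2k}}\frac{(q^{2k};q)_{N-k}}{(q^{k};q)_{N-k}}$; the base cases $N=2,3$ are precisely the matrices already displayed in the paper.

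\textbf{The main obstacle.} The hard part is the final matching step: showing that the $\eth_t$-path sum at $t=q_u$ collapses to exactly the diagonal $\mathbf{T}_N$, which amounts to a genuinely non-trivial $q$-hypergeometric identity — in effect the double cyclotomic expansion of the twist-knot family $\mathbb{c}_{(1,2):(k,\ell)}$, which is why the statement is only conjectured. An alternative route is spectral: since $\boldsymbol{t}_\star$ satisfies $(t_0-t_2)(1+t_0t_2)=0$ (indeed $t_0=t_2=\I\,x_u$), every $SL(2;\mathbb{Z})$-automorphism of $\SH_{q,\boldsymbol{t}_\star}$ is conjugation by a product $W$ of the Gaussians $V_R,V_L$, so $\mathcal{A}_{(1,2)}=W^{-1}(\mathsf{Y}+\mathsf{Y}^{-1})W$ and $S_{N-1}(\mathcal{A}_{(1,2)})=W^{-1}\bigl(\sum_m S_{N-1}(\mu_m)\,\Pi_m\bigr)W$, with $\mu_m=(t_0t_1)^{-1}q^m+t_0t_1q^{-m}$ and $\Pi_m$ the projector onto the Askey--Wilson polynomial $P_m(x;q,\boldsymbol{t}_\star)$; extracting the $\eth_t^0$ part at $t=q_u$ then becomes a Gaussian (metaplectic) computation. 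I expect both routes to reduce to the same $q$-series identity, which I would isolate as the crux and attack with Sears' ${}_4\phi_3$ transformation and the $q$-Pfaff--Saalsch\"utz summation — the very tools used in the paper to put $P_m(x;q,x_u,x_d)$ in symmetric form.
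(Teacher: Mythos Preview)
The statement you are attempting to prove is labeled \emph{Conjecture} in the paper, and the paper offers no proof. Its only evidence is the explicit verification of the cases $N=2,3,4$ carried out in the paragraphs preceding the conjecture (the $N=2$ and $N=3$ bilinear forms are displayed, and $N=4$ is mentioned as ``given similarly''). So there is no ``paper's own proof'' to compare your proposal against; you are attempting something the author did not claim to do.

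Your plan is a sensible strategy outline, and several of its ingredients are sound: the vanishing of $a_{(1,2)}^{[1]}(q_u,q_u t^{-1})$ and $a_{(1,2)}^{[2]}(q_u,q_u t^{-1})$ at $t=q_u$ is correct and is exactly what the paper uses implicitly when it writes down the surviving $\eth_t^{\,0}$ term of $S_2(\mathcal{A}_{(1,2)})$; the factorisation over $\circledcirc\in\{u,d\}$ is built into the structure of \eqref{A12_conjugate}; and your base cases $N=2,3$ match the matrices the paper computes. However, your proposal is not a proof but a roadmap, and you say so yourself: the ``final matching step'' --- showing that the lattice-path sum collapses to the diagonal $\mathbf{T}_N$ --- is left as a ``genuinely non-trivial $q$-hypergeometric identity'' to be attacked by Sears/$q$-Pfaff--Saalsch\"utz. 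That step \emph{is} the conjecture. Neither the inductive cancellation of off-diagonal terms nor the claimed single-index recursion for the diagonal entries is actually established in your write-up; both are asserted. Likewise, the alternative spectral route via conjugation by $V_R,V_L$ is only sketched, and reducing the Gaussian computation to the same identity does not make it easier.

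In short: the paper has no proof here, and your proposal correctly identifies the crux but does not resolve it. If you want to turn this into an actual result you would need to (i) prove the general vanishing pattern for paths that dip below zero (your statement of it is slightly loose --- the zeros of $a^{[j]}(q_u,q_u^{\,m}t^{-1})$ at $t=q_u$ depend on $m$ and $j$, not just on the sign of the running height), and (ii) supply the missing $q$-series summation that produces the closed form of $\mathbf{T}_N$.
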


For computations of the reduced DAHA polynomial of
$\mathscr{T}_{\mathbb{y}_u}^k
\mathscr{T}_{\mathbb{y}_d}^\ell(\mathbb{c}_{(1,2)})$,
we need  $\mathbf{v}_N(q, \tau_L^k (\mathsf{X}_\circledcirc))(1)$
in~\eqref{const_slope12}.
By changing bases using~\eqref{v_and_s} to the Chebyshev polynomial, we have
$\mathbf{s}_N(\tau_L^k(\mathsf{X}_\circledcirc))(1)
=
\mathbf{s}_N(q_u^{~ -\frac{k}{2}} \mathsf{Y}_\circledcirc^k \mathsf{X}_\circledcirc)(1)$ at $t=q_u$,
which can be computed  by $\SH_{q,t}$ as~\eqref{macdo_2j_auto}.
To conclude,  Conjecture~\ref{conj:poly_reduced}
of the reduced DAHA polynomial for $\mathbb{c}_{(1,2):(k,\ell)}$ is
read under Conjecture~\ref{conj:constant_term} as follows.

\begin{conj}
  The $N$-colored Jones polynomial for
  $\mathbb{c}_{(1,2):(k,\ell)}=\mathscr{T}_{\mathbb{y}_u}^k
  \mathscr{T}_{\mathbb{y}_d}^\ell(\mathbb{c}_{(1,2)})$
  coincides up to framing
  with the  reduced DAHA polynomial
  $\overline{P}_N(q_u, x_u, x_d;  \mathbb{c})$
  at $x_u=x_d=q_u=q^{\frac{1}{2}}$,
  where we have
  \begin{multline}
    \label{reduce_DAHA_kl}
    % \left.
    %   \Const \left(
    %     S_{N-1}(
    %     \mathcal{A}_{(1,2)}(q,
    %     \tau_{R(u)}^k(\mathsf{X}_u), \tau_{R(d)}^\ell(\mathsf{X}_d)
    %     ) ) 
    %   \right) (1)
    % \right|_{t=x_u=x_d=q^{\frac{1}{2}}}
    \overline{P}_N(q_u, x_u, x_d; \mathbb{c}_{(1,2):(k,\ell)})
    % \left.
    %   j_N(q, \tau_{R(u)}^k(\mathsf{X}_u), \tau_{R(d)}^\ell(\mathsf{X}_d); \mathbb{c}_{(1,2)}) )
    %   (1)
    % \right|_{x_u=x_d=q^{\frac{1}{2}}}
    =
    (-1)^{N-1}q^{\frac{1}{2}(N-1)} \frac{ 1-q}{1-q^N}
    \\
    \times
    \mathbf{s}_N(x_u) \,
    \diag \left( q^{j(j+1)k} \right)_{0\leq j\leq N-1} \,
    \mathbf{B}_N(q) \,
    \mathbf{T}_N(q; \mathbb{c}_{(1,2)}) \,
    \mathbf{B}_N(q)^\top \,
    \diag \left( q^{j(j+1)\ell} \right)_{0 \leq j \leq N-1} \,
    \mathbf{s}_N(x_d)^\top
    .
  \end{multline}
\end{conj}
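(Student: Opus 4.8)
The statement decomposes into two parts, and the plan is to reduce each to one of the two preceding conjectures: the explicit formula~\eqref{reduce_DAHA_kl} for $\overline{P}_N$ is a bookkeeping consequence of Conjecture~\ref{conj:constant_term}, while the coincidence with the $N$-colored Jones polynomial up to framing is Conjecture~\ref{conj:poly_reduced} restricted to the family $\mathbb{c}_{(1,2):(k,\ell)}$. I would carry the first reduction out in full and then indicate how one might attack the two conjectural inputs; the second of these I expect to be the genuine obstacle.

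For the formula~\eqref{reduce_DAHA_kl} I would start from the definition~\eqref{colored_P_bar}. Since $\mathbb{c}_{(1,2):(k,\ell)}=\mathscr{T}_{\mathbb{y}_u}^{\,k}\mathscr{T}_{\mathbb{y}_d}^{\,\ell}(\mathbb{c}_{(1,2)})$, the induced automorphism $\gamma$ acts on the two commuting copies of $\SH_{q_u,t}$ by $\tau_L^{\,k}$ on the $u$-factor and by $\tau_L^{\,\ell}$ on the $d$-factor, in the normalization~\eqref{auto_and_Dehn}. The operator $\Const\bigl(S_{N-1}(\mathcal{A}_{(1,2)})\bigr)\big|_{t=q_u}$ is free of $\eth_t$ and, by Conjecture~\ref{conj:constant_term}, involves only $\mathsf{X}_u$ and $\mathsf{X}_d$, so applying $\gamma$ is simply the substitution $\mathsf{X}_\circledcirc\mapsto\tau_L(\mathsf{X}_\circledcirc)$ and yields
\[
  \overline{P}_N(q_u,x_u,x_d;\mathbb{c}_{(1,2):(k,\ell)})
  =(-1)^{N-1}q^{\frac{1}{2}(N-1)}\frac{1-q}{1-q^{N}}\,
  \mathbf{v}_N\bigl(q,\tau_L^{\,k}(\mathsf{X}_u)\bigr)\,
  \mathbf{T}_N(q;\mathbb{c}_{(1,2)})\,
  \mathbf{v}_N\bigl(q,\tau_L^{\,\ell}(\mathsf{X}_d)\bigr)^{\!\top}(1).
\]
I would then pass to the Chebyshev basis via~\eqref{v_and_s}, $\mathbf{v}_N(q,\cdot)=\mathbf{s}_N(\cdot)\,\mathbf{B}_N(q)$, which remains valid with $x$ replaced by any operator because $\mathbf{B}_N(q)$ is a scalar matrix, and then evaluate $\mathbf{s}_N(\tau_L^{\,k}(\mathsf{X}_\circledcirc))(1)$. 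Using $\tau_L^{\,k}(\mathsf{X})=q_u^{-k/2}\mathsf{Y}^{k}\mathsf{X}$ together with~\eqref{Macdo_q_equal_t}, the $j$-th entry equals $M_{2j}\bigl(q_u^{-k/2}\mathsf{Y}^{k}\mathsf{X};q_u,q_u\bigr)(1)$, which by~\eqref{macdo_2j_auto} is $q^{j(j+1)k}S_{2j}(x_\circledcirc+x_\circledcirc^{-1})$; hence $\mathbf{s}_N(\tau_L^{\,k}(\mathsf{X}_u))(1)=\mathbf{s}_N(x_u)\,\diag\bigl(q^{j(j+1)k}\bigr)_{0\le j\le N-1}$. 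Collecting the factors in the order they occur reproduces~\eqref{reduce_DAHA_kl} verbatim.

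Establishing Conjecture~\ref{conj:constant_term} is the first real task. I would attempt it by induction on $N$: the Chebyshev recursion~\eqref{recursion_S} expresses $S_{N-1}(\mathcal{A}_{(1,2)})$ through $S_{N-2}$ and $S_{N-3}$, and taking the $\eth_t^{0}$-part at $t=q_u$ of the five-term difference operator~\eqref{A12_conjugate} should propagate the bilinear form $\mathbf{v}_N\,\mathbf{T}_N\,\mathbf{v}_N^{\top}$ of~\eqref{const_slope12}. The point is that the action of $\mathcal{A}_{(1,2)}$ on the basis $\mathbf{v}_N(q,\mathsf{X}_\circledcirc)$ of~\eqref{define_base_s} --- the basis familiar from the Habiro/cyclotomic expansion of the colored Jones polynomial --- should be triangular, so that the whole assertion collapses to a single explicit $q$-hypergeometric identity for the diagonal entries of $\mathbf{T}_N(q;\mathbb{c}_{(1,2)})$; identifying and verifying that identity is the technical core of this half.

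The main obstacle, however, is the topological content of Conjecture~\ref{conj:poly_reduced}: that the reduced DAHA polynomial really is the colored Jones polynomial of the double-torus knot, not merely a polynomial agreeing with it on the knots of Table~\ref{tab:knot_k_l}. To prove it one must connect the genus-two skein representation of Section~\ref{sec:genus-two} with the Kauffman bracket of the Heegaard-embedded curve, in the spirit of Cherednik's torus-knot computation reviewed in Section~\ref{sec:torus}: interpret $1\in\mathbb{C}[x_u+x_u^{-1},x_d+x_d^{-1}]$ as the empty diagram in each of the two handlebodies bounding $\Sigma_{2,0}$, show that the handlebody skein pairing is recovered precisely by setting $x_u=x_d=q_u$, and verify via~\eqref{recursion_S} that $S_{N-1}$ of the curve operator implements the $(N-1)$-fold cabling. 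The conjugation by the quantum-dilogarithm gluing function $G$, which breaks the symmetry $x_u\leftrightarrow x_u^{-1}$, makes the relation between the operator formalism and the handlebody pairing opaque, and I do not see a route reducing the genus-two case to the torus case; this is where some genuinely new input --- presumably a Heegaard-splitting description of the Kauffman bracket compatible with the DAHA gluing --- appears to be required.
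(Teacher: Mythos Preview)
Your reduction is exactly what the paper does: the statement is presented there as a conjecture, and the paper's only argument is the same bookkeeping you carry out --- start from Conjecture~\ref{conj:constant_term}, pass from the $\mathbf{v}_N$-basis to the $\mathbf{s}_N$-basis via~\eqref{v_and_s}, and evaluate $\mathbf{s}_N(\tau_L^{\,k}(\mathsf{X}_\circledcirc))(1)$ using~\eqref{macdo_2j_auto} at $t=q_u$ --- followed by the observation that the Jones-polynomial part is Conjecture~\ref{conj:poly_reduced}, supported only by numerical checks against Table~\ref{tab:knot_k_l} and Masbaum's formula~\eqref{Masbaum_Jones}. Your additional discussion of how one might attack the two input conjectures goes beyond the paper, which leaves them entirely open.
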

One can check the conjecture for small $N$ and the knots in
Table~\ref{tab:knot_k_l}.
Furthermore,
as an explicit form of the $N$-colored Jones polynomial for twist knot
$K_p$ is given
in~\cite{GMasb03a},
we have checked the equality for small  $p$ and $N$,
\begin{equation}
  % \left.
  %   \Const \left(
  %     S_{N-1}(
  %     \mathcal{A}_{(1,2)}(q,
  %     \tau_{R(u)}(\mathsf{X}_u), \tau_{R(d)}^p(\mathsf{X}_d)
  %     ) ) 
  %   \right) (1)
  % \right|_{t=x_u=x_d=q^{\frac{1}{2}}}
  \overline{P}_N(q_u, q_u, q_u; \mathbb{c}_{(1,2):(1,p)})
  % \left.
  %   j_N(q, \tau_{R(u)}(\mathsf{X}_u), \tau_{R(d)}^p (\mathsf{X}_d); \mathbb{c}_{(1,2)}) )
  %   (1)
  % \right|_{x_u=x_d=q^{\frac{1}{2}}}
  =
  (-1)^{N-1}
  \frac{
    q^{N/2} - q^{-N/2}
  }{
    q^{1/2} - q^{-1/2}
  } \,
  J_N(q;K_p) .
\end{equation}
Here the colored Jones polynomial is read as
\begin{equation}
  \label{Masbaum_Jones}
  J_N(q; K_p)
  =
  \sum_{n=0}^\infty q^n
  (q^{1-N}, q^{1+N};q)_n
  \sum_{j=0}^n
  (-1)^j q^{j(j+1)p+ \frac{1}{2} j(j-1)}
  \left( 1-q^{2j+1} \right)
  \frac{
    (q;q)_n}{
    (q;q)_{n+j+1} (q;q)_{n-j}
  } ,
\end{equation}
which is normalized such that $J_N(q; \text{unknot})=1$.
See also~\cite{GukNawSabStoSul16a} where
a similar expression with~\eqref{reduce_DAHA_kl} was given for
the Poincar{\'e} polynomial for  knot homology $\mathbb{c}_{(1,2):(k,\ell)}$.

\section{Concluding Remarks}

We have clarified topological aspects of rank-1 DAHAs,
$A_1$-type and
$C^\vee C_1$-type.
Motivated by the DAHA--Jones polynomial for torus knots by Cherednik~\cite{IChered13a},
we have proposed to combine these two rank-1 DAHAs as a
representation of the skein algebra on the genus-two surface $\Sigma_{2,0}$.
We have constructed the reduced  DAHA polynomial
$\overline{P}_n(q_u,x_u, x_d; \mathbb{c})$
for a simple closed curve $\mathbb{c}$ on
$\Sigma_{2,0}$, and have clarified the relationship with the colored
Jones polynomial for double-torus knot.
% Our construction reveals that the DAHA polynomial invariant is written
% in the symmetric bilinear form~\eqref{const_slope12}.
In this paper, we have mainly studied double twist knots
$\mathscr{T}_{\mathbb{y}_u}^k \mathscr{T}_{\mathbb{y}_d}^\ell
(\mathbb{c}_{(1,2)})$, and in Appendix~\ref{sec:other_slope2}
we give some results on other simple closed curves given from
$\mathbb{c}_{(1,2)}$.

In Appendix~\ref{sec:example_slope3}, we give some  results concerning
simple closed curves derived from
$\mathbb{c}_{(1,3)}$, and discuss a relationship with the colored
Jones polynomial.
These results  indicate an importance of the Askey--Wilson operator at
$\boldsymbol{t}_\star$~\eqref{t_star} associated to a simple closed
curve $\mathbb{c}_{(r,s)}$ in studies of quantum invariants of knot.
The symmetric bilinear form~\eqref{reduce_DAHA_kl} for those  knot families seems to be promising for
other quantum polynomial  invariants of knots.
Also
studies of
the
(non-reduced)  DAHA polynomial $P_n(t,q_u,x_u,x_d, \Phi ; \mathbb{c})$,
higher rank cases,
and  skein algebra on higher genus surfaces
will reveal
a fruitful structure~\cite{HikamiWork}.
Also the  cluster algebraic construction of the skein
algebra~\cite{KHikami17a} will be useful  in DAHA.

% At the last, our construction of the skein algebra on the twice
% punctured torus will be useful to study  a simple closed curve on a
% higher-genus Heegaard surface in $S^3$.

We have studied the reduced DAHA polynomial for a simple curve $\mathbb{c}=\mathscr{T}(\mathbb{c}_{(r,s)})$
where the Dehn twist $\mathscr{T}$ is
generated by
$\mathscr{T}_{\mathbb{x}_u}$,
$\mathscr{T}_{\mathbb{y}_u}$,
$\mathscr{T}_{\mathbb{x}_d}$, and
$\mathscr{T}_{\mathbb{y}_d}$.
In this case, we need only to apply the $SL(2;\mathbb{Z})$ actions of $A_1$-type
DAHA to the $q$-difference operator which is defined in terms of the
$C^\vee C_1$-type DAHA.
Even in the case that
we further have  the Dehn twist $\mathscr{T}_{\mathbb{y}}$
about $\mathbb{y}$,
it is possible to define  the DAHA polynomial using $V_L$
corresponding to the Dehn twist about $\mathbb{y}$,
as
we have constructed the conjugation~\eqref{auto_CC1} for the
$C^\vee C_1$-type DAHA.
Unfortunately such computations are much  involved, and
it remains for
future studies.
% We expect that
% there exists
% a  nice  symmetric bilinear form like~\eqref{const_slope12}
% for  other simple closed curves $\mathbb{c}_{(r,s)}$.

%%%%%%%%%%%%%%%%%%%%%%%%%%
\appendix
\section{Other Simple Closed Curves derived from $\mathbb{c}_{(1,2)}$}
\label{sec:other_slope2}

The Dehn twist $\mathscr{T}_{\mathbb{x}_\circledcirc}^{-k}
\mathscr{T}_{\mathbb{y}_\circledcirc}$ induces
the automorphism of $\SH_{q,t}$,
$\tau_R^{k} \tau_L (\mathsf{X})
=q^{\frac{1}{2}(k-1)} \mathsf{X}^k \mathsf{Y}
\mathsf{X}$.
At $t=q$,
we have
\begin{equation}
  \label{ex12_xky}
  S_{2j}(\ch (q^{\frac{1}{2}(k-1)} \mathsf{X}^k \mathsf{Y}
  ) )(1)
  =
  \sum_{i=-j}^j S_{2(k+1)i}(\ch x) \, q^{2i ( (k+1)i+1)} .
\end{equation}
Here  as we have
$S_n(x+x^{-1})=\frac{x^{n+1}-x^{-n-1}}{x-x^{-1}}$ for $n\geq 0$,
the   Chebyshev polynomial for negative integers means
\begin{equation*}
  S_{-n}(\ch x)=-S_{n-1}(\ch x).  
\end{equation*}

The Dehn twist $\mathscr{T}_{\mathbb{y}_\circledcirc}^k
\mathscr{T}_{\mathbb{x}_\circledcirc}
\mathscr{T}_{\mathbb{y}_\circledcirc}$ induces the $\SH_{q,t}$
automorphism,
$\tau_L^k \tau_R \tau_L(\mathsf{X})
=
q^{-k} \mathsf{Y}^k \mathsf{X} \mathsf{Y}^{k+1} \mathsf{X}$.
At $t=q$, we have
\begin{equation}
  \label{ex12_yxy}
  S_{2j}(
  \ch(q^{-k} \mathsf{Y}^k \mathsf{X} \mathsf{Y}^{k+1} \mathsf{X})
  )(1)
  =
  \sum_{i=0}^{2j} (-1)^i q^{(2k+1) i(i+1)} S_{2i}(\ch x).
\end{equation}

As we have discussed,
the reduced DAHA polynomials can be given from
the constant term~\eqref{const_slope12} by applying the automorphisms.
For instance, when
a simple closed  curve $\mathbb{c}$  is given from 
$\mathbb{c}_{(1,2)}$ as
$\mathscr{T}_{\mathbb{y}_u}^{k} \mathscr{T}_{\mathbb{x}_u}^{-1}
\mathscr{T}_{\mathbb{y}_u}
\mathscr{T}_{\mathbb{y}_d}^\ell$, the reduced DAHA polynomial is
then given by
\begin{multline}
  \overline{P}_N(q_u, x_u, x_d;
  \mathscr{T}_{\mathbb{y}_u}^{k} \mathscr{T}^{-1}_{\mathbb{x}_u}
  \mathscr{T}_{\mathbb{y}_u}
  \mathscr{T}_{\mathbb{y}_d}^\ell(\mathbb{c}_{(1,2)}))
  \\
  =
  (-1)^{N-1}q^{\frac{1}{2}(N-1)} \frac{ 1-q}{1-q^N}
  \,
  \mathbf{s}_{2N-1}(x_u) \,
  \left( (-1)^i q^{\frac{1}{2}(2k+1)i(i+1)} \right)_{
    0 \leq i \leq 2 j \leq 2(N-1)}
  \\
    \times
    \mathbf{B}_N(q) \,
    \mathbf{T}_N(q; \mathbb{c}_{(1,2)}) \,
    \mathbf{B}_N(q)^\top \,
    \diag \left( q^{j(j-1)\ell} \right)_{1 \leq j \leq N} \,
    \mathbf{s}_N(x_d)^\top
    .
  \end{multline}
  We have checked the polynomial at $x_u=x_d=q_u$ for
  $(k,\ell)=(-1,1)$
  (resp. $(-1,-1)$)
  coincidences with the $N$-colored Jones polynomial
for $\overline{6_1}$ (resp. $\overline{5_2}$).
We can give
the formula for the curves obtained by the above Dehn twists for $\mathsf{X}_\circledcirc$
by applying~\eqref{ex12_xky}  and~\eqref{ex12_yxy}

\section{Double-torus knots  from $\mathbb{c}_{(1,3)}$}
\label{sec:example_slope3}
We give some results on the reduced DAHA polynomial for double-torus
knots, which are given from  the simple closed curve
$\mathbb{c}_{(1,3)}$ on $ \Sigma_{2,0}$.
The difference operator
$\mathcal{A}_{(1,3)} \in \SH_{q,\boldsymbol{t}_\star}$
associated to
the curve
$\mathbb{c}_{(1,3)}$
is given in~\eqref{slope1odd}.
At~$\boldsymbol{t}_\star$~\eqref{t_star},
we can compute it explicitly as follows.
\begin{multline}
  G \,
  \left. \mathcal{A}_{(1,3)} \right|_{\text{sym}}  \,
  G^{-1}
  =
  a_{(1,3)}^{[3]}(q_u, t) \,
  \left[
     \sh (t^{-1}\mathsf{X}_u)     \sh(t \mathsf{X}_u)
    \sh (t^{-1} \mathsf{X}_d)     \sh(t \mathsf{X}_d)\,
    \eth_t
  \right]^3
  \\
  +
  a_{(1,3)}^{[2]}(q_u, t) \, \ch (\mathsf{X}_u) \ch (\mathsf{X}_d) \,
  \left[
     \sh (t^{-1}\mathsf{X}_u)     \sh(t \mathsf{X}_u)
    \sh (t^{-1} \mathsf{X}_d)     \sh(t \mathsf{X}_d)\,
    \eth_t
  \right]^2
  \\
  +
  a_{(1,3)}^{[1]}(q_u, t, \mathsf{X}_u, \mathsf{X}_d) \,
  \left[
    \sh (t^{-1}\mathsf{X}_u)     \sh(t \mathsf{X}_u)
    \sh (t^{-1} \mathsf{X}_d)     \sh(t \mathsf{X}_d)\,
    \eth_t
  \right]
  \\
  +
  a_{(1,3)}^{[0]}(q_u, t, \mathsf{X}_u, \mathsf{X}_d)
  +  a_{(1,3)}^{[1]}(q_u, q_u \, t^{-1} ,\mathsf{X}_u, \mathsf{X}_d)
  \, \eth_t^{~-1}
  \\
  +a_{(1,3)}^{[2]}(q_u, q_u \, t^{-1}) \,  \ch (\mathsf{X}_u)
  \ch (\mathsf{X}_d) \, \eth_t^{~ -2}
  +  a_{(1,3)}^{[3]}(q_u, q_u \, t^{-1}) \,
  \eth_t^{~ -3}
  ,
\end{multline}
where $G$ is the gluing function~\eqref{GGGud}, and
\begin{align*}
  &
  a_{(1,3)}^{[3]}(q_u, t)
    =
    \frac{
    q_u^{~ 13} t^{14} (1-t^2) (1-q_u^2 t^2) (1-q_u^4 t^2)
    }{
    (1+t^2)(1+q_u^2 t^2) (1+q_u^4 t^2) (q_u^2-t^4)
    (1-q_u^2   t^4) (1-q_u^6 t^4)}
    ,
  \\
  &
    a_{(1,3)}^{[2]}(q_u, t)
    =
    \frac{
    -2 q_u^7 t^{10} (1-t^2) (1-q_u^2 t^2)
    \left( 1- (1+q_u^2)t^2\right)
    }{
    (1+t^2) (q_u^2+t^2) (1+q_u^2 t^2) (1+q_u^4 t^2) (q_u^2-t^4)
    (1-q_u^2 t^4)
    }
    ,
  \\
  &
    \begin{multlined}[t]
      a_{(1,3)}^{[1]}(q_u, t, \mathsf{X}_u, \mathsf{X}_d)
      =
      \frac{q_u^3 t^6 \left( 1-t^2 \right)
      }{
        \left( 1+t^2 \right)  \left( q_u^2+t^2 \right)
        \left( 1+q_u^2 t^2 \right) \left( q_u^2-t^4 \right)
        \left( q_u^6 - t^4 \right) \left( 1-q_u^6 t^4 \right)
      }
      \\
      \times
      \Biggl\{
      -q_u^2 
      \bigl(
      t^4(1+q_u^8)(-4+3t^2)
      - q_u^2 t^2(1+q_u^4)(-3+5t^2-4t^4+t^6)
      \\
      + q_u^4(-1+4t^2-5t^4+6t^6-4t^8)
      \bigr) \ch^2(\mathsf{X}_u) \ch^2(\mathsf{X}_d)
      \\
      -(q_u^2+t^2)( 1+q_u^2 t^2)
      \biggl(
      t^2(1+q_u^8)+q_u^2(1+q_u^4)t^2(1-2t^2)
      +q_u^4(-2+t^2+t^6)
      \biggr) \left( \ch^2(\mathsf{X}_u) + \ch^2(\mathsf{X}_d) \right)
      \\
      +(q_u^2+t^2)(1+q_u^2 t^2) (1-q_u^2)^2 (t^4+q_u^2+q_u^4 t^4)
      \Biggr\} ,
    \end{multlined}
  \\
  &
    \begin{multlined}[t]
      a_{(1,3)}^{[0]}(q_u, t, \mathsf{X}_u, \mathsf{X}_d)
      =
      \frac{2 q_u t^2}{
        \left( 1+t^2 \right)  \left( q_u^2+t^2 \right)
        \left( q_u^4+t^2 \right)
        \left( 1+q_u^2 t^2 \right)  \left( q_u^6-t^4 \right)
        \left( 1-q_u^2 t^4 \right)
      } \ch (\mathsf{X}_u) \ch (\mathsf{X}_d)
      \\
      \times
      \Biggl\{
      q_u^4 t^2
      \biggl(
      t^4(3-3t^2+t^4)
      +q_u^2 t^2 (-3+5t^2-4t^4+t^6)
      +q_u^4 (1-4t^2+5t^4-3t^6)
      \\
      +q_u^6(1-3t^2+3t^4)
      \biggr) \ch^2(\mathsf{X}_u) \ch^2 (\mathsf{X}_d)
      \\
      - q_u^2 \biggl(
      t^6+ q_u^2 t^6 (3-t^2-2t^4+t^6)+q_u^4 t^4(-1+t^4-2t^6)
      +q_u^6 t^2 (-2+t^2-t^6)
      \\
      + q_u^8 (1-2t^2-t^4+3t^6)+q_u^{10}t^6
      \biggr) \left( \ch^2 (\mathsf{X}_u) + \ch^2 (\mathsf{X}_d)
      \right)
      \\
      +
      \biggl(
      t^6(1+t^2) +
      q_u^2 t^4(1+t^2+2t^4-t^8)
      - q_u^4 t^4 (-2-5t^2+2t^4+3t^6+t^8)
      \\
      -
      q_u^6 t^4(2+t^2-t^4+3t^6+t^8)
      -
      q_u^8 (1+3t^2-t^4+t^6+2t^8)
      +
      q_u^{10}(-1-3t^2-2t^4+5t^6+2t^8)
      \\
      +q_u^{12}(-1+2t^4+t^6+t^8)
      + q_u^{14} t^4 (1+t^2)
      \biggr)
      \Biggr\} .
    \end{multlined}
\end{align*}
The constant  term $\eth^0$ in $\mathcal{A}_{(1,3)}$
at $t=q_u$ is written as
\begin{multline}
  a_{(1,3)}^{[0]}( q_u, q_u ,\mathsf{X}_u, \mathsf{X}_d)
  \\
  =
  -\frac{q_u \left( 1-q_u^2 \right)}{1-q_u^4}
  \Biggl\{
  1
  +
  \frac{ \left( 1-q_u^2 \right) \left(1-q_u^4\right)
  }{
    \left( 1- q_u^6 \right) \left( 1-q_u^8 \right)}
  \prod_{\circledcirc \in \{u,d \}}
  \left(1-q_u^{~2} \mathsf{X}_\circledcirc^2 \right)
  \left( 1-q_u^{~2} \mathsf{X}_\circledcirc^{~ -2} \right)
  \Biggr\}
  \prod_{\circledcirc\in\{u,d\}}
  \left( \mathsf{X}_\circledcirc+\mathsf{X}_\circledcirc^{-1} \right)
  .
  \label{const_a_13}
\end{multline}

Let $\mathbb{c}_{(1,3):(k,\ell)}$ be a simple closed curve on $\Sigma_{2,0}$
given by
$
\mathscr{T}_{\mathbb{y}_u}^k \mathscr{T}_{\mathbb{y}_d}^\ell
(\mathbb{c}_{(1,3)})
$.
Applying the DAHA automorphisms to~\eqref{const_a_13},
% When a curve $\mathbb{c}_{(1,3):(k,\ell)}$ is given from the curve $\mathbb{c}_{(1,3)}$ by
% the Dehn twist
% $\mathscr{T}_{\mathbb{y}_u}^k \mathscr{T}_{\mathbb{y}_d}^\ell
% (\mathbb{c}_{(1,3)})$,
the reduced  DAHA polynomial is given by
\begin{equation}
  \overline{P}_2(q_u, x_u, x_d; \mathbb{c}_{(1,3):(k,\ell)})
  =
  a_{(1,3)}^{[0]}(q_u, q_u,
  q_u^{-\frac{k}{2}} \mathsf{Y}_u^{~k} \mathsf{X}_u,
  q_u^{-\frac{\ell}{2}} \mathsf{Y}_d^{~\ell} \mathsf{X}_d
  )(1)
  .
\end{equation}
As $\mathsf{X}_\circledcirc$ and $\mathsf{Y}_{\circledcirc}$ are
generators of $\SH_{q_u,t}$,
we can make use of~\eqref{macdo_2j_auto} of $\SH_{q,t}$ in the
computation.
% \begin{gather*}
%   \ch( q^{-\frac{k}{2}} \mathsf{Y}^k \mathsf{X} ) (1)
%   =
%   q^{\frac{1}{2} k } t^k \, M_1(x; q,t)
%   ,
%   \\
%   \ch\left(
%     \left( q^{-\frac{k}{2}} \mathsf{Y}^k \mathsf{X} \right)^3 \right)
%   (1)
%   =
%   q^{\frac{9}{2}k} t^{3k} \, M_3(x;q, t)
%   -
%   q^{\frac{1}{2}k} t^{k} \,
%   \frac{
%     \left( 1 - q^6 \right) \left( 1-t^2 \right)}{
%     \left(1-q^2 \right) \left( 1-q^4 t^2 \right)}\,
%   M_1(x; q, t)
%   .
% \end{gather*}
As a result, we obtain
a symmetric bilinear form for the reduced DAHA polynomial
\begin{multline}
  \label{poly_for_13kl}
  \overline{P}_2(q_u,x_u, x_d; \mathbb{c}_{(1,3):(k, \ell)})
  =
-  \frac{q_u \left( 1-q_u^2 \right)}{1-q_u^4}
  \\
  \times
  \left(
    q_u^{\frac{3}{2}k} S_1(\ch {x}_u)
    ,
    q_u^{\frac{15}{2} k} S_3(\ch {x}_u)
  \right) \,
  \begin{pmatrix}
    1 + \frac{
      \left( 1-q_u^2 \right) \left( 1-q_u^8 \right)}{
      \left( 1-q_u^4 \right) \left( 1-q_u^6 \right)}
    &
    - \frac{
      q_u^2 \left( 1-q_u^2 \right)}{
      1-q_u^6
    }
    \\
    - \frac{q_u^2 \left( 1-q_u^2 \right)}{
      1-q_u^6
    }
    &
    \frac{q_u^4 \left( 1-q_u^2 \right)
      \left( 1-q_u^4 \right)}{
      \left( 1-q_u^6 \right) \left( 1-q_u^8 \right)
    }
  \end{pmatrix}
  \begin{pmatrix}
    q_u^{\frac{3}{2}\ell} S_1(\ch {x}_d ) \\
    q_u^{\frac{15}{2}\ell} S_3(\ch {x}_d)
  \end{pmatrix}
  .
\end{multline}
% \begin{multline}
%   \overline{P}_2(q_u,x_u, x_d; \mathbb{c}_{(1,3):(k_u, k_d)})
%   \\
%   =
%   -\frac{q_u^5 \left(1-q_u^2 \right)^2}{
%     \left( 1-q_u^6 \right)  \left( 1-q_u^8 \right)
%   }
%   \prod_{\circledcirc\in \{u,d\}}
%   \left(
%     q_u^{\frac{15}{2} k_\circledcirc} S_3(\ch x_\circledcirc)
%     -q_u^{\frac{3}{2} k_\circledcirc}
%     \left( q_u^2+q_u^{-2}\right)  \,
%     \left( x_\circledcirc+x_\circledcirc^{~-1}\right)
%   \right)
%   \\
%   -
%   \frac{q_u \left( 1-q_u^2 \right)}{
%     1-q_u^4} \,
%   q_u^{\frac{3}{2} (k_u+k_d)}
%   \left( x_u+x_u^{-1} \right)
%   \left( x_d+x_d^{-1} \right)
%   .
% \end{multline}
We see that the curve $\mathbb{c}_{(1,3):(k,\ell)}$ with
$(k,\ell)=(1,-1)$ denotes a connected sum
of trefoils $3_1 \# \overline{3_1}$
(the square knot).
Also
SnapPy~\cite{SnapPy} tells us that
the curves $\mathbb{c}_{(1,3):(k,\ell)}$ for $(k,\ell)=(1,1)$
and $(1,2)$ are  $\overline{9_{46}}$ and $k7_{125}$ respectively.
We have checked
that
the reduced DAHA polynomial  $\overline{P}_2$~\eqref{poly_for_13kl}
for these closed curves
coincide with the Jones polynomials for $3_1\# \overline{3_1}$,
$\overline{9_{46}}$,
and $k7_{125}$
in~\cite{KnotInfo2018,ChamKofmPatt04a}
at $x_u=x_d=q_u$.

For the $n=3$ colored reduced DAHA polynomial $\overline{P}_3$,
the constant term of
$S_2(\mathcal{A}_{(1,3)})
=
\mathcal{A}_{(1,3)}^{~ 2}-1$ at $t=q_u=q^{\frac{1}{2}}$ is computed as
\begin{multline*}
  a_{(1,3)}^{[3]}(q_u,q_u) \, a_{(1,3)}^{[3]}(q_u, q_u^{~ -3})
  \prod_{\circledcirc \in \{u,d\}}
  \prod_{j=1}^3
  \sh ( q_u^{~ -j} \mathsf{X}_\circledcirc)
  \sh ( q_u^{~ j} \mathsf{X}_\circledcirc)
  \\
  +
  a_{(1,3)}^{[2]}(q_u,q_u) \, a_{(1,3)}^{[2]}(q_u,q_u^{~ -2})
 \prod_{\circledcirc \in \{u,d\}}
 \ch^2(\mathsf{X}_\circledcirc)
 \prod_{j=1}^2
  \sh ( q_u^{~ -j} \mathsf{X}_\circledcirc)
  \sh ( q_u^{~ j} \mathsf{X}_\circledcirc)
  \\
  +
  a_{(1,3)}^{[1]}(q_u,q_u, \mathsf{X}_u, \mathsf{X}_d) \,
  a_{(1,3)}^{[1]}(q_u,q_u^{~ -1}, \mathsf{X}_u, \mathsf{X}_d)
  \prod_{\circledcirc \in \{u,d\}}
  \sh ( q_u^{-1} \mathsf{X}_\circledcirc)
  \sh ( q_u \mathsf{X}_\circledcirc)
  \\
  + \left( a_{(1,3)}^{[0]}(q_u, q_u, \mathsf{X}_u, \mathsf{X}_d) \right)^2
  -1
  \\
  =
  \frac{
    q \left( 1-q \right)^2  \left( 1-q^2\right) }{
    \left( 1-q^5 \right) \left( 1-q^6 \right)
    \left( 1-q^7 \right)}
  \prod_{\circledcirc \in \{u, d\}}
  (q \mathsf{X}_\circledcirc^{~ 2}, q \mathsf{X}_\circledcirc^{~ -2}; q)_2
  \left(  \ch( \mathsf{X}_\circledcirc^{~ 2} ) +1 + \frac{q}{1+q^2}
  \right)
  \\
  + \frac{
    q \left( 1-q \right) \left( 1-q^2 \right)^2}{
    \left( 1-q^3 \right) \left( 1-q^4 \right)^2
  }
  \prod_{\circledcirc \in \{u, d \}}
  \left(1-q \mathsf{X}_\circledcirc^{~ 2} \right)
  \left( 1-q \mathsf{X}_\circledcirc^{~ -2}  \right)
  \ch^2 ( \mathsf{X}_\circledcirc )
%  \\
  + \frac{q \left( 1-q \right)}{1-q^3}
  \prod_{\circledcirc \in \{u, d \}}
  \left( \ch(\mathsf{X}_\circledcirc^{~ 2})+ 1 \right) .
\end{multline*}
Applying the $SL(2;\mathbb{Z})$ actions on $\mathsf{X}_\circledcirc$,
we obtain  a symmetric bilinear form for
the reduced DAHA polynomial of
$\mathbb{c}_{(1,3):(k,\ell)}$
as
\begin{multline}
  \overline{P}_3(q_u, x_u, x_d; \mathbb{c}_{(1,3):(k,\ell)})
  =
  \frac{q \left( 1-q \right)}{1-q^3} \,
  \mathbf{s}_4(x_u) \,
  \diag\left( 1,q^{2k}, q^{6k}, q^{12k} \right)
  % \begin{pmatrix}
  %   1 \\
  %   q^{2k} S_2(\ch(x_u))
  %   \\
  %   q^{6k} S_4(\ch(x_u))
  %   \\
  %   q^{12k} S_6(\ch(x_u))
  % \end{pmatrix}^\top
  \\
  \times
  \left(
    \begin{smallmatrix}
      1 &  \frac{  \left(1-q \right)
        \left( 1-q^6 \right)}{ \left( 1-q^3 \right)
        \left( 1-q^4 \right)}
      &
      - \frac{q \left( 1-q^2 \right)}{1-q^4}
      & 0
      \\
      \frac{ \left( 1-q \right) \left( 1-q^6 \right)
      }{
        \left( 1-q^3 \right) \left( 1-q^4 \right)}
      & \frac{\left( 1-q \right) \left( 1-q^2 \right)
        \left( 3+3q^2+4q^3+3q^4+3q^6 \right)}{
        \left( 1-q^4 \right) \left( 1-q^5 \right)}
      &
      - \frac{q \left( 1-q \right)
        \left( 1-q^2 \right)
        \left( 2+q+q^3+2q^4 \right)
      }{ \left( 1-q^4 \right)
        \left( 1-q^5 \right)}
      & \frac{q^3 \left( 1-q \right)
        \left( 1-q^2 \right)}{
        \left( 1-q^4 \right) \left( 1-q^5 \right)}
      \\
      - \frac{q \left( 1-q^2 \right)}{1-q^4}
      &
      - \frac{q \left(1-q \right)
        \left( 1-q^2 \right)
        \left( 2+q+q^3+2q^4 \right)
      }{
        \left( 1-q^4 \right) \left( 1-q^5 \right)}
      &
      \frac{q^2 \left( 1-q^2 \right)^2
        \left( 1-q^3 \right)
        \left(2+q+q^3+2q^4 \right)}{
        \left( 1-q^4 \right) \left( 1-q^5 \right)
        \left( 1-q^6 \right)
      }&
      - \frac{q^4 \left( 1-q^2 \right)^2 \left(1-q^3 \right)
      }{
        \left(1-q^4 \right) \left( 1-q^5 \right)
        \left( 1-q^6 \right)}
      \\
      0 &
      \frac{q^3 \left( 1-q \right) \left( 1-q^2 \right)
      }{ \left( 1-q^4 \right) \left( 1-q^5 \right)}
      &
      - \frac{q^4 \left( 1-q^2 \right)^2 \left( 1-q^3 \right)}{
        \left( 1-q^4 \right) \left( 1-q^5 \right)
        \left( 1-q^6 \right)}
      &
      \frac{q^6 (1-q) \left( 1-q^2 \right) \left( 1-q^3 \right)}{
        \left( 1-q^5 \right) \left( 1-q^6 \right)
        \left( 1-q^7 \right)
      }
    \end{smallmatrix}
  \right)
  \\
  \times \diag \left(1, q^{2\ell}, q^{6\ell}, q^{12\ell} \right) \,
  \mathbf{s}_4(x_d)^\top ,
  % \begin{pmatrix}
  %   1 \\
  %   q^{2\ell} S_2(\ch(x_d))
  %   \\
  %   q^{6\ell} S_4(\ch(x_d))
  %   \\
  %   q^{12\ell} S_6(\ch(x_d))
  % \end{pmatrix}
  % .
\end{multline}
where
$\mathbf{s}_N(x)$ is defined in~\eqref{define_base_s}.
We have checked
that the results for $(k,\ell)=(1,- 1)$ and $(1,1)$
agree up to framing factor with the
$n=3$ colored Jones polynomials for the square knot
$3_1 \# \overline{3_1}$ and
$\overline{9_{46}}$ in~\cite{KnotInfo2018} respectively when $x_u=x_d=q_u=q^{\frac{1}{2}}$ as expected.

%%%%
%%%%%%%%%%%%%% %newpage
% %\bibliographystyle{physics}
% % %\bibliographystyle{amsalpha}
% %\bibliographystyle{JHEP}
% %\bibliographystyle{hsiam}
% % %ibliographystyle{klunum}
% \bibliographystyle{halphaKH}
% %%%%%%%%%%%%%%%%%%
 %\bibliography{_def,gravity,gravity2,square,math,ba,tba,math5,vm,square2,math4,qalg,math3,math2,poisson,geometry,soliton,cft,knot,tqft,comb,number}

%%%%
\end{document}